\def\?[#1]{\textbf{[#1]}\marginpar{\Large{\textbf{??}}}}
\numberwithin{equation}{section}
\newtheorem{theorem}{Theorem}[section]
\newtheorem{lemma}[theorem]{Lemma}
\newtheorem{proposition}[theorem]{Proposition}
\newtheorem{corollary}[theorem]{Corollary}
\newtheorem{remark}[theorem]{Remark}
\newcounter{conj}
\newtheorem{conjecture}[conj]{Conjecture}
\newcommand{\mc}{\mathcal}
\newcommand{\rr}{\mathbb{R}}
\newcommand{\nn}{\mathbb{N}}
\newcommand{\cc}{\mathbb{C}}
\newcommand{\hh}{\mathbb{H}}
\newcommand{\zz}{\mathbb{Z}}
\newcommand{\la}{\lambda}
\newcommand{\eps}{\epsilon}
\newcommand{\pl}{\partial}
\newcommand{\x}{\times}
\newcommand{\til}{\widetilde}
\newcommand{\bbar}{\overline}
\newcommand{\cjd}{\rangle}
\newcommand{\cjg}{\langle}
\newcommand{\demi}{\tfrac{1}{2}}
\DeclareMathOperator{\supp}{supp}
\def\indic{\operatorname{1\hskip-2.75pt\relax l}}
\renewcommand{\tilde}{\widetilde}          % wider `tilde'
\DeclareMathSymbol{\leqslant}{\mathalpha}{AMSa}{"36} % nicer `smaller or equal'
\DeclareMathSymbol{\geqslant}{\mathalpha}{AMSa}{"3E} % nicer `larger or equal'
\DeclareMathSymbol{\eset}{\mathalpha}{AMSb}{"3F}     % nicer `emptyset'
\renewcommand{\leq}{\;\leqslant\;}                   % redef. of < or =
\renewcommand{\geq}{\;\geqslant\;}                   % redef. of > or =
\newcommand{\C}{\mathbb{C}}
\newcommand{\R}{\mathbb{R}}
\newcommand{\Z}{\mathbb{Z}}
\renewcommand{\H}{\mathbb{H}}
\newcommand{\N}{\mathbb{N}}
\newcommand{\E}{\mathbb{E}}
\renewcommand{\P}{\mathbb{P}}
\def\eps{\varepsilon}
\def\bi{\begin{itemize}}
\def\ei{\end{itemize}}
\def\bnum{\begin{enumerate}}
\def\enum{\end{enumerate}}
\def\<#1{\langle #1 \rangle}
\title[LQG and bosonic $2d$ string theory]{Polyakov's formulation of $2d$ bosonic string theory}
\author{Colin Guillarmou}
\address{CNRS, Universit\'e Paris-Sud, D\'epartement de Math\'ematiques, 91400
Orsay, France}
\email{cguillar@math.cnrs.fr}
\author{R\'emi Rhodes}
\address{Aix Marseille Universit\'e, CNRS, Centrale Marseille, I2M, Marseille, France.}
\email{remi.rhodes@univ-amu.fr}
\author{Vincent Vargas}
\address{DMA, U.M.R. 8553 CNRS, \'Ecole Normale Superieure, 45 rue d'Ulm,
75230 Paris cedex 05, France.}
 \email{Vincent.Vargas@ens.fr}
\begin{document}

\maketitle

 %\vspace{1cm}
\begin{abstract}
Using probabilistic methods, we first define Liouville quantum field theory on Riemann surfaces of genus ${\bf g}\geq 2$ and show that it is a conformal field theory. We use the partition function of Liouville quantum field theory to give a mathematical sense to Polyakov's partition function 
of  \textit{noncritical bosonic string theory}  \cite{Pol} (also called $2d$  \textit{bosonic string theory}) and to Liouville quantum gravity. More specifically, we show the convergence of Polyakov's partition function over the moduli space of Riemann surfaces in genus  ${\bf g}\geq 2$ in the case of $D\leq 1$ boson. This is done by performing a careful analysis of the behavior of the partition function at the boundary of moduli space. An essential feature of our approach is that it is probabilistic and \textit{non perturbative}. The interest of our result is twofold. First, to the best of our knowledge, this is the first mathematical result about convergence of string theories. Second, our construction describes conjecturally  the scaling limit of higher genus random planar maps weighted by Conformal Field Theories: we make precise conjectures about this statement at the end of the paper.
\end{abstract}
%\vspace{0.5cm}

\footnotesize

%\noindent{\bf Key words or phrases:}  Liouville Quantum Gravity, complex tori, modular covariance, Gaussian multiplicative chaos, KPZ formula, Polyakov formula.  

%\noindent{\bf MSC 2000 subject classifications:  60D05, 81T40,  81T20.}    

\normalsize

%\vspace{1cm}

%\newpage
%%\vspace{0.1cm}
%%

%\tableofcontents

%%%%%%%%%%%%%%%%%%%%%%%%%%%%%%%%%%%%%%%%%%%%%%%%%%%%%%%%
\section{Introduction}
%%%%%%%% %%%%%%%%%%%%%%%%%%%%%%%%%%%%%%%%%%%%%%%%%%%%%%%%%%%%%%%%

In physics,  string theory   or more generally Euclidean $2d$ Quantum Gravity (LQG)  is an attempt to quantize the Einstein-Hilbert functional coupled to matter fields (matter is replaced by the free bosonic string in the case of string theory).  The problem can be briefly summarized as follows. 

First of all, a \emph{quantum field theory} on a surface $M$ can be viewed as a 
way to define a measure $e^{-S_g(\phi)}D\phi $ over an infinite dimensional space $E$ of fields $\phi$ living over $M$ (typically $\phi$ are sections of some bundles over $M$), where $D \phi$ is a ``uniform measure" and $S_g: E\to \mathbb{R}$ is a functional on $E$ called  the \emph{action}, depending on a background Riemannian metric $g$ on $M$. The total mass of the measure 
\begin{equation}\label{Z(g)CFT}
Z(g):=\int_{E} e^{-S_g(\phi)}D\phi 
\end{equation}
is called the \emph{partition function}. Defining the \emph{n-point correlation functions} amounts to taking $n$ points $x_1,\dots,x_n\in M$ and weights $\alpha_1,\dots,\alpha_n\in\rr$ and to defining 
\[Z(g; (x_1,\alpha_1),\dots,(x_n,\alpha_n)):=\int_{E} e^{\sum_{i=1}^n\alpha_i \phi(x_i)}e^{-S_g(\phi)}D\phi, \]
at least if the fields $\phi$ are functions on $M$.

A \emph{conformal field theory} (CFT in short) on a surface is a quantum field theory which possesses certain conformal symmetries. More specifically, the partition function 
$Z(g)$ of a CFT  satisfies the diffeomorphism invariance $Z(\psi^*g)=Z(g)$ for all smooth diffeomorphisms $\psi:M\to M$ and  a so-called \emph{conformal anomaly} of the following form: for all $\omega\in C^\infty(M)$
\begin{equation}\label{CFTanomaly}
Z(e^{\omega}g)=Z(g)\exp\Big(\frac{{\bf c}}{96\pi}\int_{M}(|d\omega|_g^2+2K_g\omega) {\rm dv}_g\Big)
\end{equation}
where   ${\bf c}\in\mathbb{R}$ is called the \emph{central charge} of the theory, $K_g$ is the scalar curvature of $g$ and $ {\rm  dv}_g$ the volume form. The $n$-point correlation functions should also satisfy similar types of conformal anomalies and diffeomorphism invariance (see \eqref{diffeo2} and \eqref{scale2}). Usually, it is difficult to give a mathematical sense to \eqref{Z(g)CFT} because the measure $D \phi$, which is formally the Lebesgue measure on an infinite dimensional space, does not exist mathematically.  
Hence, CFT's are mostly studied using axiomatic and algebraic methods, or perturbative methods (formal stationary phase type expansions): see for example \cite{difrancesco,gaw}.\\

\textbf{Liouville Quantum Field Theory.} The first part of our work is to construct Liouville quantum field theory (LQFT in short) 
on a Riemann surface of genus ${\bf g}\geq 2$ and to show that this is a CFT.
We use probabilistic methods to give a mathematical sense to the path integral \eqref{Z(g)CFT}, when $S_g(\phi)=S_L(g,\phi)$ is the classical Liouville action, a natural convex functional coming from the theory of uniformisation of Riemann surfaces that we describe now. 
Given a two dimensional connected compact Riemannian manifold $(M,g)$ without boundary, we define the Liouville functional on $C^1$ maps $\varphi:M\to\R$ by
\begin{equation}\label{introaction}
S_L(g,\varphi):= \frac{1}{4\pi}\int_{M}\big(|d\varphi|_g^2+QK_g \varphi  + 4\pi \mu e^{\gamma \varphi  }\big)\,{\rm dv}_g
\end{equation}
where   $Q,\mu,\gamma>0$ are parameters to be discussed later. If  $Q=\frac{2}{\gamma}$, finding the minimizer $u$ of this functional  allows one to uniformize $(M,g)$. Indeed, the metric $g'=e^{\gamma u}g$ has constant scalar curvature $K_{g'}=-2\pi\mu\gamma^2$ and it is the unique such metric in the conformal class of $g$. 
The quantization of the Liouville action is precisely LQFT: one wants to make sense of  the following measure on some appropriate functional space $\Sigma$ (to be defined later) made up of (generalized) functions $\varphi:M\to \R$
\begin{equation}\label{pathintegral}
F\mapsto \Pi_{\gamma,\mu} (g,F):=\int_\Sigma F(\varphi)e^{-S_L(g,\varphi)}\,D\varphi
\end{equation}
where $D\varphi$ stands for the ``formal uniform measure'' on $\Sigma$. Up to renormalizing this measure by its total mass, this formalism describes the law of some random (generalized) function $\varphi$ on $\Sigma$, which stands for the (log-)conformal factor   of a random metric of the form $e^{\gamma \varphi}g$ on $M$.  In physics, LQFT is known to be a CFT with central charge ${\bf c_{\rm L}}:=1+6Q^2$  continuously ranging in $[25,\infty)$
for the particular values 
\begin{equation}
\gamma\in ]0,2],\quad Q=\frac{2}{\gamma}+\frac{\gamma}{2}.
\end{equation}
 Of course, this description is purely formal and giving a mathematical description of this picture is a longstanding problem, which goes back to the work of Polyakov \cite{Pol}.  The rigorous construction of such an object has been carried out recently in \cite{DKRV} in genus $0$, \cite{DRV} in genus $1$ (see also \cite{HRV} for the case of the disk). 
 Let us also mention that Duplantier-Miller-Sheffield \cite{DMS} have developed in the case of the sphere, disk or plane a  theory based on an equivalence class of random measures (equivalence classes are pushforwards of a given measure by  elements of a non trivial subgroup of biholomorphic transformations of the domain). From the point of view of LQFT, their approach lies at the ``boundary" of LQFT in the sense that they introduce the appropriate formalism in the case of the sphere, disk or plane to understand the $2$-point correlation function of LQFT; however, there is no cosmological constant (i.e. the constant $\mu$) in their approach and they do not work at the level of correlation functions. Yet, another approach by Takhtajan-Teo \cite{TT} 
was to develop a perturbative analysis (a semiclassical Liouville theory in the so-called background field formalism): 
in this non-probabilistic approach, LQFT is expanded as a formal power series in $\gamma$ around the minimum of the action \eqref{introaction} and the parameter $Q$ in the action is given by its value in classical Liouville theory $Q=\frac{2}{\gamma}$. 

We consider the genus ${\bf g}\geq 2$ case and give a mathematical, non perturbative, definition to 
 \eqref{pathintegral}. To explain our result, we need to summarize the construction. On a compact surface $M$ with genus ${\bf g}\geq 2$, we fix a smooth metric $g$ and define for $s\in\mathbb{R}$ the Sobolev space $H^s(M):=(1+\Delta_g)^{-s/2}(L^2(M))$ of order $s$ with scalar product defined using the metric $g$ and where $\Delta_g$ is the non-negative Laplacian associated to $g$. 
Using  the theory of the Gaussian free field (GFF in short), we show that for each $s>0$ there is a measure $\mc{P}'$ on $H^{-s}(M)$ which is independent of the choice of metric $g$ in the conformal class 
$[g]$, and which represents the following formal Gaussian measure defined for $F\in L^1(H^{-s}(M),\mc{P}')$ by 
\begin{equation}\label{mesureGFFint}
 \int F(\varphi)e^{-\frac{1}{4\pi}\int_M|\nabla\varphi|_g^2{\rm dv}_g} D\varphi:=
\frac{\sqrt{{\rm Vol}_g(M)}}{\sqrt{{\rm det}'(\Delta_g})}\int F(\varphi)d\mc{P}'(\varphi) 
\end{equation}
where ${\rm det}'(\Delta_g)$ is the regularized determinant of the Laplacian, defined as in Ray-Singer \cite{RaSi}. The method to do this is to consider a probability space $(\Omega,\mc{F},\mathbb{P})$ and a sequence $(a_j)_j$ of independent identically distributed real Gaussians in $\mc{N}(0,1)$ and to consider 
the following random variable (called GFF)
\begin{equation}
X_g=\sqrt{2\pi}\sum_{j\geq 1}a_j\frac{\varphi_j}{\sqrt{\la_j}} 
\end{equation}
with values in $H^{-s}(M)$ for all $s>0$, where $(\varphi_j)_{j\geq 0}$ is an orthonormal basis of eigenfunctions of $\Delta_g$ with eigenvalues $(\la_j)_{j\geq 0}$ (and with $\la_0=0$). The covariance of $X_g$ is the Green function of 
 $\frac{1}{2 \pi}\Delta_g$ and there is a probability measure $\mc{P}$ on $H^{-s}_0(M):=\{u\in H^{-s}(M); \cjg u,1\cjd=0\}$ 
 so that the law of $X_g$ is given by $\mc{P}$ and for each $\phi\in H_0^{s}(M)$, $\cjg X_g,\phi\cjd$ is a random variable on $\Omega$ with zero mean and variance $2\pi \cjg \Delta_g^{-1}\phi,\phi\cjd$. Then $H^{-s}(M)=H_0^{-s}(M)\oplus \mathbb{R}$ and we define  $\mc{P}'$ as the pushforward of the measure $\mc{P}\otimes dc$ under the mapping $(X,c)\in H_0^{-s}(M)\times\mathbb{R}\mapsto c+X$, where $dc$ is the uniform Lebesgue measure in $\mathbb{R}$. The formal equality \eqref{mesureGFFint} is an analogy with the finite dimensional setting. The next tool needed to the construction is   Gaussian multiplicative chaos theory introduced by Kahane \cite{cf:Kah}, which allows us to define the random measure $\mc{G}_g^\gamma:= e^{\gamma X_g}{\rm dv}_g$ on $M$ for $0<\gamma\leq 2$ when $X_g$ is the GFF. This is done by using a renormalization procedure, more precisely a regularization of $X_g$. We can then define the quantity which plays the role of the formal integral \eqref{pathintegral} as follows: 
for $F:  H^{-s}(M)\to\R$ (with $s>0$) a bounded continuous functional, we set
\begin{align}\label{defLQFT}
 \Pi_{\gamma, \mu}(g,F):=& ({\det}'(\Delta_{g})/{\rm Vol}_{g}(M))^{-1/2}  \\
 &\times \int_\R  \E\Big[ F( c+  X_{g}) \exp\Big( -\frac{Q}{4\pi}\int_{M}K_{g}(c+ X_{g} )\,{\rm dv}_{g} - \mu  e^{\gamma c}\mc{G}_{g}^\gamma(M)  \Big) \Big]\,dc \nonumber
\end{align}
and call it the functional integral  of LQFT (when $F=1$ this is the partition function). Our first result is that this quantity is finite and satisfies diffeomorphism invariance and a certain conformal anomaly when $Q=\frac{\gamma}{2}+\frac{2}{\gamma}$.  
 \begin{theorem}[LQFT is a CFT]\label{introweyl}
Let $Q=\frac{\gamma}{2}+\frac{2}{\gamma}$ with $\gamma\leq 2$ 
and $g$ be a smooth metric on $M$.
For each  bounded continuous functional $F:  H^{-s}(M)\to\R$ (with $s>0$) and each $\omega\in C^\infty(M)$,
 $\Pi_{\gamma, \mu}(e^{\omega}g,F)$ is finite  and satisfies the following conformal anomaly:  
\[\Pi_{\gamma, \mu}(e^{\omega}g,F)= \Pi_{\gamma, \mu}(g,F(\cdot\,-\tfrac{Q}{2}\omega))\exp\Big(\frac{1+6Q^2}{96\pi}\int_{M}(|d\omega|_g^2+2K_g\omega) {\rm dv}_g\Big).\]
Let $g$ be any metric on $M$ and $\psi:M\to M$ be an orientation preserving diffeomorphism, then we have for each bounded measurable $F:H^{-s}(M)\to \R$ with $s>0$
\[ \Pi_{\gamma, \mu} (\psi^*g ,F)= \Pi_{\gamma, \mu}(g,F(\cdot \circ \psi)) .\]
\end{theorem}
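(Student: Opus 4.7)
The plan is to compute $\Pi_{\gamma,\mu}(e^\omega g,F)$ by substituting, into the defining integral, the conformal transformation laws of its three ingredients---the ratio ${\det}'(\Delta_g)/\Vol_g(M)$, the law of the GFF $X_g$, and the chaos $\mc{G}_g^\gamma$---and then performing a Cameron--Martin shift of $X_g$ together with two translations of the zero-mode $c$, in order to recognize the result as $\Pi_{\gamma,\mu}(g,F(\cdot-\tfrac{Q}{2}\omega))$ multiplied by the advertised anomaly. The three inputs are: (a) the Polyakov--Alvarez formula
\[\bigl({\det}'(\Delta_{e^\omega g})/\Vol_{e^\omega g}(M)\bigr)^{-1/2}=\bigl({\det}'(\Delta_g)/\Vol_g(M)\bigr)^{-1/2}\exp\Bigl(\tfrac{1}{96\pi}\int_M(|d\omega|_g^2+2K_g\omega)\,{\rm dv}_g\Bigr);\]
(b) the conformal covariance of the GFF, $X_{e^\omega g}\stackrel{d}{=}X_g-m_{e^\omega g}(X_g)$, where $m_{e^\omega g}(X_g)=\Vol_{e^\omega g}(M)^{-1}\int_M X_g\,e^\omega\,{\rm dv}_g$, which follows from the conformal invariance of the Dirichlet energy in dimension two together with the change of zero-mean subspace; and (c) the chaos transformation $\mc{G}_{e^\omega g}^\gamma[X_g](dx)=e^{\gamma Q\omega(x)/2}\mc{G}_g^\gamma[X_g](dx)$, obtained by comparing the $g$- and $e^\omega g$-scale regularizations and using $1+\gamma^2/4=\gamma Q/2$---precisely the step where the relation $Q=\gamma/2+2/\gamma$ enters.

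Inserting (b) and (c) into the defining integral, using $K_{e^\omega g}\,{\rm dv}_{e^\omega g}=(K_g+\Delta_g\omega)\,{\rm dv}_g$ together with Gauss--Bonnet and $\int_M\Delta_g\omega\,{\rm dv}_g=0$, and absorbing the random constant $m_{e^\omega g}(X_g)$ by the first translation $c\mapsto c+m_{e^\omega g}(X_g)$, the inner functional integral becomes an expectation under the law of $X_g$ whose exponent carries two problematic extra pieces: a cross term $-\tfrac{Q}{4\pi}\int(\Delta_g\omega)X_g\,{\rm dv}_g$ and a prefactor $e^{\gamma Q\omega/2}$ on the chaos. Both are cured by the Cameron--Martin shift $X_g\mapsto X_g-\tfrac{Q}{2}\omega_0$, where $\omega_0=\omega-\bar\omega$ is the $g$-zero-mean part of $\omega$ (so $\omega_0$ is smooth and lies in the Cameron--Martin space of the GFF); the Radon--Nikodym factor $\exp\bigl(\tfrac{Q}{4\pi}\int(\Delta_g\omega)X_g\,{\rm dv}_g-\tfrac{Q^2}{16\pi}\int|d\omega|_g^2\,{\rm dv}_g\bigr)$ has a linear-in-$X_g$ piece that exactly cancels the cross term, while the chaos factor collapses to the constant $e^{\gamma Q\bar\omega/2}$ in front of the unweighted $\mc{G}_g^\gamma(M)$.

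A second translation of the zero-mode, $c\mapsto c-\tfrac{Q}{2}\bar\omega$, then converts $F(c+X_g-\tfrac{Q}{2}\omega_0)$ into $F(c+X_g-\tfrac{Q}{2}\omega)$, cancels the constant $e^{\gamma Q\bar\omega/2}$ on the chaos, and shifts the curvature term by $-\tfrac{Q^2\bar\omega\chi(M)}{2}$; the identity $\int K_g\omega_0\,{\rm dv}_g=\int K_g\omega\,{\rm dv}_g-4\pi\chi(M)\bar\omega$ then converts $\omega_0$ back to $\omega$ inside the $|d\omega|^2+2K_g\omega$ combination. What remains is the definition of $\Pi_{\gamma,\mu}(g,F(\cdot-\tfrac{Q}{2}\omega))$ times an overall exponential prefactor: the Polyakov contribution $\tfrac{1}{96\pi}\int(|d\omega|^2+2K_g\omega){\rm dv}_g$ from (a) combines with the Girsanov-plus-shift contribution $\tfrac{Q^2}{16\pi}\int(|d\omega|^2+2K_g\omega){\rm dv}_g=\tfrac{6Q^2}{96\pi}\int(|d\omega|^2+2K_g\omega){\rm dv}_g$ to produce exactly the advertised anomaly with central charge $1+6Q^2$.

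The diffeomorphism invariance is considerably easier: the identities $\Delta_{\psi^*g}=\psi^*\circ\Delta_g\circ(\psi^{-1})^*$ (so ${\det}'(\Delta_{\psi^*g})={\det}'(\Delta_g)$ and $\Vol_{\psi^*g}(M)=\Vol_g(M)$), $X_{\psi^*g}\stackrel{d}{=}X_g\circ\psi$, $\mc{G}_{\psi^*g}^\gamma[X_g\circ\psi]=\psi^*\mc{G}_g^\gamma[X_g]$, and $K_{\psi^*g}\,{\rm dv}_{\psi^*g}=\psi^*(K_g\,{\rm dv}_g)$ imply that pulling back each $M$-integral by $\psi$ and using the equality of laws turns $\Pi_{\gamma,\mu}(\psi^*g,F)$ directly into $\Pi_{\gamma,\mu}(g,F(\cdot\circ\psi))$. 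The main technical obstacle, in both parts, is to establish (b) and (c) rigorously at the distributional level, since $X_g$ lives only in $H^{-s}(M)$ and $\mc{G}_g^\gamma$ is defined by a metric-dependent renormalization limit; one must exchange the chaos limit with the conformal change of metric carefully. Once (b), (c), and the legitimacy of the Cameron--Martin shift (guaranteed by smoothness of $\omega_0$) are in place, everything else is the bookkeeping of explicit exponential factors sketched above.
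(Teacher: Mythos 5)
Your proposal follows essentially the same route as the paper's proof of Propositions \ref{covconf} and \ref{diffeoinv}: conformal invariance of the GFF law (Lemma \ref{invarianceconforme}), the chaos rescaling \eqref{relationentrenorm} together with $1+\gamma^2/4=\gamma Q/2$, a Girsanov/Cameron--Martin tilt by $-\tfrac{Q}{2}(\omega-c_g(\omega))$, a zero-mode translation $c\mapsto c-\tfrac{Q}{2}c_g(\omega)$, and the Polyakov determinant anomaly \eqref{detpolyakov} combining into $\tfrac{1+6Q^2}{96\pi}$. The accounting and the diffeomorphism-invariance argument also match the paper, so this is correct and not a genuinely different proof.
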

This Theorem says that LQFT is a conformal field theory with central charge ${\bf c}_L=1+6Q^2$.
As a quantum field theory, the other objects of importance for LQFT are the correlation functions.
In Section \ref{correlfct}, we   define the $n$-point correlation functions with vertex operators $e^{\alpha_i X_g(x_i)}$ where $\alpha_i$ are weights and $x_i\in M$ some points, and we show their conformal anomaly required to be a CFT. 
This amounts somehow to taking $F(\varphi)=\prod_{i=1}^ne^{\alpha_i\varphi(x_i)}$ in \eqref{defLQFT}, but it again requires renormalization since $\varphi$ lives in $H^{-s}(M)$ with $s>0$. At this level, the construction follows the method initiated by \cite{DKRV} on the sphere. We stress that for the sphere, only the $n$-point correlation functions for $n\geq 3$  are well defined, while here the partition function is already well-defined.\\ 

\textbf{Liouville Quantum Gravity and Polyakov partition function.} Our next result is the main part of the paper and consists in giving   sense to the Liouville quantum gravity (LQG in short) partition function following the work of Polyakov \cite{Pol}.
We stress that, even though the object  comes from theoretical physics, our result and proof is purely mathematical and can be viewed, from the perspective of a mathematician, as  a way to understand the behavior of   some natural interesting function near the boundary of moduli space, namely the LQFT partition function.

Given a connected closed surface $M$ with genus $\mathbf{g}\geq 2$, quantizing the coupling of the gravitational field with matter fields amounts to making sense of the formal integral (partition function) 
\begin{equation}\label{Zlqg} 
Z= \int_{\mc{R}}e^{-S_{{\rm EH}}(g)}\Big(\int e^{-S_{\rm M}(g,\phi_m)}D_g\phi_m\Big) Dg
\end{equation}
where the measure $Dg$ lives over the space of Riemannian structures  $\mc{R}$ on $M$, i.e. the space of metrics $g$ modulo diffeomorphisms. The functional integral for matter fields $\int e^{-S_{\rm M}(g,\phi_m)}D_g\phi_m$ stands for  the quantization of an action $\phi_m\mapsto S_{\rm M}(g,\phi_m)$ over an infinite dimensional space of fields describing matter, and $S_{\rm EH}$ is the Einstein-Hilbert action 
\begin{equation}\label{introEH}
S_{\rm EH}(g)=\frac{1}{2\kappa}\int_M K_g\,d{\rm v}_g+\mu_0{\rm Vol}_g(M),
\end{equation}
where $\kappa$ is the Einstein constant, $\mu_0\in\mathbb{R}$ is the cosmological constant. 
The measure $Dg$ represents the formal Riemannian measure associated to the $L^2$ metric on the space of Riemannian metrics, or in fact its reduction to $\mc{R}$.
There are several possible choices for the matter fields and we shall focus on the choice described in Polyakov \cite{Pol}.  Giving a mathematical definition to the functional integral \eqref{Zlqg} has been a real challenge, 
and Polyakov \cite{Pol} suggested a decomposition of this integral  in the case of bosonic string theory with $D$ free bosons. In that case, $Z_{\rm M}(g):=\int e^{-S_{\rm M}(g,\phi_m)}D_g\phi_m$ is the partition function of a CFT with central charge ${\bf c_{\rm M}}=D$, and is mathematically given by a certain power of the determinant of the Laplacian.
The argument of Polyakov \cite{Pol}, pursued by D'Hoker-Phong \cite{DhPh}, for defining \eqref{Zlqg} was based on the observation that each metric $g$ on $M$ can be decomposed as
\begin{equation}\label{decompg}
g=\psi^*(e^\omega g_\tau)
\end{equation}
where $\omega\in C^\infty(M)$, $\psi$ is a diffeomorphism and $(g_\tau)_{\tau\in \mc{M}_{\rm g}}$ is a family of hyperbolic metrics on $M$ parameterizing the moduli space $\mc{M}_g$ of genus-${\bf g}$ surfaces. We recall that $\mc{M}_g$ is the space of equivalence classes of conformal structures: it is a $6{\bf g}-6$ dimensional orbifold equipped with a natural metric, called the Weil-Petersson metric, whose volume form denoted $d\tau$ has finite volume. In this way, the space  of Riemannian structures $\mc{R}$ is identified to the product of moduli space $\mc{M}_{\rm g}$ with the Weyl group $C^\infty(M)$ acting on metrics by $(\varphi,g)\mapsto e^{\varphi}g$. Applying the change of variables \eqref{decompg} in the formal integral \eqref{Zlqg} produces a Jacobian, called the ghost determinant, taking into account the quotient of the space of metrics by the space of diffeomorphisms of $M$. The ghost determinant turns out to be the partition function of a CFT with central charge ${\bf c_{{\rm ghost }}}=-26$ and Polyakov  noticed that at the specific value $D=26$  the conformal anomaly of $Z_{\rm M}$  cancels out that of the ghost term,  giving rise to a Weyl invariant  partition function 
\begin{equation}\label{reductionDhPh} 
Z=\int_{\mc{M}_g} Z_{\rm M}(g_\tau)Z_{\rm Ghost}(g_\tau)\sqrt{\det J_{g_\tau}}d{\tau}
\end{equation} 
called  {\it critical string theory}\footnote{The term "critical" refers in fact to the {\it critical dimension} $D=26$ needed to get a Weyl invariant theory without quantizing the Weyl factor $e^{\omega}$ in \eqref{decompg}.}. Concretely, this was further discussed by D'Hoker-Phong \cite{DhPh} who wrote
\begin{equation}\label{ghostdet} 
 Z_{\rm Ghost}(g)=\Big(\frac{\det(P^*_gP_g)}{\det J_g}\Big)^{1/2}, \quad Z_{\rm M}(g)=C\Big(\frac{\det'(\Delta_g)}{{\rm Vol}_g(M)}\Big)^{-\tfrac{D}{2}}\end{equation}
for some constant $C$, where the determinants are defined using spectral zeta functions, $P_g$ is a first-order elliptic operator mapping $1$-forms to trace-free symmetric $2$-tensors, and $J_g$ is the Gram matrix of a fixed basis of $\ker P_g^*$ (see Section \ref{fullpartfct}  further details). Then Belavin-Knizhnik \cite{BeKn} and Wolpert \cite{Wo2} proved that the 
integral \eqref{reductionDhPh} with $D=26$ diverges  at the boundary of (the compactification of) moduli space, a problematic fact in order to establish well-posedness of the partition function for critical  $D=26$ (bosonic) strings.

Noncritical string theories are not formulated within the critical dimension $D=26$, yet they are Weyl invariant. The idea, emerging once again from the paper \cite{Pol}, is that for $D\not=26$ the integral \eqref{Zlqg} possesses one further degree of freedom to be integrated over corresponding to the Weyl factor $e^\omega$ in \eqref{decompg}. For $D\leq 1$, hence ${\bf c_{\rm M}} \leq 1 $, Polyakov argued that integrating this factor requires using Liouville quantum field theory. In other words, applying once again the change of variables \eqref{decompg} to \eqref{Zlqg}  yields
\begin{equation}\label{intro-part}
Z=\int_{\mc{M}_{\bf g}} Z_{\rm M}(g_\tau)Z_{\rm Ghost}(g_\tau)\Pi_{\gamma,\mu} (g_\tau,1)\sqrt{\det J_{g_\tau}}\, d\tau
\end{equation}
where $\Pi_{\gamma,\mu} (g_\tau,1)$ is the partition function of LQFT in the background metric $g_\tau$. As explained above, the partition function $\Pi_{\gamma,\mu} (g_\tau,1)$ depends on two parameters $\gamma$ and $\mu$ (Weyl invariance forces $\gamma$ to be an explicit function of  ${\bf c_{\rm M}}$).  Later, Polyakov's argument   was generalized by David and Distler-Kawai \cite{cf:Da,DistKa} to  CFT type matter field theories with central charge ${\bf c_{\rm M}} \leq 1$ (thus including the case $D=1$).  In that CFT context, the integral \eqref{intro-part} is often called partition function of LQG, so that we will write $Z_{{\rm LQG}}$ for the partition function $Z$. This approach has an important consequence related to string theory as it paves the way to a rigorous construction of {\it noncritical bosonic string theory }\footnote{Noncritical bosonic string theory is sometimes referred to as {\it critical $D=2$ string theory}, by opposition to the critical $D=26$ string theory. The two dimensions correspond to one dimension for the embedding into $\R$ and one dimension for the Weyl factor: in other words the Weyl factor $\omega$ in \eqref{decompg}  plays the role of a hidden dimension, see the explanations in \cite{polchinski} page 121.}  provided one can make sense of \eqref{intro-part}. This is the main purpose of this paper. The importance of this theory is discussed  in great details in \cite[section 5.1]{polchinski}  or \cite{Kleb} for instance.

In what follows, we will therefore consider the partition function $Z_{{\rm LQG}}$ defined by \eqref{intro-part}
where  we choose for the matter partition function 
\begin{equation}\label{introstringaction}
Z_{\rm M}(g)=\Big(\frac{{\det}' \Delta_g}{{\rm Vol}_g(M)}\Big)^{-\frac{{\bf c_{\rm M}}}{2}},
\end{equation}
while the ghost determinant $Z_{\rm Ghost}(g_\tau)$ is defined by  \eqref{ghostdet} and $d\tau$ is the Weil-Petersson measure. Notice that \eqref{introstringaction} is nothing but the partition function \eqref{ghostdet} for $D=c_{\rm M}$ free bosons  extended to all possible values $D\leq 1$. The parameters in  \eqref{defLQFT}   are tuned in such a way that the global conformal anomaly of the product 
$$Z_{\rm M}(g_\tau)Z_{\rm Ghost}(g_\tau)\Pi_{\gamma,\mu} (g_\tau,1)$$ vanishes, hence ensuring Weyl invariance of the whole theory \eqref{intro-part}. In view of Theorem \ref{introweyl}, this gives  the relation
$${\bf c_{\rm M}}-26+1+6Q^2=0,$$
hence determining the value of $\gamma$ (encoded by $Q$) in terms of the central charge ${\bf c_{\rm M}}$ of the matter fields.  We refer to Section \ref{fullpartfct} for more explanations.

 The main result of this paper is the following:
\begin{theorem}[Convergence of the partition function]\label{mainth2}
For surfaces of genus ${\bf g}\geq 2$, the integral defining the partition function $Z_{\rm LQG}$ of \eqref{intro-part} converges for $\gamma\in]0,2]$, that is for $\mathbf{c}_{\rm M}\leq 1$.
\end{theorem}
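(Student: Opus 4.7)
The plan is to reduce $\Pi_{\gamma,\mu}(g_\tau,1)$ to a closed form on the hyperbolic representative $g_\tau$ via Gauss-Bonnet, and then to analyse the integrand of $Z_{\rm LQG}$ in local coordinates near the Deligne-Mumford boundary of $\overline{\mathcal{M}}_{\bf g}$ by combining classical spectral asymptotics of determinants with a probabilistic estimate for the Liouville factor.

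First, since $g_\tau$ is hyperbolic, $K_{g_\tau}$ is constant, so Gauss-Bonnet combined with $\langle X_{g_\tau},1\rangle = 0$ gives $\int_M K_{g_\tau}(c+X_{g_\tau})\,{\rm dv}_{g_\tau}=4\pi(1-{\bf g})c$. The substitution $u=\mu e^{\gamma c}\mathcal{G}^\gamma_{g_\tau}(M)$ in the $c$-integral of (1.10) then turns it into a Gamma function:
\[
\Pi_{\gamma,\mu}(g_\tau,1)=\frac{\mu^{-s}\Gamma(s)}{\gamma}\Bigl(\frac{{\det}'\Delta_{g_\tau}}{{\rm Vol}_{g_\tau}(M)}\Bigr)^{\!-1/2}\mathbb{E}\bigl[\mathcal{G}^\gamma_{g_\tau}(M)^{-s}\bigr],\qquad s:=\tfrac{Q({\bf g}-1)}{\gamma}>0.
\]
Since by Gauss-Bonnet ${\rm Vol}_{g_\tau}(M)=4\pi({\bf g}-1)$ is a topological constant on $\mathcal{M}_{\bf g}$, the integrand of $Z_{\rm LQG}$ reduces, up to an overall multiplicative constant, to
\[
\mathbf{I}(\tau):=\bigl({\det}'\Delta_{g_\tau}\bigr)^{-(\mathbf{c}_{\rm M}+1)/2}\sqrt{\det(P^*_{g_\tau}P_{g_\tau})}\,\mathbb{E}\bigl[\mathcal{G}^\gamma_{g_\tau}(M)^{-s}\bigr].
\]

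Second, I would work in Fenchel-Nielsen coordinates around each boundary stratum. For a simple pants curve pinching to length $\ell\to 0$, Wolpert's expansion gives $\log{\det}'\Delta_{g_\tau}=-\pi^2/(3\ell)+O(\log\ell^{-1})$, and the analogous result of D'Hoker-Phong for the ghost operator produces $\log\det(P^*_{g_\tau}P_{g_\tau})=-26\pi^2/(3\ell)+O(\log\ell^{-1})$. In view of the Polyakov relation $\mathbf{c}_{\rm M}+1+6Q^2=26$ that defines $Q$ in $Z_{\rm LQG}$, the leading exponential behaviour of the deterministic spectral factor is expected to cancel the leading exponential growth of $\mathbb{E}[\mathcal{G}^\gamma_{g_\tau}(M)^{-s}]$ coming from the Liouville CFT central charge $1+6Q^2$. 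Together with Wolpert's formula expressing the Weil-Petersson form as $d\tau=\prod\tfrac12\,d\ell_j\wedge d\theta_j$ (smooth across each boundary stratum), the convergence problem reduces to controlling the subleading polynomial behaviour of $\mathbf{I}(\tau)$.

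The main obstacle---and the technical heart of the proof---is therefore a uniform quantitative analysis of the Liouville factor $\mathbb{E}[\mathcal{G}^\gamma_{g_\tau}(M)^{-s}]$ as $\ell\to 0$. The plan is to decompose the GFF along the plumbing family as $X_{g_\tau}=Y_\tau+\tilde X_\tau$, where $Y_\tau$ is supported on the low modes of $\Delta_{g_\tau}$ whose eigenvalues vanish at the pinching (to leading order their eigenfunctions are piecewise constant on the components of the nodal limit $M_\infty$), and $\tilde X_\tau$ converges to a non-degenerate GFF on $M_\infty$ via quantitative Green-function comparison on the plumbing family. Taking the $s$-th negative moment factors into a Gaussian integral in the low-mode coefficients against a non-degenerate GMC mass on $M_\infty$; a sharp Laplace-type estimate then produces the expected exponential growth together with a controlled polynomial remainder, which when combined with the subleading logarithmic terms in Wolpert's and D'Hoker-Phong's asymptotics and the Weil-Petersson measure yields an integrable majorant for $\mathbf{I}(\tau)$ near every boundary stratum. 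The case $\gamma=2$ (i.e. $\mathbf{c}_{\rm M}=1$) requires extra care, because the GMC degenerates at criticality and the negative-moment estimate has to be carried out in the critical normalisation. Combining these local estimates with compactness of the thick part of $\overline{\mathcal{M}}_{\bf g}$ yields the convergence of $Z_{\rm LQG}$.
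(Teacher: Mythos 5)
Your overall skeleton matches the paper's: reducing $\Pi_{\gamma,\mu}(g_\tau,1)$ to the Gamma-function formula via Gauss--Bonnet, identifying the integrand $\mathbf{I}(\tau)$, appealing to Wolpert / D'Hoker--Phong determinant asymptotics near the boundary, and decomposing the GFF into low-mode plus residual pieces. Two points of fact, though. First, the coefficient you write for $\log\det(P^*_{g_\tau}P_{g_\tau})$ is wrong: Wolpert gives $\log Z_g(2)\sim-\pi^2/(3\ell)$, hence $\log\det(P^*P)\sim-2\pi^2/(3\ell)$, \emph{not} $-26\pi^2/(3\ell)$; the ghost central charge $-26$ appears only as a conformal-anomaly coefficient in Alvarez's formula, i.e.\ as the \emph{power} to which the determinants enter in $Z_{\rm Ghost}Z_{\rm M}\Pi_{\gamma,\mu}$, producing the net exponential factor $\exp\bigl((\mathbf{c}_{\rm M}-1)\pi^2/(6\ell)\bigr)$ per pinched separating curve. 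Second, $\mathbb{E}[\mathcal{G}^\gamma_{g_\tau}(M)^{-s}]$ has no exponential growth: the exponentials in $\Pi_{\gamma,\mu}$ come from the prefactor $({\det}'\Delta_{g_\tau})^{-1/2}$, and what the argument actually requires is a precise \emph{polynomial decay} estimate on this negative moment.

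The genuine gap is in ``a sharp Laplace-type estimate then produces\dots a controlled polynomial remainder.'' There is no saddle-point mechanism here, and it also isn't enough to replace the residual field by a ``non-degenerate GMC mass on $M_\infty$'': all the action that produces the decay lives in the degenerating \emph{collars}, where the GMC mass blows up and the tilting by the low-mode eigenfunctions $\varphi_i(\rho)\sim|\rho|^{-\lambda_i}$ has nontrivial $\rho$-dependence. The paper's actual device (Lemma~5.5, \emph{driftout}) estimates negative moments of the GMC on the collars by (i) splitting the Green function there into radial and angular parts using the sharp collar estimates of Proposition~2.8, (ii) identifying the radial field as a time-changed Brownian bridge, (iii) applying Girsanov to strip the drift from the Robin mass and eigenfunctions, and (iv) partitioning by excursion levels of the residual Brownian motion plus the reflection principle, which yields the $\ell^{1/2-\delta}$ factor per collar. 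The remaining $\prod_i\lambda_i^{1/2}$ factor, needed to absorb $\prod_{\lambda_i<1/4}\lambda_i^{-1}$ from Wolpert's formula for ${\det}'\Delta$, comes from the expectation over the low-mode Gaussians $(a_i)$, and it is obtained by a non-obvious choice of \emph{sign-dependent random convexity weights} in Kahane's inequality, engineered so that the aggregate drift always produces the favourable decay $e^{-C\sum_i|a_i|/\sqrt{\lambda_i}}$. None of this is anticipated by a Laplace heuristic. This machinery is what makes $\gamma=2$ tractable; for $\gamma<2$ the surplus decay $\exp(-(1-\mathbf{c}_{\rm M})\pi^2/(6\ell))$ absorbs all subleading terms and a soft Kahane-convexity bound suffices, so the split you describe as ``extra care at criticality'' is in fact a complete change of method.
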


The  integral defining $Z_{\rm LQG}$ in the case   $\mathbf{c}_{\rm M}=0$ corresponds to the case of {\it pure gravity} (i.e. no matter),  $\mathbf{c}_{\rm M}=-2$ to  uniform spanning trees and $\mathbf{c}_{\rm M}=1$ (equivalently $D=1$)   to  noncritical strings (or $D=2$ string theory).  As far as we know, this is the first proof of convergence of string theory on hyperbolic surfaces with fixed genus.

Using this Theorem, we can now see the metric $g$ on $M$ as a random variable with law ruled by the partition function \eqref{intro-part}. The Riemannian volume and modulus of this random metric are called    the quantum gravity volume form  (LQG volume form) and quantum gravity modulus, see Theorem \ref{defLiouvillemeas}.  Furthermore we formulate  conjectures relating the LQG volume form   to the scaling limit of random planar maps in the case of pure gravity $\mathbf{c}_{\rm M}=0$, hence providing the scaling limit of the model  studied in \cite{miermont}, or to the scaling limit of random planar maps weighted by the discrete Gaussian free field in the case $\mathbf{c}_{\rm M}=1$, see Section \ref{randplanar}. Though we do not explicitly write a conjecture, we further mention here that the limit of random planar maps with fixed topology weighted by uniform spanning trees corresponds to $\mathbf{c}_{\rm M}=-2$.

The main input of our paper is the proof of Theorem \ref{mainth2}. We need to analyse the integrands in \eqref{intro-part} near the boundary of moduli space and show that we can control them. The moduli space $\mc{M}_{\bf g}$ can be viewed as a 
$6{\bf g}-6$ dimensional non-compact orbifold of hyperbolic metrics on $M$, that can be compactified in such a way that its boundary corresponds 
to pinching closed geodesics. Hyperbolic metrics on $M$ corresponding to points in $\pl\mc{M}_{\bf g}$ are complete hyperbolic surfaces with cusps and finite volume. The estimates of Wolpert \cite{Wo2} describe the parts involving the ghost and matter terms. The heart of our work
is to analyse the behavior of Gaussian multiplicative chaos under degeneracies of the hyperbolic surfaces: this is rather involved since there are in general 
small eigenvalues of Laplacian tending to $0$ and the covariance of the GFF (i.e. the Green function) is thus diverging. There is yet a huge conceptual gap between the cases  $\mathbf{c}_{\rm M}<1$ and  $\mathbf{c}_{\rm M}=1$. Roughly, the reason is the following: the product $Z_{\rm M}(g_\tau)Z_{\rm Ghost}(g_\tau)$ is at leading order comparable to $\prod_{j}e^{-\frac{\pi^2}{3\ell_j}(1-\tfrac{\mathbf{c}_{{\rm M}}}{2 })}$, where the product runs over pinched geodesics with lengths $\ell_j\to 0$ when approaching the boundary of the (compactification of) moduli space  -- see subsection \ref{detoflap} for more precise statements-- whereas $\Pi_{\gamma,\mu}(g_\tau,1)$ is comparable to $\prod_{j}e^{-\frac{\pi^2}{3\ell_j}(-\tfrac{1}{2 })}\times F(\mc{G}_g^\gamma(M))$ where $F(\mc{G}_g^\gamma(M))$ is an explicit functional expectation of the Gaussian multiplicative chaos $\mc{G}_g^\gamma(M)$. Hence, for  $\mathbf{c}_{\rm M}<1$, we prove soft estimates on the functional $F(\mc{G}_g^\gamma(M))$ that are enough to get an exponential decay of the product $Z_{\rm M}(g_\tau)Z_{\rm Ghost}(g_\tau)\Pi_{\gamma,\mu}(g_\tau,1)$ at the boundary of $\mc{M}_{\bf g}$ and thus integrability with respect to the Weil-Petersson measure. In the case $\mathbf{c}_{\rm M}=1$, the leading exponential behaviors  cancel out exactly  so that the analysis must determine the polynomial corrections behind the leading exponential behavior, rendering the computations more much intricate. 
In order to analyse the mass of the Gaussian multiplicative chaos measure in these degenerating regions, we need to prove uniform estimates (that, as far as we know, are new) on the Green function and on the eigenfunctions associated to the small eigenvalues in the pinched necks of the surface, as functions of the moduli space parameters $\tau$ when $\tau$ approaches the boundary $\mc{\pl}\mc{M}_{\bf g}$. 
Roughly speaking, the crucial observation is that the GFF behaves like two independent Brownian motions in the variable transverse to the closed geodesic being pinched, and this allows us to translate the problem in terms of explicit functionals of Brownian motion. For (and only for) $\mathbf{c}_{\rm M}=1$, we show that the pinching produces an extra rate of decay of $\Pi_{\gamma,\mu} (g_\tau,1)$ as we approach $\mc{\pl}\mc{M}_{\bf g}$, implying the convergence.  

To conclude this introduction, we point out that an interesting different approach to define path integrals for random K\"ahler metrics on surfaces was introduced recently by Ferrari-Klevtsov-Zelditch \cite{FKZ,KlZe}, but the link with our work is not established rigorously.

%%%%%%%%%%%%%%%%%%%%%%
\subsubsection*{Acknowledgements:} This project has received funding from the European Research Council (ERC) under the European Union’s Horizon 2020 research and innovation programme (grant agreement No. 725967). C.G. was also partially funded by grants ANR-13-BS01-0007-01, R.R. and V.V.  are partially funded by the ANR project Liouville (ANR-15-CE40-0013). We would like to thank A. Bilal, E. D'Hoker, G. Freixas, I. Klebanov, F. Rochon, L. Takhtajan for useful conversations, and the anonymous referees for their comments and suggestions.

%%%%%%%%%%%%%%%%%%%%%%%%%%%%%%%%%%%%%%%%%%%%%%%%%%%%%%%%%%%%%%%%%%%
\section{Geometric background and Green functions}  \label{sectionNotation}
%%%%%%%%%%%%%%%%%%%%%%%%%%%%%%%%%%%%%%%%%%%%%%%%%%%%%%%%%%%%%%%%%%%
 
\subsection{Uniformisation of compact surfaces of genus ${\bf g}\geq 2$}
 
Let $M$ be a compact surface of genus ${\bf g}\geq 2$ and let $g$ be a smooth Riemannian metric. Recall that Gauss-Bonnet tells us that 
\begin{equation}\label{GB} 
\int_{M}K_g{\rm dv}_g=4\pi\chi(M)
\end{equation}
where $\chi(M)=(2-2{\bf g})$ is the Euler characteristic, $K_g$ the scalar curvature of $g$ and ${\rm dv}_g$ the Riemannian measure.  
The uniformisation theorem says that in the conformal class 
\[ [g]:=\{ e^{\varphi}g; \varphi\in C^\infty(M)\}\]
of $g$, there exists a unique metric $g_0=e^{\varphi_0}g$ of scalar curvature $K_{g_0}=-2$. 
For a metric $\hat{g}=e^{\varphi}g$, one has the relation 
\[ K_{\hat{g}}=e^{-\varphi}(\Delta_g \varphi+K_{g})\]
where $\Delta_g=d^*d$ is the non-negative Laplacian (here $d$ is exterior derivative and $d^*$ its adjoint). 
Finding $\varphi_0$ is achieved by minimizing the following functional 
\[ F: C^\infty(M)\to \R^+, \quad F(\varphi):= \int_{M} (\demi |d\varphi|^2_g+K_g\varphi+2e^{\varphi}){\rm dv}_g
\] 
and taking $\varphi_0$ to be the function such that $F(\varphi)$ is minimum at $\varphi=\varphi_0$.
We will embed this functional into a more general one, depending on three parameters, called \emph{Liouville functional}:  
let $\gamma ,Q,\mu>0$ and  define
\begin{equation}\label{QLiouville}
S_L(g,\varphi):=\frac{1}{4\pi}\int_M \big(|d\varphi|^2_g+QK_g\varphi +4\pi \mu e^{\gamma \varphi}\big)\, {\rm dv}_g.
\end{equation}
When $Q=2/\gamma$ and $\pi\mu\gamma^2=1$, we can write $S_L(g,\varphi)=\frac{1}{2\gamma^2\pi}F(\gamma\varphi)$. In fact,   if $\hat{g}=e^{\omega}g$ for some $\varphi$, the functional $S_L$ satisfies the relation  
\[S_L\Big(\hat{g},\varphi-\frac{\omega}{\gamma}\Big)=S_L(g,\varphi)+\frac{1}{4\pi}\int_{M}\Big(\big(\frac{1}{\gamma^2}-\frac{Q}{\gamma}\big)|d\omega|^2_g-\frac{Q}{\gamma}K_g\omega+\big(Q-\frac{2}{\gamma}\big)\varphi\Delta_g\omega 
\Big){\rm dv}_g\]
and in particular if $Q=2/\gamma$ it satisfies 
\begin{equation}\label{conformS}
S_L\Big(\hat{g},\varphi-\frac{\omega}{\gamma}\Big)=S_L(g,\varphi)-\frac{1}{4\pi\gamma^2}\int_{M}(|d\omega|^2_g+2K_g\omega){\rm dv}_g,\end{equation}
which is called conformal anomaly: changing the conformal factor of the metric entails a variation  of the functional proportional to the Liouville functional. Similar properties will be shared by the quantum version  of the Liouville theory, which fall under the scope of Conformal Field Theory. At this stage let us just mention that we will show that the value of $Q$ for the quantum Liouville theory to possess a conformal anomaly has to be adjusted to take into account quantum effects. More precisely we will have in the quantum theory
\begin{equation}\label{valueQ}
Q=\frac{2}{\gamma}+\frac{\gamma}{2}.
\end{equation}

 \subsection{Hyperbolic surfaces, Teichm\"uller space and Moduli space}\label{Sec:Teichmuller}
%%%%%%%%%%%%%%%%%%%%%%%

Let $M$ be a surface of genus ${\bf g}\geq 2$. The set of smooth metrics on $M$ is a Fr\'echet manifold denoted by 
${\rm Met}(M)$ and contained in the Fr\'echet space of smooth symmetric tensors $C^\infty(M;S^2T^*M)$ of order $2$. This space has a natural $L^2$ metric given by 
\begin{equation}\label{L2metric} 
\cjg h_1,h_2\cjd := \int_M \cjg h_1,h_2\cjd_g{\rm dv}_g
\end{equation}
where $h_1,h_2\in T_g{\rm Met}(M)=C^\infty(M;S^2T^*M)$ and $\cjg\cdot,\cdot\cjd_g$ is the usual scalar product on endomorphisms of $TM$ when we identify symmetric $2$-tensors with endomorphisms of $TM$ through the metric $g$.
 A metric with Gaussian curvature $-1$ will be called hyperbolic, 
 we denote by ${\rm Met}_{-1}(M)$ the set of such metrics on $M$. 
 The group $\mc{D}(M)$ of smooth diffeomorphisms acts smoothly and properly on ${\rm Met}(M)$ and on 
 ${\rm Met}_{-1}(M)$ by pull-back $\phi.g:=\phi^*g$, moreover it acts by isometries with respect to
  the metric \eqref{L2metric}. The subgroup $\mc{D}_0(M)\subset \mc{D}(M)$ of elements contained in the connected component of the Identity also acts properly and smoothly and ${\rm Mod}(M):=\mc{D}(M)/\mc{D}_0(M)$ is a discrete subgroup called \emph{mapping class group} or \emph{moduli group}. The Fr\'echet space $C^\infty(M)$ acts on ${\rm Met}(M)$ by conformal multiplication 
$(\varphi, g)\mapsto e^{\varphi}g$. The orbits of this action are called \emph{conformal classes} and the conformal class of a metric 
$g$ is denoted by $[g]$.

The \emph{Teichm\"uller space} of $M$ is defined by 
\[\mc{T}(M) := {\rm Met}_{-1}(M)/\mc{D}_0(M).\]
By taking slices transverse to the action of $\mc{D}_0(M)$, we can put a structure of smooth manifold with real dimension $6{\bf g}-6$, it is topologically a ball, and
its tangent space at a metric $g$ (representing a class in $\mc{T}(M)$) can be identified naturally with the space of divergence-free trace-free tensors with respect to $g$ by choosing appropriately the slice.
Teichm\"uller space has a complex structure and is equipped with a natural K\"ahler metric called the \emph{Weil-Petersson} metric, which is defined by 
\[ \cjg h_1,h_2\cjd_{{\rm WP}} := \int_M \cjg h^{\rm tf}_1,h_2^{\rm tf}\cjd_g{\rm dv}_g\]
if $h_1,h_2\in T_g\mc{T}(M)$ and $h_i^{\rm tf}=h_i-\demi {\rm Tr}_g(h_i)g$ denotes the trace-free part. 
The Weil-Petersson metric is not the metric induced by \eqref{L2metric} after quotienting by $\mc{D}_0(M)$ but it is rather induced by the $L^2$ metric on almost complex structures, when we identify almost complex structures with metrics of constant curvature. We refer for to the book of Tromba \cite{Tr} for more details about this approach of Teichm\"uller theory, the Weil-Petersson metric is discussed in Section 2.6 there.\\

The group ${\rm Mod}(M)$ acts 
properly discontinuously on $\mc{T}(M)$ by isometries of the Weil-Petersson metric, but the action is not free and there are elements of finite order.
The quotient $\mc{M}(M):=\mc{T}(M)/{\rm Mod}(M)$ 
is a Riemannian orbifold called the \emph{moduli space} of $M$, its orbifold singularities corresponding to 
hyperbolic metrics admitting isometries. 
 Since $\mc{T}(M)$, ${\rm Mod}(M)$ and $\mc{M}(M)$ actually depend only on the genus ${\bf g}$ of 
$M$, we shall denote them $\mc{T}_{\bf g}$, ${\rm Mod}_{\bf g}$ and $\mc{M}_{\bf g}$.
The manifold $\mc{M}_{\bf g}$ is open but can be compactified into $\bbar{\mc{M}_{\bf g}}$, the locus of the compactification is a divisor $D\subset\bbar{\mc{M}_{\bf g}}$ and the Weil-Petersson distance is complete on that space. 
Since we will need to understand the behavior of certain quantities on the moduli space, we 
now recall its geometry near the divisor $D$ and we shall follow the description 
given by Wolpert (\cite{Wo1,Wo2,Wo3}) for this compactification. On a surface $M$ of genus ${\bf g}$, there is a unique geodesic in each free homotopy class, and we call a partition of $M$ a collection of $3{\bf g}-3$ simple closed curves 
$\{\gamma_1,\dots,\gamma_{3{\bf g}-3}\}$ which are 
not null-homotopic and not mutually homotopic. If $g\in {\rm Met}_{-1}(M)$, there is a unique simple geodesic homotopic to each $\gamma_j$ and we obtain a decomposition of $(M,g)$ into $2{\bf g}-2$ hyperbolic pants 
(a pant is a topological sphere with $3$ disks removed, equipped with a hyperbolic metric and with totally geodesic boundary). A \emph{subpartition} of $M$ is a collection of $n_p$ simple curves 
$\{\gamma_1,\dots,\gamma_{n_p}\}$ which are not null-homotopic and not mutually homotopic, with $n_p\leq 3{\bf g}-3$; they disconnect the surface into surfaces with boundary. A surface $(M_0,g_0)$ 
in $\pl \bbar{\mc{M}_{\bf g}}$ is a surface with nodes: $M_0$ is the interior of 
a compact surface $M$ with $n_p$ simple curves $\gamma_1,\dots,\gamma_{n_p}$ removed and $g_0$ is a complete hyperbolic metric with finite volume on $M_0$, the metric in a collar neighborhood $[-1,1]_\rho\x (\R/\zz)_\theta$ of each $\gamma_j$ (with $\gamma_j=\{\rho=0\}$) being 
\[ g_0= \frac{d\rho^2}{\rho^2}+\rho^2d\theta^2.\] 
Notice that these corresponds to a pair of hyperbolic cusps, each one 
isometric to $(\R^+_t\times (\R/\Z)_\theta,dt^2+e^{-2t}d\theta^2)$ by setting $\rho=\pm e^{-t}$. 
Now there is a neighborhood $\bbar{U}_{g_0}$ of $g_0$ in $\bbar{\mc{M}_{\bf g}}$ represented by 
hyperbolic metrics $g_{s,\tau}$ on $M_0$ for some parameter $(s,\tau)\in \cc^{3g-3-m}\x\cc^{m}$ near $0$, with $g_{s,\tau}$ which are smooth metrics on $M$ when $\tau\not=0$ and complete smooth metrics with hyperbolic cusps on $M_0$ when $\tau=0$, and $g_{0,0}=g_0$. Moreover, the metrics 
$g_{s,\tau}$ are continuous with respect to $(s,\tau)$ on compact sets of $M_0$ for $(s,\tau)$ near $0$ (in the $C^\infty$ topology), they are given in the fixed collar neighborhood $[-1,1]_\rho\x (\R/\zz)_\theta$ of $\gamma_j$  by 
\begin{equation}\label{expressiongst} 
g_{s,\tau}= e^{\varphi_{s,\tau}}\Big( \frac{d\rho^2}{\eps_j^2+\rho^2}+(\eps_j^2+\rho^2)d\theta^2\Big) \end{equation}
with $\eps_j:=2\pi^2/|\log|\tau_j||$ and $\varphi_{s,\tau}\in C^\infty(M)$ satisfies 
\[ e^{\varphi_{s,\tau}}-1\to 0 \textrm{ as }(s,\tau)\to 0\]
in $C^0$ norm (and in fact in $C^\infty$ on compact sets of $M_0$). Here we notice that the metric 
$\frac{d\rho^2}{\eps_j^2+\rho^2}+(\eps_j^2+\rho^2)d\theta^2$ has curvature $-1$ in the collar 
and is isometric to 
\begin{equation}\label{modelepincement} 
[-t_j,t_j]_t\x (\R/\zz)_{\theta}, \quad dt^2+\eps_j^2\cosh(t)^2d\theta^2
\end{equation}
by setting $\eps_j\sinh(t)=\rho$ and $\eps_j\sinh(t_j)=1$. 
The geodesic $\gamma_j(g_{s,t})$ for $g_{s,t}$ homotopic to $\gamma_j$ has length 
\[ \ell_j(g_{s,\tau})=\eps_j(1+o(1)) \textrm{ as }|(s,\tau)|\to 0\]
and is contained in a neighborhood $[-c\eps_j,c\eps_j]\x \R/\zz$ of the collar near $\gamma_j$ for some $c>0$ independent of $\eps_j$ (or equivalently in $t\in [-c,c]$).
Using $|e^{\varphi_{s,\tau}}-1|<\delta$ for some small $\delta>0$,
the set $B_j=\{m\in M; d_{g}(\gamma_j(g_{s,\tau}),m)\geq |\log \ell_j(g_{s,\tau})|\}$ is contained in a compact set of $M_0$ uniform in $(s,\tau)$ where the metrics depend continously on $(s,\tau)$. We can then use geodesic normal coordinates with respect to $g$ around $\gamma_j(g_{s,\tau})$ and the collar $\mc{C}_j(g_{s,\tau})=M\setminus B_j$ is isometric to  
\begin{equation}\label{modelepincementgeo} 
[-|\log \ell_j(g_{s,\tau})|, |\log \ell_j(g_{s,\tau})|]_t\x (\R/\zz)_{\theta}, \quad dt^2+\ell_j(g_{s,\tau})^2\cosh(t)^2d\theta^2.
\end{equation}
To summarize, the geometry is uniformly bounded outside $\cup_{j=1}^r\mc{C}_j(g)$ for metrics $g$
in a small neighborhood $\bbar{U}_{g_0}$ of $g_0$ in $\bbar{\mc{M}_{{\bf g}}}$.
 
The loops $(\gamma_j)_j$ define a subpartition. The open strata of $D$ correspond to subpartitions up to equivalence by elements in ${\rm Mod}_{\bf g}$.
For each $\beta>0$, the set of metrics in $\mc{M}_{\bf g}$ such that all geodesics have length larger than $\beta$ is a compact subset of $\mc{M}_{\bf g}$ called the $\beta$-thick part. The $\beta$-thin part of $\mc{M}_{\bf g}$ is the complement of the $\beta$-thick part.
By Lemma 6.1 of \cite{Wo2}, there exists a constant $\beta>0$ so that the $\beta$-thin part of $\mc{M}_{\bf g}$ is covered by a finite set of neighborhoods $U({\rm SP}_j)$, $j=1,\dots,J$ where ${\rm SP}_j$ denote some subpartitions of $M$ 
and $U({\rm SP}_j)$ denote the set of surfaces in Teichm\"uller space 
(up to ${\rm Mod}_{\bf g}$ equivalence) for which the geodesics in the homotopy class of curves of 
${\rm SP}_j$ have length less than $\beta$ and the other ones have length bounded below by $\beta/2$. Each $U({\rm SP}_j)$ is a neighborhood of a strata of $D$. 

For each pants decomposition of the surface (with genus ${\bf g}$), one has associated coordinates $\tau=(\ell_1,\dots,\ell_{3{\bf g}-3},\theta_1,\dots, \theta_{3{\bf g}-3})$ where $\ell_j$ are the lengths of the simple closed geodesics bounding the pair of pants and $\theta_j\in [0,2\pi)$ are the twist angles (see \cite{Wo1}). The Weil-Petersson volume form is given in these coordinates by 
\begin{equation}\label{WPvol}
d\tau:=C_{\bf g} \prod_{j=1}^{3{\bf g}-3}\ell_j d\theta_j d\ell_j
\end{equation}
for some constant $C_{\bf g}>0$ depending only on the genus.
   
\subsection{Determinant of Laplacians} \label{detoflap}
%%%%%%%%%%%%%%%%%%%%%%%%%%%%%%%%%%%%%

For a Riemannian metric $g$ on a connected oriented compact surface $M$, the non-negative Laplacian $\Delta_g=d^*d$ has discrete spectrum
${\rm Sp}(\Delta_g)=(\la_j)_{j\in \N_0}$ with $\la_0=0$ and $\la_j\to +\infty$. We can define the determinant of $\Delta_g$ by 
\[ {\det} '(\Delta_g)=\exp(-\pl_s\zeta(s)|_{s=0})\]
where $\zeta(s):=\sum_{j=1}^\infty \la_j^{-s}$ is the spectral zeta function of $\Delta_g$, which admits a meromorphic continuation from ${\rm Re}(s)\gg 1$ to $s\in \cc$ and is holomorphic at $s=0$. We recall that if $\hat{g}=e^{\varphi}g$ for some $\varphi\in C^\infty(M)$, one has the so-called Polyakov formula (see \cite[eq. (1.13)]{OPS}) 
\begin{equation}\label{detpolyakov} 
\log \frac{{\det}'(\Delta_{\hat{g}})}{{\rm Vol}_{\hat{g}}(M)}= \log \frac{{\det}'(\Delta_g)}{{\rm Vol}_{g}(M)} -\frac{1}{48\pi}\int_M(  |d\varphi|_g^2+2K_g\varphi){\rm dv}_g\end{equation}
where $K_g$ is the scalar curvature of $g$ as above. It is interesting to compare \eqref{detpolyakov} with the conformal anomaly
\eqref{conformS} of the Liouville action $S_L$.
To compute ${\det}'(\Delta_g)$, it thus suffices to know it for an element in the conformal class, and by the uniformisation theorem we can choose a metric $g$ of scalar curvature $-2$ (or equivalently Gaussian curvature $-1$) if $M$ has genus ${\bf g}\geq 2$. Such hyperbolic surface can be realized as a quotient $\Gamma\backslash \hh^2$ of the hyperbolic half-plane 
\[\hh^2 :=\{z\in \cc; {\rm Im}(z)>0\} \textrm{ with metric }g_{\hh^2}=\frac{|dz|^2}{({\rm Im}(z))^2}\]
 by a discrete co-compact subgroup $\Gamma\subset {\rm PSL}_2(\R)$ with no torsion. 
 In each free homotopy class on $M=\Gamma\backslash \hh^2$, there is a unique closed geodesic, and we can form the Selberg zeta function 
\[ Z_{g}(s)=\prod_{\gamma\in \mc{P}}\prod_{k=0}^\infty(1-e^{-(s+k)\ell(\gamma)}), \quad {\rm Re}(s)>1\]
where $\mc{P}$ denotes the set of primitive closed geodesics of $(M,g)\simeq \Gamma\backslash \hh^2$ and $\ell(\gamma)$ are their lengths (recall that primitive closed geodesics are oriented 
closed geodesics that are not iterates of another closed geodesic). By the work of Selberg, The function $Z_g(s)$ admits an analytic continuation to $s\in \C$ and it is proved by D'Hoker-Phong \cite{DhPh} that 
\begin{equation}\label{detzeta}
{\det}'\Delta_{g}=Z'_g(1)e^{(2{\bf g}-2)C}
\end{equation}
where $C$ is an explicit universal constant (see also D'Hoker-Phong or Sarnak \cite{DhPh2,Sa}). The behavior of $Z'_g(1)$ near the boundary of $\mc{M}_{\bf g}$ is studied by Wolpert \cite{Wo2}: there exists $C_{\bf g}>0$ a constant depending only on the genus such that for all $g\in \mc{M}_{\bf g}$
\begin{equation}\label{Wolpert1} 
C_{\bf g}^{-1}\prod_{j=1}^{n_p}\frac{e^{-\frac{\pi^2}{3\ell_j(g)}}}{\ell_j(g)} \prod_{\la_k(g)<1/4}\la_k(g)\leq Z'_g(1)\leq 
C_{\bf g}\prod_{j=1}^{n_p}\frac{e^{-\frac{\pi^2}{3\ell_j(g)}}}{\ell_j(g)} \prod_{\la_k(g)<1/4}\la_k(g)
\end{equation}
where $\la_k(g)$ are the eigenvalues of $\Delta_g$ and $\ell_j(g)$ are the lengths of closed geodesics with length less than $\eps>0$ for some small fixed $\eps>0$.\\

There is another operator which appears in the work of Polyakov \cite{Pol} and whose determinant is important in 2D quantum gravity. Let $P_g$ be the differential operator mapping differential $1$-forms on $M$ to symmetric trace-free $2$-tensors, defined by 
\[ P_g\, \omega:=2 \mc{S}\nabla^g \omega - {\rm Tr}_{g}(\mc{S}\nabla^g \omega)g.\]
Here $\nabla^g$ is the Levi-Civita connection for $g$, ${\rm Tr}_g$ denotes the trace with respect to $g$ and 
$\mc{S}$ denotes the orthogonal projection on symmetric 
$2$-tensors. The kernel of $P_g$ is the space of conformal Killing vector fields, which is thus trivial in genus ${\bf g}\geq 2$.  Its adjoint $P_g^*$ is given by $P_g^*u:=\delta_g(u)=-{\rm Tr}_g(\nabla^g u)$ and called the divergence operator on symmetric trace-free $2$-tensors. Its kernel has real dimension $6{\bf g}-6$ and is conformally invariant. We denote by $(\phi_1,\dots,\phi_{6{\bf g}-6})$ a fixed basis of $\ker P_g^*$ and 
by $J_g$ the matrix $(J_g)_{ij}=\cjg \phi_i,\phi_j\cjd_{g}$ 
The operator $P_g^*P_g$ is an elliptic positive self-adjoint second order differential operator acting on $1$-forms,  and we can define its determinant by 
\[ {\det}(P_g^*P_g)=\exp(-\pl_s\zeta_1(s)|_{s=0}), \quad \zeta_1(s)=\sum_{j=1}^\infty \mu_j^{-s}\]
where $\mu_j>0$ are the non-zero eigenvalues of $P_g^*P_g$. 
The conformal anomaly for this operator is proved by Alvarez \cite[Eq. 4.27]{Al} and reads 
\begin{equation}\label{anomaliePg} 
\log \frac{{\det}(P_{\hat{g}}^*P_{\hat{g}})}{\det J_{\hat{g}}}= \log \frac{{\det}(P_g^*P_g)}{\det J_{g}} -\frac{13}{24\pi}\int_M(  |d\varphi|_g^2+2K_g\varphi){\rm dv}_g
\end{equation}
if $\hat{g}=e^{\varphi}g$.
By \cite{DhPh}, one has for $(M,g)$ a hyperbolic surface 
realized as $\Gamma\backslash \H^2$
\begin{equation}
\label{detzeta_1}
{\rm det}(P_g^*P_g)^{\demi}=Z_g(2)e^{(2{\bf g}-2)C'}
\end{equation}
for some universal constant $C'$, and $Z_g(s)$ is again the Selberg zeta function. The behavior of $Z_g(2)$ near the boundary of $\mc{M}_{\bf g}$ is also studied by Wolpert \cite{Wo2}: there exists $C_{\bf g}>0$ a constant depending only on the genus such that for all $g\in \mc{M}_{\bf g}$
\begin{equation}\label{Wolpert2} 
C_{\bf g}^{-1}\prod_{j=1}^{n_p}\frac{e^{-\frac{\pi^2}{3\ell_j(g)}}}{\ell_j^{3}(g)} \leq Z_g(2)\leq 
C_{\bf g}\prod_{j=1}^{n_p}\frac{e^{-\frac{\pi^2}{3\ell_j(g)}}}{\ell_j^3(g)} 
\end{equation}
where the $\ell_j(g)$ are the lengths of closed geodesics with length less than $\eps>0$ 
for some small fixed $\eps>0$.

\subsection{Green function and resolvent of Laplacian}\label{sec:Green}
%%%%%%%%%%%%%%%%%%%%%%%%%%%
Each compact Riemannian surface $(M,g)$ has a (non-negative) Laplace operator $\Delta_g=d^*d$ and a Green function $G_g$ defined 
to be the integral kernel of the resolvent operator $R_g:L^2(M)\to L^2(M)$ satisfying $\Delta_gR_g={\rm Id}-\Pi_0$, $R_g^*=R_g$ and $R_g1=0$, where $\Pi_0$ is the orthogonal projection in $L^2(M,{\rm dv}_g)$ on $\ker \Delta_g$ (the constants). By integral kernel, we mean that for each $f\in L^2(M)$
\[ R_gf(x)=\int_{M} G_g(x,x')f(x'){\rm dv}_g(x').\]
It is well-known 
(see for example \cite{Pa}) that the hyperbolic space $\hh^2$ 
also has a family of Green functions (here $d_{\hh^2}(z,z')$ denotes the hyperbolic distance between $z,z'$)
\begin{equation}\label{Fla}
G_{\hh^2}(\la; z,z')=F_\la(d_{\hh^2}(z,z')), \quad \la\in D(0,1/4)\subset \cc
\end{equation}
so that $F_\la(r)$ is a holomorphic function of $\la$ for $r\in(0,\infty)$ satisfying 
\begin{equation}\label{Fs}
F_\la(r)\sim -\frac{1}{2\pi}\log(r) \textrm{ as }r\to 0, \quad F_0(r)=-\frac{1}{2\pi}\log(r)+ m(r^2)
\end{equation}
with $m$ being a smooth functions on $[0,\infty)$, and $G_{\hh^2}(s)$ satisfies   
\[ (\Delta_{\hh^2}-\la)G_{\hh^2}(\la; \cdot,z')=\delta_{z'}\]
where $\delta_{z'}$ denotes the Dirac mass at $z'$; in other words, 
$G_{\hh^2}(\la)$ is the Schwartz kernel of the operator $(\Delta_{\hh^2}-\la)^{-1}$ on $L^2(\hh^2)$. 
To obtain the Green function $G_g(x,x')$, it suffices to   know it for $g$ hyperbolic 
(ie. $g$ has constant Gaussian curvature $-1$) since for any other conformal metric $\hat{g}=e^\varphi g$, we have   that 
\begin{equation}\label{greenhatg}
\begin{gathered}
G_{\hat{g}}(x,x')=G_g(x,x')+ \alpha -u(x)-u(x'), \\
\textrm{ with }\alpha=\frac{1}{{\rm Vol}_{\hat{g}}(M)^2}\cjg G_g,1\otimes 1\cjd_{\hat{g}}, \quad u(x):=\frac{1}{{\rm Vol}_{\hat{g}}(M)}\int_{M}G_g(x,y){\rm dv}_{\hat{g}}(y).
\end{gathered}\end{equation}
This follows from an easy computation and the identity $\Delta_{\hat{g}}=e^{-\varphi}\Delta_g$.

Let us then assume that $g$ is hyperbolic. We have 
\begin{lemma}\label{greenneardiag}
If $g$ is a hyperbolic metric on the surface $M$, the Green function $G_g(x,x')$ for $\Delta_g$ has the following form
near the diagonal
\begin{equation}\label{greenfct2} 
G_g(x,x')= -\frac{1}{2\pi}\log(d_g(x,x'))+m_g(x,x')
\end{equation}
for some smooth function $m_g$ on $M\x M$. Near each point $x_0\in M$, there are coordinates $z\in B(0,1)\subset \cc$ so that $g=4|dz|^2/(1-|z|^2)^2$ and near $x_0$ 
\[G_g(z,z')= -\frac{1}{2\pi}\log |z-z'|+F(z,z')\]
with $F$ smooth. Finally, if $\hat{g}$ is any metric conformal to $g$, \eqref{greenfct2} holds with $\hat{g}$
replacing $g$ but with $m_{\hat{g}}$ continuous.
\end{lemma}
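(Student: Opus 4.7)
The plan is to reduce the diagonal analysis on $(M,g)$ to the explicit knowledge of the hyperbolic Green function $F_\lambda$ recalled in \eqref{Fla}--\eqref{Fs}, using that a hyperbolic surface is locally isometric to $\mathbb{H}^2$. First, since $K_g=-1$, around any $x_0\in M$ there is a local chart $z:U\to \mathbb{H}^2$ (obtained by lifting to a small ball in the universal cover) which is an isometry onto its image. In this chart one has tautologically $g=|dz|^2/\Imag(z)^2$ and $d_g(z,z')=d_{\mathbb{H}^2}(z,z')$; this already realizes the isothermal coordinates claimed in the statement.

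Next, I would compare $G_g$ with the local model $F_0(d_g(\cdot,\cdot))$. Because $F_0(d_{\mathbb{H}^2}(\cdot,z'))$ is the integral kernel of $\Delta_{\mathbb{H}^2}^{-1}$ on $L^2(\mathbb{H}^2)$ (recall $\Spec(\Delta_{\mathbb{H}^2})=[1/4,\infty)$, so $0$ is in the resolvent set), it satisfies $\Delta_z F_0(d_g(z,z'))=\delta_{z'}(z)$ distributionally in $U$. On the other hand $G_g$ satisfies $\Delta_z G_g(\cdot,z')=\delta_{z'}-1/\Vol_g(M)$ by the definition of $R_g$. Subtracting, the difference
\[
r_g(z,z'):=G_g(z,z')-F_0(d_g(z,z'))
\]
is a distributional solution on $U$ of $\Delta_z r_g(\cdot,z')=-1/\Vol_g(M)$, hence is smooth in $z$ by elliptic regularity. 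By the symmetry $G_g(z,z')=G_g(z',z)$ and $F_0(d_g(z,z'))=F_0(d_g(z',z))$, the same holds in $z'$; a standard iterated elliptic regularity argument (or invoking that $R_g$, as a pseudodifferential operator of order $-2$, has conormal Schwartz kernel on $M\x M$ whose principal singularity agrees with that of $F_0(d_g)$) upgrades this to joint smoothness of $r_g$ on $U\x U$.

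Now I would extract the claimed singularity. The asymptotics \eqref{Fs} give $F_0(r)=-\frac{1}{2\pi}\log r + m(r^2)$ with $m$ smooth on $[0,\infty)$. Since $d_g(z,z')^2$ is smooth on $U\x U$ and $d_g(z,z')/|z-z'|$ extends to a smooth, positive function on $U\x U$ (with value $1/\Imag(z)$ on the diagonal, from the standard formula $\cosh d_{\mathbb{H}^2}(z,z')=1+|z-z'|^2/(2\Imag(z)\Imag(z'))$), the quantities $m(d_g(z,z')^2)$ and $\log d_g(z,z')-\log|z-z'|$ are both smooth on $U\x U$. Combining with the previous step,
\[
G_g(z,z')=-\tfrac{1}{2\pi}\log d_g(z,z')+m_g(z,z')=-\tfrac{1}{2\pi}\log|z-z'|+F(z,z')
\]
with $m_g,F$ smooth on $U\x U$. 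Off the diagonal $G_g$ is smooth (ellipticity of $\Delta_g$ off the diagonal, or pseudodifferential calculus), so $m_g$ extends smoothly to the whole of $M\x M$.

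For the conformal statement, I would plug the expression just obtained into the conformal change formula \eqref{greenhatg}: writing $\hat g=e^\varphi g$ with $\varphi\in C^\infty(M)$,
\[
G_{\hat g}(x,x')=-\tfrac{1}{2\pi}\log d_{\hat g}(x,x')+\tfrac{1}{2\pi}\log\tfrac{d_{\hat g}(x,x')}{d_g(x,x')}+m_g(x,x')+\alpha-u(x)-u(x').
\]
The function $u$ is continuous because the logarithmic singularity of $G_g$ is integrable against ${\rm dv}_{\hat g}$. The only delicate point, which I regard as the main (minor) obstacle, is showing that $\log(d_{\hat g}/d_g)$ extends continuously across the diagonal. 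This follows from the expansion of the Riemannian distance in local coordinates: $d_g(x,x')^2=g_{x_0}(x-x',x-x')+o(|x-x'|^2)$ uniformly as $(x,x')\to(x_0,x_0)$, and similarly for $\hat g$, so $d_{\hat g}(x,x')/d_g(x,x')\to e^{\varphi(x_0)/2}$ uniformly in direction. Hence $m_{\hat g}$ defined by the bracketed terms above is continuous on $M\x M$, which proves the last assertion.
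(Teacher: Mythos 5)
Your proof is correct and takes essentially the same approach as the paper: compare $G_g$ with the explicit hyperbolic parametrix $F_\la(d_g)$ via the local isometry to $\mathbb{H}^2$, use elliptic regularity to show the remainder is smooth, extract the $\log$-singularity from $F_0$, and conclude the conformal case from \eqref{greenhatg} together with the diagonal expansion of $d_{\hat g}/d_g$. The only difference is that you work directly at $\la=0$ using $\Delta_z r_g(\cdot,z')=-1/\Vol_g(M)$, whereas the paper runs the argument for the full resolvent family $R_g(\la)$ and then specializes; that generality is not needed for this Lemma but is reused later (e.g.\ in Proposition~\ref{greenpinching}).
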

\begin{proof}
Near each point $x_0\in M$, there is an isometry from a geodesic ball $B_g(x_0,\eps)$ for $g$ to the hyperbolic ball $B_{\hh^2}(0,\eps)$ in $\hh^2$ ($0$ denotes the center of $\hh^2$ viewed as the unit disk), which provides in particular some local complex coordinates $z\in B(0,1)\subset \cc$ near $x_0$ so that 
$g=4|dz|^2/(1-|z|^2)^2$ in the ball $B_g(x_0,\eps)$. In these coordinates, 
\begin{equation}\label{logdg}
\log d_g(x,x')=\log d_{\hh^2}(z,z')=\log (2|z-z'|)+ L(z,z') \textrm{ with }L \textrm{ smooth}
\end{equation}
and $L(z,z)=0$ where $d_g$ denotes the distance for the metric $g$.
Near any given point $x'\in M$, one has
\begin{equation}\label{Fladg} 
(\Delta_g-\la)F_\la(d_{g}(\cdot,x'))-\delta_{x'}\in C^{\infty}(B_g(x',\eps))
\end{equation} 
where $B_g(x',\eps)$ is a geodesic ball of center $x'$ and radius $\eps>0$ small. Denote by $R_g(\la)=(\Delta_g-\la)^{-1}$ the resolvent of $\Delta_g$ 
for $\la\notin {\rm Sp}(\Delta_g)$. By the spectral theorem, at $\la=\la_0$ with $\la_0\in {\rm Sp}(\Delta_g)$ we have the Laurent expansion 
\[R_g(\la)=\frac{\Pi_{\la_0}}{\la-\la_0}+R_g(\la_0)+\mc{O}((\la -\la_0)), \quad \la\to \la_0\]
for some bounded operator $R_g(\la_0)$ and $\Pi_{\la_0}$ being the orthogonal projector on 
$\ker(\Delta_g-\la_0)$. 
Thus we obtain 
\[ (\Delta_g-\la_0)R_g(\la_0)={\rm Id}-{\Pi}_{\la_0}\] 
and by elliptic regularity and \eqref{Fladg}, the Schwartz kernel $G_g(\la;x,x')$ of 
$R_g(\la)$ for $\la\notin {\rm Sp}(\Delta_g)$ 
satisfies for $d_{g}(x,x')<\eps$ with $\eps>0$ small enough
\begin{equation}\label{greenfct1} 
G_g(\la; x,x')= F_\la(d_g(x,x'))+E_g(\la;x,x')
\end{equation}
with $E_g$ some smooth function on $M\x M$ depending meromorphically of $\la$. At $\la=0$ we 
deduce \eqref{greenfct2}.
The part about $\hat{g}$ just follows from \eqref{greenhatg} and the fact that 
$d_{\hat{g}}(x,x')=e^{\varphi(x)/2}d_g(x,x')+\mc{O}(d_g(x,x')^2)$ as $x'\to x$.
\end{proof}
The function $x\mapsto m_g(x,x)$ is often called the \emph{Robin constant} at $x$.
Notice that if we view the hyperbolic metric $g$ as an element representing a point of 
$\mc{T}_{\bf g}$ and if $\psi\in {\rm Mod}_{\bf g}$, then we have the modular invariance 
\begin{equation}\label{modulargreen} 
G_{\psi^*g}(\la;x,x')=G_{g}(\la;\psi(x),\psi(x')), \quad m_{\psi^*g}(x,x')=m_g(\psi(x),\psi(x')).
\end{equation}

We shall need to describe the Green function $G_g$ when the metric $g$ approaches the boundary of the compactification of $\mc{M}_{\bf g}$. It turns out that positive small eigenvalues of $\Delta_g$ appear sometime when $g$ approaches a point in $\pl \bbar{\mc{M}_{\bf g}}$: Schoen-Wolpert-Yau \cite{SWY} proved that there exist two positive constants $\alpha_1,\alpha_2$ depending only on the genus ${\bf g}$ so that the $n$-th positive eigenvalue $\la_n(g)$ of $\Delta_g$ satisfy
\[\begin{gathered} 
 \alpha_1L_n(M,g)\leq \la_n(g)\leq \alpha_2 L_n(M,g) \textrm{ if }n\leq 2{\bf g}-2, \quad \textrm{ and }
\alpha_1\leq \la_{2{\bf g}-1}\leq \alpha_2
\end{gathered}\]
where $L_n(M,g)$ is the minimum (over subpartitions) of the sums of lengths of simple geodesics in 
subpartitions of $M$ disconnecting $M$ into $n+1$ connected components. Each metric 
$g_0\in\pl \bbar{\mc{M}_{\bf g}}$ is in a stratum corresponding to a subpartition ${\rm SP}$ containing $n_p$ curves, with $n_s\leq n_p$ of these simple curves $\gamma_1,\dots, \gamma_{n_s}$ in the subpartition 
that disconnect the surface $M$ into $m+1$ connected components. 
There is $c_0>0$ depending on $g_0$ such that for all  $\delta>0$ small enough, there is a neighborhood $\bbar{U}_{g_0}\subset \bbar{\mc{M}_{\bf g}}$ of $g_0$ such that for all $g$ in the interior $U_{g_0}:=\bbar{U}_{g_0}\cap  \mc{M}_{\bf g}$, there is at most $m$ positive eigenvalues less than $\delta$ and all other eigenvalues are bigger than $c_0$. We call these eigenvalues the \emph{small eigenvalues} of $g$ near $g_0$.

\begin{proposition}\label{greenpinching}
Let $(M_0,g_0)\in \pl \bbar{\mc{M}_{\bf g}}$ where $M_0$ is a surface with nodes. 
For $\delta>0$ arbitrarily small, take $g$ in a sufficiently small open neighborhood $\bbar{U}_{g_0}$
of $g_0$ in  $\bbar{\mc{M}_{\bf g}}$ so that the small eigenvalues of $g$ in $U_{g_0}=\bbar{U}_{g_0}\cap \mc{M}_{\bf g}$ 
satisfy $\la_1(g)\leq  \dots\leq  \la_m(g)\leq \delta$.
The Green function $G_g$ restricted to 
$M_0$ can be written for $g\in U_{g_0}$ as 
\begin{equation}\label{greenfct3}
G_g(x,x')= \sum_{\la_j(g)\leq \delta} \frac{\Pi_{\la_j(g)}(x,x')}{\la_j(g)}+A_g(x,x')
\end{equation}
where $\Pi_{\la_j(g)}$ is the orthogonal projector on the corresponding eigenspace. In each compact set $\Omega$ of $M_0$, the map 
$(g,x,x')\mapsto A_g(x,x')$ is continuous on $\bbar{U}_{g_0}\x (\Omega^2_{\rm diag})$ if $\Omega^2_{\rm diag}:=(\Omega\x \Omega)\setminus {\rm diag}$ and, near the diagonal of $\Omega\x\Omega$, one has
\[ A_g(x,x')=-\frac{1}{2\pi}\log(d_g(x,x'))+B_g(x,x')\]
with $(g,x,x')\mapsto B_g(x,x')\in C^0(\bbar{U}_{g_0}\x \Omega\x \Omega)$.  
The Schwartz kernel $\sum_{j=1}^m\Pi_{\la_j(g)}(x,x')$ extends continuously to 
$(g,x,x')\in \bbar{U}_{g_0}\x \Omega\x \Omega$ with value at $g=g'\in \pl \bbar{U}_{g_0}$ 
the orthogonal projector $\Pi_0(g';x,x')$ 
onto $\ker_{L^2}\Delta_{g'}$.
\end{proposition}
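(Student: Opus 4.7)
The plan is to split the Green function using the spectral theorem, isolate the contribution coming from the small eigenvalues $\la_1(g),\dots,\la_m(g)$, and control the remaining piece using uniform resolvent estimates as $g$ varies in $\bbar{U}_{g_0}$. Let $\{\varphi_j(g)\}_{j\ge 0}$ be an orthonormal eigenbasis of $\Delta_g$ with eigenvalues $0=\la_0(g)<\la_1(g)\le\cdots$. The spectral expansion gives $G_g(x,x')=\sum_{j\ge 1}\varphi_j(g;x)\varphi_j(g;x')/\la_j(g)$. By hypothesis, the only eigenvalues of $\Delta_g$ in $[0,\delta]$ are $\la_0(g)=0,\la_1(g),\dots,\la_m(g)$, while all the others satisfy $\la_j(g)\ge c_0$; splitting the spectral sum at the gap produces the decomposition \eqref{greenfct3} with
\[ A_g(x,x')=\sum_{\la_j(g)\ge c_0}\frac{\varphi_j(g;x)\varphi_j(g;x')}{\la_j(g)}, \]
the Schwartz kernel of $R_g\circ(I-P_g)$, where $P_g$ is the orthogonal projector onto the span of eigenfunctions with eigenvalues in $[0,\delta]$.

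For the local structure and continuity of $A_g$, I would fix a compact $\Omega\subset M_0$ and use the fact that by \eqref{expressiongst} the map $g\mapsto g|_\Omega$ is continuous into $C^\infty(\Omega)$ on $\bbar{U}_{g_0}$. The preceding lemma gives the log singularity $G_g(x,x')=-\frac{1}{2\pi}\log d_g(x,x')+m_g(x,x')$ with $m_g$ continuous; its proof, built from the hyperbolic model Green function $F_0$ in isothermal coordinates, depends continuously on $g$, so $(g,x,x')\mapsto m_g(x,x')$ is continuous on $\bbar{U}_{g_0}\times\Omega^2$. By elliptic regularity and the uniform spectral gap, the normalized eigenfunctions $\varphi_1(g),\dots,\varphi_m(g)$ are bounded in $C^\infty(\Omega)$ uniformly in $g$, so subtracting the smooth finite sum $\sum_{j=1}^m\varphi_j(g;x)\varphi_j(g;x')/\la_j(g)$ from $G_g$ preserves the diagonal log singularity and yields the claimed regularity of $A_g$ and of the remainder $B_g$.

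For the projector convergence I would apply the Riesz projection formula
\[ P_g=\Pi_0(g)+\sum_{j=1}^m\Pi_{\la_j(g)}=\frac{1}{2\pi i}\oint_\Gamma(z-\Delta_g)^{-1}\,dz, \]
with $\Gamma$ a fixed small contour enclosing $[0,\delta]$ and staying at distance $\ge c_0/2$ from the bulk spectrum. The resolvents $(z-\Delta_g)^{-1}$ are uniformly bounded on $L^2(M,{\rm dv}_g)$ for $z\in\Gamma$ and $g\in\bbar{U}_{g_0}$, and by the resolvent identity combined with $C^\infty$ convergence of the metrics on $\Omega$, their Schwartz kernels depend continuously on $(g,x,x')\in\bbar{U}_{g_0}\times\Omega^2$. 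Letting $g\to g'\in\pl\bbar{U}_{g_0}$, the small positive eigenvalues collapse to $0$ and $\Gamma$ encloses exactly $\ker_{L^2}\Delta_{g'}$, so the contour integral computes $\Pi_0(g')$. Since $\Pi_0(g;x,x')=1/\Vol_g(M)$ extends continuously to the boundary, this gives the stated continuity of $\sum_{j=1}^m\Pi_{\la_j(g)}$.

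The main obstacle is establishing the uniform resolvent bounds for $(z-\Delta_g)^{-1}$ on $\Gamma$ as $g$ ranges over $\bbar{U}_{g_0}$, particularly in the collar regions where the metric degenerates as in \eqref{modelepincement}. On the thick part of $M_0$ this is classical elliptic theory. In the collars I would build a local parametrix by separation of variables on the cylinder $[-t_j,t_j]_t\times(\R/\Z)_\theta$ with metric $dt^2+\eps_j^2\cosh(t)^2 d\theta^2$: Fourier expansion in $\theta$ reduces the problem to a family of one-dimensional Schr\"odinger-type operators whose resolvents can be controlled uniformly in the pinching parameter $\eps_j$, and gluing this collar parametrix to the interior one by a partition of unity yields the uniform operator bounds needed above.
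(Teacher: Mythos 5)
Your overall structure is right and your $A_g$ is the same object as the paper's: since the only eigenvalues in $D(0,\delta)$ are $0,\la_1(g),\dots,\la_m(g)$, the spectral sum $\sum_{\la_j\geq c_0}\varphi_j\varphi_j/\la_j$ equals the contour integral $\frac{1}{2\pi i}\int_{\pl D(0,\delta)}R_g(\la)\,d\la/\la$ that the paper uses, and the Riesz projection argument for $\sum_j\Pi_{\la_j(g)}$ is exactly the paper's "same argument, even simpler" applied to $R_g(\la)$ instead of $R_g(\la)/\la$.

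However, there is a genuine gap exactly where the proposition has content: the continuity of $A_g$ (equivalently of $B_g$ and of the projector kernels) in $g$ as $g$ approaches $\pl\bbar{U}_{g_0}$. You assert that "the proof [of Lemma~\ref{greenneardiag}] depends continuously on $g$, so $m_g$ is continuous" and that the resolvent kernels "depend continuously on $(g,x,x')$ by the resolvent identity combined with $C^\infty$ convergence of the metrics on $\Omega$," but neither of these follows from local metric convergence alone: the resolvent is a global operator, its kernel on $\Omega\times\Omega$ is not determined by $g|_\Omega$, and the whole point is that the global operator is degenerating. You do recognize this and propose to build a global parametrix including a collar piece by separation of variables, but this is only sketched, and it essentially amounts to reproving from scratch the result the paper cites from Schulze \cite{Sc}, namely that $g\mapsto R_g(\la)$ extends continuously to $\bbar{U}_{g_0}$ as operators $H^k_{\rm comp}(M_0)\to H^k_{\rm loc}(M_0)$. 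The paper's proof is more economical: it works with cutoffs $\chi,\til\chi$ supported in a fixed compact $\Omega'\subset M_0$, writes $\chi R_g(\la)\chi=\chi L_g(\la)\chi-\chi R_g(\la)[\Delta_g,\til\chi]L_g(\la)\chi$ where $L_g(\la)$ has kernel $F_\la(d_g)$, observes that $[\Delta_g,\til\chi]L_g(\la)\chi$ is smoothing and that the composition is controlled by Schulze's continuity theorem, and then integrates the identity over $\pl D(0,\delta)$ to read off $F_0(d_g)$ plus a continuous remainder. Your outline is workable but would require substantially more detail (precisely the uniform cylinder resolvent estimates you defer), whereas the published route reduces the heavy lifting to a citation and a clean local parametrix computation.

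One further caution: you invoke the uniform spectral gap to say the eigenfunctions $\varphi_1,\dots,\varphi_m$ are uniformly bounded in $C^\infty(\Omega)$ (which is fine) and then conclude "subtracting the smooth finite sum $\sum_{j=1}^m\varphi_j\varphi_j/\la_j$ from $G_g$ preserves the diagonal log singularity and yields the claimed regularity." But both $G_g$ and that finite sum blow up individually as $\la_j(g)\to 0$; the claim that the difference stays continuous is exactly the proposition, not something that follows from regularity of the eigenfunctions alone.
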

\begin{proof} 
After possibly splitting $\Omega$ in smaller pieces, we can assume that the radius of 
injectivity of all $g\in U_{g_0}$ on $\Omega$ is bounded below by some uniform $\alpha>0$.
Using the residue formula applied to $R_g(\la)/\la$ in a disk $D(0,\delta)$ of radius $\delta$ centered at $\la=0$, one has 
\begin{equation}\label{contourrep} 
R_g(0)-\sum_{\la_j(g)\leq \delta} \frac{\Pi_{\la_j(g)}}{\la_j(g)}=\frac{1}{2\pi i}\int_{\pl D(0,\delta)} \frac{R_g(\la)}{\la} d\la
\end{equation}
and we denote by $A_g(x,x')$ the Schwartz kernel of $\frac{1}{2\pi i}\int_{\pl D(0,\delta)} \frac{R_g(\la)}{\la} d\la$. 
Let $\Omega'\subset M_0$ be a small neighborhood $\Omega'$ of $\Omega$ so that the radius of injectivity of each $g\in U_{g_0}$ is bounded below by $\alpha/2$. 
Let $L_g(\la)$ be the operator on 
$\Omega'$ with Schwartz kernel 
\[F_{\la}(d_g(x,x'))\] 
where $F_\la$ is the function of \eqref{Fla}. Take 
$\chi,\til{\chi}\in C_c^\infty(M_0)$ equal to $1$ on $\Omega$ 
but with support contained in $\Omega'$, and such that $\til{\chi}\chi=\chi$.
Then on $\Omega'$ and on $M_0$, we have 
\begin{equation}\label{eqresolvent} 
(\Delta_g-\la)\til{\chi}L_g(\la) \chi= \chi + [\Delta_g,\til{\chi}]L_g(\la)\chi .
\end{equation} 
Multiplying \eqref{eqresolvent} by $\chi R_g(\la)$ on the left, we get 
\[\chi R_g(\la)\chi=\chi L_g(\la)\chi-\chi R_g(\la)[\Delta_g,\til{\chi}]L_g(\la)\chi.\] 
The operators $[\Delta_g,\til{\chi}]L_g(\la)\chi$  have smooth kernel 
(we use that $[\Delta_g,\til{\chi}]=0$ on ${\rm supp}(\chi)$), and extends continuously 
to $g\in\bbar{U}_{g_0}$ since $g$ extends continuously as a smooth metric to $\Omega$ and 
$d_g$ on $\Omega\x \Omega$ as well. Now we use the fact that for $\la \in \pl D(0,\delta)$, $g\mapsto R_g(\la)$ extends continuously to $\bbar{U}_{g_0}$
as bounded operators $H^k_{\rm comp}(M_0)\to H^k_{\rm loc}(M_0)$ for all $k\geq 0$
by a result of Schulze \cite{Sc}: this implies that $\chi R_g(\la)[\Delta_g,\til{\chi}]L_g(\la)\chi$ extend continuously 
in $g\in \bbar{U}_{g_0}$ as a family of bounded operators $H^{-k}(M_0)\to H^k_{\rm comp}(M_0)$ for all $k\geq 0$, since $[\Delta_g,\til{\chi}]L_g(\la)\chi$ maps $H^{-k}(M_0)\to H^k(M_0)$ uniformly in $g\in U_{g_0}$.
Thus the Schwartz kernels of the operators $[\Delta_g,\til{\chi}]L_g(\la)\chi$ extend as 
a uniform family of continuous Schwartz kernels (when $g\in U_{g_0}$). 
We then deduce that 
\[\frac{1}{2\pi i}\int_{\pl D(0,\delta)} \frac{\chi R_g(\la) \chi}{\la} d\la= \frac{1}{2\pi i}\int_{\pl D(0,\delta)} \frac{\chi L_g(\la)\chi}{\la} d\la + B'_g\]
where $B'_g$ is a family of operators, with Schwartz kernel $B'_g(x,x')$ continuous as a function of $(g,x,x')\in \bbar{U}_{g_0}\x \Omega\x\Omega$. Next, since by Cauchy formula $\frac{1}{2\pi i}\int_{\pl D(0,\delta)}\frac{F_{\la}(z)}{\la}d\la=F_0(z)$, we deduce that 
\[\Big(\frac{1}{2\pi i}\int_{\pl D(0,\delta)} \frac{\chi L_g(\la)\chi}{\la} d\la\Big)(x,x') = 
\chi(x)\chi(x') F_0(d_g(x,x'))\]
and this Schwartz kernel has the desired property by using \eqref{Fs}. 
This ends the proof of \eqref{greenfct3}. The proof of the fact that $\sum_{j=1}^m\Pi_{\la_j(g)}(x,x')$ converge to the projector onto the kernel of $g'$ as $g\to g'\in \pl\bbar{U}_{g_0}$ is essentially the same as what we did (and even simpler) by applying the residue formula to $R_g(\la)$ in $D(0,\delta)$ instead of $R_g(\la)/\la$. The 
convergence in $C^0$ norm is clear since convergence in $L^2$ of $\sum_{j=1}^m\Pi_{\la_j(g)}$ implies convergence in $C^\infty$ on $\Omega$  by elliptic regularity.
\end{proof}

\subsection{Small eigenvalues and associated eigenvectors} \label{sectionsmall}
%%%%%%%%%%%%%%%%%%%%%%%%%%%%%%%
In this section, we recall the asymptotics of the small positive eigenvalues $\la_1(g)\leq \dots\leq \la_m(g)$ as $g\in \mc{M}_{\bf g}$ approaches an element 
$g_0\in \bbar{\mc{M}_{\bf g}}$ by following Burger \cite{Bu1, Bu2} and we will see that the proof of \cite{Bu2} also gives an approximation of the projectors $\Pi_{\la_{j}(g)}$. Let $(M_0,g_0)$ be a surface with nodes, with corresponding subpartition of the closed Riemann surface $M$ given by simple curves $\gamma_1,\dots, \gamma_{n_p}$ and $\gamma_1,\dots ,\gamma_{n_s}$ (with $n_s\leq n_p$) are disconnecting $M$ into $m+1$ connected components $S_1,\dots,S_{m+1}$. 
For all $\delta>0$ small enough, there is a small neighborhood $\bbar{U}_{g_0}\subset \bbar{\mc{M}_{\bf g}}$ of $g_0$ in $\bbar{\mc{M}_{\bf g}}$ so that  for each $g\in U_{g_0}=\bbar{U}_{g_0}\cap \mc{M}_{\bf g}$ there are $m$ small eigenvalues $0<\la_1(g)\leq \dots\leq \la_m(g)\leq \delta$  and all others are larger than a constant $c_0>0$ depending only on $g_0$. Each metric $g\in U_{g_0}$ has a unique simple closed geodesic $\gamma_j(g)$ homotopic to $\gamma_j$ for $j\leq n_p$, with length $\ell_j(g)\leq c_1\delta$, while all other primitive closed geodesics have length bigger than $c_2>0$, where  $c_1,c_2$ are constants depending only on $g_0$.
The geodesics $\gamma_j(g)$ decompose $M$ into $m+1$ connected components $S_1(g),\dots, S_{m+1}(g)$ respectively homeomorphic to $S_1,\dots,S_{m+1}$. 
Define the length 
$L_{ij}(g):=\sum_{k\in E_{ij}}\ell_k(g)$ where $E_{ij}=\{ 1\leq k\leq n_s;  \gamma_k\in \pl S_i\cap \pl S_j\}$. Let 
$||\cdot ||_g$ be the norm on $\R^{m+1}$ given by 
\begin{equation}\label{normgraph} 
||a||_{g}^2=\sum_{j=1}^{m+1}{\rm Vol}_{g}(S_j(g))a_j^2, \quad \textrm{ with }a=(a_1,\dots,a_{m+1})
\end{equation}
and let $Q_g$ be the quadratic form on $\R^{m+1}$ given by 
\begin{equation}\label{Qg}
Q_g(a)=\sum_{1\leq i,j\leq m+1}(a_i-a_j)^2L_{ij}(g).
\end{equation}
Notice that ${\rm Vol}_{g}(S_j(g))$ are positive constants depending only on the topology of $S_j$ (and not on $g$) by Gauss-Bonnet theorem.
Then Burger \cite{Bu2} showed the following estimate: 
\begin{theorem}[Burger]\label{thburger}
If $\nu_1(g)\leq \dots\leq \nu_m(g)$ are the positive eigenvalues of $Q_g$ with respect to the norm $||\cdot||_g$ on $\R^{m+1}$, then there is $C>0$ such that for all $g\in U_{g_0}$ and each $1\leq j\leq m$ 
\[  \frac{\nu_j(g)}{\pi}(1-C\delta^{\demi})\leq \la_j(g)\leq \frac{\nu_j(g)}{\pi}(1+C\delta|\log \delta|).\]
\end{theorem}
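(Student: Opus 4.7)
The plan is to control both sides by the min-max principle and bridge them through an explicit computation on the model pinching collars. Writing $E_g(u)=\int_M |du|_g^2\,{\rm dv}_g$ and observing that $Q_g$ has a $1$-dimensional kernel spanned by $(1,\dots,1)$ (the dual graph with vertices $S_i$ and weighted edges $L_{ij}(g)$ is connected because $M$ is), one has
\[ \la_j(g)=\min_{\dim V=j+1}\max_{u\in V\setminus\{0\}}\frac{E_g(u)}{\|u\|_g^2},\qquad \nu_j(g)=\min_{\dim W=j+1}\max_{a\in W\setminus\{0\}}\frac{Q_g(a)}{\|a\|_g^2}, \]
with $V\subset H^1(M)$, $W\subset\R^{m+1}$. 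The geometric bridge is the harmonic profile on the model collar \eqref{modelepincement}: a function depending only on $t$ solving $\pl_t(\eps_k\cosh(t)\pl_t u)=0$ with boundary values $a_i, a_j$ satisfies $u'(t)=C/\cosh(t)$ and has Dirichlet energy $\eps_k(a_i-a_j)^2/\int_{-t_k}^{t_k}\cosh(s)^{-1}\,ds$. Since $\eps_k\sinh(t_k)=1$ and $\int_\R\cosh(s)^{-1}\,ds=\pi$, this equals $(a_i-a_j)^2\ell_k(g)/\pi$ up to $O(\eps_k)$, and summing over collars reproduces $Q_g(a)/\pi$.

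\emph{Upper bound.} Given $W\subset\R^{m+1}$ of dimension $j+1$, I would build a test space $V\subset H^1(M)$ of the same dimension by attaching to each $a\in W$ the function $u_a$ equal to $a_i$ on the thick portion of $S_i$ and equal to the model harmonic profile above on each collar $\mc{C}_k$ (these glue continuously at the collar ends, where $u_a$ takes the prescribed values $a_i, a_j$). The computation above yields $E_g(u_a)\leq \tfrac1\pi Q_g(a)(1+C\delta|\log\delta|)$ after tracking the finite-$t_k$ correction and the bounded transition regions where the metric depends continuously on $g\in U_{g_0}$, while $\|u_a\|_g^2=(1+O(\delta))\|a\|_g^2$ because the collars contribute bounded volume shared between adjacent components and $\textrm{Vol}_g(S_i)$ is topological to leading order by Gauss--Bonnet. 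Minimizing over $W$ and applying min-max gives $\la_j(g)\leq (\nu_j(g)/\pi)(1+C\delta|\log\delta|)$.

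\emph{Lower bound.} This is harder and is the source of the $\sqrt\delta$ rate. I would introduce the averaging map $A:H^1(M)\to\R^{m+1}$ with $A(u)_i=\textrm{Vol}_g(S_i^{\rm thick})^{-1}\int_{S_i^{\rm thick}}u\,{\rm dv}_g$ on a thick subsurface $S_i^{\rm thick}\subset S_i$ disjoint from all collars. Two analytic inputs are required: (i) a uniform Poincar\'e inequality on each $S_i^{\rm thick}$, valid because the injectivity radius there is bounded below uniformly in $g\in U_{g_0}$; and (ii) a matching collar lower bound
\[ \int_{\mc{C}_k}|du|_g^2\,{\rm dv}_g\geq \frac{\ell_k(g)}{\pi}(\bar u_i-\bar u_j)^2(1-C\sqrt\delta), \]
with $\bar u_i,\bar u_j$ the averages of $u$ on the boundary circles of $\mc{C}_k$. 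Bound (ii) is obtained by Fourier-decomposing $u$ in $\theta$ on the collar: the zero mode realizes the harmonic extension (optimal by the computation above), and the nonzero modes have energy dominated by the uniform spectral gap of $\Delta_{S^1}$; then (i) replaces $\bar u_i, \bar u_j$ by $A(u)_i, A(u)_j$ through a trace inequality. Applying this to the span of the first $j+1$ Laplace eigenfunctions $\phi_0,\dots,\phi_j$, the images $A(\phi_k)$ span a $(j+1)$-dimensional subspace $W$ of $\R^{m+1}$ (injectivity on this span follows from (i): any nonzero kernel element $\phi$ would satisfy $\|\phi\|_g^2\leq CE_g(\phi)\leq C\delta\|\phi\|_g^2$, impossible for small $\delta$), and on $W$ one obtains $Q_g(A(\phi))\leq \pi\la_j(g)\|A(\phi)\|_g^2(1+C\sqrt\delta)$. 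The min-max for $\nu_j$ then yields the desired lower bound $\la_j(g)\geq (\nu_j(g)/\pi)(1-C\sqrt\delta)$.

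The \emph{main obstacle} is producing the sharp $\sqrt\delta$ rate in the collar lower bound (ii): matching the Poincar\'e constant on the boundary circles of the collar (of size $\sim 1/\ell_k$, hence $\sim 1/\sqrt\delta$ when $\ell_k\sim\delta$) against the Dirichlet energy contribution of the collar requires a careful trace estimate in the degenerating geometry, and this mismatch is precisely what explains the asymmetry between the two sides of the final statement.
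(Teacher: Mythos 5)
The paper does not actually prove this theorem: it states it and refers to Burger \cite{Bu2}, just as it does for the lemma stated immediately afterwards on the energy and $L^2$ norm of the test functions $f_a$. So there is no in-paper proof to compare your argument against, only Burger's original one, and your proposal has to be judged on its own. Your upper bound is correct in structure and is essentially Burger's: the piecewise-constant-plus-harmonic test functions $f_a$ produce, after the collar computation you carry out, precisely the two-sided estimates on $\|df_a\|_{L^2}^2$ and $\|f_a\|_{L^2}^2$ that the paper records as Burger's lemma, and min--max then yields $\lambda_j\leq(\nu_j/\pi)(1+C\delta|\log\delta|)$, with the $\delta|\log\delta|$ coming from the $L^2$-norm comparison rather than from the collar energy itself.

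The lower bound is where there is a genuine gap. The collar Dirichlet-energy estimate you label (ii), with boundary-circle averages $\bar u_i,\bar u_j$, is actually the easy part: it follows from the $n=0$ Fourier mode alone (the harmonic interpolation is optimal for the zero mode, and nonzero modes only add energy), and the error there is $O(\ell_k)$, not $O(\sqrt\delta)$. What you do not establish is the genuinely hard quantitative transfer: replacing the circle averages $\bar u_i$ by the thick-part averages $A(u)_i$, and showing that $\|A(\phi)\|_g^2\geq(1-C\sqrt\delta)\|\phi\|_{L^2}^2$ for $\phi$ in the span of the first eigenfunctions, which requires controlling how much $L^2$ mass such a $\phi$ can carry inside the collars. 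Your heuristic for the $\sqrt\delta$ rate also does not parse: the boundary circles of the collar at $\rho=\pm 1$ have length of order $1$ uniformly in $g$ (it is the waist circle at $\rho=0$ that has length $\ell_k$), and in any case $1/\ell_k\sim 1/\delta$, not $1/\sqrt\delta$, when $\ell_k\sim\delta$. You honestly flag this as the main obstacle, but it is the content of the lower bound, so the theorem is not proved by the sketch; one really needs Burger's argument (or an equivalent) for this half.
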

Each simple small geodesic $\gamma_j(g)$ of $g$ (homotopic to $\gamma_j$) has a collar neighborhood
\begin{equation}\label{collarCj}
\mc{C}_{j}(g)=\{ x\in M;\,  \sinh (d_{g}(x,\gamma_j(g)))\leq 1/\sinh(\ell_j(g))\}
\end{equation}
and these collars are disjoints one from the other. The set $M\setminus \cup_{j\leq n_s}\mc{C}_{j}(g)$ has $m+1$ connected components $S'_1,\dots, S'_{m+1}$ respectively homeomorphic to $S_1(g),\dots, S_{m+1}(g)$. 
One can define a map 
\begin{equation}\label{deffa} 
a\in \R^{m+1}\mapsto f_a\in H^1(M)
\end{equation}
by setting $f_a(x)=a_j$ if $x\in S'_j$ and $f_a$ being the unique harmonic function in $\mc{C}_{j}(g)$ 
so that $f_a$ is continuous on $M$. In \cite{Bu2}, Burger proved the following 
\begin{lemma}[Burger]\label{lemmaburger}
There is $C>0$ such that for all $a\in \R^{m+1}$ and all $g\in U_{g_0}$
\[ \begin{gathered}
\frac{1}{\pi}Q_g(a)\leq ||df_a||^2_{L^2(M,g)}\leq \frac{1}{\pi}Q_g(a)(1+C\delta), \quad ||a||^2_g(1-C\delta|\log \delta|)\leq ||f_a||^2_{L^2(M,g)}\leq ||a||_g^2.
\end{gathered}\]
\end{lemma}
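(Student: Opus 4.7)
The plan is to exploit the very explicit hyperbolic geometry of the collars and solve the Dirichlet problem there in closed form. I identify each collar as $\mc{C}_k(g)\cong [-t_k,t_k]_t\times(\R/\zz)_\theta$ with metric $dt^2+\ell_k^2\cosh^2(t)\,d\theta^2$, where $\ell_k=\ell_k(g)$ and $\sinh t_k = 1/\sinh \ell_k$; the two boundary circles are adjacent to the pieces $S'_i,S'_j$ on which $f_a$ takes the constant values $a_i,a_j$. By uniqueness of the harmonic extension, $f_a$ depends only on $t$, and the ODE $\pl_t(\cosh(t)\pl_t f)=0$ yields the explicit formula
\[
f_a(t)=\tfrac{a_i+a_j}{2}+b_k G(t), \qquad b_k:=\frac{a_j-a_i}{2G(t_k)},
\]
where $G(t):=\int_0^t \mathrm{sech}(s)\,ds=\arctan(\sinh t)$ is the odd Gudermannian. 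A direct estimate gives $G(t_k)=\arctan(1/\sinh\ell_k)$ satisfying $\pi/2-C\ell_k\leq G(t_k)\leq \pi/2$.

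From this closed form the Dirichlet energy is immediate: since $f_a'(t)=b_k\mathrm{sech}(t)$, one finds
\[
\int_{\mc{C}_k(g)}|df_a|_g^2\,{\rm dv}_g=\ell_k\int_{-t_k}^{t_k}b_k^2\,\mathrm{sech}(t)\,dt=\frac{\ell_k(a_j-a_i)^2}{2G(t_k)}.
\]
Summing over the separating collars and using $1/(2G(t_k))\in[1/\pi,\,(1+C\ell_k)/\pi]$ gives the sandwich $Q_g(a)/\pi\leq \|df_a\|_{L^2}^2\leq Q_g(a)(1+C\delta)/\pi$, settling the first pair of inequalities.

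For the $L^2$ norm, I would decompose $M$ along the $n_s$ separating geodesics $\gamma_k(g)$ into the pieces $S_j$ with geodesic boundary, so that by Gauss--Bonnet ${\rm Vol}_g(S_j)=-2\pi\chi(S_j)$ is a topological constant bounded below by $2\pi$. Expanding $f_a^2=\bar a_k^2+2\bar a_k b_kG(t)+b_k^2G(t)^2$ on each collar, the cross term drops by oddness of $G$, and integration by parts yields
\[
\int_{\mc{C}_k(g)}f_a^2\,{\rm dv}_g=\bar a_k^2\,{\rm Vol}(\mc{C}_k)+\ell_k b_k^2\Big(2G(t_k)^2\sinh t_k-4\int_0^{t_k}G(t)\tanh(t)\,dt\Big).
\]
Combining with the identity $(a_i^2+a_j^2)/2-\bar a_k^2=(a_i-a_j)^2/4=b_k^2G(t_k)^2$ and ${\rm Vol}(\mc{C}_k)=2\ell_k\sinh t_k$, the $\sinh t_k$ contributions cancel exactly, leaving
\[
\|a\|_g^2-\|f_a\|_{L^2}^2=\sum_k\frac{\ell_k(a_j-a_i)^2}{G(t_k)^2}\int_0^{t_k}G(t)\tanh(t)\,dt\geq 0,
\]
which already gives the easy upper bound $\|f_a\|^2\leq\|a\|_g^2$.

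The remaining and main obstacle is to extract the $\delta|\log\delta|$ rate out of this explicit remainder. Using $\int_0^{t_k}G\tanh\leq (\pi/2)t_k$, the bound $t_k\leq C|\log\ell_k|\leq C|\log\delta|$, and $(a_j-a_i)^2\leq 2(a_i^2+a_j^2)$, each summand is at most $C\ell_k|\log\delta|(a_i^2+a_j^2)\leq C\delta|\log\delta|(a_i^2+a_j^2)$. Since each index $j$ appears in a topologically bounded number of separating pairs $(i,j)$, the total error is at most $C\delta|\log\delta|\sum_j a_j^2$, and the Gauss--Bonnet lower bound ${\rm Vol}_g(S_j)\geq 2\pi$ gives $\sum_j a_j^2\leq (2\pi)^{-1}\|a\|_g^2$. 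Hence the error is $\leq C'\delta|\log\delta|\|a\|_g^2$, which yields $\|f_a\|^2\geq\|a\|_g^2(1-C'\delta|\log\delta|)$ and completes all four inequalities.
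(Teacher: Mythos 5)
The paper does not actually prove this lemma but quotes it from Burger's 1990 paper, so there is no internal argument to compare against. Your proof is essentially correct and is the natural explicit one: the harmonic profile $f_a(t)=\bar a_k + b_k\arctan(\sinh t)$ on each collar, the closed-form Dirichlet energy $2\ell_k b_k^2 G(t_k)$, and the exact cancellation in $\|a\|_g^2-\|f_a\|^2_{L^2}$ leaving only $\sum_k\ell_k(a_{j_k}-a_{i_k})^2G(t_k)^{-2}\int_0^{t_k}G\tanh$, are all computed correctly, and this route is almost certainly close to Burger's own.

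Two small points need repair. First, the step ``$t_k\leq C|\log\ell_k|\leq C|\log\delta|$'' is not right as stated: a pinched geodesic can be far shorter than $\delta$, making $|\log\ell_k|$ arbitrarily larger than $|\log\delta|$. The bound you actually need, $\ell_k t_k\leq C\ell_k|\log\ell_k|\leq C'\delta|\log\delta|$, does hold, but the correct justification is the monotonicity of $x\mapsto -x\log x$ on $(0,1/e)$ together with $\ell_k\leq c_1\delta$, not the false intermediate inequality. Second, $Q_g(a)=\sum_{1\leq i,j\leq m+1}(a_i-a_j)^2L_{ij}(g)$ reads literally as an ordered double sum and thus equals $2\sum_k\ell_k(a_{j_k}-a_{i_k})^2$, whereas your Dirichlet-energy formula $\|df_a\|^2_{L^2}=\sum_k\frac{\ell_k(a_{j_k}-a_{i_k})^2}{2G(t_k)}$ tends to $\frac{1}{\pi}\sum_k\ell_k(a_{j_k}-a_{i_k})^2$; matching this to the stated $\frac{1}{\pi}Q_g(a)$ requires reading $Q_g$ as the unordered sum (equivalently, a factor $\tfrac{1}{2}$ in front of the double sum). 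That ambiguity sits in the paper's transcription of Burger rather than in your argument, and the unordered reading is also what makes Theorem~\ref{thburger}'s asymptotic $\lambda_j\approx\nu_j/\pi$ consistent with the Rayleigh-quotient comparison $\|df_a\|^2/\|f_a\|^2$.
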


An estimate for $\la_j(g)$ in terms of the pinched geodesics $\ell_k(g)$ is given by Schoen-Wolpert-Yau \cite{SWY}: let $D$ be an $n$-disconnect, i.e a collection of closed simple geodesics $\gamma_1(g),\dots, \gamma_{n_s}(g)$ with respective lengths $\ell_1(g),\dots,\ell_{n_s}(g)$ disconnecting $M$ into $n$ connected components, and define
$L_{n}(D,g):= \sum_{j=1}^{n_s}\ell_j(g)$.
We set 
\[ L_n(M,g):= \min_{D\in \mc{D}_n}L_n(D,g) \]
where $\mc{D}_n$ is the set of all $n$-disconnects of $M$. Then, ordering the eigenavlues by increasing order, it is proved in \cite{SWY} that there is $C>1$ depending only on the genus of $M$ such that for each $n\leq 2{\bf g}-2$
\[
C^{-1}L_n(M,g)\leq \la_n(g)\leq CL_n(M,g).
\]
As a consequence, in a  neighborhood $U_{g_0}\subset \mc{M}_{\bf g}$ of a metric $g_0\in \pl \mc{M}_{\bf g}$,  we have the rough estimate for all $j\leq m$ and $g\in U_{g_0}$
\begin{equation}\label{estimeevpSWY} 
\la_j(g)\geq C^{-1}\ell_{j}(g)
\end{equation}
where $m+1$ is the number of connected components of the surface with cusps 
$(M_0,g_0)$, $n_s$ is the number of pinched geodesics disconnecting the surface
 and $\ell_1(g)\leq \ell_2(g)\leq \dots \leq \ell_{n_s}(g)$ are the lengths of these separating geodesics ranked by increasing order.

Below, we take the convention that we repeat each eigenvalue according to its multiplicity, thus $\la_j(g)$ can be equal to $\la_{j+1}(g)$, and similarly for the $\nu_j(g)$.
\begin{lemma}\label{approximation}
For each $g\in U_{g_0}$, let $v_0=(4\pi({\bf g}-1))^{-1}$ and $v_1,\dots,v_m \in \R^{m+1}$ so that $(v_i)_{i=0,\dots,m}$ is an orthonormal basis of eigenvectors for $Q_g$ with $v_i$ associated to $\nu_i(g)$. There is $C>0$ and $L\in\nn$ such that for all $g\in U_{g_0}$, there exists an orthonormal basis $\varphi_1,\dots,\varphi_{m}$ of $\oplus_{j=1}^m \ker(\Delta_g-\la_j(g))$ satisfying
\[ ||f_{v_j}-\varphi_j||_{L^2(M,g)}\leq C\delta^{\frac{1}{L}}, \textrm{ and }
\quad  \sum_{\la_j(g)\leq \delta} \frac{\Pi_{\la_j(g)}(x,x')}{\la_j(g)}=\sum_{j=1}^m \frac{f_{v_j}(x)f_{v_j}(x')}{\nu_j(g)} +\mc{O}\Big(\frac{\delta^{\frac{1}{L}}}{\nu_{1}(g)}\Big)\]
where the error term is in $L^\infty$ norm on compact sets disjoints from  $\cup_j\mc{C}_j(g)$.
\end{lemma}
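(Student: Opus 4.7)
The plan is to build $\{\varphi_j\}_{j=1}^m$ by projecting the Burger quasi-mode family $\{f_{v_j}\}_{j=1}^m$ onto the small-eigenvalue subspace $E_\delta:=\bigoplus_{j=1}^m\ker(\Delta_g-\la_j(g))$ and orthonormalizing the result.

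The first step is to polarize the $L^2$-norm bound of Lemma \ref{lemmaburger}. From $\bigl|\|f_a\|_{L^2(M,g)}^2-\|a\|_g^2\bigr|\le C\delta|\log\delta|\,\|a\|_g^2$ one gets, by the polarization identity, $\bigl|\langle f_a,f_b\rangle_{L^2}-\langle a,b\rangle_g\bigr|\le C\delta|\log\delta|\,\|a\|_g\|b\|_g$; since $\{v_j\}_{j=1}^m$ is orthonormal for $\|\cdot\|_g$, the family $\{f_{v_j}\}_{j=1}^m$ is $O(\delta|\log\delta|)$-orthonormal in $L^2(M,g)$. The second step is the quasi-mode estimate: combining Lemma \ref{lemmaburger} with Theorem \ref{thburger} yields $\|df_{v_j}\|_{L^2}^2\le (\nu_j(g)/\pi)(1+C\delta)\le C\delta$. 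Let $P$ denote the $L^2$-orthogonal projection onto $E_\delta$. By the spectral gap recalled above the statement, every eigenvalue of $\Delta_g$ outside $\{\la_j(g)\}_{j\leq m}$ is bounded below by a fixed $c_0>0$ uniformly on $\bbar U_{g_0}$, so
\[c_0\,\|(1-P)f_{v_j}\|_{L^2}^2\le \langle \Delta_g(1-P)f_{v_j},(1-P)f_{v_j}\rangle\le \|df_{v_j}\|_{L^2}^2\le C\delta,\]
whence $\|f_{v_j}-Pf_{v_j}\|_{L^2}\le C\delta^{1/2}$. Thus $\{Pf_{v_j}\}$ is nearly orthonormal, hence linearly independent, and since it lies in the $m$-dimensional space $E_\delta$, a Gram--Schmidt procedure produces the desired orthonormal basis $\{\varphi_j\}$ of $E_\delta$ satisfying $\|\varphi_j-f_{v_j}\|_{L^2(M,g)}\le C\delta^{1/2}$, after reordering indices so that the ordering of the $\nu_j(g)$ matches that of the $\la_j(g)$.

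The third step is to upgrade the $L^2$ estimate to $L^\infty$ on any compact $K\subset M_0\setminus\bigcup_j\mc{C}_j(g)$. Enlarge $K$ slightly to some $K'$ still disjoint from the collars. On $K'$ the metrics $g\in U_{g_0}$ are uniformly smooth (Section \ref{Sec:Teichmuller}), so interior elliptic regularity for $\Delta_g$ applies with constants independent of $g$. Since $f_{v_j}$ is locally constant on $M\setminus\bigcup_j\mc{C}_j(g)$, the function $w_j:=\varphi_j-f_{v_j}$ satisfies $\Delta_g w_j=\la_j(g)\varphi_j$ on $K'$, so a standard bootstrap gives $\|w_j\|_{L^\infty(K)}\le C(\|w_j\|_{L^2(K')}+\|\la_j\varphi_j\|_{L^2(K')})\le C\delta^{1/L}$ for some fixed $L\in\N$. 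Substituting this together with the identity $\la_j(g)^{-1}=\pi\nu_j(g)^{-1}(1+O(\delta^{1/2}))$ from Theorem \ref{thburger} into
\[\sum_{\la_j(g)\leq\delta}\frac{\Pi_{\la_j(g)}(x,x')}{\la_j(g)}=\sum_{j=1}^m \frac{\varphi_j(x)\varphi_j(x')}{\la_j(g)},\]
and expanding $\varphi_j(x)\varphi_j(x')$ around $f_{v_j}(x)f_{v_j}(x')$, gives the claimed Green-function approximation up to an error of order $\delta^{1/L}/\nu_1(g)$, since $\nu_1(g)\le\nu_j(g)$ for all $j$.

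The main obstacle is this third step: securing Schauder constants that are genuinely independent of $g\in\bbar U_{g_0}$ on compacts of $M_0$. This is made possible by the continuous dependence of the degenerating hyperbolic metrics $g_{s,\tau}$ on compacts of $M_0$ recalled in Section \ref{Sec:Teichmuller}, provided $K'$ stays at a fixed positive distance from the collapsing collars $\mc{C}_j(g)$. A secondary point to watch is the matching of the index $j$ between the $\la_j$-ordering and the $\nu_j$-ordering in the presence of near-degeneracies, which is handled by choosing the Gram--Schmidt ordering consistently with the spectral order.
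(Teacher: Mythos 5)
Your construction of the orthonormal basis $\{\varphi_j\}$ of $E_\delta:=\bigoplus_{j=1}^m\ker(\Delta_g-\la_j(g))$, via projection and Gram--Schmidt, does yield the first estimate $\|f_{v_j}-\varphi_j\|_{L^2}\le C\delta^{1/2}$; this part is correct and cleaner than the paper's inductive scheme. The gap is in the passage to the projector estimate, where you write
\[\sum_{\la_j(g)\leq\delta}\frac{\Pi_{\la_j(g)}(x,x')}{\la_j(g)}=\sum_{j=1}^m \frac{\varphi_j(x)\varphi_j(x')}{\la_j(g)}\]
as an identity. This is false for a general orthonormal basis $\{\varphi_j\}$ of $E_\delta$. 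If $\varphi_j=\sum_k a_{jk}\phi_k$ with $(\phi_k)$ the true orthonormal eigenbasis of $E_\delta$ and $A=(a_{jk})$ orthogonal, the right-hand side is the kernel of $A^T D^{-1}A$ with $D=\mathrm{diag}(\la_j)$, while the left-hand side is that of $D^{-1}$; equality requires $A$ to commute with $D$, i.e.\ $A$ must be block diagonal with respect to the clusters of equal (or, for an approximate version, nearly equal) eigenvalues. Your Gram--Schmidt construction gives no control on this block structure: $\varphi_1\propto Pf_{v_1}$ may carry an $O(1)$ component on $\phi_2$ whenever $\la_2/\la_1-1$ is of order $\delta^{1/2}$, and nothing in your argument prevents such mixing across pairs of eigenvalues that are separated by a factor large compared to $\delta^{1/L}$ but small compared to $1$.

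This is precisely what the paper's inductive proof is designed to handle. Using the Rayleigh-quotient identity $\sum_j\bigl(\la_j/\la_k-1\bigr)\cjg f_k,\phi_j\cjd^2=\mc{O}(\delta^{1/2})$, the paper groups the small eigenvalues into maximal blocks whose internal ratios are within $\delta^{1/2^i}$ of one, with gaps of at least $\delta^{1/2^{i+1}}$ between consecutive blocks, and constructs the $\varphi_j$ block by block so that each $\varphi_j$ lies exactly in the span of the $\phi_{j'}$ in its block. Only with this block alignment does the replacement of $\phi_j$ by $\varphi_j$ in $\sum_j\la_j^{-1}\Pi_{\la_j}$ produce an error of order $\delta^{1/L}/\nu_1$: within a block the eigenvalues agree up to a multiplicative factor $1+\mc{O}(\delta^{1/L})$, so the partial sums over a block differ only by that factor times $\Pi_{E_k}/\la_{j_k}$, and across blocks there is no leakage by construction. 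Your proof needs an analogous clustering argument (or an explicit bound on the off-block entries $a_{jk}$ in terms of the spectral gaps $|\la_j/\la_k-1|$) before the asserted identity can be replaced by the required approximate one. As written, the step is not justified and the error term is not controlled.
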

\begin{proof} To simplify notations, we denote by $f_j$ the function $f_{v_j}$. We construct the basis 
$\varphi_j$ by an inductive process.  Let $(\phi_j)_{j\in\nn_0}$ be an orthonormal basis of $L^2(M,g)$ of eigenvectors for $\Delta_g$, ie. $\Delta_g\phi_j=\la_j(g)\phi_j$. By Lemma \ref{lemmaburger}, we have for 
$k\leq m$
\[||df_k||^2_{L^2}=\sum_{j=1}^{\infty}\la_{j}(g)\cjg f_k,\phi_j\cjd^2
=\frac{Q_g(v_k)}{\pi}(1+\mc{O}(\delta))=\lambda_k(g)||f_k||^2_{L^2}(1+\mc{O}(\delta^{1/2}))\]
and by Theorem \ref{thburger}, this gives for each $k=1,\dots, m$
\begin{equation}\label{eqlak}
\sum_{j=1}^{\infty}\Big(\frac{\la_{j}(g)}{\la_k(g)}-1\Big)\cjg f_k,\phi_j\cjd^2 =\mc{O}(\delta^{\demi} ||f_k||^2_{L^2}).
\end{equation}
If $|\frac{\la_2}{\la_1}-1|> \delta^{\frac{1}{4}}$, 
we set $\varphi_1:=\phi_1$ and $i_1=1$. By \eqref{eqlak} with $k=1$, we get  
$\sum_{j=2}^{\infty}\cjg f_k,\phi_j\cjd^2=\mc{O}(\delta^{\frac{1}{4}})$ and thus 
$f_1=\pm \phi_1+\mc{O}_{L^2}(\delta^{\frac{1}{8}})$. Since $\cjg f_i,f_j\cjd_{L^2}=\mc{O}(\delta|\log \delta|)$ for 
$i\not=j$ by Lemma \ref{lemmaburger}, we get $\cjg f_i,\varphi_1\cjd=\mc{O}(\delta^{\frac{1}{8}})$ for all $i>1$. If $|\frac{\la_2}{\la_1}-1|\leq \delta^{\frac{1}{4}}$, we let $i_1\geq 2$ be the 
smallest integer such that for each $i\leq i_1$,
$|\frac{\la_{i}}{\la_{i-1}}-1|\leq \delta^{\frac{1}{2^i}}$ and 
$|\frac{\la_{i_1+1}}{\la_{i_1}}-1|>\delta^{\frac{1}{2^{i_1+1}}}$, clearly $i_1\leq m$ since there are $m$ small eigenvalues. 
We define 
\[\varphi_1 = \frac{\sum_{j=1}^{i_1} \cjg f_1,\phi_j\cjd\phi_j}{||\sum_{j=1}^{i_1} \cjg f_1,\phi_j\cjd\phi_j||}, 
 \textrm{ and }  \til{\varphi}_i = \frac{\sum_{j=1}^{i_1} \cjg f_i,\phi_j\cjd\phi_j}{||\sum_{j=1}^{i_1} \cjg f_i,\phi_j\cjd\phi_j||} 
\textrm{ for }1\leq i\leq i_1.\]
Then we construct $\varphi_2,\dots,\varphi_{i_1}$ by the Gram-Schmidt orthonormalization process 
from $\til{\varphi}_2,\dots,\til{\varphi}_{i_1}$. Since $|\frac{\la_{i_1+1}}{\la_{i_1}}-1|>\delta^{\frac{1}{2^{i_1+1}}}$ 
and $|\frac{\la_{i_1}}{\la_1}-1|=\mc{O}(\delta^{\frac{1}{2^{i_1}}})$, \eqref{eqlak} tells us that for each $i\leq i_1$,
\[ f_i= \sum_{j=1}^{i_1} \cjg f_i,\phi_j\cjd\phi_j+ \mc{O}_{L^2}(\delta^{\frac{1}{2^{i_1+2}}})\]
and thus $ \til{\varphi}_i= f_i+\mc{O}_{L^2}(\delta^{\frac{1}{2^{i_1+2}}})$ for $i=1,\dots,i_1$. Since $\cjg f_i,f_j\cjd_{L^2}=\delta_{ij}+\mc{O}(\delta|\log \delta|)$ by Lemma \ref{lemmaburger}, we deduce that for $i=1,\dots,i_1$
\[ \varphi_i=f_i+\mc{O}_{L^2}(\delta^{\frac{1}{2^{i_1+2}}}).\]
Now we prove the induction process in a way  similar to the first step. 
Suppose we have constructed an orthornormal basis $\varphi_1,\dots,\varphi_\ell$ of 
$\oplus_{j=1}^{\ell}\R\phi_j$ so that $\varphi_j=f_j+\mc{O}_{L^2}(\delta^{\frac{1}{L}})$ for some $L\in \nn$ and $\ell< m$. Notice that $\cjg f_{k},\phi_j\cjd=\mc{O}(\delta^{\frac{1}{L}})$ for all $k\geq \ell+1$ and $j\leq \ell$ by the induction assumption. Then if $|\frac{\la_{\ell+1}}{\la_{\ell}}-1|>\delta^{\frac{1}{L}}$, \eqref{eqlak} with $k=\ell+1$ gives $\sum_{j=\ell+2}^\infty \cjg f_{\ell+1},\phi_j\cjd^2=\mc{O}(\delta^{\frac{1}{L}})$, thus if we set $i_{\ell+1}=\ell+1$ and
\[\varphi_{\ell+1}=\frac{\phi_{\ell+1}- \sum_{j=1}^\ell \cjg \phi_{\ell+1},\varphi_j\cjd\varphi_j}
{||\phi_{\ell+1}- \sum_{j=1}^\ell \cjg \phi_{\ell+1},\varphi_j\cjd\varphi_j||}\]
we get $\varphi_{\ell+1}=f_{\ell+1}+\mc{O}_{L^2}(\delta^{\frac{1}{2L}})$ and we have increased the induction step by $1$.
If $|\frac{\la_{\ell+1}}{\la_{\ell}}-1|\leq \delta^{\frac{1}{L}}$, we let $i_{\ell+1}\leq m$ be the 
smallest integer such that for all $i=\ell+1,\dots,i_{\ell+1}$, 
$|\frac{\la_i}{\la_{i-1}}-1|\leq \delta^{\frac{1}{L2^{i-\ell-1}}}$ and
$|\frac{\la_{i_{\ell+1}+1}}{\la_{i_{\ell+1}}}-1|>\delta^{\frac{1}{L2^{i_{\ell+1}-\ell}}}$, and we will construct $\varphi_{\ell+1},\dots,\varphi_{i_{\ell+1}}$. Let $L'=L2^{i_{\ell+1}-\ell}$ and define
\[\varphi_{\ell+1} = \frac{\sum_{j=\ell+1}^{i_{\ell+1}} \cjg f_{\ell+1},
\phi_j\cjd\phi_j}{||\sum_{j=\ell+1}^{i_{\ell+1}} \cjg f_{\ell+1},\phi_j\cjd\phi_j||}, 
 \textrm{ and }  \til{\varphi}_i = \frac{\sum_{j={\ell+1}}^{i_{\ell+1}} \cjg f_i,\phi_j\cjd\phi_j}{||\sum_{j=\ell+1}^{i_{\ell+1}} \cjg f_i,\phi_j\cjd\phi_j||} 
\textrm{ for }\ell+1 \leq i\leq i_{\ell+1}.\]
Then we construct $\varphi_{\ell+2},\dots,\varphi_{i_{\ell+1}}$ by the Gram-Schmidt orthonormalization process 
from $\til{\varphi}_{\ell +2},\dots,\til{\varphi}_{i_{\ell+1}}$. By induction assumption and $|\frac{\la_{i_{\ell+1}+1}}{\la_{i_{\ell+1}}}-1|>\delta^{\frac{1}{L'}}$,  
\eqref{eqlak} tells us that for each $i=\ell+1,\dots,i_{\ell+1}$,
\[ f_i= \sum_{j=\ell+1}^{i_{\ell+1}} \cjg f_i,\phi_j\cjd\phi_j+ \mc{O}_{L^2}(\delta^{\frac{1}{2L'}})\]
and thus $ \til{\varphi}_i= f_i+\mc{O}_{L^2}(\delta^{\frac{1}{2L'}})$ for $i=\ell+1,\dots,i_{\ell+1}$.
Since $\cjg f_i,f_j\cjd_{L^2}=\delta_{ij}+\mc{O}(\delta|\log \delta|)$ by Lemma \ref{lemmaburger}, we deduce that for $i=\ell+1,\dots,i_{\ell+1}$
\[ \varphi_i=f_i+\mc{O}_{L^2}(\delta^{\frac{1}{2L'}})\]
and we have increased the induction step by $i_{\ell+1}-(\ell+1)\geq 1$. 
This inductive construction produces a sequence of integers
$j_0=1,j_1=i_1, j_2=i_{i_1+1},\dots, j_N=m$ and  
$N$ associated blocks $E_1,\dots E_N$, 
with $E_k=\{\varphi_{j_{k}},\dots ,\varphi_{j_{k+1}}\}$ where the span of elements in $E_k$ is the span 
of $\{\phi_{j_k},\dots, \phi_{j_{k+1}}\}$.
By construction we have 
\[\begin{split}
 \sum_{\la_j(g)\leq \delta}\frac{\Pi_{\la_j}(x,x')}{\la_j(g)}=& 
\sum_{k=0}^N\sum_{j=j_{k}}^{j_{k+1}}\frac{\varphi_j(x)\varphi_j(x')}{\la_j(g)}+\mc{O}(\delta^{1/L})\\
&=  \sum_{k=0}^N\sum_{j=j_{k}}^{j_{k+1}}\frac{\varphi_j(x)\varphi_j(x')}{\nu_j(g)}+\mc{O}\Big(\frac{\delta^{\frac{1}{L}}}{\nu_1(g)}\Big)\\
&= \sum_{j=1}^m \frac{f_{j}(x)f_{j}(x')}{\nu_j(g)}+\mc{O}\Big(\frac{\delta^{\frac{1}{L}}}{\nu_1(g)}\Big)\end{split}
\]
for some $L\in \N$ large. Here we notice that the $\mc{O}\Big(\frac{\delta^{\frac{1}{L}}}{\nu_1(g)}\Big)$ can be taken in $L^\infty$ norm since $L^2$ norms on eigenfunctions give directly uniform $L^\infty$ norms on compact sets outside the collars $\mc{C}_j(g)$.
\end{proof}

\subsection{Example: the case of genus $2$}
For pedagogical purpose, let us discuss more particularly the case of genus ${\bf g}=2$. In this case there can only be $3$ simple curves in a partition and the maximal number of connected components separated by these curves is $2$: either 
$1$ curve separates $M$ into two surfaces of genus $1$ with $1$ boundary component (Case 1) 
or two hyperbolic pants with $3$ boundary components (Case 2). 
Consequently, the number of eigenvalues approaching $0$ when we approach $\pl \bbar{\mc{M}}_2$ is  $m\in\{1,2\}$ (including  the eigenvalue $\la=0$), we call them $\la_0=0$ and $\la_1(g)>0$ when $m=2$. 

In Case 1, take any partition ${\rm SP}_1$ by $\gamma_1,\gamma_2,\gamma_3$ with $\gamma_1$ being the only separating curve, and denote by $\ell_j(g)$ the length of the geodesic for $g$ freely homotopic to $\gamma_j$.  
We have 
\begin{equation}\label{la1case1}
\la_1(g)\sim c_1\ell_1(g), \textrm{ as }
\ell_1(g)\to 0\textrm{ with }g\in U({\rm SP}_1)
\end{equation}
where $c_1>0$ depending only on ${\bf g}=2$.
\begin{figure}\label{figure}
\centering
\def\svgwidth{20em}
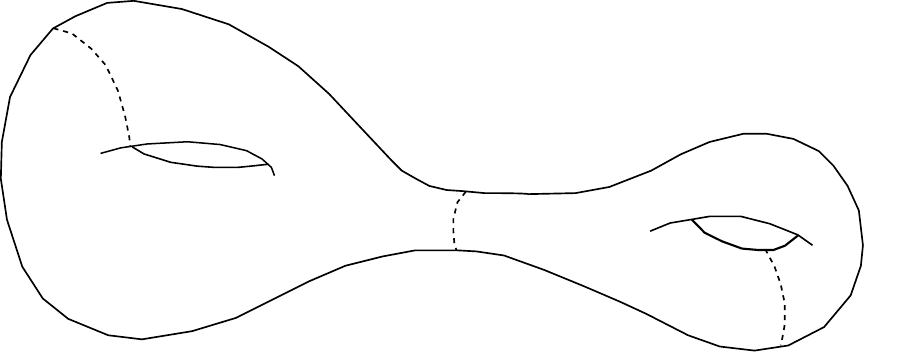
\def\svgwidth{15em}
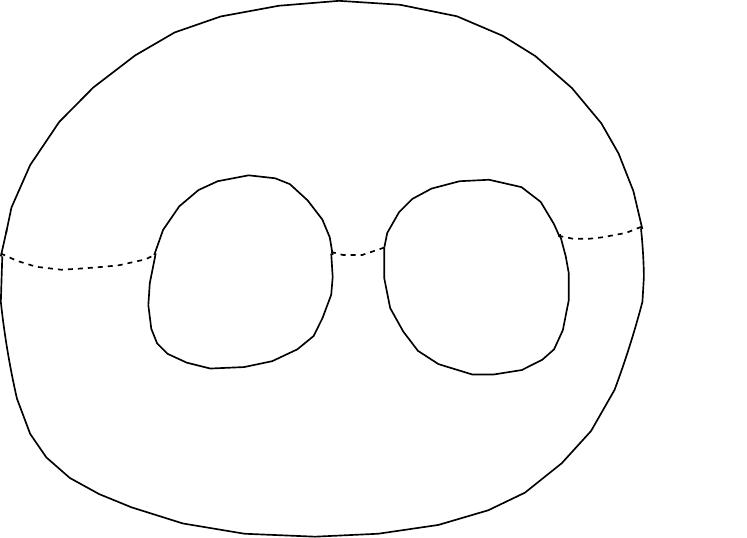
\caption{On the left: Case 1. On the right: Case 2} 
\label{fig:FD}
\end{figure}
 
In Case 2, take ${\rm SP}_2$ any partition where $\gamma_1,\gamma_2,\gamma_3$ are separating 
simple curves and, if $\ell_j(g)$ is the length of the geodesic for $g$ homotopic to $\gamma_j,$ then by \cite{Bu1}, 
\begin{equation}\label{la1case2}
\la_1(g)\sim c_2(\ell_1(g)+\ell_2(g)+\ell_3(g)), \textrm{ as } (\ell_1(g),\ell_2(g),\ell_3(g))\to 0 \textrm{ with  }g\in U({\rm SP}_2)
\end{equation}
for some $c_2>0$ depending only on ${\bf g}=2$.

In both cases, if $g\to g_0$ with $(M_0,g_0)$ a surface with nodes, $(M_0,g_0)$ decomposes into two finite volume hyperbolic surfaces $S_1$ and $S_2$, with volume $2\pi$ (by Gauss-Bonnet), 
and  by Proposition \ref{greenpinching}
\begin{equation}\label{convergencepila1}
\Pi_{\la_1(g)}(x,x')\to \frac{1}{4\pi}(\indic_{S_1}(x)-\indic_{S_2}(x))(\indic_{S_1}(x')-\indic_{S_2}(x')) \textrm{ as }g\to g_0 
\end{equation}
uniformly in $(x,x')$ on compact sets of $M_0\x M_0$.
 
 \subsection{Green's function near pinched geodesics}
For later purpose, we will need a more detailed description of the Green's function $G_g$ than Proposition \ref{greenpinching}, in particular we shall need to know the behaviour of $G_g$ near the pinched geodesics. Some analysis of $G_g$ in such cases were done by Ji \cite{Ji}, while
a recent work of Albin-Rochon-Sher \cite{ARS} gives a parametrix for $\Delta_g-\la$ when $\la$ is 
near $0$ and when there is one pinched geodesic (or the geodesics are pinched at the same speed).
The recent work of Melrose-Zhu \cite{MeZh} also gives a parametrix but for a slightly different Green function. Here, in contrast, we need to know the behaviour in all possible directions of approach of the boundary of $\mc{M}_{\bf g}$ and our estimates below are designed to be applied later for the study of the Gaussian multiplicative chaos measure in the cusp. 

Let $(M_0,g_0)$ be a surface with nodes viewed as 
a surface with pairs of hyperbolic cusps, and let 
$\bbar{U}_{g_0}$ be a local neighborhood of $g_0$ in $\bbar{\mc{M}_{{\bf g}}}$ made of hyperbolic metrics $g_{s,t}$ as explained in Section \ref{Sec:Teichmuller}, and denote $U_{g_0}=\mc{M}_{{\bf g}}\cap \bbar{U}_{g_0}$. 
For convenience, we remove the parameters $(s,t)$ and just write $g$ for $g_{s,t}$.
As in Proposition \ref{greenpinching}, we set
\begin{equation}\label{Ag} 
A_g(x,x'):=G_g(x,x')-\sum_{j=1}^m\frac{\Pi_{\la_j}(x,x')}{\la_j}=\frac{1}{2\pi i}\int_{\pl D(0,\eps)}\frac{R_g(\la;x,x')}{\la}d\la\end{equation}
where $\la_j=\la_j(g)$ are the small eigenvalues tending to $0$ as $g$ approaches the boundary of moduli space, $D(0,\eps)$ is a small disk containing these eigenvalues (and only these ones) and $R_g(\la;x,x')$ is the integral kernel of the resolvent $R_g(\la)=(\Delta_g-\la)^{-1}$ of $\Delta_g$, with $\la\in \cc$.

The hyperbolic surface $(M,g)$ decomposes  into $M= S(g)\cup_{i=1}^{n_p} \mc{C}_{j}(g)$ where $\mc{C}_j(g)$ are the collars isometric to $[-1,1]_\rho \x (\R/\zz)_\theta$ close to a given curve $\gamma_j$, where the metric $g$ is given (in geodesic normal coordinates to the geodesic $\gamma_j(g)$ homotopic to $\gamma_j$) by 
\begin{equation}\label{collarrho} 
g_j=\frac{d\rho^2}{\rho^2+\ell_j^2}+ (\rho^2+\ell_j^2)d\theta^2,
\end{equation}
where $\ell_j=\ell_j(g)$ is the length of the geodesic $\gamma_j(g)$ and $S(g)$ is a compact manifold with boundary contained in a fixed (independent of $g$) compact set of $M_0$. 
This is also isometric to (by the coordinates change $\rho=\ell_j\sinh(t)$)
\[  [-d_j,d_j]_t\x (\R/\zz)_\theta,  \quad g_j=dt^2+\ell_j^2\cosh^2(t)d\theta^2, \quad \sinh(d_j)=1/\ell_j.\] 
The complete hyperbolic cylinder $\cjg z\mapsto e^{\ell_j}z\cjd \backslash\hh^2$ is isometric to
\[ \mc{F}_j:=\Big( \R_\rho\x (\R/\zz)_\theta, \,\,\, g_j=\frac{d\rho^2}{\rho^2+\ell_j^2}+ (\rho^2+\ell_j^2)d\theta^2\Big).\] 
Notice that as $\ell_j\to 0$, the Riemannian manifold $\mc{F}_j\setminus \{\rho=0\}$ 
converges smoothly to two disconnected surfaces $(0,\infty)_\rho \x (\R/\zz)_\theta$ with metric $d\rho^2/\rho^2+\rho^2d\theta^2$, which are isometric to two disjoint elementary quotients $\hh^2/\cjg z\mapsto z+1\cjd$. 
The Laplacian $\Delta_{g_j}$ is self-adjoint with spectrum $[1/4,\infty)$ (its self-adjoint realisation is through Friedrichs extension on $C_c^\infty(\mc{F}_j)$). It is invertible on $L^2(\mc{F}_j)$ and we can consider its resolvent $R_{g_j}(\la):L^2(\mc{F}_j)\to L^2(\mc{F}_j)$ which is holomorphic for $\la\notin [1/4,\infty)$. 
This is studied in details for example in \cite[Prop. 5.2]{Bo} or \cite[Appendix]{GuZw}: writing $\la=s(1-s)$ for $s$ close to $1$, we have
\[ R_{g_j}(\la; \rho,\theta,\rho',\theta')= \sum_{k\in\zz} u_k(s; \rho,\rho')e^{2\pi ik(\theta-\theta')}\]
for some explicit functions $u_k$ analytic in $s$ for $s$ close to $1$. We denote by
$G_{g_j}$ the Green function corresponding to $\la=0$ (i.e. $s=1$). We give a more explicit bound at $\la=0$ in the following
\begin{lemma}\label{G_j} 
For $\ell_j\leq \min(|\rho|,|\rho'|)\leq \max(|\rho|,|\rho'|)\leq 1$ with $\rho\rho'>0$,
the Green function $G_{g_j}$ for the cylinder satisfies 
\begin{equation}\label{boundGj}
\begin{split}
 G_{g_j}(\rho,\theta,\rho',\theta')= &-\frac{1}{2\pi}
\log\Big|1-e^{-\frac{2\pi}{\ell_j}|\arctan(\tfrac{\rho}{\ell_j})-\arctan(\tfrac{\rho'}{\ell_j})|+2\pi i(\theta-\theta')}\Big|
\\
 & +\frac{1}{\ell_j}\min (F(\tfrac{|\rho|}{\ell_j}),F(\tfrac{|\rho'|}{\ell_j}))-\frac{1}{\pi \ell_j}F(\tfrac{|\rho|}{\ell_j})F(\tfrac{|\rho'|}{\ell_j})+\mc{O}(1)
\end{split}
\end{equation}
where $F(x):=\int_{x}^{\infty}\tfrac{du}{1+u^2}$ and the remainder is uniform.
If $\ell_j\leq \min(|\rho|,|\rho'|)\leq \max(|\rho|,|\rho'|)\leq 1$ with $\rho'\rho<0$, then 
\begin{equation}\label{boundsigndiff}
G_{g_j}(\rho,\theta,\rho',\theta')= \frac{1}{\pi\ell_j}F(\tfrac{|\rho|}{\ell_j})F(\tfrac{|\rho'|}{\ell_j})+
\mc{O}(e^{-\frac{\pi}{2\ell_j}}).
\end{equation}
If $|\rho|\in [1/2,1]$, $|\rho'|\leq 1$ and $|\rho-\rho'|>\delta$ for some $\delta>0$, then there is $C$ depending only on $\delta$ so that 
\begin{equation}\label{boundderivee}
|\pl_\rho G_{g_j}(\rho,\theta,\rho',\theta')|\leq C.
\end{equation}
If $\chi\in C_c^\infty(-1,1)$ and $\la \in [0,1)$, then we have the pointwise estimate
\begin{equation}\label{Ggjonconstant} 
\int_{\R/\Z}\int_{-1}^1G_{g_j}(\rho,\theta,\rho',\theta')\chi(\rho')|\rho'|^{-\la}d\rho'd\theta'\leq 2||\chi||_{L^\infty}
\Big(\frac{|\rho|^{-\la}}{1-\la}+\frac{|\rho|^{-\la}-1}{\la}\Big)
\end{equation}
 where by convention $\frac{|\rho|^{-\la}-1}{\la}=|\log |\rho||$ when $\la=0$.
Finally, the Robin mass of $G_{g_j}$ satisfies 
\begin{equation}\label{boundrobin} 
\Big|m_{g_j}(\rho,\theta)-\frac{1}{\pi\ell_j}\Big(\tfrac{\pi^2}{4}-\arctan(\tfrac{\rho}{\ell_j})^2\Big)-\frac{1}{2\pi}\log(\sqrt{\rho^2+\ell_j^2})\Big|\leq C.\end{equation}
\end{lemma}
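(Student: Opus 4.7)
The plan is to exploit that in dimension $2$ the hyperbolic cylinder $\mc{F}_j$ is conformally equivalent to a flat one. Setting $\tau:=\ell_j^{-1}\arctan(\rho/\ell_j)$, one has $g_j=(\rho^2+\ell_j^2)(d\tau^2+d\theta^2)$ on $(-T,T)_\tau\times(\R/\Z)_\theta$ with $T=\pi/(2\ell_j)$. Since in two dimensions $\Delta_g=e^{-2\varphi}\Delta_{g_0}$ and the Dirac mass rescales by the inverse volume factor, the Green function $G_{g_j}$ satisfies the flat equation $-(\pl_\tau^2+\pl_\theta^2)G_{g_j}(\cdot,y)=\delta_y$ in the new coordinates. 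The $L^2$-requirement at $\rho\to\pm\infty$, where the hyperbolic density $\ell_j^2/\cos^2(\ell_j\tau)$ blows up like $(T\mp\tau)^{-2}$, translates into Dirichlet conditions at $\tau=\pm T$. Hence $G_{g_j}$ in $(\tau,\theta)$ coordinates coincides with the Dirichlet Green function of the flat Laplacian on the rectangle, and a Fourier expansion in $\theta-\theta'$ yields
\[
G_{g_j}=\sum_{k\in\Z}G_0^k(\tau,\tau')e^{2\pi ik(\theta-\theta')},
\]
with $G_0^0(\tau,\tau')=(T-\tau_>)(T+\tau_<)/(2T)$ and, for $k\neq 0$,
\[
G_0^k(\tau,\tau')=\frac{\sinh(2\pi|k|(T-\tau_>))\sinh(2\pi|k|(T+\tau_<))}{2\pi|k|\sinh(4\pi|k|T)},
\]
where $\tau_>=\max(\tau,\tau')$ and $\tau_<=\min(\tau,\tau')$.

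Using $T-\tau=\ell_j^{-1}F(\rho/\ell_j)$, $T+\tau=\ell_j^{-1}(\pi-F(\rho/\ell_j))$ and the identity $F(-x)=\pi-F(x)$, the $k=0$ contribution rearranges algebraically into the advertised leading terms: in the same-sign regime, monotonicity of $F$ turns $G_0^0$ into $\ell_j^{-1}\min(F(|\rho|/\ell_j),F(|\rho'|/\ell_j))-(\pi\ell_j)^{-1}F(|\rho|/\ell_j)F(|\rho'|/\ell_j)$, while in the opposite-sign regime it collapses to $(\pi\ell_j)^{-1}F(|\rho|/\ell_j)F(|\rho'|/\ell_j)$, matching the dominant parts of \eqref{boundGj} and \eqref{boundsigndiff}.

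For the $k\neq 0$ sum in the same-sign case, the hypothesis $|\rho|,|\rho'|\geq\ell_j$ forces $T+\tau_<\geq 3\pi/(4\ell_j)$, so expanding the hyperbolic sines into exponentials gives
\[
G_0^k(\tau,\tau')=\frac{1}{4\pi|k|}\bigl(e^{-2\pi|k||\tau-\tau'|}-e^{-2\pi|k|(2T-\tau_>-\tau_<)}\bigr)+\mc{O}\!\Bigl(\frac{e^{-4\pi|k|T}}{|k|}\Bigr)
\]
uniformly in the allowed region. Summing the leading exponential against $e^{2\pi ik(\theta-\theta')}$ via $-\log|1-w|=\re\sum_{k\geq 1}w^k/k$ produces the logarithmic term in \eqref{boundGj}, once one notes $2\pi|\tau-\tau'|=2\pi\ell_j^{-1}|\arctan(\rho/\ell_j)-\arctan(\rho'/\ell_j)|$; the remaining pieces sum to $\mc{O}(1)$ thanks to their exponential decay in $1/\ell_j$. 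In the opposite-sign case, both $T-\tau_>$ and $T+\tau_<$ are at most $\pi/(4\ell_j)$, so $|G_0^k|\leq e^{-\pi^2|k|/\ell_j}/(2\pi|k|)$ and the sum gives the exponentially small error of \eqref{boundsigndiff}.

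For the remaining estimates, \eqref{boundderivee} follows by termwise differentiation of the Fourier series on the region $|\rho-\rho'|>\delta$, using that $\pl_\rho=(\rho^2+\ell_j^2)^{-1}\pl_\tau$ is bounded when $|\rho|\in[1/2,1]$ and that the series converges geometrically away from the diagonal. For \eqref{Ggjonconstant}, integration in $\theta'$ selects the $k=0$ mode, and the pointwise bound $G_0^0\leq\min(1/\rho,1/\rho')$ (consequence of $F(x)\leq 1/x$) reduces the claim to the elementary split $\int_0^\rho+\int_\rho^1$ in $\rho'$. Finally, the Robin mass \eqref{boundrobin} is read off by letting $x'\to x$ in $G_{g_j}(x,x')+\frac{1}{2\pi}\log d_{g_j}(x,x')$: since $d_{g_j}^2\sim(\rho^2+\ell_j^2)((\tau-\tau')^2+(\theta-\theta')^2)$ near the diagonal, the $k\neq 0$ part contributes $\frac{1}{2\pi}\log\sqrt{\rho^2+\ell_j^2}$ from the conformal factor, while the $k=0$ contribution simplifies through $F(\rho/\ell_j)(\pi-F(\rho/\ell_j))=\pi^2/4-\arctan(\rho/\ell_j)^2$ to $\frac{1}{\pi\ell_j}(\pi^2/4-\arctan(\rho/\ell_j)^2)$. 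The main obstacle is establishing uniformity of the $\mc{O}(1)$ error throughout $\ell_j\leq|\rho|,|\rho'|\leq 1$ as $\ell_j\to 0$: one must control the tail of the $k\neq 0$ sum and the approximation of each $G_0^k$ by its infinite-cylinder limit simultaneously, with special care when $T-\tau_>$ becomes small (i.e.\ $|\rho|$ near $1$) and the $k=0$ term itself is large.
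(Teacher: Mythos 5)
Your proposal follows the same route as the paper's proof: conformally flattening the cylinder to a strip $(-T,T)_\tau\times(\R/\Z)_\theta$ with $T=\pi/(2\ell_j)$ and Dirichlet conditions, Fourier-decomposing in $\theta$, and expanding the explicit Sturm--Liouville kernels; your $\sinh$-form of $G_0^k$ and your $G_0^0=(T-\tau_>)(T+\tau_<)/(2T)$ are algebraically identical to the paper's $G_{L_k}$ and $G_{L_0}$ under $\bar k=2\pi|k|$. Your closing worry about uniformity is unfounded: since $|\rho|\leq 1$ forces $|\tau|\leq T-1+\mc{O}(\ell_j^2)$, the exact $\sinh$-formula together with the symmetry $a\leftrightarrow b$ (expand whichever of the two $\sinh$'s has the larger argument) gives an $\mc{O}(1)$ remainder uniformly in $\ell_j$, the only caveat being that the stated error $\mc{O}(e^{-4\pi|k|T}/|k|)$ is slightly optimistic in the same-sign negative regime where it is only $\mc{O}(e^{-4\pi|k|}/|k|)$, which is still summable.
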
  
\begin{proof} 
Let $L$ be the operator $L=(\rho^2+\ell_j^2)\Delta_{g_j}$ acting 
on $T:=\R_\rho\x(\R/\Z)_\theta$ with the measure $(\rho^2+\ell_j^2)^{-1}d\rho d\theta$, 
it is symmetric on $C_c^\infty(T)$. Changing coordinates to $t=\ell_j^{-1}\arctan(\rho/\ell_j)$, 
$L$ becomes the operator 
 \[ L=-\pl_t^2-\pl_\theta^2 \textrm{ on }(-\tfrac{\pi}{2\ell_j},\tfrac{\pi}{2\ell_j})_t\x (\R/\Z)_\theta\]
 with the measure $dtd\theta$. It is not self-adjoint but we can 
we consider the Friedrichs self-adjoint extension, which amounts to set Dirichlet conditions at $t=\pm \pi/2\ell_j$. It is clearly invertible for each $\ell_j>0$ and the inverse can be computed using  Fourier decomposition in $\theta$. If $L^{-1}$ is written under the form
\[ (L^{-1}f)(t,\theta)=\int_{-\tfrac{\pi}{2\ell_j}}^{\tfrac{\pi}{2\ell_j}}\int_{\R/\zz} G_L(t,\theta,t',\theta') f(t',\theta')d\theta'dt'
\]
for some Green kernel $G_L$, then $G_{g_j}$ can be written as 
\[G_{g_j}(\rho,\theta,\rho',\theta')= G_{L}\Big(\frac{\arctan(\tfrac{\rho}{\ell_j})}{\ell_j},\theta,\frac{\arctan(\tfrac{\rho'}{\ell_j})}{\ell_j},\theta'\Big).\]
This is clear since the left-hand side maps $C_c^\infty(\mc{F}_j)$ to $L^2(\mc{F}_j)$ and is a right inverse 
for $\Delta_{g_j}$ on $C_c^\infty(\mc{F}_j)$. 
Now, computing $G_L$ is quite simple: using the Fourier decomposition 
\[ L^{-1}f(t,\theta)=\sum_{k\in \zz} e^{2\pi ik\theta}(L_k^{-1}f_k)(t)\]
where $f(t,\theta)=\sum_ke^{2\pi ik\theta}f_k(t)$ and $L_k$ is the operator on
$I_j:=(-\tfrac{\pi}{2\ell_j},\tfrac{\pi}{2\ell_j})$ given by 
$L_k=-\pl_t^2+4\pi^2k^2$ with Dirichlet condition at $\pl I_j$. For $k\not=0$, a straightforward Sturm-Liouville argument gives the expression of the Green function for $L_k$: with $\bar{k}=2\pi k$, 
\[\begin{split}
G_{L_k}(t,t')=& \frac{1}{2\bar{k}(1-e^{-2\bar{k}\pi/\ell_j})}\Big((e^{-\bar{k}t}-e^{\bar{k}(t-\pi/\ell_j)})(e^{\bar{k}t'}-e^{-\bar{k}(t'+\pi/\ell_j)})\indic_{t\geq t'}\\
 & +(e^{-\bar{k}t'}-e^{\bar{k}(t'-\pi/\ell_j)})(e^{\bar{k}t}-e^{-\bar{k}(t+\pi/\ell_j)})\indic_{t'\geq t}\Big)\\
 = &\frac{e^{-\bar{k}|t-t'|}}{2\bar{k}}+\frac{e^{-2\pi \bar{k}/\ell_j}\cosh(\bar{k}(t-t'))-e^{-\bar{k}\pi/\ell_j}\cosh(\bar{k}(t+t'))}{\bar{k}(1-e^{-2\pi \bar{k}/\ell_j})}.
\end{split}\]
If $\max(|\rho|,|\rho'|)\leq 1$, we have $|t|\leq \tfrac{\pi}{2\ell_j}-1+\mc{O}(\ell_j^2)$ and same for $t'$ thus for $\ell_j$ small,
\[ \Big|\frac{e^{-\bar{k}\pi/\ell_j}\cosh(\bar{k}(t+t'))}{\bar{k}(1-e^{-2\pi \bar{k}/\ell_j})}\Big|
\leq \frac{e^{-\bar{k}/2}}{\bar{k}} ,\quad
\Big|\frac{e^{-2\pi \bar{k}/\ell_j}\cosh(\bar{k}(t-t'))}{\bar{k}(1-e^{-2\pi \bar{k}/\ell_j})}\Big|\leq \frac{e^{-\pi \bar{k}/\ell_j}}{\bar{k}}. \]
We then get 
\[ \sum_{k\not=0}G_{L_k}(t,t')e^{i\bar{k}(\theta-\theta')}=\sum_{k\not=0}
\frac{e^{-\bar{k}|t-t'|+i\bar{k}(\theta-\theta')}}{2\bar{k}}+\mc{O}(1)=-\frac{1}{2\pi}\log\Big|1-e^{-2\pi(|t-t'|- i(\theta-\theta'))}\Big|+\mc{O}(1)
\]
when $\ell_j$ is small. 
Notice also that if $|\rho|\in [\ell_j,1]$ and $|\rho'|\in [\ell_j,1]$ with 
$\rho$ and $\rho'$ having different sign, then 
\begin{equation}\label{bounddiffsign}
\Big|\sum_{k\not=0}G_{L_k}(t(\rho),t'(\rho'))e^{i\bar{k}(\theta-\theta')}\Big|\leq Ce^{-\frac{\pi}{2\ell_j}}
\end{equation}
if $t(\rho)=\ell_j^{-1}\arctan(\rho/\ell_j)$ and similarly for $t'(\rho')$.
Next for $k=0$, the Green function is given by the expression
\[ G_{L_0}(t,t')=-\frac{1}{2} |t-t'|-\frac{\ell_j}{\pi}tt'+\frac{\pi}{4\ell_j},\]
thus  we get 
\[ \begin{split}
G_{L_0}(t(\rho),t'(\rho'))= &-\frac{1}{2\ell_j}\Big|\arctan(\tfrac{\rho}{\ell_j})-\arctan(\tfrac{\rho'}{\ell_j})\Big|+\frac{1}{\pi \ell_j}\Big(\tfrac{\pi^2}{4}-\arctan(\tfrac{\rho}{\ell_j})\arctan(\tfrac{\rho'}{\ell_j})\Big)
\end{split}.
\]
If $F(x):=\int_{x}^{\infty}\tfrac{du}{1+u^2}$, this can be rewritten as 
\begin{equation}\label{upperbdGL0}
\begin{split}
G_{L_0}(\rho,\rho')=& -\tfrac{1}{2\ell_j}\Big| F(\tfrac{\rho}{\ell_j})-F(\tfrac{\rho'}{\ell_j})\Big|
+\tfrac{1}{2\ell_j}(F(\tfrac{\rho}{\ell_j})+F(\tfrac{\rho'}{\ell_j}))-\tfrac{1}{\pi\ell_j}F(\tfrac{\rho}{\ell_j})F(\tfrac{\rho'}{\ell_j})\\
= & \left\{\begin{array}{ll}
\tfrac{1}{\ell_j}\min (F(\tfrac{\rho}{\ell_j}),F(\tfrac{\rho'}{\ell_j}))-\tfrac{1}{\pi \ell_j}F(\tfrac{\rho}{\ell_j})F(\tfrac{\rho'}{\ell_j}), & \textrm{ if } \rho>0,\rho'>0\\
\tfrac{1}{\pi \ell_j }F(\tfrac{\rho}{\ell_j})F(-\tfrac{\rho'}{\ell_j}), & \textrm{ if }\rho>0,\rho'<0
\end{array}\right.
\end{split}\end{equation}
from which \eqref{boundGj} and \eqref{boundsigndiff} follow.

Now, we also see from the expressions of $G_{L_k}$ above 
that if $|\rho|\in [1/2,1]$ and $|\rho-\rho'|>\delta$ for some fixed constant $\delta>0$, then 
\[ |\pl_\rho G_{g_j}(\rho,\theta,\rho',\theta')|\leq C\]
for some constant $C$ depending only on $\delta$ but not on $\ell_j$.

Next we prove \eqref{Ggjonconstant}. Let $F(x)=\int_x^{\infty}\frac{du}{1+u^2}$ and let $c:=||\chi||_{L^\infty}$.  For $\rho>0$, we write  
\[
\begin{split}
\int G_{g_j}(\rho,\theta,\rho',\theta')\frac{\chi(\rho')}{|\rho'|^\la}dx'=&
\frac{1}{\ell_j \pi}F(\tfrac{\rho}{\ell_j})\int_{-1}^0F(-\tfrac{\rho'}{\ell_j})\frac{\chi(\rho')}{|\rho'|^\la}d\rho'+
\int_{0}^\infty G_{g_j}(\rho,\theta,\rho',\theta')\frac{\chi(\rho')}{{\rho'}^\la}d\rho' \\
\leq & \frac{c}{\pi\rho}\int_{-1}^0 F(-\tfrac{\rho'}{\ell_j})\frac{1}{|\rho'|^\la}d\rho'+ \frac{c}{\rho}\int_{0}^\rho 
\frac{d\rho'}{{\rho'}^\la}+c
\int_\rho^{1}\frac{d\rho'}{{\rho'}^{1+\la}}\\
\leq & 2c\frac{\rho^{-\la}}{1-\la}+ 2c\frac{\rho^{-\la}-1}{\la}
\end{split}\]
where we used \eqref{upperbdGL0} in the second line and 
\[\int_{-1}^{-\rho}F(-\tfrac{\rho'}{\ell_j})\frac{d\rho'}{|\rho'|^\la}\leq \ell_j\int_{\rho}^{1}\frac{d\rho'}
{{\rho'}^{1+\la}}\leq \ell_j\frac{\rho^{-\la}-1}{\la}, \quad \int_{-\rho}^{0}F(-\tfrac{\rho'}{\ell_j})\frac{d\rho'}{|\rho'|^\la}\leq \frac{\pi}{2(1-\la)}\rho^{1-\la}
\]
for the third line. The same estimate works in the case $\rho<0$.

Using the estimates and expressions above, we also get as 
$|(\rho',\theta')-(\rho,\theta)|$ is small
\[ G_{L}(t,\theta,t',\theta')=-\frac{1}{2\pi}\log(\sqrt{(t-t')^2+(\theta-\theta')^2})-
\frac{\ell_j}{\pi}t^2+\frac{\pi}{4\ell_j}+\mc{O}(1)+\mc{O}(|t-t'|+|\theta-\theta'|)\] 
where $\mc{O}(1)$ is independent of all variables. We also have 
\[ \log(d_{g_j}(\rho,\theta,\rho',\theta'))=\log(\sqrt{(t-t')^2+(\theta-\theta')^2})+
\log(\sqrt{\rho^2+\ell_j^2})+\mc{O}(|t-t'|+|\theta-\theta'|)\]
thus the Robin mass of $G_{g_j}$ satisfies \eqref{boundrobin}.
\end{proof} 

Next we express $\int_{\pl D(0,\eps)}R_g(\la)d\la/\la$ in terms of 
$G_{g_j}$. Let $\chi_j,\chi_j'\in C^\infty(M)$ 
which are supported in $\mc{C}_j(g)$, depending only on the variable $\rho$ associated to the metric $g$,
are equal to $1$ in $|\rho|\leq 1/4$ and such that $\chi'_j=1$ on a neighborhood of $\supp(\chi_j)$. We will use the diffeomorphisms 
between $\mc{C}_j(g)$ and the subset $|\rho|<1$ of $\mc{F}_j$ to identify these sets (for notational simplicity we won't input these diffeos below).
Using that $\Delta_g=\Delta_{g_j}$ in $\mc{C}_j(g)$, 
we have, with $\chi:=\sum_{j}\chi_j$ 
\[(\Delta_{g}-\la)\sum_{j}\chi'_jR_{g_j}(\la)\chi_j=\chi+\sum_j [\Delta_g,\chi'_j]R_{g_j}(\la)\chi_j\]
and thus 
\[ \sum_{j}\chi'_jR_{g_j}(\la)\chi_j=R_g(\la)\chi+R_g(\la)\sum_j [\Delta_g,\chi'_j]R_{g_j}(\la)\chi_j.\]
Similarly, we also have 
\[\sum_{j}\chi_jR_{g_j}(\la)\chi'_j=\chi R_g(\la)+\sum_j \chi_jR_{g_j}(\la)[\chi_j',\Delta_g]R_g(\la)\]
and therefore (using also $\chi'_j\chi'_k=0$ if $i\not=k$)
\[\begin{split}
\chi R_g(\la)\chi=& \sum_{j}\chi_jR_{g_j}(\la)\chi_j-\sum_{j}\chi_jR_{g_j}(\la)\chi'_j[\Delta_g,\chi'_j]R_{g_j}(\la)\chi_j\\
& +\sum_j \chi_jR_{g_j}(\la)[\Delta_g,\chi'_j]R_g(\la)\sum_i[\chi'_i,\Delta_g]R_{g_i}(\la)\chi_i\\
=&\sum_{j}\chi_jR_{g_j}(\la)\chi_j -K_1(\la)+K_2(\la)R_{g}(\la)K_3(\la)
\end{split}\]
where $K_1(\la),K_2(\la),K_3(\la)$ are defined by the equation. Remark that, in the right hand side, only the last term has poles (first order) in $D(0,\eps)$.
 Therefore, using Cauchy formula
\[\begin{split} 
\int_{\pl D(0,\eps)}\frac{\chi R_g(\la)\chi}{2\pi i\la} d\la= & \sum_{j}\chi_jR_{g_j}(0)\chi_j
-K_1(0)+K_2(0)A_gK_3(0)\\
& -K'_2(0)\Pi_0K_3(0)-K_2(0)\Pi_0K'_3(0)\\
& -\sum_{k=1}^m\frac{(K_2(\la_k)-K_2(0))}{\la_k}\Pi_{\la_k}K_3(\la_k)-K_2(0)\Pi_{\la_k}\frac{(K_3(\la_k)-K_3(0))}{\la_k}
\end{split}\] 
where $K_i'(0):=\pl_{\la}K_i(\la)|_{\la=0}$ and $A_g:=\int_{\pl D(0,\eps)}\frac{R_g(\la)}{2\pi i\la} d\la$.
Let us analyse those terms more carefully.
\begin{proposition}\label{lastestimateGg}
Let $g_0\in \pl\mc{M}_{{\bf g}}$ be a hyperbolic surface  
with cusps in the boundary of moduli space. Then there is a neighborhood $U_{g_0}$ of $g_0$ in $\mc{M}_{\bf g}$ and $C>0$ such that for all $g\in U_{g_0}$ and $x,x'\in \cup_j \mc{C}_j(g)$
\[\begin{split}
\chi(x)\chi(x')G_{g}(x,x')=& \sum_{j}\chi_j(x)G_{g_j}(x,x')\chi_j(x')+Q_{g}(x,x')-H_0(x)J_0(x')-J_0(x)H_0(x')\\
& +\sum_{k=1}^m\Big(\frac{H_k(x)-\la_kJ_k(x)}{\sqrt{\la_k}}\Big)\Big(\frac{H_k(x')-\la_kJ_k(x')}{\sqrt{\la_k}}\Big)\end{split}\]
with $Q_g\in C^\infty(M\x M)$, $H_k=\chi \varphi_{\la_k}$, $J_k\in C^\infty(M)$ satisfying
\[\begin{gathered} 
|Q_g|_{L^\infty}\leq C , \quad |H_0|_{L^\infty}\leq C, \quad |H_k(\rho,\theta)|\leq C|\rho|^{-(1-s_k)}, \\ 
 |J_0(\rho,\theta)|\leq C|\log|\rho||, \quad  |J_k(\rho,\theta)|\leq C\Big(|\rho|^{-(1-s_k)}+
 \frac{|\rho|^{-(1-s_k)}-1}{(1-s_k)}\Big)
 \end{gathered}\]
with $s_k(1-s_k)=\la_k(g)>0$ the small eigenvalues of $\Delta_g$ converging to $0$ as $g$ approaches $\pl\mc{M}_{\bf g}$, $\varphi_{\la_k}$the associated normalized eigenfunctions and $s_k=1-\la_k(g)+\mc{O}(\la_k(g)^2)$.  Here $\rho=\ell_j\sinh(t)$ in $\mc{C}_j(g)$ with $t$ the signed distance to the geodesic $\gamma_j(g)$.
\end{proposition}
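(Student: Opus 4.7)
The plan is to compute the contour integral $\frac{1}{2\pi i}\oint_{\pl D(0,\eps)}R_g(\la)/\la\,d\la$ using the parametrix decomposition already established just above the proposition, evaluate every residue explicitly, and identify each resulting piece with the terms in the claimed formula. Writing $\chi G_g\chi=\chi A_g\chi+\sum_{k=1}^m H_k(x)H_k(x')/\la_k$ (since $\chi\Pi_{\la_k}\chi$ has rank-one kernel $H_k\otimes H_k$ with $H_k=\chi\varphi_{\la_k}$), one reduces the proposition to identifying the residues of $K_2(\la)R_g(\la)K_3(\la)/\la$ at $\la=\la_k$ for $k=0,1,\dots,m$.

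The crucial algebraic step is the identity $K_2(0)\varphi_{\la_k}=H_k-\la_kJ_k$, valid for $k=0,1,\dots,m$ (with $\la_0=0$, so $K_2(0)\varphi_0=H_0$). It follows from the collar identity $\Delta_g=\Delta_{g_j}$ on $\supp(\chi'_j)$, the eigenfunction equation $\Delta_g\varphi_{\la_k}=\la_k\varphi_{\la_k}$, which gives $[\Delta_g,\chi'_j]\varphi_{\la_k}=(\Delta_{g_j}-\la_k)(\chi'_j\varphi_{\la_k})$, combined with $R_{g_j}(0)\Delta_{g_j}=\mathrm{Id}$ on smooth compactly supported functions on $\mc{F}_j$. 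An analogous computation shows $K_2(\la_k)\varphi_{\la_k}=H_k$ exactly, and the symmetry $K_3(\la)=K_2(\la)^*$ (for real $\la$) yields the corresponding identities in the right-hand variable. The residue at $\la=\la_k>0$ of $K_2(\la)R_g(\la)K_3(\la)/\la$ then contributes $[K_2(0)\Pi_{\la_k}K_3(0)-K_2(\la_k)\Pi_{\la_k}K_3(\la_k)]/\la_k$, which combined with the extra $H_kH_k/\la_k$ from the $A_g\mapsto G_g$ conversion gives exactly $(H_k-\la_kJ_k)(H_k-\la_kJ_k)/\la_k$ after expansion. The residue at $\la=0$ is of second order (because $R_g(\la)=-\Pi_0/\la+\mathrm{reg}$), and contributes $-K'_2(0)\Pi_0K_3(0)-K_2(0)\Pi_0K'_3(0)$; setting $J_0(x):=\sum_j\chi_j(x)\int G_{g_j}(x,y)\chi'_j(y)\varphi_0\,dy$, an explicit computation from $K_2(\la)\varphi_0=H_0+\la\varphi_0\sum_j\chi_jR_{g_j}(\la)\chi'_j$ gives $K'_2(0)\varphi_0=J_0$, so this rank-one term is precisely $-H_0J_0-J_0H_0$.

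All the remaining smooth pieces ($-K_1(0)+K_2(0)A_gK_3(0)$ together with the higher-order terms from the Taylor expansions) are collected into $Q_g$, whose uniform $L^\infty$ bound follows because the commutators $[\Delta_g,\chi'_j]$ are supported in the fixed annulus $\{1/4\leq|\rho|\leq1\}$, where $G_{g_j}$ and its derivatives are uniformly bounded by Lemma \ref{G_j} (estimate \eqref{boundderivee}), and $A_g$ is continuous on $\bar U_{g_0}\times\Omega^2_{\mathrm{diag}}$ by Proposition \ref{greenpinching}. For the envelope $|H_k|\leq C|\rho|^{-(1-s_k)}$, I would separate $\theta$-Fourier modes in the collar: the nonzero modes decay like $e^{-2\pi|t(\rho)|}$ by the Fourier analysis of Lemma \ref{G_j}, while the zero mode satisfies the radial ODE $u''+\tanh(\tau)u'+\la_ku=0$ (with $\rho=\ell_j\sinh\tau$), whose two fundamental solutions have asymptotic envelopes $|\rho|^{-s_k}$ and $|\rho|^{-(1-s_k)}$ at large $|\tau|$, the relation $s_k(1-s_k)=\la_k$ giving $s_k=1-\la_k+\mc{O}(\la_k^2)$. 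The $L^2$-approximation of Lemma \ref{approximation}, combined with elliptic regularity on compact sets outside the collars and a matching argument at $|\rho|\sim 1$, excludes the more singular mode $|\rho|^{-s_k}$ and yields the claimed pointwise bound; $|H_0|\leq C$ is immediate from $\varphi_0=\mathrm{const}$. Plugging these envelopes for $\chi'_j\varphi_{\la_k}$ and $\chi'_j\varphi_0$ into \eqref{Ggjonconstant} with $\la=1-s_k$ and $\la=0$ respectively then yields the stated bounds on $|J_k|$ and $|J_0|$.

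The main obstacle is the sharp pointwise control of the small eigenfunctions $\varphi_{\la_k}$ inside the pinched collars, uniformly in $g\in U_{g_0}$. The $L^2$ approximation furnished by Lemma \ref{approximation} is not itself enough: one must upgrade it to the pointwise envelope $|\rho|^{-(1-s_k)}$, carefully tracking how the spectral parameter $s_k$ approaches $1$ as $\la_k\to 0$. This calls for a separation-of-variables argument in the collar mirroring the Fourier analysis performed for $G_{g_j}$ in Lemma \ref{G_j}, together with a delicate matching between the radial ODE solutions in the collar and the $L^\infty$-bounded behavior of $\varphi_{\la_k}$ on compact sets of $M_0$ outside the collars.
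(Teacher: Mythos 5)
Your proposal is correct and follows essentially the same route as the paper: the contour-integral argument applied to the parametrix identity, the algebraic simplification encoded in $K_2(\la_k)\varphi_{\la_k}=H_k$ and $K_2(0)\varphi_{\la_k}=H_k-\la_kJ_k$ (the paper reaches the same conclusion by using $(\Delta_g-\la_k)\Pi_{\la_k}=0$ directly when simplifying the residue terms and regrouping $K_2(0)R_g(0)K_3(0)=K_2(0)A_gK_3(0)+\sum_k K_2(0)\Pi_{\la_k}K_3(0)/\la_k$), and the uniform bound on $Q_g$ from the fixed supports of the commutators $[\Delta_g,\chi'_j]$ together with the estimates \eqref{boundderivee}, \eqref{boundsigndiff} and the continuity of $A_g$ from Proposition \ref{greenpinching}. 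The pointwise envelope for $\varphi_{\la_k}$ that you isolate as the main obstacle is exactly the paper's Lemma \ref{boundvarphi1}, proved by the collar Fourier decomposition plus radial-ODE analysis you sketch; the only small inaccuracies in your outline are that the more singular mode $|\rho|^{-s_k}$ is not \emph{excluded} but merely dominated (after normalizing the even/odd solutions $u_0,v_0$ at $t=\pm t_j$ it enters with a prefactor $\ell_j^{2s_k-1}$, bounded by $\eps|\rho|^{s_k-1}$ on $\{|\rho|>C\ell_j\}$), and the nonzero Fourier modes are handled in the paper by a straightforward maximum principle rather than by invoking the exponential-decay structure of $G_{g_j}$.
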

\begin{proof}
 The integral kernel of $\sum_{j}\chi_jR_{g_j}(0)\chi_j$ is $\chi_j(x)\chi_j(x')G_{g_j}(x,x')$ in $\mc{C}_j(g)$
with respect to the measure ${\rm dv}_{g_j}$. This term has an explicit bound by using Lemma \ref{G_j}. The function $Q_g(x,x')$  will be chosen to be the integral kernel of $-K_1(0)+K_2(0)A_gK_3(0)$, let us show this is smooth and uniformly bounded.
The integral kernel of $K_1(0)$ is of the form
\[ K_1(0;x,x')=\sum_{j}\chi_j(x)\int_{\mc{C}'_j}G_{g_j}(x,y)\chi_j'(y)P_j(y)G_{g_j}(y,x'){\rm dv}_{g_j}(y)\chi_j(x')
\]
in $\mc{C}_j(g)$ with respect to the measure ${\rm dv}_{g_j}$, where $P_j$ is a smooth 
differential operator of order $1$ supported in $\supp(\nabla \chi_j')$ (thus far from 
the $\gamma_j(g)=\{\rho=0\}$ curve). First $K_1$ has smooth integral kernel 
since $P_j(y)G_{g_j}(y,x')\chi_j(x')$ is smooth as $\supp(\nabla \chi_j')\cap \supp(\chi_j)=\emptyset$. Moreover, by \eqref{boundderivee} and \eqref{boundsigndiff} we directly get that for all $x,x'\in \mc{C}_j(g)$ with $|\rho(x)|>\ell_j$ 
and $|\rho(x')|>\ell_j$ 
\[ |K_1(0;x,x')|\leq C.\] 

By Proposition \ref{greenpinching}, the norm $||A_g||_{L^2(W)\to L^2(W)}$ is uniformly bounded if $W:=\supp(\nabla \chi'_j)$, and by the same argument as for $K_1(0)$, $K_2(0)$ and $K_3(0)$ have smooth integral kernels that are uniformly bounded with respect to $\ell_j$, thus there is $C,C'>0$ uniform so that for all $x,x'\in M$
\[ |(K_2(0)A_gK_3(0))(x,x')|\leq C\|K_2(0)(x,\cdot)\|_{L^2(W)}\|K_3(0)(\cdot,x')\|_{L^2(W)}\leq C'.\] 

We can rewrite $K'_2(0)\Pi_0K_3(0)$ by using that $\Pi_0=\sqrt{c}\cjg \sqrt{c},\cdot\cjd$ with $c=1/{\rm Vol}_g(M)$:
\[\begin{split}
(K'_2(0)\Pi_0K_3(0))(x,x')= & c\sum_j \chi_j(x)(\pl_\la R_{g_j}(0)\Delta_g\chi'_j)(x)\chi(x')\\
=& c\sum_j\chi_j(x)(R_{g_j}(0)\chi'_j)(x)\chi(x')=H_0(x')J_0(x)
\end{split}\]
where we used $\pl_\la R_{g_j}(0)\Delta_{g_j}=R_{g_j}(0)$, and $J_0(x):=\sqrt{c}
\sum_j\chi_j(x)\int_{\mc{C}_j}G_{g_j}(x,x')\chi_j(x'){\rm dv}_{g_j}(x')$ and $H_0:=\sqrt{c}\chi$ are smooth functions on $M$. Similarly, we get
\[(K_2(0)\Pi_0K'_3(0))(x,x')=H_0(x)J_0(x').\] 
Now the bound \eqref{Ggjonconstant} gives that $|J_0(\rho,\theta)|\leq C|\log |\rho||$ and $|H_0(x)|\leq C$ for some uniform $C>0$.

Using that $(\Delta_g-\la_k)\Pi_{\la_k}=\Pi_{\la_k}(\Delta_g-\la_k)=0$, we get 
\[\begin{split}
\frac{(K_2(\la_k)-K_2(0))}{\la_k}\Pi_{\la_k}K_3(\la_k)=&
\sum_{j}\chi_jR_{g_j}(0)\chi_j'\Pi_{\la_k}\sum_{i}[\chi_i',\Delta_g]R_{g_i}(\la_k)\chi_i\\
=& \sum_{j}\chi_jR_{g_j}(0)\chi_j'\Pi_{\la_k}\chi.
\end{split}\]
Similarly, we also get 
\[\begin{split}
K_2(0)\Pi_{\la_k}\frac{(K_3(\la_k)-K_3(0))}{\la_k}=&
\sum_{j}\chi_jR_{g_j}(0)[\Delta_g,\chi'_j]\Pi_{\la_k}\sum_{i}\chi_i'R_{g_i}(0)\chi_i\\
=& \chi\Pi_{\la_k}\sum_{i}\chi_i'R_{g_i}(0)\chi_i -\la_k \sum_{j}\chi_jR_{g_j}(0)\chi_j'\Pi_{\la_k}\sum_{i}\chi_i'R_{g_i}(0)\chi_i.
\end{split}\]
Write $\sum_k\Pi_{\la_k}(x,x')=\sum_k\varphi_{\la_k}(x)\varphi_{\la_k}(x')$ for some $\varphi_{\la_k}\in \ker (\Delta_g-\la_k)$ orthonormal basis of eigenfunctions associated to the small eigenvalues $\la_k$ (repeated with multiplicities), then \[\begin{gathered}
\sum_k\frac{(K_2(\la_k)-K_2(0))}{\la_k}\Pi_{\la_k}K_3(\la_k)+K_2(0)\Pi_{\la_k}\frac{(K_3(\la_k)-K_3(0))}{\la_k}=\\ 
\sum_k H_k(x)J_k(x')+J_k(x)H_k(x')-\la_k J_k(x)J_k(x')
\end{gathered}\]
where $H_k=\chi \varphi_{\la_k}$, $J_k(x)=\sum_j\chi_j(x)\int_{\mc{C}_j(g)}G_{g_j}(x,x')\chi_j'(x)\varphi_{\la_k}(x'){\rm dv}_{g_j}(x')$.
To conclude, we need some estimates on the eigenfunction $\varphi_{\la_k}$ associated to the small positive eigenvalue $\la_k$ of $\Delta_g$ on $M$. In the case of one pinched geodesic, this can be obtained by \cite[Proposition 7.2]{ARS}, but we provide a more general (but softer) estimate that will be useful later in the paper.
\begin{lemma}\label{boundvarphi1}
Let $g_0\in \pl\mc{M}_g$ be a surface with node with $m+1$ connected components 
and denote by $\la_1,\dots \la_m$ the positive small eigenvalues.
For each $\eps>0$, there is a neighborhood $U_{g_0}$ of $g_0$ and $C>0$ such that for each 
$g\in U_{g_0}$, the following holds: if $\varphi_{\la_i}$ is an eigenfunction for $\la_i$
which satisfies $|\varphi_{\la_i}|_{\rho=\pm 1}-a^\pm_{ij}|<\eps$ for some constants $a_{ij}^\pm\in \R$ 
in the collar $\mc{C}_j(g)$, then it satisfies 
\[\varphi_{\la_i}(\rho,\theta)=(a_{ij}^+\indic_{\rho>0}+a_{ij}^-\indic_{\rho<0})|\rho|^{s_i-1}(1+\mc{O}(\eps))+\mc{O}(\eps)\]
in the region 
$\{|\rho|>C\ell_j\}$ of the collar $\mc{C}_j(g)$, where $s_i(1-s_i)=\la_i$ and $s_i=1-\la_i+\mc{O}(\la_i^2)$ when $\la_i$ is small. In the region $|\rho|<C\ell_j$, there is $C'>0$ such that 
\[|\varphi_{\la_i}(\rho,\theta)|\leq C'|\rho|^{s_i-1}.\]
\end{lemma} 
\begin{proof} To simplify notations, we remove the $i,j$ indices from $a^\pm_{ij}$, $\la_i$ and $s_i$. 
We decompose $\varphi_{\la}$ in Fourier modes in $\theta$: there are $b_k\in C^\infty([-1,1])$ so that
\[\varphi_{\la}(\rho,\theta)=\sum_{k=-\infty}^\infty b_k(\rho)e^{2\pi ik\theta}\] 
and the series converges uniformly.
Since $\varphi_{\la}$ can be supposed real-valued, $b_0$ is real and $b_{-k}=\bbar{b_k}$.
Moreover $a_k(t):=b_k(\ell_j\sinh(t))$ satisfies the ODE
\[ \Big(-\pl_{t}^2-\tanh(t)\pl_t+\frac{4\pi^2k^2}{\ell_j^2\cosh(t)^2}-\la\Big)a_k(t)=0, \quad a_k(\pm t_j)=a_k^{\pm}\]
if $t_j$ is defined by $\ell_j\sinh(t_j)=1$. 
We write $a_k^{\pm}:=a_k|_{t=\pm t_j}=b_k|_{\rho=\pm 1}$. 
First we make the following observation for each $k\not=0:$
\begin{equation}\label{maxprinciple} 
|a_k(t)|\leq \max(|a_k^+|,|a_k^-|)<\eps.
\end{equation}
Indeed,  assume that $a_k$ achieves its maximum at $T\in (-t_j,t_j)$ with 
$a_k(T)>\max(a_k^+,a_k^-)$, then if $a_k(T)>0$
\[ -a_k''(T)=\Big(\la-\frac{4\pi^2k^2}{\ell_j^2\cosh(T)^2}\Big)a_k(T)<0\]
when $\ell_j$ is smaller than a uniform constant. This contradicts that $a_k(T)$ is a local maximum, 
thus $a_k$ achieves its maximum at $\pm t_j$ or its maximum is non-positive, in which case $a_k^+\leq 0$ and $a_k^-\leq 0$. In both cases, $|a_k(T)|\leq \max(|a_k^+|,|a_k^-|)$. 
The same argument works with the minimum and 
this shows \eqref{maxprinciple}. 
Next we analyze $u_0(t)$, and it is convenient for that to use $s\in(0,1)$ so that $s(1-s)=\la$
(then $s=1-\la+\mc{O}(\la)^2$). There are $2$ independent solutions of 
the ODE with $k=0$ (and no bounday condition), the first one $v_0$ is odd in $t$, the other one $u_0$ is even, they are  given by \cite[Chapter 5.1]{Bo}  
\[ \begin{split}
v_0(t)= & {\rm sign}(t)\frac{\Gamma(\demi-s)\Gamma(\frac{1+s}{2})^2}{\Gamma(s-\demi)\Gamma(1-\frac{s}{2})^2}
|\sinh(t)|^{-s}F\Big(\frac{1+s}{2},\frac{s}{2},\frac{1}{2}+s;\frac{-1}{\sinh(t)^2}\Big)\\
& +
{\rm sign}(t)|\sinh(t)|^{s-1}F\Big(\frac{2-s}{2},\frac{1-s}{2},\frac{3}{2}-s;\frac{-1}{\sinh(t)^2}\Big),\\
u_0(t) =&  \frac{\Gamma(\demi-s)\Gamma(\frac{s}{2})^2}{\Gamma(s-\demi)\Gamma(\frac{1-s}{2})^2} 
|\sinh(t)|^{-s}F\Big(\frac{s}{2},\frac{s+1}{2},\frac{1}{2}+s;\frac{-1}{\sinh(t)^2}\Big)
\\
& +|\sinh(t)|^{s-1}F\Big(\frac{1-s}{2},1-\frac{s}{2},\frac{3}{2}-s;\frac{-1}{\sinh(t)^2}\Big) 
\end{split}\]
where $F(a,b,c;z)$ is the hypergeometric function, holomorphic in the variables $a,b,c$ 
for $a,b,c\in\cc$ in the half-plane 
$\cc_+:=\{c\in \cc, {\rm Re}(c)>0\}$ if $z\in (-\infty,0)$, it is smooth in $z$ and for $z<0$ small
\[F(a,b,c;z)=1+\mc{O}(|z|)\]
where the remainder is uniform for $a,b,c$ in compact sets of $\cc_+$.
In particular, there is $C>0$ uniform in $g$ so that for $|t|>C$,  
\[\begin{gathered}
v_0(t)= {\rm sign}(t)|\sinh(t)|^{s-1}+\mc{O}(|\sinh t|^{-s}), \quad
u_0(t)= |\sinh(t)|^{s-1}+\mc{O}(|\sinh t|^{-s})
\end{gathered} \] 
where the remainder is uniform with respect to $\lambda$ for $\lambda>0$ small.
We obtain 
\[ a_0(t)=\frac{(a_0^++a_0^-)}{2}\frac{u_0(t)}{u_0(t_j)}+\frac{(a_0^+-a_0^-)}{2}\frac{v_0(t)}{v_0(t_j)}\]
and we deduce that 
\[ \begin{split}
a_0(t)=& (a_0^+\indic_{t>0}+a_0^-\indic_{t<0})|\ell_j\sinh(t)|^{s-1}+\mc{O}(\ell_j^{s-1}|\sinh(t)|^{-s}).
\end{split}
\]
We now use $\ell_j\sinh(t)=\rho$ and $\ell_j^{s-1}|\sinh(t)|^{-s}= \ell_j^{2s-1}|\rho|^{-s}\leq 
\eps |\rho|^{s-1}$ if $|\rho|>C\ell_j$ with $C$ large enough (depending on $\eps$). 
 
Next, consider the case $|\rho|<C\ell_j$. We can also write $u_0$ and $v_0$ under the form (see \cite[Chapter 5.5]{Bo})
\[\begin{gathered}
v_0(t)=\frac{\Gamma(\frac{1+s}{2})^2}{\Gamma(\frac{3}{2})\Gamma(s-\frac{1}{2})}\sinh(t)F\Big(\frac{1+s}{2},1-\frac{s}{2},\frac{3}{2}; -\sinh(t)^2\Big), \\
u_0(t)=\frac{\Gamma(\frac{s}{2})^2}{\Gamma(\frac{1}{2})\Gamma(\frac{1}{2}-s)}F\Big(\frac{s}{2},\frac{1-s}{2},\frac{1}{2}; -\sinh(t)^2\Big)
\end{gathered}\]
and this easily yields the desired estimate. 
\end{proof}
The proof is complete by noticing that the estimates on $H_k,J_k$ follow fom this Lemma and \eqref{Ggjonconstant}, together with the fact that each $\varphi_{\la_k}$ is bounded uniformly in 
$M\setminus \cup_j \mc{C}_j(g)$ for $g\in U_{g_0}$ by Lemma \ref{approximation}. 
\end{proof}

%%%%%%%%%%%%%%%%%%%%%%%%%%%%%%%%%%%%%%%%%%%%%%%%%%%%%%%
\section{Gaussian Free Field and Gaussian Multiplicative Chaos}\label{sec:GFF}
%%%%%%%%%%%%%%%%%%%%%%%%%%%%%%%%%%%%%%%%%%%%%%

In this section, we shall explain how to give a mathematical sense to the formal measure 
\begin{equation}\label{liouvillemesure}
F\mapsto \int F(\varphi) e^{-S_L(g,\varphi)}D\varphi
\end{equation}
where $S_L(g,\varphi)$ is the Liouville functional defined in \eqref{QLiouville}, $g$ is a fixed metric on the surface $M$ and $\varphi$ varies among a certain space of functions so that $e^{\gamma\varphi}g$  
is parametrizing the conformal class $[g]$ of $g$.  This will allow us to define the partition function of Liouville Quantum Field Theory, and in fact $\varphi$ will be a field, i.e. a random function or random distribution, 
that we will denote by $X_g$.
The first step is to make sense of the part corresponding to the squared gradient term in $S_L(g,\varphi)$, i.e. the formal  Gaussian measure 
\begin{equation}\label{Gaussianmesure} 
F\mapsto \int F(\varphi) e^{-\frac{1}{4\pi}||d\varphi||_{L^2}^2}D\varphi.
\end{equation}
Classically, we interpret the above field $\varphi$ as a Gaussian Free Field (GFF in short): this is a  Gaussian random variable taking values in some space of distributions in the sense of Schwartz. In particular, the field $\varphi$ is not a well-defined function and giving sense to the term $e^{\gamma \varphi}$ in  \eqref{QLiouville} is thus not straightforward, but it can be done through the theory of Gaussian Multiplicative Chaos (GMC in short), which goes back to \cite{cf:Kah}.

\subsection{Gaussian Free Field}\label{SecGFF}
%%%%%%%%%%%%%%%%%%%%

We describe the Gaussian Free Field on a compact Riemannian surface $(M,g)$ by using our previous description of the Green function. The definition of the GFF, as well as the definition of  its partition function, can be carried out in a direct way (see for instance \cite{dubedat,She07}). Yet, this path may not be as pedagogical as following the circle of ideas that led physicists to our current knowledge of this object, and this is what we try to summarize heuristically below to end up with a mathematically sound definition.

As a warm up, let us quickly recall that the Gaussian measure $$(2\pi)^{-n/2}\sqrt{\det(A)}e^{-{\demi} \cjg Ax,x\cjd}dx$$ on $\R^n$, when $A$ is a positive definite symmetric matrix, is the law of the random variable $X=\sum_{j=1}^n\alpha_j\varphi_j/\sqrt{\la_j}$ where $(\alpha_j)_j$ are independent Gaussian random variables in $\mc{N}(0,1)$ (mean $0$ and variance $1$),
and $(\varphi_j)_j$ is an orthonormal basis of eigenvectors for $A$ with eigenvalues $\la_j>0$. 

As the GFF is an infinite dimensional Gaussian, it is natural to expect a construction through its projections onto finite dimensional subspaces, on which one can apply the construction   described just above. Recall that the Laplacian $\Delta_g$ has an orthonormal basis of real valued eigenfunctions $(\varphi_j)_{j\in \N_0}$ in $L^2(M,g)$ with associated eigenvalues $\la_j\geq 0$; we set $\la_0=0$ and $\varphi_0=({\rm Vol}_g(M))^{-1/2}$. The Laplacian can thus be seen as a symmetric operator on an infinite dimensional space. Denote $H_n$  the finite dimensional space  spanned by the first $n$ eigenfunctions $(\varphi_j)_{j=1,\dots,n}$ of the Laplacian. Notice that for $\varphi=\sum_{j=1}^n \tilde{\alpha}_j \varphi_j$ we have
$\|d\varphi\|^2_{L^2}=\sum_{j=1}^n\tilde{\alpha}_j^2 \lambda_j$. Therefore the projection to $H_n$ of the formal measure \eqref{Gaussianmesure} is naturally understood as
\begin{align*}
 \int_{H_n} F(\varphi)e^{-\frac{1}{4\pi}\|d\varphi\|^2_{L^2}}D\varphi & =\int_{\R^n}F\big(\sum_{j=1}^n \tilde{\alpha}_j \varphi_j\big)\prod_{j=1}^n\Big(e^{-\frac{1}{4\pi} (\tilde{\alpha}_j)^2 \lambda_j} d \tilde{\alpha}_j\Big)  \\
 & =  (2\pi)^{n/2}\Big(\prod_{j=1}^{n}\lambda_j\Big)^{-1/2} \int_{\R^n}F\big(  \sqrt{2 \pi} \sum_{j=1}^n\alpha_j \frac{\varphi_j }{\sqrt{\lambda_j} } \big)\prod_{j=1}^n\Big(e^{-\frac{\alpha_j^2}{2}}  d\alpha_j\Big) \\
\end{align*}
for appropriate bounded measurable functionals $F$. The mass of this Gaussian measure is $(2\pi)^{n}\Big(\prod_{j=1}^{n}\lambda_j\Big)^{-1/2}$.  
Renormalized by its mass, this measure becomes a probability measure describing the law  of the random function
\begin{equation}\label{truncGFF}
X_n:=  \sqrt{2 \pi} \sum_{j=1}^n\alpha_j \frac{\varphi_j}{\sqrt{\la_j}}
\end{equation}
 where $(\alpha_j)_j$ are independent Gaussian random variables with law $\mc{N}(0,1)$. 

To obtain the description of the GFF, one has to take the   limit $n\to\infty$. It can be seen \cite{dubedat} that the sum \eqref{truncGFF} converges almost surely in the Sobolev space $H^{-s}(M)$ for each $s>0$. The mass $(2\pi)^{n}\Big(\prod_{j=1}^{n}\lambda_j\Big)^{-1/2}$ diverges  as $n\to \infty$ but this is not that much troublesome as it is customary in physics  
(and can be done mathematically) to remove the diverging terms provided they are "universal enough": this procedure is called renormalization. Removing the diverging terms should give a limiting total mass equal to  $ ( \det '(\tfrac{1}{2\pi}\Delta_g)) ^{-1/2}$. 
So far, this is the picture the reader should have in mind to understand the construction of the GFF. Yet, for readers who want to have more details, we stress  that  renormalizing  the product $\prod_{j=1}^{n}\lambda_j$  turns out to be very troublesome and slight adaptations are necessary to recover the  phenomenology explained above. The reader may consult the paper  \cite{bilal} where  these  renormalization issues are discussed in further details. 

\medskip
The above formal discussion thus motivates the forthcoming definitions. The Green function $G_g(x,x')$ (with vanishing mean) is a distribution on $M\x M$ which can be written as the series, converging in the sense of distributions, 
\[ G_g(x,x')=\sum_{j=1}^\infty \frac{\varphi_j(x)\varphi_j(x')}{\la_j}.\] 
Let $(a_j)_j$ be a sequence of i.i.d. real Gaussians   $\mc{N}(0,1)$, defined on some probability space   $(\Omega,\mc{F},\mathbb{P})$,  and define  the Gaussian Free Field with vanishing mean in the metric $g$ by
\begin{equation}
X_g=\sqrt{2\pi}\sum_{j\geq 1}a_j\frac{\varphi_j}{\sqrt{\la_j}} 
\end{equation}
as a random variable with values in $\mc{D}'(M)$, i.e. almost surely $X_g\in \mc{D}'(M)$ (see   \cite[Section 4.2]{dubedat} for instance). Notice that for each $\phi\in C^\infty(M)$, almost surely we have $\cjg X_g,\phi\cjd=\sqrt{2\pi}\sum_{j=1}^\infty a_j \frac{\cjg \varphi_j,\phi\cjd}{\sqrt{\la_j}}$ which is a 
converging series of random variables as $\mathbb{E}(\cjg X_g,\phi\cjd^2)<\infty$. 
 In fact, if $H_0^{-s}(M)$ is the kernel of the map $X\mapsto \cjg X,1\cjd_{L^2({\rm dv}_g)}$ on the $L^2$-based Sobolev space $H^{-s}(M)$ of order $-s\in\R^*_-$, it is easy to see (see \cite{dubedat} again) that $X_g$ makes sense as a random variable with values in $L^2(\Omega; H_0^{-s}(M))$ for all $s>0$ by using the asymptotic counting function on the eigenvalues $\la_j$ (i.e. the Weyl law). If $\phi_1,\phi_2\in C^\infty(M)$, the covariance is 
\[ \mathbb{E}[ \cjg X_g,\phi_1\cjd.\cjg X_g,\phi_2\cjd]=2\pi\sum_{j=1}^\infty \frac{\cjg \varphi_j,\phi_1\cjd \cjg\varphi_j,\phi_2\cjd}{\la_j}=2\pi
\cjg G_g,\phi_1\otimes\phi_2\cjd. 
\]
The covariance is then the Green function when viewed as a distribution: if $\phi_1\to \delta_x$ and $\phi_2\to \delta_{x'}$ for $x\not=x'$, $\mathbb{E}(\cjg X_g,\phi_1\cjd.\cjg X_g,\phi_2\cjd)\to 2\pi G_g(x,x')$ and we will write with a slight abuse of notation
\[\mathbb{E}[X_g(x).X_g(x')]=2\pi \,G_g(x,x').\]
Notice that the extra $2\pi$ factor serves to make the field $X_g$ have exact logarithmic correlations in view of Lemma \ref{greenneardiag}. As in \cite[Theorem 2.3]{She07}, there is a probability measure $\mc{P}$ on $H^{-s}_0(M)$
(for some natural $\sigma$-algebra) so that the law of $X_g$ is given by $\mc{P}$ and for each $\phi\in H^{s}(M)$, $\cjg X_g,\phi\cjd$ is a random variable on $\Omega$ with zero mean and variance $2\pi\cjg R_g(0)\phi,\phi\cjd$. The measure $\mc{P}$ represents the Gaussian measure \eqref{Gaussianmesure} (times the $\sqrt{\det'(\Delta_g)}$ term) on the space of functions orthogonal to constants, thus to
define \eqref{Gaussianmesure} on the whole $H^{-s}(M)$ space, we shall consider the tensor product $\mc{P}\otimes dc$ where $dc$ is the Lebesgue measure on $\R$ viewed as the 1-dimensional vector space of constant functions on $M$: 
in other words, we use the isomorphism 
\[ H_0^{-s}(M)\x \R \to H^{-s}(M), \qquad   (X,c)\mapsto X+c\]
to define the measure $\mc{P}'$ on $H^{-s}(M)$ as the image of $\mc{P}\otimes dc$  by this map. This measure gives a proper sense to the Gaussian measure \eqref{Gaussianmesure} times the global factor $ \big( \det '(\tfrac{1}{2\pi}\Delta_g) /{\rm Vol}_g(M)\big)^{-1/2}$. The extra term ${\rm Vol}_g(M)^{1/2}$ is a normalisation factor coming from the fact that ${\rm Vol}_g(M)^{-1/2}$ is of norm
$1$ in $L^2(M,{\rm dv}_g)$. We have 
\begin{lemma}\label{invarianceconforme}
The measure $\mc{P}'$ on $H^{-s}(M)$ obtained by tensorizing the GFF measure $\mc{P}$ by $dc$ is conformally invariant in the sense that it does not depend on the conformal representative in a conformal class $[g]$. 
\end{lemma}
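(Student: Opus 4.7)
The strategy is to decompose $H^{-s}(M)$ into zero-mean distributions plus constants (which depends on the metric), compare the two decompositions induced by $g$ and $\hat g = e^\varphi g$, and use a Fubini/change-of-variables argument, the key analytic input being a Gaussian covariance calculation that uses the transformation rule \eqref{greenhatg} for the Green function.

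Concretely, I would first note that for any $u \in H^{-s}(M)$, writing $A_g(u) := \frac{\langle u, 1\rangle_{\hat g}}{{\rm Vol}_{\hat g}(M)}$, one has $u = (u - A_g(u)) + A_g(u)$ with $u - A_g(u) \in H^{-s}_{0,\hat g}(M)$. Thus the diffeomorphism between the two parametrizations $(X,c) \mapsto X+c$ is given by $(X,c) \in H^{-s}_{0,g}(M)\times \R \mapsto (Y,c') := (X - A_g(X),\, c + A_g(X)) \in H^{-s}_{0,\hat g}(M)\times \R$, where I used $A_g(c) = c$ for constants. Since $e^\varphi \in C^\infty(M) \subset H^s(M)$, the functional $A_g$ is well-defined on $H^{-s}_{0,g}(M)$.

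The core computation is to show that the pushforward of $\mc{P}$ (the law of $X_g$) under $X \mapsto X - A_g(X)$ equals the corresponding measure $\hat{\mc{P}}$ (the law of $X_{\hat g}$). Both are centered Gaussians on $H^{-s}_{0,\hat g}(M)$, so it suffices to match the covariances on test functions $\phi_1,\phi_2 \in C^\infty(M)$ with $\langle \phi_i, 1\rangle_{\hat g}=0$. On the one hand, setting $Y := X_g - A_g(X_g)$, the zero-mean condition on $\phi_i$ kills the $A_g$ contribution and gives
\[
\E[\langle Y,\phi_1\rangle_{\hat g}\langle Y,\phi_2\rangle_{\hat g}] = \E[\langle X_g, e^\varphi \phi_1\rangle_g \langle X_g, e^\varphi\phi_2\rangle_g] = 2\pi \int_{M\times M} G_g(x,x')\phi_1(x)\phi_2(x')\,{\rm dv}_{\hat g}(x)\,{\rm dv}_{\hat g}(x').
\]
On the other hand, plugging \eqref{greenhatg} into the covariance of $X_{\hat g}$, the constant term $\alpha$ and the separable terms $u(x),u(x')$ integrate to zero against $\phi_i$ with $\langle \phi_i,1\rangle_{\hat g}=0$, leaving exactly the same expression. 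Hence $Y$ and $X_{\hat g}$ agree in law.

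The proof is then completed by Fubini: for any bounded measurable $F$ on $H^{-s}(M)$,
\[
\int F(X+c)\,d\mc{P}(X)\,dc = \int\!\!\int F\bigl(Y+(c+A_g(X))\bigr)\,dc\,d\mc{P}(X) = \int\!\!\int F(Y+c')\,dc'\,d\hat{\mc{P}}(Y),
\]
where in the middle step the inner integral is invariant under the translation $c \mapsto c+A_g(X)$ for each fixed $X$, and in the last step we apply the pushforward identity just proved. This is exactly the statement that $\mc{P}'$ does not depend on the representative of the conformal class. The only subtle point is the handling of the unbounded Lebesgue factor together with the non-probability product measure, but this is taken care of by the translation invariance of Lebesgue measure on $\R$; the Gaussian covariance matching is the sole nontrivial analytic ingredient, and it reduces transparently to the fact that the correction $\alpha - u(x) - u(x')$ in \eqref{greenhatg} lies in the joint kernel of the integration-against-$\hat g$-zero-mean test functions.
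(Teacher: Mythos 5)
Your proof is correct and follows essentially the same route as the paper: identify $Y := X_g - A_g(X_g)$ (the paper's $\hat X_g := X_g - c_{\hat g}(X_g)$) with $X_{\hat g}$ in law via the Green-function transformation \eqref{greenhatg}, then conclude by translation invariance of $dc$. The only cosmetic difference is that you verify the covariance identity on test functions with $\langle\phi_i,1\rangle_{\hat g}=0$ (which suffices, and lets the correction terms in \eqref{greenhatg} drop out immediately), whereas the paper carries out the full covariance computation and recognizes \eqref{greenhatg} in the result.
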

\begin{proof} Let $\hat{g}=e^{\omega}g$ for some $\omega\in C^\infty(M)$. Notice that $H_{0}^{-s}(M)$ depends 
on $g$, we thus denote it $H_{0}^{-s}(M,g)$ and we denote $\cjg \cdot,\cdot\cjd_{g}$ the distribution pairing on $M$ or $M\times M$ induced by the measure ${\rm dv}_g$. First we claim that the probability
law obtained from $\hat{X}_g:=X_g-c_{\hat{g}}(X_g)$ is the same as that of $X_{\hat{g}}$, 
if $c_{\hat{g}}(X_g):=\cjg X_g, 1\cjd_{\hat{g}}/{\rm Vol}_{\hat{g}}(M)=\cjg X_g,e^{\omega}\cjd_{g}/{\rm Vol}_{\hat{g}}(M)$. The random field $\hat{X}_g$ satisfies 
$\cjg \hat{X}_g,1\cjd_{\hat{g}}=0$ and is thus in the space $H_{0}^{-s}(M,\hat{g})$, moreover 
$\mathbb{E}[\cjg \hat{X}_g,\phi\cjd_{\hat{g}}]=0$ for all $\phi\in C^\infty(M)$. The covariance of $\hat{X}_g$ is given by 
\[\begin{split}
\mathbb{E}[\cjg \hat{X}_g,\phi_1\cjd_{\hat{g}}\cjg \hat{X}_g,\phi_2\cjd_{\hat{g}}]
= &\cjg G_g,\phi_1\otimes\phi_2\cjd_{\hat{g}}+({\rm Vol}_{\hat{g}}(M))^{-2}\cjg G_g,1\otimes 1\cjd_{\hat{g}}\cjg \phi_1,1\cjd_{\hat{g}}\cjg \phi_2,1\cjd_{\hat{g}}\\
 & -({\rm Vol}_{\hat{g}}(M))^{-1}(\cjg G_g,1\otimes \phi_2\cjd_{\hat{g}}\cjg 1,\phi_1\cjd_{\hat{g}}+\cjg G_g,\phi_1\otimes 1\cjd_{\hat{g}}\cjg 1,\phi_2\cjd_{\hat{g}})\\
 =& \cjg G_g+ \alpha 1\otimes 1-u\otimes 1-1\otimes u,\phi_1\otimes \phi_2\cjd_{\hat{g}}
\end{split}\]
where $\alpha=({\rm Vol}_{\hat{g}}(M))^{-2}\cjg G_g,1\otimes 1\cjd_{\hat{g}}$, $u(x)=\int_{M}G_g(x,y){\rm dv}_{\hat{g}}(y)/{\rm Vol}_{\hat{g}}(M)$. We recognize from \eqref{greenhatg} that this kernel is just the Green function for $\hat{g}$ paired with $\phi_1\otimes \phi_2$, showing that the correlation of $\hat{X}_g$ is that of $X_{\hat{g}}$. Since both random fields are Gaussian, we deduce that the law of $\hat{X}_g$ and $X_{\hat{g}}$ are the same and thus for $F\in L^1(H^{-s}(M),\mc{P}')$,  
\[\int_{\R}\mathbb{E}[F(X_{\hat{g}}+c)]dc=\int_{\R}\mathbb{E}[F(X_g-c_{\hat{g}}(X_g)+c)]dc=\int_{\R}\mathbb{E}[F(X_g+c)]dc\] 
by making a change of variables in $c$.
\end{proof}

Finally, in view of the discussion above, the measure $F\mapsto \int_{\rr}\E(F(X_g+c))dc$ on $H^{-s}(M)$ is our mathematical definition for the formal measure 
\begin{equation}\label{mesureGFFfin}
  \big( \det {}'(\tfrac{1}{2\pi}\Delta_g) /{\rm Vol}_g(M)\big)^{1/2} \, e^{-\frac{1}{4\pi}\int_M|d\varphi|_g^2{\rm dv}_g} D\varphi
\end{equation}
and using \eqref{detpolyakov}, we can write $\sqrt{\det '(\frac{1}{2\pi}\Delta_g)}=(2\pi)^{\demi(1-\frac{\chi(M)}{6})}\sqrt{ \det'(\Delta_g)}$.

\subsection{Gaussian multiplicative chaos}\label{GMC}
%%%%%%%%%%%%%%%%%%%%%%%%%%%%%%
To define quantities like $e^{\gamma X}$ for some $\gamma>0$ we will use a renormalization procedure after regularization of the field $X_g$. We describe the construction for $g$ hyperbolic and we shall 
remark that in fact the construction works as well for any conformal metric $\hat{g}=e^{\omega}g$ by using Lemma \ref{greenneardiag}.

First, when $\eps>0$ is very small, we define a regularization $X_{g,\eps}$ of $X_g$ by averaging on geodesic circles of radius $\eps>0$. Let $x\in M$ and let $\mc{C}_g(x,\eps)$ be the geodesic circle of center $x$ and radius $\eps>0$, and let $(f^n_{x,\eps})_{n\in \N} \in C^\infty(M)$ be a sequence with $||f^n_{x,\eps}||_{L^1}=1$ 
which is given by $f_{x,\eps}^n=\theta^n(d_g(x,\cdot)/\eps)$ where $\theta^n(r)\in C_c^\infty((0,2))$ non-negative 
supported near $r=1$ such that $f^n_{x,\eps}{\rm dv}_g$ 
is converging in $\mc{D}'(M)$ to the uniform probability measure 
$\mu_{x,\eps}$
on $\mc{C}_g(x,\eps)$ as $n\to \infty$ (for $\epsilon$ small enough, the geodesic circles form a submanifold and the restriction of $g$ along this manifold gives rise to a finite measure, which corresponds to the uniform measure after renormalization so as to have mass $1$).
%it can also be defined in terms  of $1$-dimensional Hausdorff measure constructed with the volume form on $M$ and restricted to this geodesic circle).
Then we have the standard 

\begin{lemma}\label{Xeps}
Assume $g$ is hyperbolic. The random variable $\cjg X_g,f^n_{x,\eps}\cjd$ converges in $L^2(\Omega)$  to a random variable as $n\to \infty$, which has a modification $X_{g,\eps}(x)$ with   continuous sample paths with respect to   $(x,\eps)\in M\x (0,\eps_0)$, with covariance 
\begin{equation}\label{covXge}
\mathbb{E}[X_{g,\eps}(x)X_{g,\eps}(x')]=2\pi \int G_g(y,y')d\mu_{x,\eps}(y)d\mu_{x',\eps}(y')
\end{equation}
and we have as $\eps\to 0$
\begin{equation}\label{devptE}
\mathbb{E}[X_{g,\eps}(x)^2]=- \log(\eps)+W_g(x)+o(1)
\end{equation}
where  $W_g$ is the smooth function on $M$ given by $W_g(x)=2\pi m_g(x,x)$ if $m_g$ is the smooth function of Lemma \ref{greenneardiag}.
\end{lemma}

\begin{remark}
As a continuous Gaussian process,  the law of $X_{g,\epsilon}$ is characterized by expectation and covariance. In particular, \eqref{covXge} shows that the law of $X_{g,\epsilon}$ does not depend on the regularization scheme, namely the choice of the functions $(\theta^n)_n$.
\end{remark}

\begin{proof} Let us fix $x,\eps$, then if $Y_n:=\cjg X_{g},f^n_{x,\eps}\cjd$, it suffices to show that $\mathbb{E}(Y_nY_{n'})$ has a limit as $(n,n')\to \infty$ to prove that $Y_n$ is a Cauchy sequence in $L^2(\Omega)$.  
Using Lemma \ref{greenneardiag} (and its notation): 
\begin{equation}\label{integral0}\begin{split}
\mathbb{E}(Y_nY_{n'})=&2\pi\int_{M\x M} G_g(y,y')f^n_{x,\eps}(y)f^{n'}_{x,\eps}(y'){\rm dv}_g(y)
{\rm dv}_g(y')\\
=& \int_{M\x M} (- \log(d_g(y,y'))+2\pi m_g(y,y'))\theta^n_\eps (d_g(x,y))\theta_\eps^{n'}(d_g(x,y')){\rm dv}_g(y)
{\rm dv}_g(y').
\end{split}\end{equation}
Clearly the term 
\[\int_{M\x M}m_g(y,y')\theta^n_\eps (d_g(x,y))\theta_\eps^{n'}(d_g(x,y')){\rm dv}_g(y)
{\rm dv}_g(y')\] is uniformly bounded in $(n,n',\eps)$ and, as $(n,n')\to \infty$, it converges
to 
\[\int_{\mc{C}(x,\eps)}\int_{\mc{C}(x,\eps)}m_g(y,y')d\mu_{x,\eps}(y)d\mu_{x,\eps}(y')\]
which in turn is smooth in $x$ and converges, as $\eps\to 0$, to $m_g(x,x)$ uniformly in $x$. 
For $\eps>0$ small enough, we can use an isometry $\psi$ between a small geodesic ball 
$B_g(x,3\eps)$ of radius $3\eps$ and the ball $B_{\hh^2}(0,3\eps)$ in $\hh^2$ viewed as the disk model, 
so that the integral \eqref{integral0} above reduces to an integral in $B_g(x,3\eps)$ in both $y,y'$. Using the coordinates $z\in\hh^2$ induced by $\psi$ and \eqref{logdg}, 
\[\begin{gathered}
\int_{M\x M} \log(d_g(y,y'))\theta^n_\eps (d_g(x,y))\theta_\eps^{n'}(d_g(x,y')){\rm dv}_g(y)
{\rm dv}_g(y')=\\
\int_{[0,1]^2\x [0,2\pi]^2} \Big(\log |2\tanh(\tfrac{r}{2})e^{i(\alpha-\alpha')}-2\tanh(\tfrac{r'}{2})|
+L\Big)\theta^n(\tfrac{r}{\eps})
\theta^{n'}(\tfrac{r'}{\eps})d\alpha d\alpha'\sinh(r)\sinh(r')drdr'.
\end{gathered}\]
where $L=L(r,r',e^{i\alpha},e^{i\alpha})$ is continuous and $L(0,0,\cdot,\cdot)=0$. The term involving $L$ is clearly uniformly bounded in 
$(n,n')$ and $\eps$ and converges just like for $m_g$ above, and its limit as $\eps\to 0$ is $0$. The part with the log term is also straightforward to deal with
and is also uniformly bounded in $(n,n')$ for fixed $\eps>0$ and we get 
\[\begin{gathered}
\int_{[0,1]^2\x [0,2\pi]^2} \log |2\tanh(\tfrac{r}{2})e^{i(\alpha-\alpha')}-2\tanh(\tfrac{r'}{2})|
\theta^n(\tfrac{r}{\eps})\theta^{n'}(\tfrac{r'}{\eps})d\alpha d\alpha'\sinh(r)\sinh(r')drdr'\\
\underset{(n,n')\to \infty}{\longrightarrow} \log |2\tanh(\tfrac{\eps}{2})|+\frac{1}{4\pi^2}\int_{[0,2\pi]^2} \log|e^{i(\alpha-\alpha')}-1|\, d\alpha d\alpha'=
\log |2\tanh(\tfrac{\eps}{2})|.
\end{gathered} \]
We then have shown the convergence of $\cjg X_g,f^n_{x,\eps}\cjd$ towards a random variable $\til{X}_{g,\eps}(x)$ in $L^2(\Omega)$. To show it has a modification $X_{g,\eps}(x)$ that is sample continuous in $(x,\eps)\in M\x (0,\eps_0)$, it suffices to apply Kolmogorov  multi-parameter continuity theorem exactly like in the proof of \cite[Prop. 3.1]{cf:DuSh}, we do not repeat the argument. 
The variance $\mathbb{E}(X_{g,\eps}(x)^2)$
is smooth in $x$ and behaves like $- \log(\eps)+2\pi m_g(x,x)+o(1)$ 
as $\eps\to 0$, uniformly in $x$.
\end{proof}
%\begin{rem} The function $W_g(x)$ appearing in the Lemma is not explicit, but its integral over $M$ can be computed in terms of Selberg zeta function $Z_\Gamma(s)$ for any co-compact Fuchsian group $\Gamma\subset {\rm PSL}_{2}(\R)$ realizing $(M,g)$ as $\Gamma\backslash \hh^2,$ a standard computation using Selberg type technics give $\int_{M}W_g(x){\rm dvol}_g=2\pi {\rm FP}_{s=1}\frac{Z'_{\Gamma}(s)}{Z_{\Gamma}(s)}+\alpha_g$, where ${\rm FP}_{s=1}$ 
%means the finite part at $s=1$ of a meromorphic function of $s$ and $\alpha_g$ is a constant depending only %on 
%$g$.
%\end{rem}

%%%%%%%%%%%%%%%%%%%%%%%%%%%%%%%%%%%%%%%%%%%%%%%%%%%%%%
Next from Lemma \ref{Xeps}, we will be able to define the Gaussian Multiplicative Chaos (GMC) first considered by Kahane \cite{cf:Kah}. The reader may also consult \cite{cf:DuSh,review,RoVa,shamov} on the topic, and in particular we recommend \cite{berestycki} for the simplicity of the approach.

\begin{proposition}\label{GMCprop} Assume $g$ is hyperbolic. Then the following hold true:\\
1) Let $\gamma>0$, the random measures $\mc{G}_{g,\eps}^{\gamma}:= \eps^{\frac{\gamma^2}{2}}e^{\gamma X_{g,\eps}(x)}{\rm dv}_g(x)$ converge in probability and weakly in the space of Radon measures towards a random measure $\mc{G}_g^\gamma(dx)$. The measure $\mc{G}_g^\gamma(dx)$ is non zero if and only if $\gamma \in (0,2)$.   \\
2) One obtains a non trivial random measure that we will denote $\mc{G}_{g}^{2}$ in the case $\gamma=2$  by considering the limit in probability and in the sense of weak convergence of measures of the family of random measures $\mc{G}_{g,\eps}^{\gamma}:= (-\ln\eps)^{1/2}\eps^{\frac{\gamma^2}{2}}e^{\gamma X_{g,\eps}(x)}{\rm dv}_g(x)$.
\end{proposition}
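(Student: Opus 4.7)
The plan is to reduce the construction to the standard theory of Gaussian multiplicative chaos for log-correlated Gaussian fields. By Lemma~\ref{Xeps}, $\E[X_{g,\eps}(x)^2] = -\log\eps + W_g(x) + o(1)$ uniformly in $x$, so one can rewrite
\[
\mc{G}_{g,\eps}^\gamma(dx) = e^{\gamma X_{g,\eps}(x)-\tfrac{\gamma^2}{2}\E[X_{g,\eps}(x)^2]}\, e^{\tfrac{\gamma^2}{2}W_g(x)}\bigl(1+o(1)\bigr)\,{\rm dv}_g(x),
\]
where the $o(1)$ is a deterministic continuous function tending to $0$ uniformly in $x$. Hence it is enough to handle the Wick-renormalised measures, the continuous weight $e^{\gamma^2 W_g/2}$ affecting neither existence nor non-triviality. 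By Lemma~\ref{greenneardiag}, the covariance of $X_{g,\eps}$ is a smooth mollification of $-\log d_g(x,x') + 2\pi m_g(x,x')$ with $m_g$ continuous, so $X_{g,\eps}$ is a log-correlated Gaussian field to which the general theory of \cite{cf:Kah,berestycki,shamov} applies.

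For $0<\gamma<\sqrt{2}$ I would establish convergence directly in $L^2(\Omega)$. For $f\in C^0(M)$, a Gaussian computation gives
\[
\E\Bigl[\Bigl(\int_M f\,d\mc{G}_{g,\eps}^\gamma\Bigr)^{\!2}\Bigr]=\int_{M\x M}f(x)f(x')\,e^{\gamma^2 \E[X_{g,\eps}(x)X_{g,\eps}(x')]+\tfrac{\gamma^2}{2}(W_g(x)+W_g(x'))+o(1)}\,{\rm dv}_g(x){\rm dv}_g(x'),
\]
whose integrand converges pointwise off the diagonal to $f(x)f(x')\,e^{2\pi\gamma^2 G_g(x,x')+\tfrac{\gamma^2}{2}(W_g(x)+W_g(x'))}$ and behaves like $d_g(x,x')^{-\gamma^2}$ near the diagonal, hence is ${\rm dv}_g\otimes{\rm dv}_g$-integrable precisely when $\gamma^2<2$. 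A positivity/monotonicity bound on the covariances of circle averages (or Kahane's inequality against a uniform majorant) provides the required domination, and the same computation applied to the cross term between scales $\eps,\eps'$ shows that the family is Cauchy in $L^2$; weak convergence of the measures follows on a countable dense family of test functions.

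For $\gamma\in[\sqrt 2,2)$ the $L^2$ method fails and I would extend the result via Kahane's convexity inequality: the kernel $\E[X_{g,\eps}(x)X_{g,\eps}(x')]$ can be compared, up to a bounded continuous remainder, with an exactly $\star$-scale-invariant log-correlated kernel obtained from a white-noise decomposition of $G_g$ as in \cite{RoVa,berestycki}. This reduces the problem to a reference GMC whose existence and non-triviality on the full subcritical range are standard. Alternatively, one can use the localisation-on-good-events argument of \cite{berestycki}: on the exclusion of thick points $\{X_{g,\eps}(x)\leq(\gamma+\alpha)|\log\eps|\}$ with $\gamma+\alpha<2$, the previous $L^2$ computation still works, while a first-moment estimate combined with the thick-point bound $\dim\{x:X_{g,\eps}(x)\simeq a|\log\eps|\}\leq 2-a^2/2$ shows that the complement carries vanishing mass in probability. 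Triviality for $\gamma\geq 2$ without further normalisation follows from the same thick-point estimate, and uniqueness of the limit (independence of the regularisation scheme) is provided by Shamov's Cameron--Martin characterisation \cite{shamov}.

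The critical case $\gamma=2$ is the main obstacle and I would handle it via the Seneta--Heyde normalisation. I would introduce the derivative-martingale measure
\[
\mc{D}_\eps(dx):=\bigl(2|\log\eps|-X_{g,\eps}(x)\bigr)\,\eps^{2}\,e^{2X_{g,\eps}(x)}\,{\rm dv}_g(x)
\]
and show it admits an almost sure non-trivial weak limit $\mc{D}$ by a martingale argument in the parameter $t=-\log\eps$, exploiting the approximate Brownian structure of circle averages of the GFF (encoded in Lemma~\ref{Xeps} together with the local isometry with $\hh^2$ used in the proof of Lemma~\ref{greenneardiag}). The Seneta--Heyde identity $\sqrt{-\log\eps}\,\eps^2 e^{2X_{g,\eps}(\cdot)}{\rm dv}_g \to \sqrt{2/\pi}\,\mc{D}$ in probability would then be reduced, through a truncation and second-moment argument, to the corresponding statement for a radially symmetric model. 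The hardest point in this last step is controlling the Robin-mass correction $m_g$ uniformly in $x$ and quantifying the approximate Brownian structure of the circle-average process, which requires the precise expansion of $G_g$ near the diagonal given by Lemma~\ref{greenneardiag}.
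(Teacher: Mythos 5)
Your proposal follows essentially the same route as the paper: reduce to Wick-renormalised measures via \eqref{devptE}, run the elementary $L^2$ argument (first and second moment plus domination via the two-sided bound \eqref{uniformest}) for $\gamma<\sqrt 2$, and then defer the full subcritical range and the critical case to the standard GMC literature --- \cite{RoVa,berestycki,shamov} for $\gamma\in[\sqrt 2,2)$ and the derivative-martingale/Seneta--Heyde machinery of \cite{Rnew7,Rnew12} (as adapted to compact surfaces in \cite{DRV}) for $\gamma=2$. The paper cites those references without unpacking them, whereas you sketch the underlying arguments, but the structure and key ingredients are the same.
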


\begin{remark}
An important feature of GMC theory is that the law of the limiting random measure $\mc{G}_g^\gamma(dx)$ does not depend on the regularization scheme, namely the way the GFF $X_g$ has been smoothened   to produce a regularized field $X_{g,\epsilon}$. Universality of GMC has been investigated at various degree of generality in the papers \cite{cf:Kah,RoVa,cf:DuSh,review,shamov,berestycki}.
\end{remark}

\begin{proof}
The proof is standard for convolution based regularizations  of log-correlated Gaussian fields (first considered in \cite{RoVa}, see \cite{shamov} for latest results) in the case $\gamma<2$. The case $\gamma=2$ is treated in \cite[Section 5]{DRV} in the case of tori, relying on the result proved in \cite{Rnew7,Rnew12} for GFF with Dirichlet boundary conditions. The same argument applies for general compact $2d$-surfaces up to cosmetic modifications.
 
Here we give a simple argument in the case $\gamma<\sqrt{2}$ for the convenience of readers who are not familiar with GMC. Using the expression \eqref{devptE},
it suffices to study the convergence of the measures
\[e^{\gamma X_{g,\eps}(x)-\frac{\gamma^2}{2}\E[X_{g,\eps}(x)^2]}e^{\frac{\gamma^2}{2}W_g(x)}{\rm dv}_g(x).\]
Then by Fubini we directly get for each Borel set $A\subset M$, with $d\sigma:=e^{\frac{\gamma^2}{2}W_g(x)}{\rm dv}_g(x)$
\[ \E[\mc{G}_{g,\eps}^{\gamma}(A)]=\int_{A}\E[e^{\gamma X_{g,\eps}(x)-\frac{\gamma^2}{2}\E[X_{g,\eps}(x)^2]}
]d\sigma(x)= \sigma(A).\]
Using that there is $C>1$ such that for all $z\in\cc, |z|<1$   and $\eps>0$ small
\[ 1/C+|\log(|z|+\eps)|\leq   \int_{0}^{2\pi} \Big|\log |z+\eps e^{i\alpha}|\Big| d\alpha \leq C+|\log(|z|+\eps)| \]
then the arguments in the proof of Lemma \ref{Xeps} and the expression \eqref{logdg} imply that there is $C'$ such that 
\begin{equation}\label{uniformest}   
1/C'+ |\log (d_g(x',x)+\eps)| \leq \E[X_{g,\eps}(x)X_{g,\eps'}(x')]\leq C'+ |\log (d_g(x',x)+\eps)|.
\end{equation}
for all $\eps'\leq \eps$, and all $x,x'\in M$. In particular we get by using Fubini and the fact that $X_g$ is a Gaussian free field
\[ \begin{split}
 \E[\mc{G}^\gamma_{g,\eps}(A)^2]=&\E \Big[ \Big(\int_A e^{\gamma X_{g,\eps}(x)-\frac{\gamma^2}{2}\E[X_{g,\eps}(x)^2]}d\sigma(x)\Big)^2\Big]\\
 =& \E \Big[ \int_{A}\int_A e^{\gamma (X_{g,\eps}(x)+X_{g,\eps}(x'))-
 \frac{\gamma^2}{2}(\E[X_{g,\eps}(x)^2]+\E[X_{g,\eps}(x')^2])}d\sigma(x)d\sigma(x')\Big]\\
 = & \int_A\int_A e^{\gamma^2\E[X_{g,\eps}(x)X_{g,\eps}(x')]}d\sigma(x)d\sigma(x')
\end{split}\] 
which converges to $\int_{A}\int_A e^{\gamma^2 2\pi G_g(x,x')}d\sigma(x)d\sigma(x')<\infty
$ as $\eps\to 0$ by using \eqref{uniformest} and Lebesgue theorem - the condition $\gamma^2<2$ appear here due to the log divergence of $2\pi G_g(x,x')$ at $x=x'$, see Lemma \ref{greenneardiag}. A similar argument and \eqref{uniformest} also show that  $\E[(\mc{G}^\gamma_{g,\eps}(A)-\mc{G}^\gamma_{g,\eps'}(A))^2]\to 0$ if 
$(\eps,\eps')\to 0$, thus $\mc{G}^\gamma_{g,\eps}(A)$ is a Cauchy sequence, which therefore converges in $L^2(\Omega)$ to a random variable $Z(A)$, of mean $\sigma(A)$. By standard arguments, $\mc{G}^\gamma_{g,\eps}(dx)$ converges to a random measure $\mc{G}^\gamma_{g}$ satisfying $\E[\mc{G}^\gamma_{g}(A)]=\sigma(A)$. The case $\gamma\in [\sqrt{2},2)$ is more complicated and several methods have been proposed in the literature. We refer to    \cite{berestycki} for a simple argument.
\end{proof}

In fact, the whole construction above is not so particular to choosing the hyperbolic metric: indeed it uses only the fact that the covariance of $X_g$ is the Green function, the fact that near the diagonal 
$2\pi G_g(x,x')=-\log d_g(x,x')+ F(x,x')$ with $F$ continuous, and finally the fact that in local isothermal coordinates $z$ so that $g=e^{2f(z)}|dz|^2$ 
\[ \log d_g(z,z')=\log |z-z'|+2f(z)+o(1), \quad |z-z'|\to 0.\]
This allows to define a random measure $\mc{G}_{\hat{g}}^\gamma$ just as above for any other 
metric $\hat{g}=e^{\omega}g$ conformal to the hyperbolic metric $g$.
For later purpose we will need to make the following observation. 
If $\hat{g}=e^\omega g$, define 
\begin{equation}\label{Xhat}
\hat{X}_{g,\eps}(x):=\lim_{\eps\to 0}\cjg X_g, \hat{f}_{x,\eps}^n\cjd_{\hat{g}}
\end{equation} 
for each $x\in M$ where $\hat{f}_{x,\eps}^n:=\theta^n(d_{\hat{g}}(x,\cdot)/\eps)$ with $\theta^n$ like above, so that $\hat{f}_{x,\eps}^n {\rm dv}_{\hat{g}}$ converge as $n\to \infty$ to the uniform probability measure $\hat{\mu}_{x,\eps}$
on the geodesic circle $\mc{C}_{\hat{g}}(x,\eps)$ of center $x$ and radius $\eps$ with respect to $\hat{g}$.
In isothermal coordinates at $x$ so that $z=0$ correspond to the point $x$ and the metric is 
$g=|dz|^2/{\rm Im}(z)^2$, the circle $\mc{C}_{\hat{g}}(x,\eps)$ is parametrized by
\[  \eps e^{-\demi\omega(z)+\eps h_\eps(\alpha)}e^{i\alpha}, \alpha \in [0,2\pi] 
\]
for some continuous function $h_\eps(\alpha)$ uniformly bounded in $\eps$.
Then one has 
\[\mathbb{E}(\hat{X}_{g,\eps}(x)\hat{X}_{g,\eps}(x'))=2\pi \int G_g(y,y')d\hat{\mu}_{x,\eps}(y)d\hat{\mu}_{x',\eps}(y')\]
and by the arguments in the proof of Lemma \ref{Xeps}, we have as $\eps\to 0$  
\begin{equation}\label{devptE'}
\mathbb{E}(\hat{X}_{g,\eps}(x)^2)=- \log(\eps)+W_g(x)+\demi \omega(x)+o(1).
\end{equation}
 Then by the arguments of Proposition \ref{GMCprop}, the random measure  (add an extra push $(-\ln \eps)^{1/2}$ when $\gamma=2$)
 \begin{equation}\label{Ggamma}
 \hat{\mc{G}}^\gamma_{g,\eps}:= \eps^{\frac{\gamma^2}{2}}e^{\gamma \hat{X}_{g,\eps}(x)}{\rm dv}_{\hat{g}}(x)\end{equation}
converges weakly as $\eps\to 0$ to some measure $\hat{\mc{G}}^\gamma_{g}$ which satisfies
\begin{equation}\label{relationentrenorm} 
d\hat{\mc{G}}^\gamma_{g}(x)=e^{(1+\frac{\gamma^2}{4})\omega(x)}d\mc{G}^\gamma_g(x).
\end{equation}

\section{Liouville Quantum field theory with fixed modulus}
%%%%%%%%%%%%%%%%%%%%%%%%%%%%%%%%%%%%
In this section we  define Liouville Quantum Field Theory (LQFT) with fixed conformal class (also called \emph{modulus}) and describe its main properties. 
It follows the approach of \cite{DKRV} in the case of the Riemann sphere. 
 Liouville Quantum Gravity (LQG) with fixed genus is a sum, called \emph{partition function}, 
over all possible metrics on a surface with fixed genus. The space of metrics splits into conformal classes 
and we want to decompose the partition function accordingly. Each conformal class has 
a unique hyperbolic metric, which plays the role of a background metric.

\subsection{Axiomatic of CFT}\label{CFT}
%%%%%%%%%%%%%%%%%%%%%%%%
Here we give a brief account of the axiomatic of Conformal Field Theories in order to motivate the forthcoming results. Our purpose will then be to construct   the quantum Liouville theory and show that it satisfies this axiomatic. The reader is referred  to \cite{gaw} for more details related to this formalism. 

A CFT (on the surface $(M,g)$) is described by its partition function $Z(g)$ as well as the correlation functions of the (spinless) primary fields $(\theta_i)_{i\in I}$ denoted by
$$Z(g,\theta_{i_1}(x_1),\dots,\theta_{i_n}(x_n))$$
where $n\geq 1$, $\{i_1,\dots,i_n\}\in I$ and $x_1,\dots , x_n$ are arbitrary points on $M$.
Let us just roughly say that a CFT is supposed to give sense to "random fields" defined on $M$, here the primary fields  $(\theta_i)_i$, and the correlation functions can be thought of as the cumulants of these random fields. These correlation functions are supposed to satisfy the following conditions:\\
$\bullet$ \textbf{Diffeomorphism covariance:} for any orientation preserving diffeomorphism $\psi$
\begin{align}
  Z( g)&=Z(\psi^*g)\label{diffeo1}\\
  Z(g,\theta_{i_1}(\psi(x_1)),\dots,\theta_{i_n}(\psi(x_n)))&=Z(\psi^*g,\theta_{i_1}(x_1),\dots,\theta_{i_n}(x_n))\label{diffeo2}
\end{align}
$\bullet$ \textbf{Conformal anomaly:} for any smooth function $\omega$ on $M$
\begin{align}
 \ln\frac{Z( e^{\omega}g)}{Z( g)}  &= \frac{\mathbf{c}}{96\pi} \int_{M} (|d_g\omega |_{g}^2+2K_{g}\omega) {\rm dvol}_g   \label{scale1}\\
  \ln\frac{ Z(e^{\omega} g,\theta_{i_1}(x_1),\dots,\theta_{i_n}(x_n))}{Z(g,\theta_{i_1}(x_1),\dots,\theta_{i_n}(x_n))}&= -\sum_{k=1}^n \Delta_{i_k}\omega(x_k) +\frac{\mathbf{c}}{96\pi}  \int_{M} (|d_g\omega |_{g}^2+2K_{g}\omega) {\rm dvol}_g\label{scale2}
\end{align}
where the constant $\mathbf{c}$ is the so-called {\it central charge} of the CFT and each real number  $\Delta_i$ (for  $i\in I$) is called  the {\it conformal weight } of the primary field $\theta_i$.

One of the interesting feature of CFTs is their strong algebraic structure, which make them fall under the scope of techniques for integrable systems, leading to the possibility of  obtaining exact expressions for the correlation functions.

\subsection{The partition function of LQFT}
%%%%%%%%%%%%%%%%%%%%%%%%
The first step is to describe LQFT with fixed modulus. 
LQFT will describe the probability law of some  random conformal factor, i.e. we consider the random metrics $e^{\gamma X}g$ where $g$ is a fixed   metric and $X$ is a random function.   The law of $X$ will be mathematically described by the   measure \eqref{liouvillemesure}. So, let $g\in{\rm Met}(M)$ be a fixed metric on $M$. The mathematical definition of  the  LQFT measure   (i.e. \eqref{liouvillemesure}) is the following. Fix  $\gamma\in(0,2]$. For $F:  H^{-s}(M)\to\R$ (with $s>0$) a bounded continuous functional, set  
\begin{align}\label{partLQFT}
 \Pi_{\gamma, \mu}(g,F):=& ({\det}'(\Delta_{g})/{\rm Vol}_{g}(M))^{-1/2}  \\
 &\times \int_\R  \E\Big[ F( c+  X_{g}) \exp\Big( -\frac{Q}{4\pi}\int_{M}K_{g}(c+ X_{g} )\,{\rm dv}_{g} - \mu  e^{\gamma c}\mc{G}_{g}^\gamma(M)  \Big) \Big]\,dc .\nonumber
\end{align}
This quantity, if it is finite, gives a mathematical sense to the formal integral 
\[  \int F(\varphi)e^{-S_L(g,\varphi)}D\varphi\]
where $S_L(g,\varphi)$ is the Liouville action \eqref{QLiouville}. The partition function is the total mass of this measure, i.e $\Pi_{\gamma, \mu}(g,1)$. 

\begin{proposition}\label{totalmass}
For $g\in  {\rm Met}(M)$ and  $\gamma\in(0,2]$, we have $0<\Pi_{\gamma, \mu}(g,1)<+\infty$ and 
the mapping $$F\in C_b(H^{-s}(M),\R)\mapsto  \Pi_{\gamma, \mu}(g,F)$$ defines a positive finite measure. When renormalized by its total mass, it describes the law of a random variable living in $H^{-s}(M)$ called the {\bf Liouville field}. When $g\in{\rm Met}_{-1}(M)$ is hyperbolic, we further have 
\begin{equation}\label{explicit}
 \Pi_{\gamma, \mu}(g,1)= \Big(\frac{{\det}'(\Delta_{g})}{{\rm Vol}_{g}(M)}\Big)^{-1/2}   \gamma^{-1}\mu^{\frac{Q\chi(M)}{\gamma} }\Gamma(-\tfrac{Q\chi(M)}{\gamma}) \E\Big[  \mc{G}_{g}^\gamma(M)  ^{ \frac{Q\chi(M)}{\gamma}} \Big]  
 \end{equation}
where $\Gamma(z)$ is the standard Euler Gamma function.
\end{proposition}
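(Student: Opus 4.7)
\medskip

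\noindent\textbf{Proof plan.} The strategy is to reduce the full integral to an explicit Gamma-type integral in the zero mode $c$, then control the resulting (negative) moment of the GMC total mass. The heart of the computation is transparent for hyperbolic $g$ and extends to general $g$ by a Cameron--Martin shift in the Gaussian factor.

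First I would treat the hyperbolic case. Since $g\in{\rm Met}_{-1}(M)$ has constant scalar curvature $K_g\equiv-2$ and $X_g$ is orthogonal to constants in $L^2(M,{\rm dv}_g)$, Gauss--Bonnet \eqref{GB} gives
\[
-\frac{Q}{4\pi}\int_M K_g(c+X_g)\,{\rm dv}_g
= -Qc\chi(M) -\frac{Q}{4\pi}\bigl(-2\bigr)\langle X_g,1\rangle
= -Qc\chi(M).
\]
Applying Tonelli (the integrand is non-negative) and the substitution $u=\mu e^{\gamma c}\mc{G}_g^\gamma(M)$ in each realization yields, thanks to $-Q\chi(M)/\gamma>0$,
\[
\int_\R e^{-Qc\chi(M)}\exp\!\bigl(-\mu e^{\gamma c}\mc{G}_g^\gamma(M)\bigr)\,dc
= \frac{1}{\gamma}\,\mu^{\frac{Q\chi(M)}{\gamma}}\Gamma\!\bigl(-\tfrac{Q\chi(M)}{\gamma}\bigr)\,\mc{G}_g^\gamma(M)^{\frac{Q\chi(M)}{\gamma}}.
\]
Taking expectation and including the determinant prefactor produces formula \eqref{explicit}, \emph{provided} the resulting negative moment of $\mc{G}_g^\gamma(M)$ is finite.

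The next step is therefore to establish the key analytic input: for every $s>0$,
\[
\E\bigl[\mc{G}_g^\gamma(M)^{-s}\bigr]<\infty.
\]
Since $\mc{G}_g^\gamma(M)>0$ almost surely (a consequence of the GMC construction of Section \ref{GMC}, as $\mc{G}_g^\gamma$ charges every open set), I would use the Laplace representation
\[
\E\bigl[\mc{G}_g^\gamma(M)^{-s}\bigr]=\frac{1}{\Gamma(s)}\int_0^\infty t^{s-1}\,\E\bigl[e^{-t\mc{G}_g^\gamma(M)}\bigr]\,dt
\]
and split the $t$-integral at $t=1$. On $[0,1]$ the integrand is bounded by $t^{s-1}$; on $[1,\infty)$ one uses the classical lower-tail estimate $\P(\mc{G}_g^\gamma(M)<\delta)\leq C\exp(-c|\log\delta|^{2})$ for $\delta\to 0$, obtained by restricting $\mc{G}_g^\gamma$ to a small geodesic ball on which the GMC theory of Kahane \cite{cf:Kah} (see also the review \cite{review}) applies directly. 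This gives finiteness for all $\gamma\in(0,2]$ and settles the hyperbolic case, including positivity of $\Pi_{\gamma,\mu}(g,1)$.

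For a general metric $g$, the curvature $K_g$ is no longer constant, so one must handle the additional Gaussian factor $e^{-\frac{Q}{4\pi}\langle K_g,X_g\rangle}$, where $\langle K_g,X_g\rangle$ is a centered Gaussian with finite variance $V=2\pi\langle G_g,K_g\otimes K_g\rangle$ since $K_g\in C^\infty(M)$. The Cameron--Martin theorem for the GFF (Section \ref{SecGFF}) gives, with $h:=\tfrac{Q}{2}R_g(K_g-\bar K_g)\in C^\infty(M)$,
\[
\E\bigl[e^{-\frac{Q}{4\pi}\langle K_g,X_g\rangle}F(X_g)\bigr]
= e^{\frac{Q^{2}V}{32\pi^{2}}}\,\E\bigl[F(X_g-h)\bigr].
\]
Applied with $F(X_g)=\exp(-\mu e^{\gamma c}\mc{G}_g^\gamma(M))$, the shift replaces $\mc{G}_g^\gamma(dx)$ by $e^{-\gamma h(x)}\mc{G}_g^\gamma(dx)$ (this follows from the regularization $X_{g,\eps}\to X_{g,\eps}-h$ and the continuity of $h$), and one repeats the substitution of the hyperbolic case. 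Since $h$ is bounded on $M$, $\int_M e^{-\gamma h}d\mc{G}_g^\gamma$ has the same integrability properties as $\mc{G}_g^\gamma(M)$, so the previous step yields finiteness. Positivity is immediate, and the measure property $F\mapsto\Pi_{\gamma,\mu}(g,F)$ follows by monotone convergence from the bound $|\Pi_{\gamma,\mu}(g,F)|\leq\|F\|_\infty\Pi_{\gamma,\mu}(g,1)$. The single delicate point in the argument is the negative-moment estimate above; everything else is a Fubini-and-substitution computation.
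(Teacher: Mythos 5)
Your hyperbolic-case argument coincides with the paper's: Gauss--Bonnet eliminates the term linear in $X_g$, Tonelli and the substitution $u=\mu e^{\gamma c}\mc{G}_g^\gamma(M)$ produce the Gamma-function identity, and everything rests on the finiteness of the negative moment $\E[\mc{G}_g^\gamma(M)^{\frac{Q\chi(M)}{\gamma}}]$ (the exponent is negative since $\chi(M)<0$). The direct Cameron--Martin shift you propose for a general metric $g$ is a valid variant of the paper's treatment, which instead refers to the Girsanov manipulations carried out in Proposition~\ref{limitcorel}.

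The one genuine weak point is the finiteness step. You deduce $\E[\mc{G}_g^\gamma(M)^{-s}]<\infty$ for every $\gamma\in(0,2]$ from a lower-tail estimate "obtained by restricting $\mc{G}_g^\gamma$ to a small geodesic ball on which the GMC theory of Kahane applies directly," but Kahane's theory and this kind of tail bound cover only the \emph{subcritical} range $\gamma<2$. At $\gamma=2$ --- precisely the case $\mathbf{c}_{\mathrm{M}}=1$ that the paper cares most about --- the measure is built with the Seneta--Heyde normalization $(-\log\eps)^{1/2}$ of Proposition~\ref{GMCprop}(2); Kahane's framework does not apply there, and finiteness of negative moments for critical GMC is a separate, non-trivial theorem. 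This is exactly why the paper cites \cite{Rnew12} for $\gamma=2$ alongside \cite[Prop.~3.6]{RoVa} for $\gamma<2$. As written your argument leaves the critical case unsettled; the fix is to invoke \cite{Rnew12} at $\gamma=2$ rather than trying to reconstruct the estimate from Kahane.
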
 
\begin{proof} The proof of this proposition follows the same lines as in \cite[section 3.1]{DKRV}. 
We consider the case of a metric $g\in {\rm Met}_{-1}(M)$, since 
the general case follows from this case, as is explained below in Proposition \ref{limitcorel} for the correlations functions. In constant curvature, the Gauss-Bonnet theorem entails
$$\frac{Q}{4\pi}\int_{M}K_{g}(c+ X_{g} )\,{\rm dv}_{g}=Qc \chi(M) $$
where $\chi(M)$ is the Euler characteristic  of $M$ and we get
$$\Pi_{\gamma, \mu}(g,1)= \Big(\frac{{\det}'(\Delta_{g})}{{\rm Vol}_{g}(M)}\Big)^{-1/2}    \int_\R  e^{-Qc \chi(M) } \E\Big[   \exp\Big( -\mu  e^{\gamma c}\mc{G}_{g}^\gamma(M)  \Big) \Big]\,dc .$$
After inverting expectation and integration, and using the change of variables $y=\mu e^{\gamma c}\mc{G}_{\hat{g}}^\gamma(M) $, we get \eqref{explicit}. Finiteness of this quantity is ensured by the fact that GMC has finite moments of negative orders as $\chi(M)<0$ - finiteness of negative moments is proved  for example in \cite[Proposition 3.6]{RoVa} for $\gamma<2$ and in \cite[Corollary 14]{Rnew12} in the case $\gamma=2$.
\end{proof}

\subsubsection{Conformal anomaly and diffeomorphism invariance}
%%%%%%%%%%%%%%%%%%%%%%%%
Here we investigate the symmetries of the measure \eqref{partLQFT} and in particular how the partition function reacts to changes of background metrics. The following proposition is the quantum counterpart of \eqref{conformS}.
\begin{proposition}\label{covconf}{\bf (Conformal anomaly)}
Let $Q=\frac{\gamma}{2}+\frac{2}{\gamma}$ with  $\gamma\in(0,2]$  and $g\in {\rm Met}_{-1}(M)$ be a hyperbolic metric on $M$.
The partition function satisfies the following conformal anomaly:  
if $\hat{g}=e^{\omega}g$ for some $\omega\in C^\infty(M)$, we have
\[ \Pi_{\gamma, \mu}(\hat{g},F)=\Pi_{\gamma, \mu}(g,F(\cdot\,-\tfrac{Q}{2}\omega))\exp\Big(\frac{1+6Q^2}{96\pi}\int_{M}(|d\omega|_g^2+2K_g\omega) {\rm dv}_g\Big).\]
\end{proposition}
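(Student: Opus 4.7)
The plan is to reduce $\Pi_{\gamma,\mu}(\hat g, F)$ to $\Pi_{\gamma,\mu}(g, F(\cdot-Q\omega/2))$ by four successive transformations. First I apply Polyakov's formula \eqref{detpolyakov} to the determinant prefactor, which replaces $({\det}'\Delta_{\hat g}/{\rm Vol}_{\hat g}(M))^{-1/2}$ by $({\det}'\Delta_g/{\rm Vol}_g(M))^{-1/2}\exp\bigl(\tfrac{1}{96\pi}\int_M(|d\omega|_g^2+2K_g\omega)\,dv_g\bigr)$; this produces the "$1$" contribution of the "$1+6Q^2$" anomaly.

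Next, by Lemma \ref{invarianceconforme}, $X_{\hat g}$ has the same law as $X_g-c_{\hat g}(X_g)$ where $c_{\hat g}(X):=\langle X,1\rangle_{\hat g}/{\rm Vol}_{\hat g}(M)$; I substitute this inside the expectation and then translate the zero-mode variable $c\mapsto c+c_{\hat g}(X_g)$ to absorb the random shift. Using $K_{\hat g}\,dv_{\hat g}=(K_g+\Delta_g\omega)\,dv_g$ and $\int_M\Delta_g\omega\,dv_g=0$ (Gauss--Bonnet), the curvature integral splits into the standard $g$-term $\int K_g(c+X_g)\,dv_g$ plus an extra Gaussian factor $\exp\bigl(-\tfrac{Q}{4\pi}\langle X_g,\Delta_g\omega\rangle_g\bigr)$. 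Simultaneously \eqref{relationentrenorm} combined with the same law identification transforms the GMC factor into $\mu e^{\gamma c}\int_M e^{(1+\gamma^2/4)\omega}\,d\mc{G}_g^\gamma$. The relation $Q=\gamma/2+2/\gamma$ is exactly equivalent to $1+\gamma^2/4=\gamma Q/2$, which is what will allow the next step to absorb this exponent.

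I then apply Cameron--Martin to the Gaussian exponential $e^{-\frac{Q}{4\pi}\langle X_g,\Delta_g\omega\rangle}$. Since $X_g$ has covariance $2\pi G_g$ and $G_g\Delta_g\omega=\omega-\bar\omega$ (with $\bar\omega$ the $g$-average of $\omega$), this produces the deterministic constant $\exp\bigl(\tfrac{Q^2}{16\pi}\int_M|d\omega|_g^2\,dv_g\bigr)$ and shifts $X_g\mapsto X_g-\tfrac{Q}{2}(\omega-\bar\omega)$ in every remaining factor. For the GMC, the translation rule $d\mc{G}_g^\gamma(X_g+h)=e^{\gamma h}\,d\mc{G}_g^\gamma(X_g)$ for smooth $h$ (obtained by inserting $h_\eps$ in the $\eps$-regularization of Proposition \ref{GMCprop}) makes the factor $e^{\gamma Q\omega/2}$ inside the GMC cancel against the shift, leaving only the constant $e^{\gamma Q\bar\omega/2}$. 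The $\langle X_g,K_g\rangle$ term gains an additional $\tfrac{Q^2}{8\pi}\int K_g(\omega-\bar\omega)\,dv_g$, and the argument of $F$ becomes $c+X_g-\tfrac{Q}{2}(\omega-\bar\omega)$. A final translation $c\mapsto c+Q\bar\omega/2$, together with Gauss--Bonnet applied to $\int K_g\bar\omega\,dv_g=4\pi\bar\omega\chi(M)$, absorbs all $\bar\omega$-dependent constants and recovers precisely the integrand of $\Pi_{\gamma,\mu}(g,F(\cdot-Q\omega/2))$, up to a surviving prefactor $\exp\bigl(\tfrac{Q^2}{8\pi}\int K_g\omega\,dv_g\bigr)$.

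Summing the exponential prefactors from the determinant step and the Cameron--Martin step yields coefficients $\tfrac{1}{96\pi}+\tfrac{Q^2}{16\pi}=\tfrac{1+6Q^2}{96\pi}$ on $|d\omega|_g^2$ and $\tfrac{1}{48\pi}+\tfrac{Q^2}{8\pi}=\tfrac{1+6Q^2}{48\pi}$ on $K_g\omega$, reassembling into $\tfrac{1+6Q^2}{96\pi}\int_M(|d\omega|_g^2+2K_g\omega)\,dv_g$ as required. The main technical hurdle will be to legitimate the Cameron--Martin shift and the GMC translation at the level of distributions: one works with the circle-averaged field $X_{g,\eps}$, uses that $\langle X_g,\Delta_g\omega\rangle$ is a bona fide Gaussian variable because $\Delta_g\omega\in C^\infty(M)\subset H^s(M)$, and passes to the limit $\eps\to 0$ via the weak convergence of $e^{\gamma h_\eps}\mc{G}_{g,\eps}^\gamma$ obtained from Proposition \ref{GMCprop} for smooth approximations $h_\eps\to h$.
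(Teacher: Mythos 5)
Your proposal is correct and follows essentially the same strategy as the paper: replace $X_{\hat g}$ by $X_g$ via Lemma \ref{invarianceconforme}, apply a Girsanov/Cameron--Martin shift for the Gaussian variable $Y=-\tfrac{Q}{4\pi}\langle X_g,\Delta_g\omega\rangle_g$ (noting $\E[Y^2]=\tfrac{Q^2}{8\pi}\|d\omega\|^2_{L^2_g}$ and $\E[YX_g]=-\tfrac{Q}{2}(\omega-\bar\omega)$), use \eqref{relationentrenorm} together with $1+\gamma^2/4=\gamma Q/2$ to simplify the GMC, translate the zero mode $c$, and finally invoke Polyakov's formula \eqref{detpolyakov}. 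The only differences are cosmetic bookkeeping — you apply Polyakov first rather than last, and you track separately the contribution of the shift to the $K_g$ pairing whereas the paper folds all of $K_{\hat g}e^\omega=\Delta_g\omega+K_g$ into $Y$ — but both routes reassemble to the same coefficient $\tfrac{1+6Q^2}{96\pi}$.
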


\begin{proof}  We focus on the integral part in \eqref{partLQFT} (and hence let   the determinant of Laplacian  apart as its contribution is clear from \eqref{detpolyakov}). First, by Lemma \ref{invarianceconforme}, we can replace $X_{\hat{g}}$ by $X_g$ in the expression defining 
$\Pi_{\gamma, \mu}(\hat{g},F)$ and are thus left with considering the following quantity (with  $\hat{\mc{G}}_g^\gamma$ is the measure defined by \eqref{Ggamma})  
\[A:=\int_\R  \E\Big[ F( c+  X_{g})  \exp\Big( -\frac{Q}{4\pi}\int_{M}K_{\hat{g}}(c+ X_{g}  )\,{\rm dv}_{\hat{g}} - \mu  e^{\gamma c} \hat{\mc{G}}^\gamma_g(M)    \Big) \Big]\,dc.\]
By \eqref{GB}, the term $-\frac{Qc}{4\pi}\int_{M}K_{\hat{g}}\,{\rm dv}_{\hat{g}}$ can be written as $-Qc\chi(M)$ where $\chi(M)$ is the Euler characteristic. Define the Gaussian random variable 
\[ Y:= -\frac{Q}{4\pi}\int_{M}K_{\hat{g}}X_{g} \,{\rm dv}_{\hat{g}}=-\frac{Q}{4\pi}\cjg X_{g},K_{\hat{g}}e^{\omega}\cjd_g.\] 
Let $R_g(0)$ be the resolvent operator whose Schwartz kernel is $G_g$ with respect to ${\rm dv}_g$. 
Since $ K_{\hat{g}}e^{\omega}=\Delta_g\omega+K_g$, we compute, using that $R_g(0)K_g=0$ (as $K_g=-2$),
\[\begin{split} 
\E[ \cjg X_{g},K_{\hat{g}}e^{\omega}\cjd_g^2]=& 2\pi\cjg G_g,(\Delta_g\omega+K_g)\otimes(\Delta_g\omega+K_g)\cjd_g\\
=& 2\pi \cjg \omega-\tfrac{\cjg\omega,1\cjd_g}{{\rm Vol}_g(M)}, \Delta_g\omega-2\cjd_g= 2\pi \int_M|d\omega|_g^2{\rm dv}_g
\end{split}\]
and similarly we have 
\[
\E[YX_g]=-\frac{Q}{2} R_g(0)(K_{\hat{g}}e^\omega)=-\frac{Q}{2}(\omega-c_g(\omega))
\]
if $c_g(\omega):=\tfrac{\cjg\omega,1\cjd_g}{{\rm Vol}_g(M)}$. Thus we get 
\begin{equation}\label{rayas2005}
\demi \E[Y^2]= \frac{Q^2}{16\pi}\int_M|d\omega|_g^2{\rm dv}_g, \quad \E[YX_{g}]= -\frac{Q}{2}(\omega-c_g(\omega)).
\end{equation}
Therefore by applying Girsanov transform to the random variable $Y$, we can rewrite 
\[A=\int_\R  e^{\demi \E[Y^2]-Qc\,\chi(M)}\E\Big[ F( c+  X_{g}+\E(Y X_{g})) \exp\Big( - \mu  e^{\gamma (c+\frac{Q}{2}c_g(\omega))}  \int_M e^{-\frac{\gamma Q}{2}\omega}\,d \hat{\mc{G}}_g^\gamma \Big) \Big]\,dc.\]
With the help of the relation  \eqref{relationentrenorm} and $Q=\frac{\gamma}{2}+\frac{2}{\gamma}$, we see that $ \int_M e^{-\frac{\gamma Q}{2}\omega}\,d \hat{\mc{G}}_g^\gamma= \mc{G}_g^\gamma(M)$. Using \eqref{rayas2005},  $A$ can be written as 
\[A=\int_\R  e^{\frac{Q^2}{16\pi}||d\omega||^2_{L^2_g}-Qc\chi(M)}\E\Big[ F( c+  X_{g} -\tfrac{Q}{2}\omega+\tfrac{Q}{2}c_g(\omega))) \exp\Big( - \mu  e^{\gamma (c+\tfrac{Q}{2}c_{g}(\omega))}\mc{G}_g^\gamma(M) \Big) \Big]\,dc.\]
It remains to make the change of variable $c\to c-\tfrac{Q}{2}c_g(\omega)$ and we deduce that 
\[A=\int_\R  e^{\frac{Q^2}{16\pi}||d\omega||^2_{L^2_g}-Qc\chi(M)+\demi Q^2\chi(M)c_g(\omega)}\E\Big[ F( c+  X_{g}-\frac{Q}{2}\omega) \exp\Big( - \mu  e^{\gamma c}\mc{G}_g^\gamma(M) \Big) \Big]\,dc.\]
Since $K_g=-2$ and ${\rm Vol}_g(M)=-2\pi\chi(M)$ we have 
\[-\frac{Q}{4\pi}\int_{M}K_{g}(c+ X_{g} )\,{\rm dv}_{g}=-Qc\chi(M), \quad  
c_{g}(\omega)\chi(M)=\frac{1}{4\pi}\int_M K_g\omega\, {\rm dv}_g\]
which shows that $A=\Pi_{\gamma, \mu}(g,F(\cdot\,-\tfrac{Q}{2}\omega))\sqrt{\det'(\Delta_{g})/{\rm Vol}_{g}(M)}e^{\frac{6Q^2}{96\pi}\int_{M}(|d\omega|_g^2+2K_g\omega) {\rm dv}_g}$. Combining with \eqref{detpolyakov}, the proof is complete.
\end{proof}

The constant ${\bf c}_L:=1+6Q^2$ describing the conformal anomaly is called the \emph{central charge} of the Liouville Theory.
Since all the objects in the construction of the Gaussian Free Field and the Gaussian multiplicative chaos are geometric (defined in a natural way from the metric), it is direct to get the following diffeomorphism invariance:
\begin{proposition}\label{diffeoinv} {\bf (Diffeomorphism invariance)}
Let  $g\in {\rm Met}(M)$  be a   metric on $M$ and let $\psi:M\to M$ be an orientation preserving diffeomorphism. Then we have for each bounded measurable $F:H^{-s}(M)\to \R$ with $s>0$
\[ \Pi_{\gamma, \mu} (\psi^*g ,F)= \Pi_{\gamma, \mu}(g,F(\cdot \circ \psi)) .\]
\end{proposition}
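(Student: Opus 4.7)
The strategy is to check that each ingredient appearing in the definition \eqref{partLQFT} of $\Pi_{\gamma,\mu}$ transforms naturally under pullback by an orientation preserving diffeomorphism $\psi$, so that the change of variables $y=\psi(x)$ converts every integral written with $\psi^*g$ into the corresponding integral with $g$. Concretely, since $\Delta_{\psi^*g}=\psi^*\Delta_g(\psi^{-1})^*$ is unitarily conjugate to $\Delta_g$ from $L^2(M,\psi^*g)$ to $L^2(M,g)$, its spectrum (with multiplicities) and hence ${\det}'(\Delta_{\psi^*g})$ coincide with those of $\Delta_g$; moreover ${\rm Vol}_{\psi^*g}(M)={\rm Vol}_g(M)$. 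Thus the prefactor $({\det}'(\Delta_{\psi^*g})/{\rm Vol}_{\psi^*g}(M))^{-1/2}$ equals the one for $g$.

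Next I would identify the distributional law of the GFF under pullback. The Green function satisfies
\[ G_{\psi^*g}(x,x')=G_g(\psi(x),\psi(x')) \]
by the same computation as in \eqref{modulargreen} (nothing there uses that $\psi$ lies in the modular group). Consequently the Gaussian random distribution $X_{\psi^*g}$, whose law is determined by its covariance $2\pi G_{\psi^*g}$, has the same law as the pullback field $\psi^*X_g$, defined on test functions $\phi$ by $\langle \psi^*X_g,\phi\rangle_{\psi^*g}=\langle X_g,(\psi^{-1})^*\phi\rangle_g$. In particular, at the level of laws on $H^{-s}(M)$, one has $X_{\psi^*g}\stackrel{d}{=}X_g\circ\psi$.

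For the Gaussian multiplicative chaos, the key observation is that $\psi$ sends geodesic circles of $\psi^*g$ centred at $x$ to geodesic circles of $g$ centred at $\psi(x)$ of the same radius. Hence the circle average regularisation satisfies $X_{\psi^*g,\eps}(x)\stackrel{d}{=}X_{g,\eps}(\psi(x))$ jointly in $(x,\eps)$, and since ${\rm dv}_{\psi^*g}=\psi^*{\rm dv}_g$, passing to the limit in Proposition~\ref{GMCprop} gives the pushforward identity $\mc{G}^\gamma_{\psi^*g}=\psi^*\mc{G}^\gamma_g$ in law; in particular $\mc{G}^\gamma_{\psi^*g}(M)\stackrel{d}{=}\mc{G}^\gamma_g(M)$. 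Likewise, $K_{\psi^*g}=K_g\circ\psi$, so the curvature integral
\[ \int_M K_{\psi^*g}(c+X_{\psi^*g})\,{\rm dv}_{\psi^*g}\stackrel{d}{=}\int_M (K_g\circ\psi)(c+X_g\circ\psi)\,\psi^*{\rm dv}_g=\int_M K_g(c+X_g)\,{\rm dv}_g, \]
by the ordinary change of variables $y=\psi(x)$ inside the integral.

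Substituting these three identities into \eqref{partLQFT} with metric $\psi^*g$, and noting that $F(c+X_{\psi^*g})\stackrel{d}{=}F(c+X_g\circ\psi)=F((\cdot\circ\psi))(c+X_g)$, every factor in the integrand reduces to the one defining $\Pi_{\gamma,\mu}(g,F(\cdot\circ\psi))$, which yields the claim. The only slightly delicate point is the joint distributional identification $X_{\psi^*g}\stackrel{d}{=}X_g\circ\psi$ as elements of $H^{-s}(M)$ together with the accompanying statement for the GMC measure; both are however immediate from the fact that GFF and GMC are constructed solely from the Riemannian data $(M,g)$ through its geodesic distance, Laplacian spectrum and Green function, all of which pull back under $\psi$.
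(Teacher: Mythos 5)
Your proposal is correct and follows essentially the same route as the paper: both rely on the naturality of $G_g$, $K_g$, ${\rm dv}_g$, ${\det}'(\Delta_g)$ and hence of the GFF and GMC under pullback by a diffeomorphism, reducing the claim to a change of variables. You merely spell out explicitly a few steps (unitary conjugation for the determinant prefactor, the behaviour of geodesic circle averages) that the paper leaves implicit.
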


\begin{proof} This follows directly from the fact that all the object considered in the construction of the measure are natural with respect to the metric $g$, thus invariant by isometries: more precisely, it follows from the identities
$$G_{\psi^*g}(x,y)=G_g(\psi(x),\psi(y)),\quad K_{\psi^*g}(x)=K_g(\psi(x)),\quad X_{\psi^*g} \stackrel{law}{=}X_g\circ\psi,$$
which are standard.
\end{proof}

The two above results show that the axioms \eqref{diffeo1}+\eqref{scale1} are satisfied with central charge $\mathbf{c}_L=1+6Q^2$. Yet we still have to define the primary fields and their correlation functions. This is the purpose of the next subsection.

\subsection{The correlation functions}\label{correlfct}
%%%%%%%%%%%%%%%%%%%%%%%%%%
The correlation functions of  LQFT can be thought of  as the exponential moments  $e^{\alpha \varphi(x)}$ of the random function $\varphi$, the law of which is ruled by the path integral \eqref{liouvillemesure},  evaluated at some location $x\in M$ with weight $\alpha$.  Yet, the field $\varphi$ is not a well-defined function as it belongs to $H^{-s}(M)$ for $s>0$, so that the construction requires some care.  

As before let $g\in {\rm Met}(M)$. We fix $n$ points $x_1,\dots,x_n$ ($n\geq 0$) on $M$ with respective associated weights $\alpha_1,\dots,\alpha_n\in\R$. We denote ${\bf x}=(x_1,\dots,x_n)$ and ${\boldsymbol \alpha} =(\alpha_1,\dots,\alpha_n)$. 
The rigorous definition of the primary fields will require a regularization scheme. We introduce the following $\eps$-regularized functional
\begin{align}\label{actioninsertion}
 &\Pi_{\gamma, \mu}^{{\bf x},{\boldsymbol  \alpha}}  (g,F,\eps):=\big({\det }'(\Delta_{g})/{\rm Vol}_{g}(M)\big)^{-1/2} \\
 & \int_\R  \E\Big[ F(c+  X_{g} ) \big(\prod_i V^{\alpha_i}_{g,\eps}(x_i)\big)    \exp\Big( -\frac{Q}{4\pi}\int_{M}K_{g}(c+ X_{g}  )\,{\rm dv}_{g} - \mu  e^{\gamma c}
 \mc{G}_{g,\eps}^\gamma(M)  \Big) \Big]\,dc 
 \nonumber
\end{align}
where we have set, for fixed $\alpha\in\R$ and $x\in M$,
\begin{equation*}
V^{\alpha}_{g,\eps}(x)=\eps^{\alpha^2/2}  e^{\alpha  (c+X_{g,\epsilon}(x)) } .
\end{equation*}
Here the regularization is the one described in Lemma \ref{Xeps}. Such quantities are called {\rm vertex operators}.
 Notice that $V_{g,\epsilon}^\alpha$ also depends on the variable $c$ but we have dropped this dependence in the notations.   
 
Then, the point is to determine whether the limit 
$$  \Pi_{\gamma, \mu}^{{\bf x},\boldsymbol \alpha}  (g,F):=\lim_{\eps\to 0} \Pi_{\gamma, \mu}^{{\bf x},\boldsymbol \alpha}  (g,F,\eps)$$
exists  and defines a non trivial functional  on those mappings $F:H^{-s}(M)\to \R$. If it does, the quantity $ \Pi_{\gamma, \mu}^{{\bf x},\boldsymbol \alpha}  (g):= \Pi_{\gamma, \mu}^{{\bf x},\boldsymbol \alpha}  (g,1)$ stands for the $n$-point correlation function of the primary fields $(e^{\alpha_i \varphi})_{1\leq i \leq n}$ respectively evaluated at $(x_i)_{1\leq i\leq n}$. Furthermore, another quantity of interest is the probability law on $H^{-s}(M)$ defined by the measure
$$F\in C_b(H^{-s}(M))\mapsto \Pi_{\gamma, \mu}^{{\bf x},\boldsymbol \alpha}  (g,F)/\Pi_{\gamma, \mu}^{{\bf x},\boldsymbol \alpha}  (g),$$
which describes the law of some formal "random function" (it is in fact a distribution). 
%Such law describes is involved for instance in the description of  the scaling limit of random planar maps conformally embedded onto $M$ (see \cite[section 5.3]{DKRV} for more details).\textcolor{red}{(To which extent do we want to develop this point?)}

We obtain a result similar to \cite{DKRV} (done for the sphere).
\begin{proposition}\label{limitcorel} Let ${\bf x}=(x_1,\dots,x_n)\in M^n$ and ${\bf \alpha}=(\alpha_1,\dots,\alpha_n)\in\R^n$.
Then for all bounded continuous functionals $F: h \in  H^{-s}(M) \to F(h) \in\R$ with $s>0$, the limit
 \begin{equation*}
  \Pi_{\gamma, \mu}^{{\bf x},\alpha }  (g,F) := \lim_{\eps\to 0}  \Pi_{\gamma, \mu}^{{\bf x},\alpha}  (g,F,\eps) ,
\end{equation*}
 exists and is finite with $\Pi_{\gamma, \mu}^{{\bf x},\alpha }  (g,1)>0$, if and only if:
 \begin{align}\label{seiberg1}
 & \sum_{i}\alpha_i + 2 Q ({\bf g}-1)>0,\\ 
 &\forall i,\quad \alpha_i<Q\label{seiberg2}.
 \end{align}
In the case $g\in{\rm Met}_{-1}(M)$, we have the following expression
 \begin{equation*}
 \Pi_{\gamma, \mu}^{{\bf x},\alpha}  (g) = \big(\frac{\det '(\Delta_{g})}{{\rm Vol}_{g}(M)}\big)^{-\frac{1}{2}}   e^{C(\mathbf{x})}     \:   \mu^{-\sum_{i}\alpha_i-2Q({\bf g}-1)}\Gamma( \sum_i\alpha_i+ 2 Q ({\bf g}-1)) \:   \E\Big[      \mathcal{G}^\gamma_{g, {\bf  x},{\bf \alpha}}(M)^{-\frac{\sum_i\alpha_i +2 Q ({\bf g}-1)}{\gamma}} \Big] 
 \end{equation*} 
 where $\Gamma$ is Euler gamma function and, if $W_g$ is the function appearing in Lemma \ref{Xeps}, 
 \begin{equation}\label{Z0Cx}
\begin{gathered} 
\mathcal{G}^\gamma_{g, {\bf  x},{\bf \alpha}}(dx):= e^{\gamma \sum_i \alpha_i 2\pi G_{g}(x_i,x)}  \mc{G}_{g}^{\gamma}(dx), \\
C(\mathbf{x}) :=\sum_i\frac{\alpha_i^2}{2}W_{g}(x_i)+2\pi\sum_{i<j}\alpha_i\alpha_j G_{g}(x_i,x_j).
\end{gathered} \end{equation}
 \end{proposition}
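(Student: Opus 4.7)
The strategy follows the blueprint developed on the sphere in \cite{DKRV}, adapted to $\chi(M)=2-2\mathbf{g}<0$. By the conformal anomaly of Proposition \ref{covconf} and the naturality of all constituent objects, it suffices to treat a hyperbolic $g\in{\rm Met}_{-1}(M)$; the general case then follows by conjugation with the Weyl factor, which modifies $C(\mathbf{x})$ only by smooth boundary terms coming from the Polyakov formula \eqref{detpolyakov} and the relation \eqref{greenhatg}.

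The heart of the argument is a Girsanov (Cameron--Martin) shift to absorb the vertex operators. Writing
\[\prod_i V^{\alpha_i}_{g,\eps}(x_i)=e^{(\sum_i\alpha_i)c}\prod_i\eps^{\alpha_i^2/2}e^{\alpha_i X_{g,\eps}(x_i)},\]
the product of Gaussian exponentials is the exponential of a centered Gaussian whose variance, by Lemma \ref{Xeps} and the continuity of $G_g$ off the diagonal, equals $-\sum_i\alpha_i^2\log\eps + 2C(\mathbf{x})+o(1)$. After cancellation against the Wick factors $\eps^{\alpha_i^2/2}$ this yields the finite prefactor $e^{C(\mathbf{x})}$, and Girsanov shifts the field $X_g$ by the deterministic function
\[H_\eps(y):=\sum_i\alpha_i\,\mathbb{E}[X_{g,\eps}(x_i)X_g(y)]\;\xrightarrow[\eps\to 0]{}\;H(y):=2\pi\sum_i\alpha_iG_g(x_i,y),\]
locally uniformly away from $\{x_1,\dots,x_n\}$. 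Because $g$ is hyperbolic, $K_g$ is constant and $\int_M G_g(x_i,y){\rm dv}_g(y)=0$, so the shift produces no additional curvature contribution, and the $c$-linear part of the curvature term collapses via Gauss--Bonnet to $-Qc\chi(M)=2Q(\mathbf{g}-1)c$.

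Setting $s:=\sum_i\alpha_i+2Q(\mathbf{g}-1)$ and $\mc{G}^\gamma_{g,\mathbf{x},\boldsymbol\alpha,\eps}(dy):=e^{\gamma H_\eps(y)}d\mc{G}^\gamma_{g,\eps}(y)$, one arrives at
\[\Pi_{\gamma,\mu}^{\mathbf{x},\boldsymbol\alpha}(g,F,\eps)=\frac{e^{C_\eps(\mathbf{x})}}{\sqrt{{\det}'(\Delta_g)/{\rm Vol}_g(M)}}\int_\R e^{sc}\mathbb{E}\!\left[F(c+X_g+H_\eps)e^{-\mu e^{\gamma c}\mc{G}^\gamma_{g,\mathbf{x},\boldsymbol\alpha,\eps}(M)}\right]dc.\]
Standard GMC manipulations (Proposition \ref{GMCprop} and Kahane's criterion) give the weak-in-probability convergence $\mc{G}^\gamma_{g,\mathbf{x},\boldsymbol\alpha,\eps}\to\mc{G}^\gamma_{g,\mathbf{x},\boldsymbol\alpha}$, with the limit a.s.\ finite on $M$ precisely when $\gamma\alpha_i<\gamma Q=2+\gamma^2/2$ for every $i$, i.e.\ exactly under \eqref{seiberg2}; the threshold comes from the local behaviour $e^{\gamma H(y)}\sim d_g(x_i,y)^{-\gamma\alpha_i}$ near $x_i$ tested against the multifractal measure $\mc{G}^\gamma_g$. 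The $c$-integral converges at $+\infty$ automatically (double-exponential damping), and at $-\infty$ iff $s>0$, i.e.\ iff \eqref{seiberg1} holds. Under both Seiberg bounds, dominated convergence passes to the limit $\eps\to 0$; specializing to $F\equiv 1$ and the change of variables $u=\mu e^{\gamma c}\mc{G}^\gamma_{g,\mathbf{x},\boldsymbol\alpha}(M)$ yields the Gamma-function identity
\[\int_\R e^{sc}e^{-\mu e^{\gamma c}Z}dc=\gamma^{-1}\mu^{-s/\gamma}\Gamma(s/\gamma)Z^{-s/\gamma},\]
and the displayed formula for $\Pi_{\gamma,\mu}^{\mathbf{x},\boldsymbol\alpha}(g)$, whose positivity and finiteness reduce to the a.s.\ positivity and the finiteness of all negative moments of the GMC total mass (cf.\ \cite{RoVa,Rnew12}). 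Conversely, if $s\leq 0$ monotone convergence gives divergence of the $c$-integral; if some $\alpha_i\geq Q$ then $\mc{G}^\gamma_{g,\mathbf{x},\boldsymbol\alpha,\eps}(M)\to+\infty$ in probability, forcing $\Pi_{\gamma,\mu}^{\mathbf{x},\boldsymbol\alpha}(g,1)=0$.

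The main obstacle is the control of the weighted chaos $\mc{G}^\gamma_{g,\mathbf{x},\boldsymbol\alpha,\eps}(M)$ uniformly in $\eps$: one must handle the local singularities of $e^{\gamma H_\eps}$ at each insertion $x_i$ simultaneously with the negative-moment estimates required to integrate $Z^{-s/\gamma}$, and this is exactly where the sharp Seiberg thresholds $\alpha_i<Q$ and $\sum\alpha_i+2Q(\mathbf{g}-1)>0$ enter.
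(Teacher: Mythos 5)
Your proof follows essentially the same route as the paper's: Girsanov to absorb the vertex insertions and produce the weighted chaos $\mc{G}^\gamma_{g,{\bf x},\boldsymbol\alpha}$, the reduction of the curvature term via Gauss--Bonnet and the vanishing mean of $G_g(x_i,\cdot)$, then the explicit $c$-integration giving the Gamma function, with finiteness of the result tied to the multifractal estimate $\mc{G}^\gamma_g(B_r(x_i))\lesssim r^{\gamma Q-\delta}$ and the negative-moment bounds of the chaos (this is the citation to \cite[Lemma 3.3]{DKRV} the paper also invokes). Your Gamma-function identity $\int_\R e^{sc}e^{-\mu e^{\gamma c}Z}dc=\gamma^{-1}\mu^{-s/\gamma}\Gamma(s/\gamma)Z^{-s/\gamma}$ is the correct one and matches the paper's displayed computation inside the proof; the Proposition as stated in the paper has a typographical slip in the prefactors (it drops the $\gamma^{-1}$ and omits the division by $\gamma$ in the exponent of $\mu$ and in the argument of $\Gamma$), so your formula is the right one. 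The only caveat is your opening sentence: citing Proposition \ref{covconf} (stated for $n=0$) to reduce to the hyperbolic case is slightly circular for $n\geq 1$, since the conformal anomaly with insertions \eqref{confan} is established as a byproduct of this very proof in the paper; the paper handles this by running the Girsanov computation for general $\hat g=e^\omega g$ and then specializing $\omega=0$, which is in substance what you describe in your parenthetical about the boundary terms from \eqref{detpolyakov} and \eqref{greenhatg}.
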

 
 \begin{remark}
The reader may compare with the correlation functions of free scalar fields, see \cite[Equations (2.90) and (2.93)]{DhPh3} for instance.
 \end{remark}
 
 \begin{proof}
 The   argument goes essentially as in the proof of \cite[Theorems 3.2 \& 3.4]{DKRV}, 
 while having in mind that  the Gauss-Bonnet theorem is \eqref{GB} on general compact Riemann surfaces.
 We recall the main steps. It suffices to prove the claim for $F=1$. We fix $g$ hyperbolic and we will also consider the case of $\hat{g}=e^{\omega}g$ for $\omega\in C^\infty(M)$ to understand the behaviour of the correlation functions under conformal change. Consider  \eqref{actioninsertion} for the metric $\hat{g}$. By Lemma \ref{invarianceconforme}, we can replace $X_{\hat{g}}$ by $X_{g}$ in this expression -- in particular $V^{\alpha_i}_{\hat{g},\eps}(x_i)$ becomes 
$\hat{V}^{\alpha_i}_{g,\eps}(x_i):=\eps^{\alpha_i^2/2}  e^{\alpha_i  (c+\hat{X}_{g,\epsilon}(x_i))}$, with $\hat{X}_{g,\epsilon}$ defined by \eqref{Xhat}.
First we notice by \eqref{devptE'} that 
\[ \hat{V}^{\alpha_i}_{g,\eps}(x_i)= e^{\alpha_ic+\frac{\alpha_i^2}{4}\omega(x_i)+\frac{\alpha_i^2}{2}W_{g}(x_i)} e^{\alpha_i \hat{X}_{g,\eps}(x_i)-\frac{\alpha_i^2}{2}\E[\hat{X}_{g,\eps}(x_i)^2]}(1+o(1))\]
as $\eps\to 0$, with the remainder being deterministic. Here we have used the notation
$\hat{X}_{g,\eps}(x_i)=\cjg X_g,\hat{\mu}_{x_i,\eps}\cjd$ as before, where $\hat{\mu}_{x_i,\eps}$ is the uniform probability measure on the Riemannian circle $\mc{C}_{\hat{g}}(x_i,\eps)$.
Then applying Girsanov transform   in the expression 
\[ A_\eps:= \int_\R  \E\Big[ \big(\prod_i \hat{V}^{\alpha_i}_{g,\eps}(x_i)\big)    \exp\Big( -\frac{Q}{4\pi}\int_{M}K_{\hat{g}}(c+ X_{g} )\,{\rm dv}_{\hat{g}} - \mu  e^{\gamma c}
 \hat{\mc{G}}_{g,\eps}^\gamma(M)  \Big) \Big]\,dc\]
to the Radon-Nikodym derivative $\prod_{i=1}^ne^{\alpha_i \hat{X}_{g,\eps}(x_i)-\frac{\alpha_i^2}{2}\E[\hat{X}_{g,\eps}(x_i)^2]}$, we get
\[\begin{split} 
A_\eps= & e^{C_\eps({\bf x})}
  \int_\R  e^{c(\sum_i\alpha_i-Q\chi(M))}\E\Big[  \exp\Big( -\frac{Q}{4\pi}\cjg X_{g},K_{\hat{g}}\cjd_{\hat{g}}  - \mu  e^{\gamma c}
 \hat{Z}_\eps  \Big) \Big]\,dc\, (1+o(1))
\end{split}\]
where 
\[\begin{gathered} 
\hat{Z}_\eps:= \eps^{\frac{\gamma^2}{2}}\int_M e^{\gamma (\hat{X}_{g,\eps}+H_{g,\eps})}{\rm dv}_{\hat{g}}\\
H_{g,\eps}(x):=\sum_i 2\pi \alpha_i \int_{\mc{C}_{\hat{g}}(x_i,\eps)}G_{g}(y,x)d\hat{\mu}_{x_i,\eps}(y), \\
\quad C_\eps({\bf x}):=2\pi\sum_{i\not=j}\alpha_i\alpha_j G_{g}(x_i,x_j)-\frac{Q}{4\pi}\int_{M}K_{\hat{g}}H_{g,\eps}{\rm dv}_{\hat{g}}+\sum_i \frac{\alpha_i^2}{4}(\omega(x_i)+2W_g(x_i)).
\end{gathered}\]
Notice that, since $K_{\hat{g}}{\rm dv}_{\hat{g}}=(\Delta_g\omega-2){\rm dv}_{g}$, we have  as $\eps\to 0$
\[C_\eps({\bf x})\to \pi\sum_{i\not=j}\alpha_i\alpha_j G_{g}(x_i,x_j)+\sum_{i}(\frac{\alpha_i^2}{4}-
\frac{Q\alpha_i}{2})\omega(x_i)+\frac{Q}{2}\sum_i\alpha_ic_{g}(\omega)+\demi\sum_{i}\alpha_i^2W_g(x_i).\]
By applying Girsanov transform again just like in the proof of Proposition \ref{covconf}, we can get rid of the 
$\cjg X_{g},K_{\hat{g}}\cjd_{\hat{g}}$ term and this shifts the field $\hat{X}_{g,\eps}$ in $\hat{Z}_\eps$ 
by $F(x)=-\tfrac{Q}{2}(\omega(x)-c_g(\omega))$: 
\[\begin{split} 
A_\eps= & e^{C_\eps({\bf x})+\frac{Q^2}{16\pi}||d\omega||^2_{L^2_g}}  \int_{\R} e^{c(\sum_i\alpha_i-Q\chi(M))}
\E\Big[  \exp\Big( - \mu  e^{\gamma c}
 \til{Z}_\eps  \Big) \Big]\,dc\, (1+o(1))
\end{split}\]
where $\til{Z}_\eps:=\eps^{\frac{\gamma^2}{2}}\int_M e^{\gamma (\hat{X}_{g,\eps}+H_{g,\eps}+F)}{\rm dv}_{\hat{g}}$; here we have denoted $c_g(\omega)=\cjg \omega,1\cjd_g/{\rm Vol}_g(M)$.
By Lemma \ref{Xeps},  $||H_{g,\eps}||_{L^\infty}<\infty$ thus by Proposition \ref{GMCprop}, we get that 
$\E[\til{Z}_\eps]<\infty$. Therefore we can find $B>0$ such that $\mathbb{P}(\til{Z}_\eps\leq B)>0$. We therefore get 
\[ A_\eps\geq \beta_{\eps,{\bf x}} \int_{-\infty}^0 e^{c(\sum_i\alpha_i-Q\chi(M)) - \mu  e^{\gamma c}} \mathbb{P}(\til{Z}_\eps\leq B)\,dc
\]
for some $\beta_{\eps,{\bf x}}>0$, and this is infinite if $\sum_i\alpha_i -Q\chi(M)\leq 0$. Then we assume  \eqref{seiberg1}. 
We also have as in \eqref{relationentrenorm} the relation
\[ \til{Z}_\eps= e^{\frac{\gamma Q}{2}c_g(\omega)}  \mathcal{G}^{\gamma,\epsilon}_{g,{\bf  x},{\bf \alpha}}(M)(1+o(1)), \quad  \mathcal{G}^{\gamma,\epsilon}_{g,{\bf  x},{\bf \alpha}}(M)=\eps^{\frac{\gamma^2}{2}}\int_M e^{\gamma H_{g,\eps}}d\mc{G}^\gamma_{g,\eps}.\]
Making the change of variables $c\to c-\tfrac{Q}{2}c_g(\omega)$, we obtain that $A_\eps$ is equal to
\[e^{C({\bf x})+\sum_{i}(\frac{\alpha_i^2}{4}-
\frac{Q\alpha_i}{2})\omega(x_i)+\frac{Q^2}{16\pi}||d\omega||^2_{L^2_g}+\frac{Q^2}{2}\chi(M)c_g(\omega)} 
\int_{\R} e^{c(\sum_i\alpha_i-Q\chi(M))}
\E\Big[  \exp\Big( - \mu  e^{\gamma c}
  \mathcal{G}^{\gamma,\epsilon}_{g,{\bf  x},{\bf \alpha}}(M)  \Big) \Big]\,dc\,\]
times $1+o(1)$ as $\eps\to 0$, where $C({\bf x})$ is given by \eqref{Z0Cx}. In particular this implies \eqref{confan} if we can show that 
for the case $\omega=0$ the limit of $A_\eps$ is finite. We now assume $\omega=0$, or equivalently we consider $\hat{g}=g$ the hyperbolic metric. We make the change of variables 
$c=\mu e^{\gamma c} \mathcal{G}^{\gamma,\epsilon}_{g,{\bf  x},{\bf \alpha}}(M)$ in the $c$-integral defining $A_\eps$ (recall that $ \mathcal{G}^{\gamma,\epsilon}_{g,{\bf  x},{\bf \alpha}}(M)>0$ almost surely), and we get 
\[ A_\eps=\gamma^{-1}e^{C({\bf x})}\mu^{\frac{-\sum_{i}\alpha_i+Q\chi(M)}{\gamma}}\Gamma\Big(\frac{\sum_{i}\alpha_i-Q\chi(M)}{\gamma}\Big)\E[  \mathcal{G}^{\gamma,\epsilon}_{g,{\bf  x},{\bf \alpha}}(M)^{-\frac{\sum_{i}\alpha_i-Q\chi(M)}{\gamma}}].
\]
It remains to show that if $\alpha_i<Q$ for all $i$ and $\delta:=\frac{\sum_{i}\alpha_i-Q\chi(M)}{\gamma}>0$, then 
\begin{equation}\label{EZeps}
\lim_{\eps\to 0}\E[  \mathcal{G}^{\gamma,\epsilon}_{g,{\bf  x},{\bf \alpha}}(M)^{-\delta}]=\E[\mathcal{G}^\gamma_{g,{\bf  x},{\bf \alpha}}(M)^{-\delta}] \in (0,\infty), \quad \mathcal{G}^\gamma_{g,{\bf  x},{\bf \alpha}}(M)=\int_Me^{\gamma H_{g}}d\mc{G}^\gamma_{g}
\end{equation}  
with $H_{g}(x):=\lim_{\eps\to 0}H_{g,\eps}(x)=2\pi \sum_i\alpha_iG_{g}(x_i,x)$, and that if $\alpha_i\geq Q$ for some $i$, then $\E[ \mathcal{G}^{\gamma,\epsilon}_{g,{\bf  x},{\bf \alpha}}(M)^{-\delta}]\to 0$. But this part is only a local argument and therefore Lemma 3.3. of \cite{DKRV} applies directly. The argument goes essentially as follows. The term $e^{\gamma H_{g}}$ behaves like  $\frac{1}{d_g(x,x_i)^{\gamma\alpha_i}}$ in the neighborhood of $x_i$ and thus we need to determine whether the measure  $\mc{G}_{g}^{\gamma}(dx)$ integrates the singularity $\frac{1}{d_g(x,x_i)^{\gamma\alpha_i}}$ in the neighborhood of $x_i$ to get non-triviality of the random variable $\mathcal{G}^\gamma_{g,{\bf  x},{\bf \alpha}}(M)$ (if the singularity is not integrable, we get $Z_0=+\infty$ a.s. and $\lim_{\eps\to 0}\E[ \mathcal{G}^{\gamma,\epsilon}_{g,{\bf  x},{\bf \alpha}}(M)^{-\delta}]=0$). Standard multifractal analysis shows that for any $\delta>0$ one can find a constant $C_\delta$ such that
 $$\sup_{r<1}r^{-\gamma Q+\delta}\mc{G}_{g}^{\gamma}(B_r(x_i))\leq C_\delta$$ where $B_r(x_i)$ stands for the geodesic ball of radius $r$ centered at $x_i$. This gives the condition $\alpha_i<Q$ for non-triviality of  $\mathcal{G}^\gamma_{g,{\bf  x},{\bf \alpha}}(M)$. 
 Finally it remains to determines whether the quantity
 $$\int_\R e^{c\big(\sum_i\alpha_i-2 Q(1-\mathbf{g})\big)}\E[e^{-\mu e^{\gamma c} \mathcal{G}^\gamma_{g,{\bf  x},{\bf \alpha}}(M)}]\,dc$$ is finite. As we have seen that $\mathcal{G}^\gamma_{g,{\bf x},{\bf \alpha}}(M)$ is a well defined non trivial random variable under the condition \eqref{seiberg2}, one may think of it as a macroscopic quantity and replace it by a constant quantity, say $1$, so as to be left with the integral
\[\int_\R e^{c\big(\sum_i\alpha_i-2 Q(1-\mathbf{g})\big)-\mu e^{\gamma c} }\,dc,\]
 which is easily seen to be converging if and only if \eqref{seiberg1} holds. This is only a sketch of proof but details are exposed in  \cite[Lemma 3.3]{DKRV}. 
 %The first point is to establish the conformal covariance for the correlation functions, showing that the existence of the limit does not depend of the representative of the metric in a given conformal class. Hence it suffices to consider the case of a $g\in{\rm Met}_{-1}(M)$. In that case, the curvature term in \eqref{actioninsertion} reduces to
 %$$- c2 Q (1-\mathbf{g})$$ by the Gauss-Bonnet theorem \eqref{GB} and the fact that the field $X_g$ has vanishing ${\rm v}_g$-mean. For simplicity of the exposition, we take $\epsilon=0$.
 %Then, we get rid of the (regularized) vertex operators $\prod_i V^{\alpha_i}_{g}(x_i)$ by using the Girsanov transform. This has the effect of shifting the field $X_g$: the field $X_g$ becomes $X_g+\sum_i\alpha_i G_g(\cdot,x_i)$. This only affects the term $\mathcal{G}^\gamma_{g}(M)$. It becomes 
% $$\mathcal{G}^\gamma_{g,\boldsymbol x,\boldsymbol\alpha}(M).$$
 \end{proof}
 
The proof of the previous proposition (adding a functional $F$ does not change anything) also shows the 
\begin{proposition}\label{covconf2}{\bf (Conformal anomaly and diffeomorphism invariance)}
Let 
$g$ be a hyperbolic metric on $M$ and $\hat g=e^{\omega}g$ for some $\omega\in C^\infty(M)$, and let ${\bf x}=(x_1,\dots,x_n)\in M^n$ and ${\bf \alpha}=(\alpha_1,\dots,\alpha_n)\in\R^n$. Then we have
\begin{equation}\label{confan} 
\log\frac{\Pi_{\gamma, \mu}^{{\bf x},{\bf \alpha}}  (\hat{g},F)}{\Pi_{\gamma, \mu}^{{\bf x},\alpha}  (g,F(\cdot-\tfrac{Q}{2}))}= 
\frac{1+6Q^2}{96\pi}\int_{M}(|d\omega|_g^2+2K_g\omega) {\rm dv}_g-\Delta_{\alpha_i}\omega(x_i)
\end{equation}
where the real numbers $\Delta_{\alpha_i}$, called {\it conformal weights}, are defined by the relation $\Delta_{\alpha}:=\frac{Q\alpha}{2}-\frac{\alpha^2}{4} $ for $\alpha\in\R$\footnote{The reader may compare \eqref{confan} with the general axiomatic of CFTs exposed in Subsection \ref{CFT}.}.
Let $\psi:M\to M$ be an orientation preserving diffeomorphism. Then  
\[ \Pi_{\gamma, \mu}^{{\bf x},\boldsymbol\alpha}   (\psi^*g ,F)= \Pi_{\gamma, \mu}^{{\bf \psi(x)},\boldsymbol\alpha}  (g,F(\cdot \circ \psi)) .\]
\end{proposition}

%%%%%%%%%%%%%%%%%%%%%%%%%%%%%%%%%%%%%%%%%%%%%%%%%%%%%%%%%%%%%%%%%%%%%%%%%%%%%%%%%%%%%%%%%%%%%%%%%%%%%%%%%%%%%%%%
\section{Liouville Quantum gravity}
%%%%%%%%%%%%%%%%%%%%%%%%%%%%%%%%%%%%%%%%%%%%%%%%%%%%%%%%%%%%%%%%%%%%%%%%%%%%%%%%%%%%%%%%%%%%%%%%%%%%%%%%%%%%%%%%
\subsection{The full partition function}\label{fullpartfct}
%%%%%%%%%%%%%%%%%%%%%%%%%%%%

The partition function of Liouville quantum gravity is a weighted integral over the moduli space of the Liouville quantum field theory  coupled to a Conformal Field Theory  (sometimes  called matter field in physics in this context). The weight  of each modulus is given by some explicit functional $ Z_{\textrm{Ghost}}(g) $ (this weight depends on the underlying surface $(M,g) $), called the ghost system in physics. This takes into account the factorization of the space of metrics by the action of the group of diffeomorphisms  of the surface (as 
explained for example in \cite{DhPh}).

Let us first recall the physics heuristics that leads to the partition function, by following \cite{Pol, DhPh, DistKa, cf:Da}; the following discussion is not mathematically rigorous but is rather a ``state of the art" in physics literature.
The partition function for (Euclidean) quantum gravity in 2D, coupled with matter, is
\[ Z= \int_{\mc{R}}e^{-S_{{\rm EH}}(g)}\Big(\int e^{-S_{\rm M}(g,\phi_m)}D_g\phi_m\Big) Dg\]
where $\mc{R}={\rm Met}(M)/{\rm Diff}(M)$ is the space of Riemannian structures, the action $S_{{\rm EH}}(g)=\mu_0{\rm Vol}_g(M)$ is the Einstein-Hilbert action (or gravity action) with $\mu_0\in\mathbb{R}$ the cosmological constant and the matter fields $\phi_m$ are elements of an infinite dimensional space $E$ of fields living over $M$ (typically $\phi_m$ are sections of some bundles over $M$)
with $S_{\rm M}(g,\phi_m)$ being the action for matter which depends on $g$ in a conformally invariant way. Notice that, in comparison with \eqref{introEH}, we got rid of the term $\int_MK_g\,d{\rm v}_g$ as it is a topological invariant in $2d$ because of the Gauss-Bonnet theorem: this is an important feature of $2d$-quantum gravity.
The quantity 
 \[Z_{\rm M}(g):=\int e^{-S_{\rm M}(g,\phi_m)}D_g\phi_m\]
is supposed to be a CFT with central charge ${\bf c_{\rm M}}$, $D_g\phi_m$ the formal Riemannian measure induced by the $L^2$ Riemannian metric on the space on fields $E$  and $Dg$ is the formal Riemannian measure induced by the $L^2$ Riemannian metric on ${\rm Met}(M)$ given by \eqref{L2metric} (the group ${\rm Diff}(M)$ acts by isometries on ${\rm Met}(M)$ thus the $L^2$-metric on ${\rm Met}(M)$ descends to $\mc{R}$).

Each metric can be decomposed as $g=\psi^*(e^{\varphi}g_\tau)$
where $\tau$ is a parameter on moduli space $\mc{M}_{\bf g}$, $g_\tau$ is a family of metrics representing moduli space and $\psi\in {\rm Diff}(M)$, and the formal measure $Dg$ can be accordingly 
decomposed as  
\[ Dg=Z_{\rm Ghost}(e^{\varphi}g_\tau) D_{e^\varphi g_\tau}\varphi D\tau \]
where $Z_{\rm Ghost}$ is the ghost determinant which comes from the Jacobian of the quotient of ${\rm Met}(M)$ by the group of diffeomorphism ${\rm Diff}(M)$ (see for example \cite{DhPh}), and given by 
\[ Z_{\textrm{Ghost}}(g)= \Big(\frac{{\det}(P_{g}^*P_{g})}{\det J_{g}}\Big)^{1/2}\]
where $P_g,J_g$ are defined in Section \ref{detoflap}. The ghost determinant satisfies the conformal anomaly formula \eqref{scale1} with central charge
 \begin{equation}\label{CCG} 
 \mathbf{c}_{\mathbf{ghost}}=-26.
 \end{equation}
Here $D\tau$ is a measure on the slice of metrics $g_\tau$ chosen to represent moduli space, whose value is $D\tau:=(\det J_{g_\tau})^{1/2} d\tau$ with $d\tau$ being the Weil-Petersson volume form on the moduli space $\mathcal{M}_{\mathbf{g}}$ (somehow $Z_{\rm Ghost}(e^{\varphi}g_\tau) D\tau$ is the quantity that makes invariant sense, as it does not depend on the matrix $J_g$). The formal measure 
$D_{e^{\varphi} g_\tau}\varphi$ should be induced by the $L^2$ Riemannian metric on metrics, which on the tangent space to the conformal orbit $[g_\tau]=\{e^{\varphi}g_\tau; \varphi\in C^\infty(M)\}$ is given by 
\begin{equation}\label{measDt}
 ||f||_{e^{\varphi}g_\tau}^2= \int_M \omega^2 e^{\varphi}{\rm dv}_{g_\tau}, \quad f=\omega e^{\varphi}g_\tau \in T_{e^{\varphi} g_\tau}[g_\tau].
\end{equation}
This measure depends non-linearly on $\varphi$ and it is difficult to ``do the functionnal integral'' for this measure, as written in \cite{DistKa}. Therefore  David and Distler-Kawai \cite{cf:Da,DistKa} made the 
``well-motivated'' assumption that 
\[  e^{-S_{\rm EH}(g)}Z_{\rm M}(g)Dg= Z_{\rm M}(g_\tau)Z_{\rm Ghost}(g_\tau) e^{-S_L(g_\tau,\varphi)} D\tau 
D_{g_\tau}\varphi
 \]
where $S_L(g,\varphi)$ is the Liouville action defined by \eqref{QLiouville} for some parameter 
$Q,\gamma,\mu$ to be chosen and $D_{g_\tau}\varphi$ is the formal Riemannian measure induced by the $L^2$-metric \eqref{measDt} at $g_\tau$. A formal justification of this fact was written down later in \cite{MaMi} and \cite{DhPh,DhKu}. 
Invariance of the theory by choice of slice $g_\tau$ representing moduli space forces the partition function
$\int e^{-S_L(g_\tau,\varphi)} 
D_{g_\tau}\varphi$ to be a CFT  
with central charge ${\bf c_{\rm L}}=1+6Q^2$ such that the total conformal anomaly   vanishes:
\[ \mathbf{c}_{\mathbf{ghost}}+ \mathbf{c}_{\mathbf{M}}+ \mathbf{c}_{\mathbf{L}}=0.\]
Recalling that $\gamma\in ]0,2]$ is  related to $Q$ by $Q=\gamma/2+2/\gamma$, this forces  $\mathbf{c}_{\mathbf{M}}\leq 1$ and we obtain another KPZ relation \cite{cf:KPZ}
\begin{equation}\label{KPZ2}
\gamma=\frac{\sqrt{25-\mathbf{c}_{\mathbf{M}}}-\sqrt{1-\mathbf{c}_{\mathbf{M}}}}{\sqrt{6}},
\end{equation}
which fixes the value of $\gamma$ in terms of $\mathbf{c}_{\mathbf{M}} $.

Now we stop the physics parenthesis and come back to mathematics. 
For the matter field, we take the particular case the most studied in the physics literature, namely 
\begin{equation}\label{Zm}
Z_{\textrm{M}}(g):= \Big(\frac{{\det }'(\Delta_{g})}{{\rm Vol}_{g}(M)}\Big)^{-\mathbf{c}_{\mathbf{M}}/2}
\end{equation}
where $\mathbf{c}_{\mathbf{M}}$ is a constant in $(-\infty,1]$. 
Note that this has the central charge ${\bf c_{\rm M}}$ by \eqref{detpolyakov}.  Furthermore, there are at least two important particular cases: pure gravity where ${\bf c_{\rm M}}=0$ and the $2d$ bosonic string in the case ${\bf c_{\rm M}}=1$. Because it is the critical situation of this approach, the latter case is especially interesting and raises serious additional difficulties.  
One could consider also other CFT partition functions provided that we get an expression explicit enough to determine how it behaves at the boundary of the moduli space.
For each modulus $\tau\in \mc{M}_{\bf g}$, we can associate a hyperbolic metric $g_\tau$ and we will denote by $(g_\tau)_\tau$ the family of hyperbolic metrics representing the moduli space.
By definition, the partition function of Liouville quantum gravity is given by the following formula:
 \begin{equation}\label{QGpartition} 
Z:=  \int_{\mathcal{M}_{\mathbf{g}}}    Z_{\textrm{Ghost}}(g_\tau)   \times Z_{\textrm{M}}(g_\tau) \times  \Pi_{\gamma,\mu}(g_\tau)    \:  D\tau 
 \end{equation}
 where  $D\tau:=(\det J_{g_\tau})^{1/2} d\tau$ with $d\tau$ the Weil-Petersson volume form on the moduli space $\mathcal{M}_{\mathbf{g}}$, and $\Pi_{\gamma,\mu}(g)$  is the partition function of the Liouville quantum field theory. 
This can be reduced to   
  \begin{equation}\label{QGpartition2} 
Z=  C_{{\bf g}}\int_{\mathcal{M}_{\mathbf{g}}}    {\det}(P_{g_\tau}^*P_{g_\tau})^{1/2}\x   {\det }'(\Delta_{g_\tau})^{-{\bf c_M}/2}
 \times  \Pi_{\gamma,\mu}(g_\tau)    \:  d\tau  
 \end{equation}
with ${\rm C}_{{\bf g}}$ a constant depending only on the genus of $M$. We point out that the reduction of the partition function under the form \eqref{QGpartition2} was first derived by \cite{DhPh}, at least for the critical string case.

Now, the main result of this section is the following: (we denote by ${\rm Rad}(M) $ the space of Radon measures over $M$ in the statement below)
\begin{theorem}\label{defLiouvillemeas}
If $\gamma \in (0,2]$ and $ \mathbf{c}_{\mathbf{M}}$ satisfies relation \eqref{KPZ2}, the partition function $Z$ given by \eqref{QGpartition} is finite. Hence it gives rise to  a finite measure $\nu$ on   $ {\rm Rad}(M)\x \mc{M}_{\bf g}$ defined as follows: 
 if $(g_\tau)_{\tau}$ is a family of hyperbolic metrics parametrizing the moduli space $\mc{M}_{\bf g}$, then
 \[\begin{aligned}
 \nu(F):= 
 \int_{\mc{M}_{\bf g}\x \mathbb{R}}\Big(\frac{{\det}(P_{g_\tau}^*P_{g_\tau})}{({\det }'\Delta_{g_\tau})^{\mathbf{c}_{\mathbf{M}}+1}}\Big)^{\demi} \E\big[F(e^{\gamma c} \mc{G}_{g_\tau}^\gamma(dz),\tau)e^{-Q\chi(M) c-\mu   e^{\gamma c}\mc{G}_{g_\tau}^\gamma(M)} \Big]  \,d\tau\,dc 
 \end{aligned}\]
for all continuous functionals $F: {\rm Rad}(M)\x \mc{M}_{\bf g}\to \mathbb{R}$. When renormalized by its total mass $Z=\nu(1)$, it becomes a probability measure which we call  $\P_{  (g_\tau)_{\tau} , \mu}$ (with expectation $\E_{  (g_\tau)_{\tau} , \mu} $) and the couple $(e^{\gamma c} \mc{G}_{g_\tau}^\gamma(dz),\tau)$   becomes a random variable on  $ {\rm Rad}(M)\x \mc{M}_{\bf g}$, which we denote by $(\mathcal{L}_{\gamma},R)$ and stands for the volume form of the space (called Liouville quantum gravity measure) and its modulus (called quantum modulus). 

\medskip

Furthermore, for  all continuous functionals $F: {\rm Rad}(M)\x \mc{M}_{\bf g}\to \mathbb{R}$ 
\begin{align}
& \E_{  (g_\tau)_{\tau} , \mu}[   F(  \mathcal{L}_{\gamma}(dz)  , R ) ] \label{defLQGmeasure}    \\
& = \frac{\Gamma(\frac{2Q ({\bf g}-1)}{\gamma})}{   \gamma Z \mu^{\frac{2Q ({\bf g}-1)}{\gamma}} }    \int_{\mathcal{M}_{\mathbf{g}}}     \Big(\frac{{\det}(P_{g_\tau}^*P_{g_\tau})}{({\det }'\Delta_{g_\tau})^{\mathbf{c}_{\mathbf{M}}+1}}\Big)^{\demi}  \E  \left [ F  \left (  \xi_\gamma \frac{\mc{G}_{g_\tau}^\gamma(dz)}{\mc{G}_{g_\tau}^\gamma(M)  } , \tau  \right )   \mc{G}_{g_\tau}^\gamma(M)^{\frac{Q}{\gamma}\chi(M) }     \right ]   \: d\tau   \nonumber
\end{align}
where $\xi_\gamma$ is a random variable with Gamma law of density $\frac{\mu^{\frac{2Q ({\bf g}-1)}{\gamma}}}{\Gamma(\frac{2Q ({\bf g}-1)}{\gamma})}e^{-\mu x}  x^{\frac{2Q ( {\bf g}-1  )}{\gamma}-1} \mathbf{1}_{x\geq 0}$ and the random modulus $R$ has density 
$$  {\det}(P_{g_\tau}^*P_{g_\tau})^{\demi}\Big(\frac{{\det }'(\Delta_{g_\tau})}{{\rm Vol}_{g_\tau}(M)}\Big)^{-\frac{(\mathbf{c}_{\mathbf{M}}+1)}{2}} \E  \left [    \mc{G}_{g_\tau}^\gamma(M)^{\frac{Q}{\gamma}\chi(M) }     \right ]  $$
with respect to the $d\tau$ measure. 
\end{theorem}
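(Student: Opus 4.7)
The plan is to derive Theorem \ref{defLiouvillemeas} by unpacking the definitions, using Theorem \ref{mainth2} as the single substantive input. First, I would specialize \eqref{partLQFT} to a hyperbolic representative $g_\tau$: since $K_{g_\tau}\equiv -2$ and $X_{g_\tau}$ has zero mean, Gauss-Bonnet \eqref{GB} reduces the curvature-coupling term to $\frac{Q}{4\pi}\int_M K_{g_\tau}(c+X_{g_\tau}){\rm dv}_{g_\tau}=Qc\chi(M)$. Substituting this into \eqref{QGpartition2} and invoking Fubini on the non-negative integrand immediately produces the asserted formula for $\nu(F)$, the hyperbolic volume $4\pi({\bf g}-1)$ being a topological constant absorbed into the overall normalization. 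Taking $F\equiv 1$ yields $\nu(1)=Z$, which is finite by Theorem \ref{mainth2}; hence $\nu$ is a finite Borel measure on $\mathrm{Rad}(M)\times\mc{M}_{\bf g}$.

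Next, to establish \eqref{defLQGmeasure}, the key step is the change of variable $u=\mu e^{\gamma c}\mc{G}_{g_\tau}^\gamma(M)$ inside the $c$-integral. Writing $V:=\mc{G}_{g_\tau}^\gamma(M)$ and $A:=-Q\chi(M)/\gamma=2Q({\bf g}-1)/\gamma>0$, one has $dc=\tfrac{du}{\gamma u}$ together with $e^{-Qc\chi(M)}=(u/(\mu V))^{A}$, while the random measure on which $F$ depends transforms according to
\[
e^{\gamma c}\mc{G}_{g_\tau}^\gamma(dz)=\frac{u}{\mu}\cdot\frac{\mc{G}_{g_\tau}^\gamma(dz)}{\mc{G}_{g_\tau}^\gamma(M)}.
\]
Setting $u=\mu\xi$ then identifies the Gamma density $\tfrac{\mu^{A}}{\Gamma(A)}\xi^{A-1}e^{-\mu\xi}\mathbf{1}_{\xi\geq 0}$ used in the statement, so that the $c$-integral rewrites as
\[
\frac{\Gamma(A)}{\gamma\mu^{A}}\,\E\!\left[\mc{G}_{g_\tau}^\gamma(M)^{Q\chi(M)/\gamma}\,F\!\left(\xi_\gamma\,\frac{\mc{G}_{g_\tau}^\gamma(dz)}{\mc{G}_{g_\tau}^\gamma(M)},\,\tau\right)\right],
\]
with $\xi_\gamma$ independent of $X_{g_\tau}$. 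Dividing by $Z$ turns $\nu$ into the probability measure $\P_{(g_\tau)_\tau,\mu}$ and realizes $(\mc{L}_\gamma,R)$ as the image of $\nu/Z$ under $(c,X_{g_\tau},\tau)\mapsto(e^{\gamma c}\mc{G}_{g_\tau}^\gamma,\tau)$, which is exactly \eqref{defLQGmeasure}.

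Finally, the stated density of the modulus $R$ follows by choosing $F$ depending only on $\tau$: the $\xi_\gamma$-average trivializes and one reads off the density with respect to $d\tau$ after regrouping the determinants as $\det(P_{g_\tau}^*P_{g_\tau})^{1/2}\det'(\Delta_{g_\tau})^{-({\bf c}_{\rm M}+1)/2}$, which matches the form stated (the factor ${\rm Vol}_{g_\tau}(M)^{(1+{\bf c}_{\rm M})/2}$ being again a topological constant in genus ${\bf g}\geq 2$). The only non-trivial ingredient throughout is the finiteness $Z<\infty$ furnished by Theorem \ref{mainth2}; apart from that, the proof is a routine change of variable identifying the Gamma-Liouville structure of the $c$-integral together with algebraic bookkeeping of the determinant factors.
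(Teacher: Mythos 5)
Your proposal correctly reproduces the ``bookkeeping'' part of the argument: specializing the LQFT partition function \eqref{partLQFT} to a hyperbolic representative $g_\tau$, applying Gauss--Bonnet to reduce the curvature term to $Qc\chi(M)$, and performing the change of variables $y=\mu e^{\gamma c}\mc{G}_{g_\tau}^\gamma(M)$ in the $c$-integral to extract the Gamma density and identify $\xi_\gamma$, $\mc{L}_\gamma$ and $R$. This matches exactly the last half-page of the paper's proof (in the subsection for $\gamma<2$), up to the harmless rescaling by $\mu$ in the substitution. The density of the modulus $R$ and the ${\rm Vol}_{g_\tau}(M)$ bookkeeping are also handled correctly.

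However, there is a genuine gap at the heart of the proposal. You invoke Theorem \ref{mainth2} as a black box to conclude $Z<\infty$, but Theorem \ref{mainth2} is not an independent ingredient proved elsewhere in the paper --- it \emph{is} the finiteness assertion of Theorem \ref{defLiouvillemeas} (restated with the KPZ relation \eqref{KPZ2} making $\mathbf{c}_{\rm M}\leq 1$ explicit), and the paper establishes it \emph{inside} the proof of Theorem \ref{defLiouvillemeas}, in the two subsections treating $\gamma=2$ and $\gamma<2$. Citing it is therefore circular: it replaces the entire analytical content of the theorem with its own statement. What is missing is the whole argument controlling the integrand near $\partial\bbar{\mc{M}_{\bf g}}$: combining Wolpert's estimates \eqref{Wolpert1}--\eqref{Wolpert2} with the Weil--Petersson volume form \eqref{WPvol} to reduce the problem to a bound on $\E\big[\mc{G}_{g_\tau}^\gamma(M)^{Q\chi(M)/\gamma}\big]$; the decomposition of the GFF into small-eigenvalue modes plus the residual field $X'_g$ via Proposition \ref{greenpinching}, Lemma \ref{approximation} and Lemma \ref{boundvarphi1}; Kahane's convexity inequality to decouple the collars; the elementary H\"older-type inequality \eqref{elementary} with carefully designed random weights $(w_{j'})$ in the disconnecting case; and, for $\gamma=2$, the key Lemma \ref{driftout}, whose proof rests on the Brownian-bridge representation (Lemma \ref{lemXr}), a Girsanov change of measure, and the Green-function estimates of Lemma \ref{G_j} and Proposition \ref{lastestimateGg}. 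Without these, no finiteness is actually established. In particular, the qualitative difference between $\mathbf{c}_{\rm M}<1$ (leading exponential decay survives) and $\mathbf{c}_{\rm M}=1$ (exponentials cancel, requiring the sharp polynomial estimate $\prod_{j'}\ell_{j'}^{1-\delta}\prod_i\la_i^{1/2}$ for the GMC expectation) is entirely absent from the proposal, and this dichotomy is the conceptual core of the result.
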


Let us make some comment on the above result. The LQG measure  depends on the family of hyperbolic metrics $(g_\tau)_{\tau}$ but this  is not an issue since it enjoys the following invariance by reparametrization: if $(\psi_{\tau})_{\tau}$ is a family of orientation preserving diffeomorphisms, we get the following equality for all $\tau$
\begin{equation}\label{invdiffeo}
\E_{  (\psi_{\tau}^{*}g_\tau)_{\tau} }[   F(  \mathcal{L}_{\gamma} \circ \psi_\tau  )  | R = \tau ]= \E_{  (g_\tau)_{\tau} }[   F(  \mathcal{L}_{\gamma}  ) |  R = \tau ].
\end{equation}

\subsection{Proof of Theorem \ref{defLiouvillemeas} in the case $\gamma=2$ }
%%%%%%%%%%%%%%%%%%%%%%%%%%%%%%%%%%%%%%%
 
  Our purpose is to determine the behaviour of the partition function \eqref{QGpartition} of LQFT (for $\gamma=2$) at the boundary of the moduli space with $Z_{{\rm M}}(g)$ defined by \eqref{Zm} and  $\mathbf{c}_{  \mathbf{M}}=1$. According to the relation \eqref{explicit}, this amounts to showing that the integral 
\begin{equation}\label{partg2}
 \int_{\mathcal{M}_{\mathbf{g}}}     \Big(\frac{{\det}(P_{g_\tau}^*P_{g_\tau})}{({\det }'\Delta_{g_\tau})^{\mathbf{c}_{\mathbf{M}}+1}}\Big)^{\demi}  \E  \left [  \mc{G}_{g_\tau}^\gamma(M)^{\frac{Q}{\gamma}\chi(M) }     \right ]   \: d\tau 
\end{equation}
  is finite.  The singularities in this integral appear at the boundary of the moduli space, namely when the surface $(M,g)$ gets close to a surface with nodes $(M_0,g_0)$    by pinching $n_p$  geodesics with respective  lengths $(\ell_j)_j$ on $(M,g)$   (see section \ref{sectionsmall}). According to the explicit bounds  \eqref{Wolpert1} and \eqref{Wolpert2} for the product $ \Big(\frac{{\det}(P_{g_\tau}^*P_{g_\tau})}{({\det }'\Delta_{g_\tau})^{\mathbf{c}_{\mathbf{M}}+1}}\Big)^{\demi} $ and the expression for the Weil-Petersson measure \eqref{WPvol}, we can give an upper bound
\begin{equation}\label{product}
 C \Big(\prod_{j'=1}^{n_p}\ell_{j'}^{-2}\prod_{\lambda_i<1/4}\lambda_i ^{-1}\Big)\Big(\prod_{j=1}^{3\mathbf{g}-3}\ell_j\, d\ell_j d\theta_j\Big)
 \end{equation} 
   for the quantity
  $ \Big(\frac{{\det}(P_{g_\tau}^*P_{g_\tau})}{({\det }'\Delta_{g_\tau})^{\mathbf{c}_{\mathbf{M}}+1}}\Big)^{\demi} \,d\tau$   in the coordinate system $(\ell_j,\theta_j)_{j=1,\dots,3\mathbf{g}-3}$ associated to a pant decomposition.
Hence   it suffices to   check the integrability  of the expectation $\E\big[\mathcal{G}_g^\gamma(M)^{\frac{Q\chi(M)}{\gamma}}\big]$ with respect to the measure \eqref{product} near  $(M_0,g_0)$. It turns out that the mass of the  measure $\mathcal{G}_g^\gamma(M)$ will become very large when $g$ gets close to $g_0$ in the pinched region of the surface $(M,g)$, making the expectation $\E\big[\mathcal{G}_g^\gamma(M)^{\frac{Q\chi(M)}{\gamma}}\big]$ very small. We have to quantify the rate of decay of this expectation   to show integrability. Proposition \ref{lastestimateGg}    describes how the Green function behaves near the pinched geodesics. The purpose of what follows is to explain how these estimates transfer to the above expectation.  

\medskip
We assume $(M_0,g_0)$ possesses $m+1$ connected components, and we consider a neighborhood of $(M_0,g_0)$; there is a subpartition  $\{\gamma_1,\dots,\gamma_{n_p}\}$   of $M$ corresponding to the pinched geodesics. Let us denote by $(S_j)_{j=1,\dots,m+1}$ the connected components of $M\setminus (\cup_{j'=1}^{n_p}\gamma_{j'})$. Each $\gamma_{j'}$ has a collar neighborhood denoted $\mc{C}_{j'}$ (see \eqref{collarCj}, for simplicity of notation  we  remove the $g$ dependance in $\mc{C}_{j'}(g)$). We denote by $S'_j$ the set obtained by removing from $S_j$  all the collars
$$S'_j:=\bigcap_{j'=1}^{n_p}(S_j\setminus \mc{C}_{j'})$$
in such a way that  $M=\cup_{j=1}^{m+1}S'_j\cup_{j'=1}^{n_p}\mathcal{C}_{j'}$. Furthermore, for each $j=1,\dots,m+1$, we define $I_j=\{j'\in\big\{1,\dots,n_p\};S_j\cap \mc{C}_{j'}\not=\emptyset \big\}$ the set of indices $j'$ such that the collar $ \mc{C}_{j'}$ encounters $S_j$.

We define the following quantities for $x>0$ and $1\leq j'\leq n_p$
\begin{align*} 
 \mathcal{C}_{j'}^+:=\mathcal{C}_{j'}\cap\{\rho\geq 0\} &  &\mathcal{C}_{j'}^-:=\mathcal{C}_{j'}\cap\{\rho\leq 0\}\\
  \mathcal{C}_{j'}(x)^+:=\mathcal{C}_{j'}\cap\{x\leq \rho \}  & & \mathcal{C}_{j'}(x)^-:=\mathcal{C}_{j'}\cap\{  \rho\leq -x\},
\end{align*} 
here $\rho:\cup_{j'}\mc{C}_{j'}\to [-1,1]$ is the function in the collars so that the metric is given by \eqref{collarrho}. 
  %Using  the inclusion $\mathcal{C}_{j_0}\subset M$, we observe that the expectation under study is clearly less than 
 %$\E\big[\big( \mathcal{G}_g^\gamma(\mathcal{C}_{j_0})\big)^{\frac{Q\chi(M)}{\gamma}}\big]$. 
 Let us denote by $(\varphi_{i})_{1\leq i\leq m}$ the eigenfunctions associated to the small (non zero) eigenvalues   $(\lambda_i)_{1\leq i\leq m}$ and write   the Green function $G_g$ as $\sum_{1\leq i\leq m}\tfrac{1}{\lambda_i}\Pi_{\lambda_i}+A_g$. Consider  $X_g'$ the Gaussian field with covariance $2\pi A_g$, which is nothing but 
\begin{equation}\label{defX'}
X_g'=X_g-\sum_{1\leq i\leq m}(2\pi/\lambda_i)^{1/2}\frac{\langle X_g,\varphi_i \rangle}{(2\pi)^{1/2}}\varphi_i.
\end{equation}  
  The finite sequence $(\frac{\langle X_g,\varphi_i \rangle}{(2\pi)^{1/2}})_{1\leq i\leq m}$ is a sequence of i.i.d. standard Gaussian random variables, namely  with law $\mathcal{N}(0,1)$, which we denote $(a_i)_{1\leq i\leq m}$. Furthermore, $(a_i)_{1\leq i\leq m}$ and $X_g'$ are independent. In case the surface $(M_0,g_0)$ is not disconnected, write $A_g=G_g$ and $X'_g=X_g$. We introduce the random measure: $\forall A\subset M$ Borel set 
 $$\mathcal{G}'(A):=\lim_{\eps\to 0}(-\ln\eps)^{1/2}\eps^2\int_A e^{2 X'_{g,\eps}} \,d{\rm v}_g,$$
where $X'_{g,\eps}$ is the regularization of $X'_g$ as in Lemma \ref{Xeps}.   Notice that the convergence in probability of this measure is ensured by the convergence in probability of the same measure involving the field $X_g$ instead of $X'_g$ (both field coincide up to an additive continuous field so that convergence of one measure is equivalent to convergence of the other one). The main technical estimate we need in the proof is the following

\begin{lemma}\label{driftout}
 Let $U_0\subset \mc{M}_{\bf g}$ be a neighborhood of some metric with nodes $g_0$.  For any $j'$, $\delta>0$, $q>0$, $\lambda\in [0,1]$ there exists some constant $C$ such that for all $g\in U_0$,  $\forall\ell,\ell'\geq \ell_{j'}$ and $A,B>0$ and $\psi:\R\to\R$ of class  $C^1$ such that $\psi\circ F_{j'}(-1)=\psi\circ F_{j'}(1)=0$
  \begin{align*}
\E\Big[    \Big(\int_{\mathcal{C}_{j'}}\phi e^{\psi\circ F_{j'}}\,d\mathcal{G}'\Big)^{-q} \Big] \leq C_q A^{-q\lambda}B^{-q(1-\lambda)}(\ell\ell')^{\tfrac{1}{2}-\delta}\exp\Big(C\int_{1\leq |r|\leq \tfrac{2\pi}{\min(\ell,\ell')^{1-\delta}}}|\psi'(r)|^2\,dr \Big)
\end{align*}
where the function $\phi$ is defined on the collar $\mathcal{C}_{j'} $ by $\phi(\rho)=A\mathbf{1}_{\mathcal{C}_{j'}(\ell)^+}+B\mathbf{1}_{\mathcal{C}_{j'}(\ell')^-}$, and   $F_{j'}(\rho):=\tfrac{2\pi}{\ell_{j'}}\arctan(\tfrac{\ell_{j'}}{\rho})$. The constant $C_q$ depends on $q$ and the mapping $q\in [0,+\infty[\mapsto C_q$ is locally bounded.
\end{lemma}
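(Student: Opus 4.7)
My plan is to reduce the estimate to negative moments of Brownian-driven Gaussian multiplicative chaos, then dispose of the drift $\psi\circ F_{j'}$ via a Cameron--Martin shift.

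\emph{Field decomposition on the collar.} Using Proposition~\ref{lastestimateGg} together with the explicit form of $G_{g_{j'}}$ in Lemma~\ref{G_j}, I would write $X'_g|_{\mc{C}_{j'}}=Y+Z$, with $Y$ a Gaussian field of covariance $2\pi G_{g_{j'}}$ (the model cylinder kernel) and $Z$ a Gaussian remainder whose covariance is uniformly bounded for $g\in U_0$, so that $d\mc{G}'$ restricted to $\mc{C}_{j'}$ is, up to constants uniform in $g$, comparable to the GMC measure $d\mc{G}_Y$. Decomposing $Y=Y_0+Y_\perp$ along the zero Fourier mode in $\theta$, the $k=0$ covariance in Lemma~\ref{G_j} shows that in the coordinate $r=F_{j'}(\rho)$, $Y_0$ has covariance $\min(r,r')+\mathcal{O}(1)$ on each sign-fixed half $\mc{C}_{j'}^\pm$ and exponentially small cross-covariance of order $e^{-\pi/(2\ell_{j'})}$. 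Hence $Y_0$ is essentially a pair of independent Brownian motions $B^\pm$ in $r$ on $\mc{C}_{j'}^\pm$, while the transverse field $Y_\perp$ has uniformly bounded covariance and is independent of $Y_0$.

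\emph{Pointwise split into the two sides.} With $J^\pm:=\int_{\mc{C}_{j'}(\ell^\pm)^\pm} e^{\psi\circ F_{j'}}\,d\mc{G}'$ (where $\ell^+:=\ell$, $\ell^-:=\ell'$), the structure of $\phi$ and the elementary inequality $(u+v)^{-q}\leq u^{-q\lambda}v^{-q(1-\lambda)}$ for $u,v>0$, $\lambda\in[0,1]$ yield pointwise
\[\Big(\int_{\mc{C}_{j'}}\phi\,e^{\psi\circ F_{j'}}\,d\mc{G}'\Big)^{-q}\leq A^{-q\lambda}B^{-q(1-\lambda)}(J^+)^{-q\lambda}(J^-)^{-q(1-\lambda)}.\]
Taking expectation and conditioning on $\sigma(Y_\perp,Z)$ (under which $J^+$ and $J^-$ become conditionally independent), the problem reduces to establishing, for each $s>0$,
\[\E\bigl[(J^\pm)^{-s}\,\big|\,Y_\perp,Z\bigr]\leq C_s\,(\ell^\pm)^{1/2-\delta}\exp\Big(C\!\int_{1\leq|r|\leq 2\pi/(\ell^\pm)^{1-\delta}}|\psi'(r)|^2\,dr\Big),\]
with a $C_s$ locally bounded in $s$ and uniform in $g\in U_0$ and in the realizations of $Y_\perp, Z$.

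\emph{Cameron--Martin shift and Brownian GMC negative moment.} By symmetry, treat $J^+$. After $r=F_{j'}(\rho)$, conditionally on $(Y_\perp,Z)$ one obtains $J^+\asymp\int_0^{F_{j'}(\ell)}e^{2B^+(r)-2r+\psi(r)}W(r)\,dr$ for a positive bounded random weight $W$. Applying a Cameron--Martin shift $B^+\mapsto B^+-\psi/2$ on $[1,2\pi/\ell^{1-\delta}]$ (admissible thanks to $\psi\circ F_{j'}(1)=0$ and to the fact that, by standard Brownian tail bounds, the contribution of $r\geq 2\pi/\ell^{1-\delta}$ to $J^+$ is negligible with overwhelming probability) removes $\psi$ from the exponent, and Cauchy--Schwarz combined with the standard exponential-martingale identity bounds the Radon--Nikodym factor by $\exp(C\int_{1\leq r\leq 2\pi/\ell^{1-\delta}}|\psi'|^2\,dr)$. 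What remains is the estimate
\[\E\bigl[(J_0^+)^{-s}\bigr]\leq C_s\,\ell^{1/2-\delta},\qquad J_0^+:=\int_0^{F_{j'}(\ell)}e^{2B^+(r)-2r}W(r)\,dr.\]

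\emph{Main obstacle.} This sharp negative-moment bound for Brownian-driven $\gamma=2$ GMC on the long interval $[0,2\pi/\ell]$ is the technical heart of the proof. The exponent $1/2$ reflects the fluctuation scale of $\min_r(B^+(r)-r)$ via the reflection principle, while the $\delta$ slack absorbs the subpolynomial $(\log)^{1/2}$ corrections specific to the critical $\gamma=2$ regularisation of Proposition~\ref{GMCprop}. I would prove it by chopping $[0,2\pi/\ell]$ into unit blocks, using the Markov property of $B^+$ to generate an essentially i.i.d.\ sequence of positive ``block masses'' obeying uniform lower tail bounds (from the multifractal analysis of \cite{Rnew12}), and then applying a reflection-principle estimate for the minimum of a random walk to control the probability that the entire $J_0^+$ is anomalously small.
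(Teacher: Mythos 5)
Your high-level blueprint—split the collar field into a radial (Brownian) piece plus a bounded transverse piece, use $(u+v)^{-q}\leq u^{-q\lambda}v^{-q(1-\lambda)}$, kill the $\psi$-drift by a Girsanov/Cameron--Martin shift, and extract the factor $(\ell\ell')^{1/2-\delta}$ from the small-value probability of a Brownian motion on a long time interval—is the paper's strategy. However, there is a genuine error in the reduced object you arrive at, and it would make the estimate fail.

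You claim the reduction to $J_0^+=\int_0^{F_{j'}(\ell)}e^{2B^+(r)-2r}W(r)\,dr$ with $W$ a \emph{positive bounded} random weight. The $-2r$ drift should not be there. The renormalization $-2\,\E[(X^r)^2]\approx -2r$ coming from $e^{2X^r-2\E[(X^r)^2]}$ is exactly cancelled by the $+2r$ contribution of the Robin mass $m_{g_{j'}}$ in the collar (compare \eqref{boundrobin}, which contains the term $\tfrac{1}{\pi\ell_{j'}}\big(\tfrac{\pi^2}{4}-\arctan(\rho/\ell_{j'})^2\big)\approx F_{j'}(\rho)/(2\pi)$; after the factor $4\pi$ from $\tfrac{\gamma^2}{2}\cdot 2\pi m$ this produces $+2F_{j'}(\rho)\approx+2r$). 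This cancellation is precisely the observation made in the paper's estimate $4\pi m_{g_{j'}}-2\E[(X^r)^2]\geq 2\ln|\rho|-2C^2h^2$: the large linear-in-$r$ terms cancel, and what remains, together with the $\rho^{-2}$ Jacobian of $r=F_{j'}(\rho)$, leaves the measure $e^{2B^+(r)}\,dr\otimes(\text{stationary GMC})$ with \emph{no} linear drift. Your $-2r$ changes the qualitative behaviour: $\int_0^\infty e^{2B(r)-2r}W(r)\,dr$ with $W$ bounded is a.s.\ finite, so $\E[(J_0^+)^{-s}]\geq\E\big[(\int_0^\infty\cdots)^{-s}\big]>0$ is bounded away from zero as $\ell\to0$, and one cannot get the claimed $C_s\,\ell^{1/2-\delta}$. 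Equivalently, the reflection-principle bound $\P(\sup_{r\leq T}B(r)\leq n+1)\lesssim (n+1)T^{-1/2}$ gives the $\ell^{1/2}$ only for driftless Brownian motion; for $B(r)-r$ the analogous probability converges to a positive constant.

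Two smaller remarks. (i) The claim that conditioning on $\sigma(Y_\perp,Z)$ makes $J^+$ and $J^-$ conditionally independent ignores the nonzero cross-covariance of the zero mode across $\rho=0$ (the rank-one term $\tfrac{\ell_{j'}}{\pi^2}F_{j'}(\rho)F_{j'}(\rho')$ for $\rho\rho'<0$ in \eqref{boundsigndiff}); the paper deals with this by isolating it as a single extra Gaussian $\delta^{(1)}$ in the Brownian-bridge representation (Lemma~\ref{lemXr}) and sweeping it into the drift removed by Girsanov. (ii) There is no need to argue that the contribution of $r\geq 2\pi/\ell^{1-\delta}$ is ``negligible with overwhelming probability'': dropping that region only decreases $J^+$ and hence only increases the negative moment, so the restriction is free.
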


 The proof of this lemma is postponed to the end of this subsection. As a direct consequence we claim   
 \begin{corollary}\label{Qpunc}
 For any $j'$, $\delta>0$ $q>0$, there exists some constant $C$ such that for all $g\in U_0$ and $\ell,\ell'\geq \ell_{j'}$ and $A,B>0$
\begin{align*}
&1) \quad \E\Big[   \mathcal{G}'( \mathcal{C}_{j'}(\ell)^+)^{-q}\Big]+\E\Big[   \mathcal{G}'( \mathcal{C}_{j'}(\ell)^-)  ^{-q}\Big]\leq C\ell^{1/2-\delta},\\
&2) \quad \E\Big[    (A \mathcal{G}'( \mathcal{C}_{j'}(\ell)^+)+B \mathcal{G}'( \mathcal{C}_{j'}(\ell')^-))^{-q}\Big] \leq C(AB)^{-q/2}(\ell  \ell')^{1/2-\delta}.
\end{align*}
 \end{corollary}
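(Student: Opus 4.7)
The plan is to deduce both bounds directly from Lemma~\ref{driftout} by a careful specialization of the parameters; no new analytic input is required. The common specialization is to take the drift $\psi \equiv 0$, which simultaneously collapses the factor $e^{\psi \circ F_{j'}}$ to $1$ and reduces $\exp\bigl(C\int_{1\leq|r|\leq 2\pi/\min(\ell,\ell')^{1-\delta}}|\psi'(r)|^2 \, dr\bigr)$ to $1$. Under this choice, the integrand of the lemma becomes
\[
\int_{\mathcal{C}_{j'}} \phi\, d\mathcal{G}' = A\,\mathcal{G}'(\mathcal{C}_{j'}(\ell)^+) + B\,\mathcal{G}'(\mathcal{C}_{j'}(\ell')^-),
\]
which is exactly the quantity that appears in both statements of the corollary.

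For part~2) I will apply Lemma~\ref{driftout} directly with $\lambda = 1/2$; the prefactor $A^{-q\lambda}B^{-q(1-\lambda)}$ collapses to $(AB)^{-q/2}$ and one immediately obtains the announced bound $C(AB)^{-q/2}(\ell\ell')^{1/2-\delta}$ with the stated $\delta$-dependence.

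For part~1) I will extract the two masses separately. For the ``$+$'' estimate I take $\ell' = \ell$, $A = 1$ and choose $\lambda = 1$, so that the $B^{-q(1-\lambda)}$ factor equals $1$ and the bound reads
\[
\E\Big[\bigl(\mathcal{G}'(\mathcal{C}_{j'}(\ell)^+) + B\,\mathcal{G}'(\mathcal{C}_{j'}(\ell)^-)\bigr)^{-q}\Big] \leq C_q\, (\ell\cdot\ell)^{\frac{1}{2} - \delta} = C_q\,\ell^{1-2\delta}
\]
uniformly in $B > 0$. Letting $B \searrow 0$ and invoking monotone convergence (the integrand increases pointwise to $\mathcal{G}'(\mathcal{C}_{j'}(\ell)^+)^{-q}$ since $\mathcal{G}'$ is a positive measure) yields $\E[\mathcal{G}'(\mathcal{C}_{j'}(\ell)^+)^{-q}] \leq C_q\,\ell^{1-2\delta} \leq C'\,\ell^{1/2-\delta}$, the last inequality using $\ell \leq 1$ in the regime relevant near $\partial\bbar{\mc{M}_{\bf g}}$ (where the indicator $\mathbf{1}_{\mathcal{C}_{j'}(\ell)^+}$ is non-trivial at all). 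The ``$-$'' piece is treated symmetrically by taking $\lambda = 0$, $B = 1$, and letting $A \searrow 0$.

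Since the entire technical substance is concentrated in Lemma~\ref{driftout}, there is no serious obstacle at this step. The only minor subtlety is justifying the degenerate limit $B \searrow 0$ (resp.\ $A \searrow 0$), which the positivity of $\mathcal{G}'$ makes immediate via monotone convergence; everything else is bookkeeping on the exponent of $\ell$.
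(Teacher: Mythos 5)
Your proposal is correct, and it fills in precisely the short argument the paper leaves implicit: the text states the corollary "as a direct consequence" of Lemma~\ref{driftout} without spelling it out, and the specialization $\psi\equiv 0$, $\lambda=1/2$ for part~2), and $\lambda=1$ (or $\lambda=0$) together with the monotone limit $B\searrow 0$ (resp.\ $A\searrow 0$) for part~1), is exactly how one reads the corollary off the lemma.

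One remark worth keeping in mind: the step you flag is genuinely necessary. One cannot simply set $A=B=1$ and drop the second mass, because $\bigl(\mathcal{G}'(\mathcal{C}_{j'}(\ell)^+)+\mathcal{G}'(\mathcal{C}_{j'}(\ell)^-)\bigr)^{-q}\leq \mathcal{G}'(\mathcal{C}_{j'}(\ell)^+)^{-q}$ goes the wrong way; so the endpoint $\lambda=1$ (valid in the lemma, whose proof relies on $(a+b)^{-m}\leq a^{-\lambda m}b^{-(1-\lambda)m}$ for $\lambda\in[0,1]$) plus the degenerate limit is the right mechanism, and monotone convergence justifies passing to $B=0$ since $\mathcal{G}'$ is a nonnegative random measure. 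The elementary conversion $\ell^{1-2\delta}\leq \ell^{1/2-\delta}$ uses $\ell\leq 1$, which indeed holds in the regime where $\mathcal{C}_{j'}(\ell)^{\pm}$ is nonempty; you are right to note this, as the constraint $\ell,\ell'\leq 1$ is implicit in both the lemma and the corollary.
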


 Now we complete the proof while considering two main situations: either the surface $(M_0,g_0)$ is disconnected or not. %Set $j_0:={\rm argmin}_j\ell_j$. 

$\bullet$ \textbf{Case $(M_0,g_0)$ is not  disconnected: } In that case the measure \eqref{product} can be estimated from above by the measure
 \begin{equation}\label{productnodisc}
 C \Big(\prod_{j'=1}^{n_p}\ell_{j'}^{-1}\Big) \prod_{j=1}^{3\mathbf{g}-3}d\ell_j d\theta_j.
 \end{equation} 
 Concerning the contribution of the GMC expectation in \eqref{partg2}, we claim
 \begin{lemma}\label{teclem1}
 Assume $(M_0,g_0)$ is not  disconnected. For any $\delta>0$,  there exists some constant $C$ such that for all $g\in U_0$, 
 \begin{align}\label{eqteclem1}
\E\big[\mathcal{G}_g^\gamma(M)^{\frac{Q\chi(M)}{\gamma}}\big]\leq &    C\prod_{j'=1}^{n_p}\ell_{j'}^{ 1-\delta}
\end{align}
 \end{lemma}
It is then clear that the estimate \eqref{eqteclem1} is integrable with respect to \eqref{productnodisc}, which completes our argument in the case when  $(M_0,g_0)$ is not  disconnected.
 
\noindent {\it Proof of Lemma \ref{teclem1}.} Recalling that $\chi(M)<0$ and using the elementary inequality $(a+b)^{-1}\leq b^{-1}$ for $a,b>0$ we have
 \begin{align*}
\E\big[\mathcal{G}_g^\gamma(M)^{\frac{Q\chi(M)}{\gamma}}\big]\leq & \E\big[  \big(\sum_{j'=1}^{n_p}\mathcal{G}'(\mathcal{C}_{j'}) \big)^{\frac{Q\chi(M)}{\gamma}}\big] .
\end{align*}
Now we observe that the cross covariances of the field $X'_g$ in the various regions $\mathcal{C}_{j'}$  are bounded by some uniform constant, i.e. $\sup_{g\in U_0}\sup_{j_1'\not=j_2'}\sup_{x\in \mathcal{C}_{j_1'},y\in \mathcal{C}_{j_2'}}|\E[X'_g(x)X'_g(y)]|\leq C$ (see Proposition \eqref{greenpinching}). Kahane's inequality \cite[Lemma 1]{cf:Kah} then tells us that, considering   independent copies $(\widehat{\mathcal{G}}'_{j'})_{1\leq j'\leq n_p}$ of $\mathcal{G}'$, the latter expectation is less than (for some irrelevant constant $C$)
$$C\E\big[  \big(\sum_{j'=1}^{n_p}\widehat{\mathcal{G}}'_{j'}(\mathcal{C}_{j'}) \big)^{\frac{Q\chi(M)}{\gamma}}\big].$$
Then, we use  the following elementary inequality valid for $b_1,\dots,b_n\geq 0$ and $w_1,\dots,w_n\geq 0$ with $\sum_{j=1}^n w_j=1$ and $q>0$  
\begin{equation}\label{elementary}
(\sum_{i=1}^nb_i)^{-q}\leq \prod_{i=1}^nb_i^{-w_i q}  
\end{equation} 
to deduce that the above expectation is less  than $\prod_{j'=1}^{n_p}\E\big[ \big(\widehat{\mathcal{G}}'_{j'}(\mathcal{C}_{j'}) \big)^{\frac{Q\chi(M)}{\gamma n_p}}\big]$. Combining with  Corollary \ref{Qpunc}, we obtain
\begin{align*}
\E\big[\mathcal{G}_g^\gamma(M)^{\frac{Q\chi(M)}{\gamma}}\big]\leq &    C\prod_{j'=1}^{n_p}\ell_{j'}^{ 1-\delta}
\end{align*}
 for some arbitrary $\delta>0$. \qed

\bigskip 
 
$\bullet$ \textbf{Case $(M_0,g_0)$ is  disconnected: }   In this case, finiteness of the integral   \eqref{partg2} restricted to a neighborhood of $(M_0,g_0)$ results from the combination of   \eqref{product} together with  the crude estimate \eqref{estimeevpSWY} on the eigenvalues $\la_j(g)$ in terms of the lengths 
 $\ell_k(g)$ and the following lemma:
 \begin{lemma}\label{teclem2}
 Assume $(M_0,g_0)$ is  disconnected. For any $\delta>0$,  there exists some constant $C$ such that for all $g\in U_0$, 
 \begin{align}\label{eqteclem2}
\E\big[\mathcal{G}_g^\gamma(M)^{\frac{Q\chi(M)}{\gamma}}\big]\leq &      C\prod_{i=1}^m\lambda_i^{1/2}\prod_{j'=1}^{n_p}\ell_{j'}^{1-\delta}.
\end{align}
 \end{lemma}

 \noindent {\it Proof. } Recall that the eigenfunctions $(\varphi_i)_{i=1,\dots,m}$ converge uniformly on the compact subsets of each $S_j$ respectively towards some fixed value denoted $v_{ij}$. From  Lemma \ref{boundvarphi1}, we have the estimate in the region ${|\rho|>C\ell_{j'}}$ of the cusp $\mathcal{C}_{j'}\cap S_j$ 
\begin{equation}\label{supbound}
\varphi^+_{ij} \leq \varphi_{i} \leq \varphi^-_{ij} 
\end{equation}
where we have set
\begin{align}
\varphi^+_{ij}  :=&v_{ij}(1-\mathcal{S}(v_{ij})C\epsilon)|\rho|^{-\lambda_i +C\mathcal{S}(v_{ij})\lambda_i^2}-C\epsilon\label{phij1}\\
\varphi^-_{ij}  :=&v_{ij}(1+\mathcal{S}(v_{ij})C\epsilon)|\rho|^{-\lambda_i -C\mathcal{S}(v_{ij})\lambda_i^2}+C\epsilon \label{phij2}
\end{align} 
 for some constant $C>0$; here we have denoted by  $\mathcal{S}$    the function $x\in\R\mapsto  \mathcal{S}(x):={\rm sign}(x)$.
 Restricting the integral   to the cusp regions and then using \eqref{supbound}, we have the estimate conditionally on the $  (a_i)_{1\leq i \leq m}$
\begin{align*}
\E\big[\mathcal{G}_g^\gamma(M)^{\frac{Q\chi(M)}{\gamma}}| (a_i)_{1\leq i \leq m}\big]\leq &    \E\big[\mathcal{G}_g^\gamma(\cup_{j'} \mathcal{C}_{j'})^{\frac{Q\chi(M)}{\gamma}}| (a_i)_{1\leq i \leq m}\big]\\
=&  \E\big[\big(\sum_{j'}\int_{\mathcal{C}_{j'}}e^{\sum_i2(2\pi/\lambda_i)^{1/2}a_i\varphi_i}d\mathcal{G}'\big)^{\frac{Q\chi(M)}{\gamma}}| (a_i)_{1\leq i \leq m}\big]\\
\leq &  \E\big[\big(\sum_{j,j'}\int_{\mathcal{C}_{j'}\cap S_j}e^{\sum_i2(2\pi/\lambda_i)^{1/2}a_i\varphi_{ij}^{\mathcal{S}(a_i)}}d\mathcal{G}'\big)^{\frac{Q\chi(M)}{\gamma}}| (a_i)_{1\leq i \leq m}\big],
\end{align*}
where $\varphi_{ij}^{\mathcal{S}(a_i)}=\varphi_{ij}^{+} $ if $a_i\geq 0$ and $\varphi_{ij}^{\mathcal{S}(a_i)}  =\varphi_{ij}^{-}$ if $a_i<0$. Once again we can use Kahane's convexity inequality to show that there exists a collection of mutually independent random measure $(\mathcal{G}'_{(j')})_{1\leq j'\leq n_p}$ (and independent of the $ (a_i)_{1\leq i \leq m} $) such that, for some irrelevant constant $C>0$
\begin{align*}
 \E\big[&\big(\sum_{j,j'}\int_{\mathcal{C}_{j'}\cap S_j}  e^{\sum_i2(2\pi/\lambda_i)^{1/2}a_i\varphi_{ij}^{\mathcal{S}(a_i)}}d\mathcal{G}'\big)^{\frac{Q\chi(M)}{\gamma}}| (a_i)_{1\leq i \leq m} \big]\\
& \leq   C \E\big[\big(\sum_{j,j'}\int_{\mathcal{C}_{j'}\cap S_j}e^{\sum_i2(2\pi/\lambda_i)^{1/2}a_i\varphi_{ij}^{\mathcal{S}(a_i)}}d\mathcal{G}'_{(j')}\big)^{\frac{Q\chi(M)}{\gamma}}| (a_i)_{1\leq i \leq m} \big].
\end{align*}

Now we choose a collection of real-valued random variables $(r_{j})_{1\leq j\leq m+1}$ that are  non negative with $\sum_{j} r_{j}=1$ and measurable with respect to the family $(a_i)_{1\leq i \leq m} $. The precise choice of the family $(r_{j})_{1\leq j\leq m+1}$ will be made later when appropriate.  Given those $(r_{j})_{1\leq j\leq m+1}$, we construct new weights $(w_{j'})_{1\leq j'\leq n_p}$ as follows.  For each $j=1,\dots,m+1$, we define $I^+_j=\{j'\in\big\{1,\dots,n_p\};S_j\cap \mc{C}^+_{j'}\not=\emptyset \big\}$ the set of indices $j'$ such that the collar $ \mc{C}_{j'}^+$ encounters $S_j$ and we define similarly $I^-_j$. Also, for each $j'=1,\dots,n_p$, there exists a unique $j$ such that $ \mc{C}^+_{j'}\subset S_j$ and we denote by $j'_+$ that index. Similarly for $j'_-$. Finally for $j'=1,\dots,n_p$, we define
\begin{equation}
w_{j'}^+=\frac{r_{j'_+}}{|I^+_{j'_+}|+|I^-_{j'_+}|},\quad w_{j'}^-=\frac{r_{j'_-}}{|I^+_{j'_-}|+|I^-_{j'_-}|}\quad \text{and }\quad w_{j'}=w_{j'}^++w_{j'}^-.
\end{equation}
The first observation is that
\begin{equation}\label{convex}
\sum_{j'=1,\dots,n_p}w_{j'}=1.
\end{equation}
Indeed 
\begin{align*}
\sum_{j'=1,\dots,n_p}w_{j'}=&\sum_{j=1}^{m+1}\sum_{j'\in I^+_j}w_{j'}^++\sum_{j=1}^{m+1}\sum_{j'\in I^-_j}w_{j'}^-\\
=&\sum_{j=1}^{m+1}\sum_{j'\in I^+_j}\frac{r_{j}}{|I^+_{j}|+|I^-_{j}|}+\sum_{j=1}^{m+1}\sum_{j'\in I^-_j}\frac{r_{j}}{|I^+_{j}|+|I^-_{j}|}\\
=&\sum_{j=1}^{m+1}r_j=1.
\end{align*}
Relation \eqref{convex} allows us to use   inequality \eqref{elementary}. Together with independence of the measures $(\mathcal{G}'_{(j')})_{1\leq j'\leq n_p}$ conditionally on the $(a_i)_{1\leq i \leq m}$, this yields
\begin{align*}
\E\big[&\mathcal{G}_g^\gamma(M)^{\frac{Q\chi(M)}{\gamma}} | (a_i)_{1\leq i \leq m} \big]\\
\leq &C\prod_{j'=1}^{n_p}  \E\big[ \big( \sum_j\int_{\mathcal{C}_{j'} \cap S_j }e^{\sum_i2(2\pi/\lambda_i)^{1/2}a_i\varphi_{ij}^{\mathcal{S}(a_i)}}d\mathcal{G}'  \big)^{w_{j'}\frac{Q\chi(M)}{\gamma}}| (a_i)_{1\leq i \leq m}\big].
\end{align*} 
 Notice that the sum over $j$ in the latter expression contains at most two non trivial terms as a cusp $\mathcal{C}_{j'} $ possesses at most two non trivial intersections with the $S_j$'s. More precisely 
 \begin{align*}\sum_j\int_{\mathcal{C}_{j'} \cap S_j }e^{\sum_i2(2\pi/\lambda_i)^{1/2}a_i\varphi_{ij}^{\mathcal{S}(a_i)}}d\mathcal{G}' = & \int_{\mathcal{C}_{j'}^-  }e^{\sum_i2(2\pi/\lambda_i)^{1/2}a_i\varphi_{ij'_-}^{\mathcal{S}(a_i)}}d\mathcal{G}'
+ \int_{\mathcal{C}_{j'}^+}e^{\sum_i2(2\pi/\lambda_i)^{1/2}a_i\varphi_{ij'_+}^{\mathcal{S}(a_i)}}d\mathcal{G}'.
\end{align*}

Now introduce $t_{j'}^+= F_{j'}^{-1}(1)$ and $t_{j'}^-= F_{j'}^{-1}(-1)$ and rewrite   the above integrals as
$$\int_{\mathcal{C}_{j'} }\phi e^{\psi\circ F_{j'}}d\mathcal{G}'$$
%\begin{align*}
%\int_{\mathcal{C}_{j'}^-  }e^{\sum_i2(2\pi/\lambda_i)^{1/2}a_i\varphi_{ij^-}^{\mathcal{S}(a_i)}}d\mathcal{G}'&=e^{\sum_i2(2\pi/\lambda_i)^{1/2}a_i\varphi_{ij}^{\mathcal{S}(a_i)}(t_{j'}^-)}\int_{\mathcal{C}_{j'}^-  }e^{\sum_i2(2\pi/\lambda_i)^{1/2}a_i(\varphi_{ij}^{\mathcal{S}(a_i)}-\varphi_{ij}^{\mathcal{S}(a_i)}(t_{j'}^-))}d\mathcal{G}'
%\\
% \int_{\mathcal{C}_{j'}^+}e^{\sum_i2(2\pi/\lambda_i)^{1/2}a_i\varphi_{ij^+}^{\mathcal{S}(a_i)}}d\mathcal{G}'&=e^{\sum_i2(2\pi/\lambda_i)^{1/2}a_i\varphi_{ij}^{\mathcal{S}(a_i)}(t_{j'}^+)}\int_{\mathcal{C}_{j'} ^+}e^{\sum_i2(2\pi/\lambda_i)^{1/2}a_i(\varphi_{ij}^{\mathcal{S}(a_i)}-\varphi_{ij}^{\mathcal{S}(a_i)}(t_{j'}^+))}d\mathcal{G}'
%\end{align*}
 in view of applying Lemma \ref{driftout}, with    $\ell=\ell'=\ell_{j'}$ 
 \begin{align*}
 \phi&= e^{\sum_i2(2\pi/\lambda_i)^{1/2}a_i\varphi_{ij'_+}^{\mathcal{S}(a_i)}(t_{j'}^+)}\mathbf{1}_{\mathcal{C}_{j'}^{+}}+e^{\sum_i2(2\pi/\lambda_i)^{1/2}a_i\varphi_{ij_-'}^{\mathcal{S}(a_i)}(t_{j'}^-)}\mathbf{1}_{\mathcal{C}_{j'}^-}\\
 \psi &=\sum_i2(2\pi/\lambda_i)^{1/2}a_i\Big((\varphi_{ij_-'}^{\mathcal{S}(a_i)}\circ F_{j'}^{-1}-\varphi_{ij_-'}^{\mathcal{S}(a_i)}(t_{j'}^-))\mathbf{1}_{\mathcal{C}_{j'}^-}+ (\varphi_{ij_+'}^{\mathcal{S}(a_i)}\circ F_{j'}^{-1}-\varphi_{ij_+'}^{\mathcal{S}(a_i)}(t_{j'}^{+}))\mathbf{1}_{\mathcal{C}_{j'}^+}\Big).
 \end{align*}
Notice that $\psi\circ F_{j'}(1)=\psi\circ F_{j'}(-1)=0$.  Then Lemma \ref{driftout} with $\lambda=\frac{w_{j'}^+}{w_{j'}}$ gives the estimate
 \begin{align*}
  \E\big[ & \big( \sum_j\int_{\mathcal{C}_{j'} \cap S_j }e^{\sum_i2(2\pi/\lambda_i)^{1/2}a_i\varphi_{ij}^{\mathcal{S}(a_i)}}d\mathcal{G}')\big)^{w_{j'}\frac{Q\chi(M)}{\gamma}}| (a_i)_{1\leq i \leq m}\big] \\
  \leq & C\Big(  e^{ \tfrac{Q\chi(M)}{ \gamma}\sum_i2(2\pi/\lambda_i)^{1/2}a_i  \big(w_{j'}^-\varphi_{ij_-'}^{\mathcal{S}(a_i)}(t_{j'}^-)+w_{j'}^+\varphi_{ij'_+}^{\mathcal{S}(a_i)}(t_{j'}^+)\big)}\Big)\ell_{j'}^{1-\delta}e^{C\sum_{i}\lambda_ia_i^2 }. 
\end{align*}
To estimate the quantity $\int_{1\leq |r|\leq \frac{2\pi}{\ell_{j'}^{1-\delta}}}|\psi'(r)|^2\,dr$ appearing in the conclusion of Lemma \ref{driftout},  we have used the chain rule formula for derivatives combined with the following  elementary estimates for $ F_{j'}^{-1}(r)=\frac{\ell_{j'}}{\tan\frac{\ell_{j'}r}{2\pi}}$ valid    for all $r\in [1,\frac{2\pi}{\ell_{j'}^{1-\delta}}]$ and for some irrelevant constant $c>0$ 
$$\frac{c^{-1}}{r}\leq F_{j'}^{-1}(r)\leq \frac{c }{r} \quad \text{ and }-\frac{c }{r^2}\leq (F_{j'}^{-1})'(r)\leq -\frac{c^{-1} }{r^2}.$$
Combining with the expressions \eqref{phij1}+\eqref{phij2} we obtain the estimate $$|\psi'(r)|^2\leq C(\sum_i\lambda_i^{1/2}a_i|r|^{\lambda_i-1})^2,$$ yielding in turn, after integrating (and recalling that $0<\lambda_i<1/4$),
 $$\int_{1\leq |r|\leq \frac{2\pi}{\ell_{j'}^{1-\delta}}}|\psi'(r)|^2\,dr\leq C\sum_{i,i'}a_ia_{i'}(\lambda_i\lambda_{i'})^{1/2}\leq C\sum_{i}\lambda_ia_i^2 $$
for some global constant $C$ (which may change along lines). Let us now make a remark. In the statement of Lemma \ref{driftout}, the exponent $q$ is deterministic whereas here it is  random as a measurable function of the $w_{j'}$ (hence of the $a_i$'s), namely $q=w_{j'}\frac{Q\chi(M)}{\gamma}$. Yet conditionally on the $a_i$'s the exponent can be seen as deterministic. We can thus apply Lemma \ref{driftout}. The resulting constant $C_q$ is therefore a measurable function of the $a_i$'s. Yet, because $C_q$ is locally bounded as a function of $q$ and because $w_{j'}\in[0,1]$ for all $j'$, we can obtain an overall  deterministic constant $C$ in the above inequality by taking $C=\sup_{q\in [0,\frac{Q\chi(M)}{\gamma}]} C_q$.

So far, we have established that
\begin{align*}
\E\big[\mathcal{G}_g^\gamma(M)^{\frac{Q\chi(M)}{\gamma}}\big]\leq C\E\Big[  e^{C\sum_{i}\lambda_ia_i^2 } e^{ \tfrac{Q\chi(M)}{ \gamma}\sum_{i,j'}2(2\pi/\lambda_i)^{1/2}a_i  \big(w_{j'}^-\varphi_{ij_-'}^{\mathcal{S}(a_i)}(t_{j'}^-)+w_{j'}^+\varphi_{ij'_+}^{\mathcal{S}(a_i)}(t_{j'}^+)\big)}   \Big]\prod_{j'}\ell_{j'}^{1-\delta}.
\end{align*}
We can convert the sums over $j'$ in the above expectation   into   sums of the type  $\sum_j\sum_{j'\in I^{\pm}_j}$ and use the relation $\varphi_{ij'_{\pm}}^{\mathcal{S}(a_i)}(t_{j'}^\pm)=v_{ij } +\mc{O}(\lambda_i)+\mc{O}(\epsilon)$ for $j'\in I_j^{\pm} $ (the $\mc{O}$ entering this relation are uniform w.r.t. the $a_i$'s as easily seen from the expressions \eqref{phij1} and \eqref{phij2}) to get for some $C>0$
\begin{align}
\E\big[&\mathcal{G}_g^\gamma(M)^{\frac{Q\chi(M)}{\gamma}}\big]\label{estgq1}\\
\leq & C\E\Big[  e^{C\sum_{i}\lambda_ia_i^2 } 
e^{C \sum_i\frac{|a_i|}{\sqrt{\lambda_i}}( \mc{O}(\epsilon)+\mc{O}(\lambda_i))}
e^{ \tfrac{Q\chi(M)}{ \gamma}\sum_{i,j}2(2\pi/\lambda_i)^{1/2}a_i   r_j v_{ij}\big)}   \Big]\prod_{j'}\ell_{j'}^{1-\delta}.\nonumber
\end{align}

\medskip To complete the proof we use the structure of the coefficients $(v_{ij})_{ij}$. Recall that the vectors $v_i\in\R^{m+1}$  with components $(v_{ij})_{j}$ form an orthonormal family for the inner product $(u,v) =\sum_j u_jv_j{\rm Vol}_{g_0}(S_j)$ and the orthogonal complement of ${\rm span} (v_i)_i$  is  the vector $\mathbf{1}$ with all components equal to $1$. Now we define the precise values of the coefficients $r_{j}$ involved in \eqref{estgq1}. Set $V=\max_{ij}|v_{ij}|$ and define
\begin{equation}\label{defrj}
r_{j}:=R\big(1+\tfrac{1}{2mV}\sum_i\mathcal{S}(a_i)v_{ij} \big){\rm Vol}_{g_0}(S_j)
\end{equation}
 where $R$ is a normalizing constant such that $\sum_j r_j=1$. Observe that $0<r_j<1$ for all $j$ and with this choice
 $$\forall i,\quad \sum_j r_jv_{ij}=\frac{R}{2mV}\mathcal{S}(a_i).$$
Now we plug this relation  into the estimate  \eqref{estgq1}  to get for some $C>0$
\begin{align*}
\E\big[\mathcal{G}_g^\gamma(M)^{\frac{Q\chi(M)}{\gamma}} \big]\leq &C\E\Big[ e^{ -C \sum_i\frac{|a_i|}{\sqrt{\lambda_i}}(1+\mc{O}(\epsilon)+\mc{O}(\lambda_i))+C\sum_{i}\lambda_ia_i^2 } \Big]   \prod_{j'=1}^{n_p}\ell_{j'}^{1-\delta}.\nonumber
\end{align*} 
 
  We can choose the neighborhood of $(M_0,g_0)  $ in such a way that the term $|\mc{O}(\epsilon)+\mc{O}(\lambda_i)|$ is less than $1/2$ uniformly with respect to $i,(a_i)_i$, in which case taking expectation of the above expression yields
$$\E\big[\mathcal{G}_g^\gamma(M)^{\frac{Q\chi(M)}{\gamma}}  \big]\leq C\E\big[  e^{ -C \sum_i\frac{|a_i|}{\sqrt{\lambda_i}} +C\sum_{i}\lambda_ia_i^2 }\big]\Big(  \prod_{j'=1}^{n_p}\ell_{j'}^{1-\delta}\Big)\leq C\prod_{i=1}^m\lambda_i^{1/2}\prod_{j'=1}^{n_p}\ell_{j'}^{1-\delta}.$$
 This completes the proof.\qed

\subsection{Proof of Theorem \ref{defLiouvillemeas} in the case $\gamma<2$}
%%%%%%%%%%%%%%%%%%%%%%%%%%%%%%%%%%%%%%%
The proof of the case $\gamma<2$ is much simpler than the case $\gamma=2$. The main reason is that the quantity $ \Big(\frac{{\det}(P_{g_\tau}^*P_{g_\tau})}{({\det }'\Delta_{g_\tau})^{\mathbf{c}_{\mathbf{M}}+1}}\Big)^{\demi} \,d\tau$ appearing in the integral \eqref{partg2} presents a more gentle behaviour at the boundary of the moduli space due to the lower central charge $\mathbf{c}_{\mathbf{M}}<1$. More precisely   \eqref{Wolpert1} and \eqref{Wolpert2} now gives the following estimate for this term  
\begin{equation}\label{product<}
 C\Big(\prod_{j=1}^{3\mathbf{g}-3}\ell_j\Big)\Big(\prod_{j'=1}^{n_p}\ell_{j'}^{-\frac{5-\mathbf{c}_{\mathbf{M}}}{2}}e^{-\frac{\pi^2(1-\mathbf{c}_{\mathbf{M}})}{6\ell_{j'}} }\prod_{\lambda_i<1/4}\lambda_i ^{-\frac{\mathbf{c}_{\mathbf{M}}+1}{2}}\Big)\prod_{j}d\ell_j d\theta_j,
 \end{equation} 
hence an additional exponential decay in comparison with the case $\mathbf{c}_{\mathbf{M}}=1$.

We stick to the notations introduced in section \ref{sec:Green} and in the proof of Theorem \ref{defLiouvillemeas} in the case $\gamma=2$. We introduce $(a_i)_{1 \leq i \leq m}$ i.i.d. standard Gaussian random variables and consider the following Gaussian field:
\begin{equation*}
Y(x)= \sum_{i=1}^m f_{v_i}(x) \frac{a_i}{\sqrt{\nu_i}}    
\end{equation*}
where the $\nu_i$'s are defined in Theorem \ref{thburger} and $f_{v_i}$ in \eqref{deffa}, with $v_i=(v_{ij})_{1\leq j\leq m+1}\in\R^{m+1}$ from Lemma \ref{approximation}. 
Now, recall  that on each $S'_j$ the field $Y(x)$ is the constant random variable $Y_j= \sum_{i=1}^m v_{ij}\frac{a_i}{\sqrt{\nu_i}}   $.

By combining Proposition \ref{greenpinching} and Lemma \ref{approximation}, the Green function $G_g$ is such that for all $x,x' \in \cup_{1 \leq j \leq m+1} S'_j $ 
\begin{equation*}
G_g(x,x')  \leq \sum_{i=1}^m \frac{f_{v_i}(x) f_{v_i}(x')}{\nu_j(g)}   +A_g(x,x')  +   \frac{C \delta^{1/L}}{\nu_1}
\end{equation*}
where $C,L>0$ are global constants. Hence, if we introduce a standard normalized Gaussian variable $Z$ and an independent Gaussian field $X'_g$ (living in the space of distributions) with covariance $A_g$, we get that
for all $x,x' \in \bigcup_{1 \leq j \leq m+1} S'_j  $ 
\begin{equation*}
G_g(x,x')  \leq \E[ Y(x) Y(x')    ]  +\E[X_g'(x) X'_g(x')  ]  + C\delta^{1/L} \nu_1^{-1} \E Z   ^2 
\end{equation*}
in such a way that Kahane's convexity inequality \cite{cf:Kah} ensures that for all $q>0$
\begin{align*}
& \E  \left [  \mc{G}_{g}^\gamma( \cup_{1 \leq j \leq m+1} S'_j )^{-q} \right ]   \\
& =    \E  \Big [    \Big  ( \sum_{j=1}^{m+1}  \int_{S'_j}  e^{\gamma X_g(x)  -\frac{\gamma^2}{2}\E[  X_g(x)^2  ] } e^{\frac{\gamma^2}{2} W_g(x)} {\rm dv}_g(x)   \Big)^{-q} \Big]    \\
&  \leq   \E \Big[  \Big( \sum_{j=1}^{m+1}  \int_{S'_j}  e^{\gamma Y(x) -\frac{\gamma^2}{2} \E[ Y(x)^2 ]  +X'_g(x)-\frac{\gamma^2}{2} \E[ X'_g(x)^2 ] + \gamma  (C\delta^{1/L}/\nu_1)^{1/2} Z   -\frac{\gamma^2}{2}  C\delta^{1/L}/\nu_1 } e^{\frac{\gamma^2}{2} W_g(x)} {\rm dv}_g(x)  \Big)^{-q}\Big]    \\
& \leq e^{(q+q^2/2) \frac{\gamma^2}{2}\frac{C\delta^{1/L}}{\nu_1}  }  \E \Big[ \Big( \sum_{j=1}^{m+1}  \int_{S'_j}  e^{\gamma Y(x) -\frac{\gamma^2}{2} \E[ Y(x)^2 ]  +\gamma X'_g(x)-\frac{\gamma^2}{2} \E[ X'_g(x)^2 ]  } e^{\frac{\gamma^2}{2} W_g(x)} {\rm dv}_g(x)   \Big)^{-q}\Big] .
\end{align*}
Now, by Proposition \ref{greenpinching} and Lemma \ref{approximation}, there is some global constant $C>0$ such that for all $j$ and $x \in S'_j$ we have
\begin{equation}\label{infBatman}
W_g(x) \geq E[Y(x)^2]  -C
\end{equation} 
so that for some other global constant $C>0$
 \begin{align*}
& \E  \left [ \mc{G}_{g}^\gamma( \cup_{1 \leq j \leq m+1} S'_j)^{-q} \right ]    \leq   C e^{ \frac{C\delta^{1/L}}{\nu_1}  }   \E  \Big[   \Big( \sum_{j=1}^{m+1}  \int_{S'_j}  e^{\gamma Y(x) + \gamma X'_g(x)-\frac{\gamma^2}{2} E[ X'_g(x)^2 ]  }  d{\rm v}_g(x)   \Big)^{-q} \Big] .
 \end{align*}
Notice that
\begin{equation*}
\sum_{j} {\rm Vol}_{g}(S_j) Y_j= \sum_{j} \text{Vol}_{g}(S_j) (\sum_i v_{ij} \frac{a_i}{\sqrt{\nu_i}} )=   \sum_{i} \frac{a_i}{\sqrt{\nu_i}}  (\sum_j {\rm Vol}_{g}(S_j) v_{ij} ) =0
\end{equation*} 
since the vectors $(v_i)_i$ are orthogonal to $1$ in the $|| .||_g$ norm \eqref{normgraph}. 
Hence there exists almost surely some $j$ such that $Y_j \geq 0$. Therefore, gathering the above considerations, we have 
\begin{align*}
 \E  \Big[  \mc{G}_{g}^\gamma( \cup_{j=1}^{m+1} S'_j )^{-q} \Big]   & \leq  C e^{ \frac{C\delta^{1/L}}{\nu_1}  } 
  \E  \Big [   \Big( \sum_{j=1}^{m+1}  \int_{S'_j}  e^{\gamma Y(x) +\gamma X'_g(x)-\frac{\gamma^2}{2} \E[X'_g(x)^2 ]  }  {\rm dv}_g(x)   \Big)^{-q} \Big]    \\
  & \leq    C e^{ \frac{C\delta^{1/L}}{\nu_1}  } \sum_j  \E  \Big[   1_{Y_j \geq 0}    \Big( \int_{S'_j}  e^{\gamma Y_j +\gamma X'_g(x)-\frac{\gamma^2}{2} \E[ X'_g(x)^2 ]  }  {\rm dv}_g(x)   \Big)^{-q} \Big]    \\
& \leq   C e^{ \frac{C\delta^{1/L}}{\nu_1}  } \sum_j \E  \Big[    \Big( \int_{S'_j}  e^{\gamma X'_g(x)-\frac{\gamma^2}{2} \E[ X'_g(x)^2 ]  }  {\rm dv}_g(x)   \Big)^{-q} \Big]    \\
& \leq   C e^{ \frac{C\delta^{1/L}}{\nu_1}  } 
 \end{align*}
where in the last inequality we have used the fact that the covariance of $X'_g$ can be controlled independently of the size of the small eigenvalues (this is a consequence of Proposition \ref{greenpinching} and Lemma \ref{approximation}). 
Combining with \eqref{product<}, this estimate shows integrability with respect to the measure \eqref{product<} by recalling that $\nu_1  \geq C \ell_1$ for some some constant $C>0$ and by choosing $\delta$ such that $C\delta^{1/L}<\frac{(1-\mathbf{c}_{\mathbf{M}}) \pi^2}{6} $.

Finally, it remains to identify the relation \eqref{defLQGmeasure}. Starting from the definition of $\nu$, we have
\begin{align*}
\E_{  (g_\tau)_{\tau} , \mu}&[   F(  \mathcal{L}_{\gamma}(dz)  , R ) ]  =\nu(F)/\nu(1)\\
=&\frac{1}{Z}\int_{\mc{M}_{\bf g}}\int_{\R}  \Big(\frac{{\det}(P_{g_\tau}^*P_{g_\tau})}{({\det }'\Delta_{g_\tau})^{\mathbf{c}_{\mathbf{M}}+1}}\Big)^{\demi} \E\Big[ F(e^{\gamma c} \mc{G}_{g_\tau}^\gamma(dz),\tau)e^{-Q\chi(M) c-\mu  e^{\gamma c}\mc{G}_{g_\tau}^\gamma(M)}   \Big]\,d\tau dc.
\end{align*}
It suffices to make the change of variables $y=e^{\gamma c}\mc{G}_{g_\tau}^\gamma(M)$ to get
\begin{align*}
&\E_{  (g_\tau)_{\tau} , \mu} [   F(  \mathcal{L}_{\gamma}(dz)  , R ) ]   \\
=&\frac{1}{\gamma Z}\int_{\mc{M}_{\bf g}}  \Big(\frac{{\det}(P_{g_\tau}^*P_{g_\tau})}{({\det }'\Delta_{g_\tau})^{\mathbf{c}_{\mathbf{M}}+1}}\Big)^{\demi}\E\Big[ F\Big(y\frac{\mc{G}_{g_\tau}^\gamma(dz)}{\mc{G}_{g_\tau}^\gamma(M)},\tau\Big) \mc{G}_{g_\tau}^\gamma(M)^{-\frac{2Q(\mathbf{g}-1)}{\gamma}} \Big] y^{\frac{2Q(\mathbf{g}-1)}{\gamma}-1} e^{ -\mu y}\,dy,
\end{align*}
from which our claim follows.\qed

 \subsection{Proof of Lemma  \ref{driftout}.} 
 %%%%%%%%%%%%%%%%%%%%%%%%%%%%%%%%%
  We will use the results in Lemma \ref{boundrobin} and Proposition \ref{lastestimateGg}. So it is convenient to introduce the notations
\begin{align*}
 &\mathcal{C}_{j'}^{\pm}:= \mathcal{C}_{j'}(\ell)^+\cup \mathcal{C}_{j'}(\ell')^-, &  F_{j'}(\rho):=\tfrac{2\pi}{\ell_{j'}}\arctan(\tfrac{\ell_{j'}}{\rho}) , \\
 &  h(\rho):=\sum_i|\rho|^{-2\lambda_i}\big(1+   \ln(1/|\rho|)\big).
\end{align*}
We further denote by   $F_{j'}^{-1}$ the inverse of the function $F_{j'}$.  Finally, in what follows,  $C$ stands for a generic irrelevant positive  constant, the value of which  may change along lines.

Proposition \ref{lastestimateGg} ensures that the Green function $A_g$ (which is the covariance of the field $X'_g$  defined in \eqref{defX'}) satisfies for $x=\rho e^{i\theta}$ and $x'=\rho' e^{i\theta}$ in the collar $\mathcal{C}_{j'}$
\begin{equation}\label{errorG}
A_g(x,x')\leq G_{g_{j'}}(x,x') +C^2h(\rho)h(\rho').
\end{equation}
Otherwise stated, we have bounded the errors terms in Proposition \ref{lastestimateGg} with the help of the function $h$. We stress that the function $\chi$ in Proposition \ref{lastestimateGg} is worth $1$ only for $|\rho|\leq 1/4$ so that the above bound is rigorously valid only for $|\rho|\leq 1/4$. Yet in the following we assume  that it is valid for $|\rho|\leq 1$ for notational convenience because it does not change the validity of the argument.

Then we need to decompose the Gaussian field with covariance function $G_{g_{j'}}$ according to its radial/angular parts. So  we consider two independent centered Gaussian fields $X^{r}$ and $X^a$ defined on the collar $\mathcal{C}_{j'}=[-1,1]_\rho\times (\R\setminus\Z)_\theta$  with respective covariance kernels $G^{r}$ and $G^a$ defined by
 \begin{align}
G^{r}(\rho,\theta,\rho',\theta') :=& \left\{\begin{array}{ll}   \min(F_{j'}(|\rho|),F_{j'}(|\rho'|))-\frac{\ell_{j'}}{2\pi^2}F_{j'}(|\rho|)F_{j'}(|\rho'|)+C^2h(\rho)h(\rho') &\text{if }\rho\rho'>0\\ 
\frac{\ell_{j'}}{2\pi^2}F_{j'}(|\rho|)F_j(|\rho'|)  +C^2h(\rho)h(\rho')   &\text{if }\rho\rho'\leq 0
\end{array}\right.\label{gr}
\\
G^a(\rho,\theta,\rho',\theta') :=&-\ln\Big|1-e^{- |F_{j'}(\rho)-F_{j'}(\rho')|+2i\pi (\theta-\theta')}\Big|\mathbf{1}_{\{\rho\rho'\geq 0\}}.
 \end{align}
The  two quantities $G^{r}$ and $G^a$ are   positive definite, hence the existence  of such fields (pointwise for $X^r$ and distributional for $X^a$).
Combining the above two relations we get that for $x=\rho e^{i\theta}$ and $x'=\rho' e^{i\theta}$ in the collar $\mathcal{C}_{j'}$
\begin{equation}
\E[X'_g(x)X'_g(x')]\leq \E[X^r(x)X^r(x')]+\E[X^a(x)X^a(x')]+\E[(  \delta^{(0)}h(\rho))(\delta^{(0)}h(\rho'))]
\end{equation}
where $\delta^{(0)}$ is a standard Gaussian random variable independent of everything. Hence we can use   Kahane's inequality \cite[Lemma 1]{cf:Kah} to get the estimate (recall that $m_{g_{j'}}$ is the Robin constant defined in Lemma \ref{boundrobin}) for $q>0$
\begin{align}\label{estinterm0}
 \E\Big[\Big(\int \phi & e^{\psi\circ F_{j'}}\,d\mathcal{G}'\Big)^{-q}\big]\\
 \leq &C_q\E\big[\big(\int_{\mathcal{C}_{j'}^\pm}\phi \,e^{\psi\circ F_{j'}}e^{2C\delta^{(0)}h-2h^2}e^{ 2X^{r}-2\E[(X^r)^2]}e^{4\pi m_{g_{j'}}}\,d \mathcal{G}_g^a\big)^{-q}\big]\nonumber
 \end{align}
 where $C_q$ is an explicit constant such that $\ln C_q$ is quadratic in $q$.
Here   we have defined the random measure  $\mathcal{G}_g^a$ as the limit in law as $\epsilon \to 0$ (eventually up to some subsequence) of the family of random measures $(-\ln \eps)^{1/2}e^{2X^a_\eps-2\E[(X^a_\eps)^2]}\,d{\rm v}_g$. 
From \eqref{boundrobin}+Proposition  \ref{lastestimateGg}, we get that    $4\pi m_{g_{j'}} -2\E[X_r^2]\geq 2\ln|\rho|-2C^2h(\rho)^2$   for $|\rho|>\ell_{j'}$  for some constant $C>0$.  Hence, for $|\rho|>\ell_{j'}$, the measure $e^{ 2C\delta^{(0)}h-2h^2+2X^{r}-2\E[(X^r)^2]}e^{4\pi m_{g_{j'}}}\,d \mathcal{G}_g^a$ is greater than the measure $e^{ 2X^{r} }e^{g(\rho)}\rho^{-2}\,d \mathcal{G}_g^a$ 
 where we have set 
 \begin{equation}\label{grho}
g(\rho):= 2C\delta^{(0)}h(\rho)+4\ln |\rho|-4C^2h(\rho)^2.
\end{equation}

Now that we have simplified the deterministic part of the measure we analyze the random part. For this, we write a path decomposition result for the process $X^r$
\begin{lemma}\label{lemXr} 
 Let us consider two standard Gaussian r.v. $\delta^{(1)},\delta^{(2)} \sim\mathcal{N}(0,1)$ and  two standard Brownian bridges $({\rm Br}^+_\rho)_{\rho\in[0,1]}$ $({\rm Br}^-_\rho)_{\rho\in[0,1]}$ , all of them mutually independent. We have the following equality in law in the sense of processes for $\rho\in[-1,1]$
 $$X^r(\rho) = \mathbf{1}_{\{\rho>0\}} \pi/\sqrt{\ell_{j'}}{\rm Br}^+_{\tfrac{\ell_{j'}}{\pi^2}F_{j'}(|\rho|)} + \mathbf{1}_{\{\rho<0\}} \pi/\sqrt{\ell_{j'}}{\rm Br}^-_{\tfrac{\ell_{j'}}{\pi^2}F_{j'}(|\rho|)} +\tfrac{\sqrt{\ell_{j'}/2}}{\pi}F_{j'}(|\rho|)\,\delta^{(1)}+Ch(\rho)\delta^{(2)} .$$
 \end{lemma}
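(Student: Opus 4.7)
Both sides of the claimed identity define centred Gaussian processes indexed by $\rho\in[-1,1]$, so it suffices to verify that their covariance kernels coincide. Since the four summands appearing on the right-hand side — the two bridges $\mathrm{Br}^{\pm}$ and the two scalar Gaussians $\delta^{(1)},\delta^{(2)}$ — are by construction mutually independent, the covariance decomposes as a sum of four contributions, and the plan is to match these against the two branches of \eqref{gr} case-by-case in the sign of $\rho\rho'$.

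First I would handle the two scalar pieces, which contribute to every case in the same way. The $\delta^{(2)}$ summand produces $C^{2}h(\rho)h(\rho')$, matching the last term in both branches of $G^{r}$. The $\delta^{(1)}$ summand produces
\[
\Bigl(\tfrac{\sqrt{\ell_{j'}/2}}{\pi}\Bigr)^{2} F_{j'}(|\rho|)F_{j'}(|\rho'|) = \tfrac{\ell_{j'}}{2\pi^{2}} F_{j'}(|\rho|)F_{j'}(|\rho'|),
\]
which supplies (part of) the quadratic term in each branch.

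Next I would treat the Brownian bridge pieces via the standard identity $\E[\mathrm{Br}_{s}\mathrm{Br}_{t}] = s\wedge t - st$ for a Brownian bridge on $[0,1]$. Setting $s(\rho):=\tfrac{\ell_{j'}}{\pi^{2}}F_{j'}(|\rho|)$, the normalisation $F_{j'}(0)=\pi^{2}/\ell_{j'}$ together with the monotonicity of $F_{j'}$ in $|\rho|$ ensures $s(\rho)\in[0,1]$, so the bridges are evaluated in their proper range. Since $\mathrm{Br}^{+}$ and $\mathrm{Br}^{-}$ are independent, the indicator prefactors force the bridge contribution to vanish whenever $\rho\rho'<0$; adding the two scalar pieces then gives
\[
\tfrac{\ell_{j'}}{2\pi^{2}}F_{j'}(|\rho|)F_{j'}(|\rho'|) + C^{2}h(\rho)h(\rho'),
\]
which is exactly the $\rho\rho'\leq 0$ branch of $G^{r}$. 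When $\rho\rho'>0$ the same bridge acts on both arguments, and the bridge covariance contributes
\[
\tfrac{\pi^{2}}{\ell_{j'}}\bigl(s(\rho)\wedge s(\rho') - s(\rho)s(\rho')\bigr) = \min\bigl(F_{j'}(|\rho|),F_{j'}(|\rho'|)\bigr) - \tfrac{\ell_{j'}}{\pi^{2}}F_{j'}(|\rho|)F_{j'}(|\rho'|);
\]
adding the $\delta^{(1)}$ piece collapses the subtraction to $\tfrac{\ell_{j'}}{2\pi^{2}}F_{j'}(|\rho|)F_{j'}(|\rho'|)$, and the $\delta^{(2)}$ piece restores $C^{2}h(\rho)h(\rho')$, yielding precisely the $\rho\rho'>0$ branch of $G^{r}$.

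This is a purely algebraic verification; no genuine obstacle is expected. The only points worth flagging are the range check $s(\rho)\in[0,1]$ (so that the Brownian bridges are actually defined), and the use of the independence of the two bridges $\mathrm{Br}^{\pm}$ to extract the correct $\min$ structure separately on the $\{\rho>0\}$ and $\{\rho<0\}$ halves of the collar, mirroring the fact that $G^{r}$ has no ``mixed'' $\min$ contribution across $\rho=0$.
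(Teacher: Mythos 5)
Your proof is correct and follows essentially the same route as the paper's: both verify the equality in law by matching the covariance kernel of the right-hand side against the two branches of $G^r$, using the Brownian-bridge identity $\E[\mathrm{Br}_s\mathrm{Br}_t]=s\wedge t-st$ together with the time change $s(\rho)=\tfrac{\ell_{j'}}{\pi^2}F_{j'}(|\rho|)$ and the independence of the four Gaussian components. You simply spell out the case analysis in $\rho\rho'$ that the paper compresses into ``one completes easily the proof of the claim by time changing with $F_{j'}$ and adding the covariance structure of the remaining terms.''
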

 
 \noindent {\it Proof.} Recall the covariance structure of the Brownian bridge $\E[{\rm Br}^+_\rho{\rm Br}^+_{\rho'}]=\rho\wedge\rho'-\rho\rho' $. Hence for arbitrary constants $a,c>0$ and $\rho<1/c$ 
 $$\E[(a{\rm Br}^+_{c\rho})(a{\rm Br}^+_{c\rho'})]=a^2c\rho\wedge\rho'-a^2c^2\rho\rho' .$$
 Adjusting the constants $a,c$ to fit with the covariance function $ \min( \rho,\rho')-\frac{\ell_{j'}}{\pi^2}\rho\rho'$ gives $a=\pi/\sqrt{\ell_{j'}}$ and $c=\tfrac{\ell_{j'}}{\pi^2}$. One completes easily the proof of the claim by  time changing with $F_{j'} $ and adding the covariance structure of the term $\tfrac{\sqrt{\ell_{j'}/2}}{\pi}F_{j'}(|\rho|)\,\delta^{(1)}+Ch(\rho)\delta^{(2)} $.\qed

Now, observe that if we restrict to those $\ell_{j'}\leq  |\rho|\leq 1$ then $\tfrac{\ell_j}{\pi^2}F_{j'}(|\rho|)\in  [0,\tfrac{1}{2}]$ and the law of the Brownian bridge ${\rm Br}$ on $[0,\tfrac{1}{2}]$ is absolutely continuous with respect to the law of Brownian motion $B$ with density $2e^{-|B_{1/2}|^2}$. Using this relation with ${\rm Br}^+$ and  ${\rm Br}^-$ together with the scaling relation for Brownian motion $aB_{t/a^2}\stackrel{law}{=}B_t$ for fixed $a>0$, we deduce  using \eqref{estinterm0} and Lemma \ref{lemXr}  that
 \begin{align}\label{estinterm2}
 \E\Big[&\Big(\int \phi e^{\psi\circ F_{j'}} \,d\mathcal{G}'\Big)^{-q}\big]\\
 \leq &C_q\E\Big[\Big(A\int_{\mathcal{C}_{j'} (  \ell)^+ }e^{\psi\circ F_{j'}}e^{2B^+_{F_{j'}(\rho)} +\Theta(\rho) }\rho^{-2}\,d \mathcal{G}_g^a +B\int_{\mathcal{C}_{j'}(  \ell')^- }e^{\psi\circ F_{j'}}e^{2B^-_{F_{j'}(\rho)}+\Theta(\rho)}\rho^{-2}\,d \mathcal{G}_g^a\Big)^{-q}\Big],\nonumber
 \end{align}
 where $B^+,B^-$ are two independent Brownian motions, independent of everything, and the function $\Theta$ is defined by $\Theta(\rho):=\sqrt{2\ell_{j'}}/\pi F_{j'}(\rho)\,\delta^{(1)}+2Ch(|\rho|)\delta^{(2)}+g(\rho)$. Now we would like to get rid of the drift term $\Theta$. The point is that the behaviour of $\Theta$ is rather tricky for those    $|\rho|$ that are very close to (or less than) $\ell_{j'}$ whereas the contribution of the $\Theta(\rho)$, say for  $|\rho|\geq \ell_{j'}^{1-\delta}$ for any $\delta>0$,  turns out to be easily controlled.   So we fix an arbitrary $\delta\in ]0,1[$ and remark that the expectation in the r.h.s. of \eqref{estinterm2} is less than the same expectation with  integration restricted to $\mathcal{C}_{j'}(  \ell^{' 1-\delta})^-$ and $\mathcal{C}_{j'} (  \ell^{1-\delta})^+$. Furthermore, we introduce the (random) function $Y_\rho$ through the relation
\begin{equation}\label{defY}
\forall \rho\in]0,1],\quad  \Theta(\rho)= 2\int_{F_{j'}(1)}^{F_{j'}(\rho)}  Y_u\,du+\Theta(1).
\end{equation}
We set     $ \kappa(\ell):=F_{j'}(  \ell^{1-\delta})=\tfrac{2\pi}{\ell_{j'}}\arctan(\ell_{j'}\ell^{\delta-1})$. Then the Girsanov theorem   tells us that, under the probability measure 
$$R\,d\P,\quad \text{ with } R:= e^{\int_{ \kappa(1)}^{ \kappa(\ell)}Y_r\,dB^+_r+\int_{ \kappa(1)}^{ \kappa(\ell')}Y_r\,dB^-_r- \tfrac{1}{2}\int_{ \kappa(1)}^{ \kappa(\ell)}Y_r^2\,dr-\tfrac{1}{2}\int_{ \kappa(1)}^{ \kappa(\ell')}Y_r^2\,dr},$$ 
 the processes $\rho\in [1,\ell^{1-\delta}]\mapsto 2 B^+_{F_{j'}(\rho)} $ and $\rho\in [1,\ell^{'1-\delta}]\mapsto 2 B^-_{F_{j'}(\rho)} $ have respectively the same laws as   the processes   $\rho\in [1,\ell^{1-\delta}]\mapsto 2 B^+_{F_j(\rho)} +\Theta(\rho)-\Theta(1)$ and  $\rho\in [1,\ell^{1-\delta}]\mapsto 2 B^-_{F_j(\rho)} +\theta(\rho)-\Theta(1)$ under $\P$. Therefore, using the Girsanov transform in the expectation \eqref{estinterm2}, we get  
  \begin{align*}%\label{estinterm3}
 \E\Big[&\Big(\int \phi e^{\psi\circ F_{j'}} \,d\mathcal{G}'\Big)^{-q}\big]\\
 \leq &4\E\Big[R e^{-q\Theta(1)}\Big(A\int_{\mathcal{C}_{j'}(   \ell^{1-\delta})^+ }e^{\psi\circ F_{j'}}e^{2B^+_{F_{j'}(\rho)}  }\rho^{-2}\,d \mathcal{G}_g^a 
 +B\int_{\mathcal{C}_j(    \ell_{j'}^{'1-\delta})^- }e^{\psi\circ F_{j'}}e^{2B^-_{F_{j'}(\rho)}}\rho^{-2} \,d \mathcal{G}_g^a\Big)^{-q}\Big]\nonumber\\
 \leq &4E_{\ell,\ell'}(q)\E\Big[ \Big(A\int_{\mathcal{C}_{j'}(   \ell^{1-\delta})^+ }e^{\psi\circ F_{j'}}e^{2B^+_{F_{j'}(\rho)}  }\rho^{-2}\,d \mathcal{G}_g^a 
 +B\int_{\mathcal{C}_{j'}(    \ell^{'1-\delta})^- }e^{\psi\circ F_{j'}}e^{2B^-_{F_{j'}(\rho)}}\rho^{-2} \,d \mathcal{G}_g^a\Big)^{-pq}\Big]^{\frac{1}{p}}.
 \end{align*}
 where we have  used the H\"older inequality to get the last inequality with $p,m,r>1$ and $\tfrac{1}{p}+\tfrac{1}{m}+\tfrac{1}{r}=1$ and set 
\begin{align*}
E_{\ell,\ell'}(q):=&\E\Big[\Big(e^{\int_{ \kappa(1)}^{ \kappa(\ell)}Y_u\,dB^+_u+\int_{ \kappa(1)}^{ \kappa(\ell')}Y_u\,dB^-_u- \tfrac{1}{2}\int_{ \kappa(1)}^{ \kappa(\ell)}Y_u^2\,du-\tfrac{1}{2}\int_{ \kappa(1)}^{ \kappa(\ell')}Y_u^2\,du}\Big)^m \Big]^{1/m}  \E[e^{-qr\Theta(1)}]^{1/r} \\
=& \E\Big[ e^{  \tfrac{m^2-m}{2}\int_{ \kappa(1)}^{ \kappa(\ell)}Y_r^2\,dr+\tfrac{m^2-m}{2}\int_{ \kappa(1)}^{ \kappa(\ell')}Y_r^2\,dr} \Big]^{1/q} \E[e^{-qr\Theta(1)}]^{1/r}.
\end{align*}
So we have got rid of the drift term $\Theta(\rho)$ with the Girsanov trick. The cost is the constant $E_{\ell,\ell'}(q)$ but an easy computation shows that  $\sup_{ \ell,\ell'\geq \ell_j}E_{\ell,\ell'}(q)<+\infty$ for any $q>1$. This computation is left to the reader but we   give a brief convincing heuristic  argument.
The process $Y_u$  is defined by \eqref{defY}. We have already explained  in the proof of Theorem \ref{defLiouvillemeas} that for small $\ell_{j'}$ and for $|\rho|\geq\ell_{j'}$, $F_{j'}(\rho)$ behaves like $1/\rho$. Hence  $Y(\rho)$ is with good approximation   given by $-\Theta'(1/\rho)\rho^{-2}$. Then it is readily seen that  $-\Theta'(1/\rho)\rho^{-2}$ is a sum of terms of the type $1/\rho$, $|\rho|^{-c\lambda_i-1}(\ln |\rho|)^n$ (for $n=0,1,2$) or $\ell_{j'}^{1/2}$. It is then obvious to see that the square of every possible linear combination of such terms  has its $\int_{\kappa(1)}^{ \kappa(\ell)}$-intergal bounded by constant independently of $\ell\geq \ell_{j'}$.

We can use the same argument to explicitly determine the effect of the drift term $\psi\circ F_{j'}$.  The variance of the Girsanov transform to get rid of this term is less than
$$C\int_{1\leq |r|\leq 2\pi /\min(\ell^{ 1-\delta},\ell^{' 1-\delta})}|\psi'(r)|^2dr$$  (here we have used the fact that $\kappa(1)\geq 1$ and $\kappa(\ell)\leq 2\pi/\ell^{1-\delta}$).  
All in all, this entails that for arbitrary $p>1$ there exists some constant $C_p$ such that 
 \begin{align}%\label{estinterm4}
  \E\Big[&\Big( \int \phi e^{\psi\circ F_{j'}} \,d\mathcal{G}'\Big)^{-q}\big]\leq C_p \exp\Big(\int_{0}^{2\pi \max(\ell,\ell')^{\delta-1}}|\psi'(r)|^2dr\Big)\\
  &\times\E\Big[\Big(A\int_{\mathcal{C}_j(  \ell^{1-\delta})^+ } e^{2B^+_{F_j(\rho)}  }\rho^{-2}\,d \mathcal{G}_g^a
 +B\int_{\mathcal{C}_j (  \ell^{'1-\delta})^- } e^{2B^-_{F_j(\rho)}}\rho^{-2} \,d \mathcal{G}_g^a\Big)^{-pq}\Big]^{1/p},\nonumber
 \end{align}
which in turn less than
 \begin{align}\label{estinterm5}
%  \E\Big[&\Big( \int \phi e^{\psi\circ F_{j'}}  \,d\mathcal{G}'\Big)^{-q}\big]\leq 
   C_p(AB)^{-\frac{q}{2}}\exp\Big(&\int_{0}^{2\pi \max(\ell,\ell')^{\delta-1}}|\psi'(r)|^2dr\Big)\\
   &\times \E\Big[\Big(\int_{\mathcal{C}_{j'}(  \ell^{1-\delta})^+ } e^{2B^+_{F_{j'}(\rho)}  }\rho^{-2}\,d \mathcal{G}_g^a\Big)^{-pq\lambda} \Big(\int_{\mathcal{C}_{j'}(  \ell^{'1-\delta})^- } e^{2B^-_{F_{j'}(\rho)}}\rho^{-2} \,d \mathcal{G}_g^a\Big)^{-pq(1-\lambda)}\Big]^{\frac{1}{p}}\nonumber
 \end{align}
after using the elementary inequality $(a+b)^{-m}\leq a^{-\lambda m}b^{-(1-\lambda)m}$ for $a,b\geq 0$, $\lambda\in[0,1]$ and $m>0$.

  It remains to evaluate the latter expectation. So we introduce the  sets for $n,k\geq 0$ and $\ell>0$
 \begin{align*}
 A_n^+(\ell)=&\{\sup_{u\in [\ell^{1-\delta} ,1]}  B^+_{F_{j'}(u)} -B^+_{F_{j'}(1)} \in ]n,n+1]\}\\
  A_k^-(\ell')=&\{\sup_{u\in [\ell^{'1-\delta} ,1]}  B^-_{F_{j'}(u)}-B^-_{F_{j'}(1)}  \in ]k,k+1]\}
 \end{align*}
as well as the stopping times
\begin{align*}
 T_n^+ =&\{\inf_{u\in  ]0,1]} B^+_{F_{j'}(u) }-B^+_{F_{j'}(1) }=n\}&T_n^-=&\{\inf_{u\geq 0}  B^-_{F_{j'}(u) }-B^-_{F_{j'}(1) } \in ]n,n+1]\}.
  \end{align*}
 Partitioning the probability space according to the events $ A_n^+(\ell)$ and $ A_k^-(\ell')$ and using sub-additivity of the mapping $x\in\R_+\mapsto x^{1/p}$ for $p>1$, we get the estimate
 \begin{equation*}
 \E\Big[\Big(\int_{\mathcal{C}_{j'}(  \ell^{1-\delta})^+ } e^{2B^+_{F_{j'}(\rho)}  }\rho^{-2}\,d \mathcal{G}_g^a\Big)^{-pq\lambda} \Big(\int_{\mathcal{C}_{j'}(  \ell^{'1-\delta})^- } e^{2B^-_{F_{j'}(\rho)}}\rho^{-2} \,d \mathcal{G}_g^a\Big)^{-pq(1-\lambda)} \Big]^{1/p}\leq \sum_{k,n} E_{n,k}
\end{equation*}
where we have set
$$E_{n,k}:=\E\Big[\mathbf{1}_{A_n^+(\ell)}\mathbf{1}_{A_k^-(\ell')}\Big(\int_{\mathcal{C}_{j'}(  \ell^{1-\delta})^+ }e^{2B^+_{F_{j'}(\rho)}  }\rho^{-2}\,d \mathcal{G}_g^a\Big)^{-pq\lambda}\times\Big(\int_{\mathcal{C}_{j'}(  \ell^{'1-\delta})^- }e^{2B^-_{F_{j'}(\rho)}}\rho^{-2} \,d \mathcal{G}_g^a\Big)^{{-pq (1-\lambda)} }\Big]^{\frac{1}{p}}.$$

 The idea is now the following: the fluctuations of the Brownian motion $B^+$ over an interval of length $1$ are of order $1$. Hence over the interval    $\mathcal{I}_n^+:=[F_{j'}(T_n^+),F_{j'}(T_n)+1]$, $B^+$ is approximately equal to $n$. Put in other words, the process $B^+_{F_{j'}(u)}$ is worth $n$ on the interval $[F_{j'}^{-1}(F_{j'}(T_n^+)+1),T_n^+]$. Same remark for $B^-$. Hence $E_{n,k}$ should be estimated by
\begin{equation}\begin{split}
\label{finalest}
E_{n,k}
\leq & e^{ - (n+k)pq}\P(A_n^+)^{\frac{1}{p}} \P(A_k^-)^{\frac{1}{p}} \\
& \times  \sup_{x,x'\in]\ell_j,1]}\E\Big[\Big(\int_{\{\rho\in I^+(x)\}} \rho^{-2}\,d \mathcal{G}_g^a\Big)^{-pq\lambda}\times\Big(\int_{\{\rho\in I^-(x')\}}  \rho^{-2}\,d \mathcal{G}_g^a\Big)^{{-pq(1-\lambda)} }\Big]^{\frac{1}{p}} .
\end{split}\end{equation}
where for $x\in ]0,1]$, we denote $I^+(x):=[F_j^{-1}(F_j(x)+1),x]$ and for $x'\in [-1,0[$, $I^-(x'):=[-x',-F_j^{-1}(F_j(x')+1)]$.
We will conclude with the two following lemmas
 \begin{lemma}\label{arg1}
For any $q>0$, we have
 $$\sup_{\ell_j'\leq 1}\sup_{x\in]\ell_{j'},1]}\E\Big[\Big(\int_{\{\rho\in I^+(x)\}} \rho^{-2}\,d \mathcal{G}_g^a\Big)^{-q} \Big]<+\infty.$$
 The same property for $I^-(x')$ and $x'\in [-1,-\ell_j[$.
 \end{lemma}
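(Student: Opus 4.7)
The plan is to reduce the negative moment estimate to a stationary problem on a fixed reference domain via the change of variables $s=F_{j'}(\rho)$. Under this map $I^+(x)$ is sent bijectively onto the unit interval $[F_{j'}(x),F_{j'}(x)+1]$, and the pulled-back field
\[\tilde X^a(s,\theta):=X^a(F_{j'}^{-1}(s),\theta)\]
has covariance $-\log|1-e^{-|s-s'|+2\pi i(\theta-\theta')}|$, which is stationary in $s$ and log-correlated on the flat cylinder $\R_+\times\R/\Z$. Let $\tilde{\mathcal{G}}^a$ denote the critical Seneta--Heyde GMC associated to $\tilde X^a$ with Lebesgue reference measure $ds\,d\theta$; this is a stationary random measure on the cylinder.

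The next step is to compare $\mathcal{G}_g^a$ and $\tilde{\mathcal{G}}^a$. Since $d{\rm v}_g=d\rho\,d\theta=\tfrac{\rho^2+\ell_{j'}^2}{2\pi}ds\,d\theta$, and a $g$-geodesic disk of radius $\eps$ centered at $(\rho,\theta)$ corresponds, up to bounded anisotropy, to a Euclidean $(s,\theta)$-disk of radius $\tilde\eps=\eps/\sqrt{\rho^2+\ell_{j'}^2}$, the regularizations satisfy $\E[X^a_\eps(\rho,\theta)^2]=-\log\eps+\tfrac12\log(\rho^2+\ell_{j'}^2)+O(1)$. The Wick-normalized density $e^{2X^a_\eps-2\E[(X^a_\eps)^2]}$ is invariant under this identification, and the Seneta--Heyde factors $(-\log\eps)^{1/2}$ and $(-\log\tilde\eps)^{1/2}$ differ by $1+o(1)$. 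Passing to the limit yields
\[ d\mathcal{G}_g^a \;=\; \frac{\rho^2+\ell_{j'}^2}{2\pi}\, d\tilde{\mathcal{G}}^a, \]
and substituting gives
\[ \int_{\{\rho\in I^+(x)\}}\!\rho^{-2}\,d\mathcal{G}_g^a \;=\; \frac{1}{2\pi}\int_{[F_{j'}(x),F_{j'}(x)+1]\times\R/\Z}\!\Big(1+\tfrac{\ell_{j'}^2}{\rho(s)^2}\Big)\,d\tilde{\mathcal{G}}^a. \]

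Since the integrand $1+\ell_{j'}^2/\rho(s)^2$ is bounded below by $1$, this integral is at least $\tfrac{1}{2\pi}\,\tilde{\mathcal{G}}^a([F_{j'}(x),F_{j'}(x)+1]\times\R/\Z)$. By stationarity of $\tilde X^a$ in the $s$ variable, the latter mass has the law of $Z:=\tilde{\mathcal{G}}^a([0,1]\times\R/\Z)$, which depends neither on $x$ nor on $\ell_{j'}$. Finally, the critical GMC of a log-correlated field has finite negative moments of all orders on any bounded set---a fact established in \cite{Rnew7,Rnew12} and used in the same spirit in \cite[Section 5]{DRV}. Thus $\E[Z^{-q}]<\infty$ for every $q>0$, giving the desired uniform bound $\E[(\int \rho^{-2}\,d\mathcal{G}_g^a)^{-q}] \leq (2\pi)^q\E[Z^{-q}]<+\infty$, uniformly in $\ell_{j'}\leq 1$ and $x\in]\ell_{j'},1]$. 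The argument for $I^-(x')$ is identical by symmetry.

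The main obstacle is the rigorous identification $d\mathcal{G}_g^a=\tfrac{\rho^2+\ell_{j'}^2}{2\pi}d\tilde{\mathcal{G}}^a$: the Seneta--Heyde critical normalization makes conformal covariance arguments significantly more delicate than in the subcritical case, and one must carefully control both the bounded anisotropy of $g$-geodesic versus Euclidean disks and the uniformity of the limit as $\eps\to 0$ with respect to the basepoint. One way to handle this is by approximating the critical GMC by subcritical ones $\mathcal{G}_g^{\gamma}$ with $\gamma\uparrow 2$, where the coordinate change is routine (and falls within the scope of Kahane's theory), and then passing to the limit using the Seneta--Heyde convergence.
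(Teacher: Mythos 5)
Your argument is essentially the same as the paper's: after the change of variables $s=F_{j'}(\rho)$, the field becomes stationary and independent of $\ell_{j'}$, the weight $\rho^{-2}$ dominates $(\rho^2+\ell_{j'}^2)^{-1}$ (equivalently your lower bound $1+\ell_{j'}^2/\rho^{2}\geq 1$), and the conclusion follows from finiteness of negative moments of critical GMC established in \cite{Rnew7,Rnew12}. The only difference is that you flag explicitly the subtlety of the conformal change of variables for the Seneta--Heyde--normalized critical chaos, which the paper treats informally ("implicitly understood as the limit of a regularized sequence"); your suggested resolution via subcritical approximation is a reasonable way to make this rigorous.
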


  \begin{lemma}\label{arg2}
There is some constant $C>0$ such that  for any $\delta>0$, $n,k\geq 0$ and $\ell,\ell'\geq\ell_{j'}$
 $$\P(A_n^+(\ell))\leq C n\ell^{\tfrac{1}{2}(1-\delta)}\hspace{2cm}\P(A_k^-(\ell'))\leq C k\ell^{'\tfrac{1}{2}(1-\delta)}.$$
 \end{lemma}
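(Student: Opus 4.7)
My plan is to reduce both bounds to a standard computation on the running maximum of a one-dimensional Brownian motion, using the reflection principle together with a sharp estimate on the time horizon $\tau$ determined by the function $F_{j'}$.

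First, I will recast the event $A_n^+(\ell)$ in terms of a Brownian motion started at $0$. The map $u\mapsto F_{j'}(u)=\tfrac{2\pi}{\ell_{j'}}\arctan(\ell_{j'}/u)$ is strictly decreasing on $(0,1]$, so as $u$ ranges over $[\ell^{1-\delta},1]$ the shifted argument $F_{j'}(u)-F_{j'}(1)$ ranges over $[0,\tau]$ where
\[
\tau := F_{j'}(\ell^{1-\delta})-F_{j'}(1).
\]
Setting $W_s:=B^+_{F_{j'}(1)+s}-B^+_{F_{j'}(1)}$, which is a standard Brownian motion by the strong Markov property, the event takes the form
\[
A_n^+(\ell)=\Bigl\{\sup_{s\in[0,\tau]}W_s\in (n,n+1]\Bigr\}.
\]

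Second, I will estimate $\tau$ from below. Under the standing assumption $\ell\geq\ell_{j'}$ the quantity $\ell_{j'}/\ell^{1-\delta}\leq \ell_{j'}^{\delta}$ is bounded uniformly by a constant that can be made arbitrarily small by shrinking the neighborhood $U_{g_0}$. Using $\arctan(y)\geq y/2$ on a bounded interval around $0$, one gets $F_{j'}(\ell^{1-\delta})\geq \pi \ell^{-(1-\delta)}$, while $F_{j'}(1)\leq 2\pi$. Consequently, for some uniform $c>0$,
\[
\tau \geq c\,\ell^{-(1-\delta)},\qquad \text{so}\qquad \frac{1}{\sqrt{\tau}}\leq C\,\ell^{(1-\delta)/2}.
\]

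Third, by the reflection principle, $\sup_{s\in[0,\tau]}W_s$ has the same law as $|W_\tau|$, which is half-Gaussian with density $\sqrt{2/(\pi\tau)}\,e^{-x^2/(2\tau)}\indic_{x\geq 0}$. Therefore
\[
\P(A_n^+(\ell)) \;=\; \int_n^{n+1}\sqrt{\frac{2}{\pi\tau}}\,e^{-x^2/(2\tau)}\,dx \;\leq\; \sqrt{\frac{2}{\pi\tau}} \;\leq\; C\,\ell^{(1-\delta)/2},
\]
and inserting the trivial factor $1\leq n+1$ (harmless for the subsequent summation against the exponentially decaying prefactor $e^{-(n+k)pq}$ in \eqref{finalest}) produces the stated bound. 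The argument for $A_k^-(\ell')$ is verbatim the same, replacing $B^+$ by the independent Brownian motion $B^-$ and $\ell$ by $\ell'$. There is no real obstacle here; the only care needed is in verifying that the constants in the asymptotics of $\tau$ are genuinely uniform over the neighborhood $U_{g_0}$ of the noded surface, which follows from the explicit form of $F_{j'}$ and the bound $\ell_{j'}/\ell^{1-\delta}\leq\ell_{j'}^{\delta}$.
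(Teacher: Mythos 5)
Your proof is correct and follows essentially the same route as the paper: reduce $A_n^+(\ell)$ to a statement about a Brownian motion time-changed by $F_{j'}$, bound the effective time horizon $\tau$ below using the explicit form of $\arctan$, and invoke the reflection principle (the paper phrases this via the small-ball bound $\P(\sup_{r\leq t}B_r\leq\beta)\leq\beta t^{-1/2}$, which yields the stated factor $(n+1)$, while you use the half-Gaussian density of the running maximum directly, which is marginally sharper). The only cosmetic point is that the claimed uniform bound $\tau\geq c\,\ell^{-(1-\delta)}$ degenerates as $\ell\to1$; there one should simply use the trivial bound $\P\leq1$, which is absorbed into the constant $C$ since $\ell^{(1-\delta)/2}$ is then bounded below.
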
  
Indeed, using H\"older inequality and Lemma \ref{arg1} we see that the expectation involved in the r.h.s. of \eqref{finalest} is less than some constant independent of everything. Hence  we get the bound $E_{n,k}\leq Ce^{ -(n+k)qp}\P(A_n^+)\P(A_k^-)$. Lemma \ref{arg2} and summability of the series $\sum_{n,k\geq 0}kn e^{ -(n+k)qp }$ complete the argument.
  
 The only remaining detail to fix is to show \eqref{finalest}. This is an easy task as, using the independence of $B^+,B^-, \mathcal{G}_g^a$ as well as the strong Markov property of the  Brownian motion, we have
 \[\begin{split}
E_{n,k}\leq & \E\Big[\mathbf{1}_{A_n^+(\ell)\cap A_k^-(\ell')}e^{-2\lambda pq B^+_{F_{j'}(1)}-2(1-\lambda) pqB^-_{F_{j'}(1)}   }\\
 & \quad \quad   \quad \times  \Big(\int_{\mathcal{I}_n^+ }e^{2B^+_{F_{j'}(\rho)}-2B^+_{F_{j'}(1)}  }\rho^{-2}\,d \mathcal{G}_g^a\Big)^{-pq\lambda}\times\Big(\int_{\mathcal{I}_n^- }e^{2B^-_{F_{j'}(\rho)}-2B^-_{F_{j'}(1)}}\rho^{-2} \,d \mathcal{G}_g^a\Big)^{-pq(1-\lambda)}\Big]^{\frac{1}{p}} \\
\leq & \E[e^{-2\lambda pq B^+_{F_{j'}(1)}-2(1-\lambda) pqB^-_{F_{j'}(1)}   }]^{\frac{1}{p}}e^{ -pq(n+k)}(\P(A_n^+(\ell))\P(A_k^-(\ell')))^{\frac{1}{p}} \\
& \E[e^{-pq\lambda\min_{u\in\mathcal{I}_n^s}B^+_{F_{j'}(u)}-B^+_{F_j{j'}(1)}}]^{\frac{1}{p}} \E[e^{-pq(1-\lambda) \min_{u\in\mathcal{I}_n^s}B^-_{F_{j'}(u)}-B^-_{F_j{j'}(1)}}]^{\frac{1}{p}}
\\ 
&  \sup_{x,x'\in]\ell_{j'},1]}\E\Big[\Big(\int_{\{\rho\in I^+(x)\}} \rho^{-2}\,d \mathcal{G}_g^a\Big)^{-pq\lambda}\times\Big(\int_{\{\rho\in I^-(x')\}}  \rho^{-2}\,d \mathcal{G}_g^a\Big)^{-pq(1-\lambda)}\Big]^{\frac{1}{p}}.
\end{split}\]
 Standard estimates about the supremum of the Brownian motion over an interval of size 1 show that the $\E[e^{-pq\lambda\min_{u\in\mathcal{I}_n^+}B^+_{F_j(u)}-B^+_{F_j(1)}}]^{1/p}$ and $ \E[e^{-pq(1-\lambda)\min_{u\in\mathcal{I}_n^-}B^-_{F_j(u)}-B^-_{F_j(1)}}]^{1/p}$ are bounded by some constant independent of $\ell,\ell'$. Hence our claim.\qed
  
 \bigskip   
  
  \noindent {\it Proof of Lemma \ref{arg1}.}  Recall that Gaussian multiplicative chaos at criticality (i.e. $\gamma=2$) possesses moments of negative order  (see \cite[Prop. 5]{Rnew7}). This entails that for any $x>0$, $\E\Big[\Big(\int_{\{x\leq \rho\leq 1\}} \rho^{-2}\,d \mathcal{G}_g^a\Big)^{-q} \Big]<+\infty$. Then we observe that we have the relation   $F_{j'}^{-1}(F_{j'}(x)+1)>C\ell_{j'}$ for some irrelevant constant $C$ and for all $x>\ell_{j'}$. Therefore, for some $C$ and all $x>\ell_{j'}$
  $$\E\Big[\Big(\int_{I^+(x)} \rho^{-2}\,d \mathcal{G}_g^a\Big)^{-q} \Big]\leq C\E\Big[\Big(\int_{I^+(x)} (\rho^2+\ell_{j'}^2)^{-1}\,d \mathcal{G}_g^a\Big)^{-q} \Big].$$
 To conclude, observe that  the measure  $(\rho^2+\ell_{j'}^2)^{-1}\,d \mathcal{G}_g^a$ is the pushforward  of the measure $e^{2X^a(F_{j'}^{-1}(\rho))-2\E[X^a(F_{j'}^{-1}(\rho))^2]}\,d\rho$ (implicitly understood as the limit of a regularized sequence) under the mapping $\rho\mapsto F_{j'} (\rho)$. The Gaussian random distribution $\rho\mapsto X^a(F_{j'}^{-1}(\rho))$ is stationary and its law does not depend on $\ell_{j'}$ in such a way that the process $x\in\R_+\mapsto \int_x^{x+1}e^{2X^a(F_{j'}^{-1}(\rho))-2\E[X^a(F_{j'}^{-1}(\rho))^2]}\,d\rho$ is stationary (and its law does not depend on $\ell_{j'}$ either) and has the same law as $\int_{I(x)} (\rho^2+\ell_{j'}^2)^{-1}\,d \mathcal{G}_g^a$, hence our claim.\qed
 
 \bigskip
  \noindent {\it Proof of Lemma \ref{arg2}.} Recall the standard computation related to Brownian motion
  $$\P(\sup_{r\in [0,t]}B_r\leq \beta)\leq \beta t^{-1/2}.$$ Therefore 
  $$\P(A_n^+(\ell))\leq \P\big(\sup_{u\in [\ell^{1-\delta} ,1]}  B^+_{F_{j'}(u)-F_j(1)} \leq n+1\big)\leq (n+1)(F_{j'}(\ell^{1-\delta})-F_{j'}(1))^{-1/2}.$$
 We conclude by noticing that $\ell_{j'}^{-1}\arctan (\ell_{j'}/x)\geq C /x$  for some constant $C$ and all $x\geq \ell_{j'}$. Same argument for $A_k^-(\ell')$.\qed

\subsection{Relation with random planar maps  }\label{randplanar}
 %%%%%%%%%%%%%%%%%%%%%%%%%%%%%%%%%%%%%%%%%%%%%%%%%%%

The purpose of this subsection is to write a precise mathematical conjecture relating LQG to the scaling limit of large planar maps. 
Following Polyakov's work \cite{Pol}, it was soon acknowledged by physicists that LQG should describe the scaling limit of discretized 2d quantum gravity given by finite triangulations of a given surface,  eventually coupled with a model of statistical physics (often called matter field in the physics language), see for example the classical textbook from physics \cite{Amb} for a review on this problem. We will describe two situations in what follows: pure gravity (no matter) or the bosonic string embedded in $D=1$ dimension.

 We consider a fixed family $(g_\tau)_\tau$ of hyperbolic metrics on a compact surface $M$ (without boundary) with genus \textbf{g} as previously and the associated Liouville measure $\mc{L}_\gamma$ under  $\E_{  (g_\tau)_{\tau} , \mu}[  \cdot ]$.  Let $\mathcal{T}_{N, {\bf g}}$ be the set of triangulations  with $N$ faces with the topology of a surface of genus {\bf g}. Since these triangulations are seen up to orientation preserving homeorphisms, there are only a finite number of such triangulations.  We equip $T \in \mathcal{T}_{N,  {\bf g}}$ with a standard metric structure $h_T$ where each triangle is given volume $a^2$. The metric structure consists in gluing flat equilateral triangles: the exact definition of the metric structure is given in Les Houches lecture notes \cite{Houches} in the case of the sphere and the case we consider here does not present additional difficulties for the definition. The uniformization theorem tells us that there exists a unique $\tau_T \in \mathcal{M}_{\bf g}$ along with an orientation preserving diffeomorphism $\psi_T: T \to M$ and a conformal factor $\varphi_T$ (with logarithmic singularities at the images of the vertices of the triangles) such that 
\begin{equation}\label{decompmetric}
h_T= \psi_T^*(e^{\varphi_T} g_{\tau_T}  ).
\end{equation}
 Recall that in the decomposition \eqref{decompmetric}, the functions $\varphi_T$ and $\psi_T$ are unique except if the metric $g_{\tau_T}$ possesses non trivial isometries. In that case, the isometry group is finite of the form $(\psi^{(i)})_{1 \leq i \leq n}$ and starting with a decomposition \eqref{decompmetric} all the other decompositions of $h_T$ are $ ( (\psi^{(i)})^{-1} \circ \psi_T)^{*}(e^{\varphi_T \circ \psi^{(i)} } g_{\tau_T}  )$. Therefore, in the following discussion, we will suppose that the functions $\varphi_T$ and $\psi_T$ are uniquely determined by the triangulation $T$ and if this is not the case (i.e. there exists a non trivial isometry group), we replace $e^{\varphi_T} g_{\tau_T}$ in what follows by the average $\frac{1}{n}\sum_{i=1}^n   e^{\varphi_T \circ \psi^{(i)} } g_{\tau_T} $: these special metrics should play no role anyway as their equivalence classes are of measure $0$ with respect to the Weil-Petersson volume form.   
 
 \bigskip
 {\bf Pure gravity.}
 It is proved in \cite{bender} that the following asymptotic holds:

\begin{equation}\label{Asymptriang}
| \mathcal{T}_{N, {\bf g}}   |  \underset{N \to \infty}{\sim}  C_{\mathcal{T}}  e^{ \mu_c N} N^{\frac{5}{2}   ({\bf g}-1)-1}
\end{equation}
where $C_{\mathcal{T}}>0$ and $\mu_c>0$ are constants. The constants $C_{\mathcal{T}}, \mu_c$ are non universal in the sense that one can consider quadrangulations say in the place of triangulations: in this setting, the number of quadrangulations $\mathcal{Q}_{N, {\bf g}}$ of size $N$ will satisfy the asymptotic $| \mathcal{Q}_{N, {\bf g}}   |  \underset{N \to \infty}{\sim} C_{\mathcal{Q}} e^{\tilde{\mu}_c N} N^{\frac{5}{2}   ({\bf g}-1)-1}$ where $C_{\mathcal{Q}}$ is different from $C_{\mathcal{T}}$ and $\tilde{\mu}_c>0$ is different from $\mu_c$.

 We set 
\begin{equation}\label{Asympbarmu}
\bar{\mu}= \mu_c+ a^2 \mu,
\end{equation} 
where $\mu>0$ is fixed, and we consider the following random volume form on the surface $M$, defined in terms of its functional expectation

\begin{equation}\label{deffinitemap}
\E^{a}[  F( \nu_{a} )  ]= \frac{1}{Z_{a}} \sum_{N \geq 1}  e^{-\bar{\mu}  N} \sum_{T \in \mathcal{T}_{N,{\bf g}}} F(e^{\varphi_T} \,d{\rm v}_{g_{\tau_T}} ),
\end{equation} 
 for positive bounded functions $F$ where $Z_a$ is a normalization constant ensuring that $\E^{a} [ \cdot]$ is the expectation of a probability measure. We denote by $\P^{a}$ the probability law associated to $\E^{a}$. 

\medskip
We can now state a precise mathematical conjecture:

\begin{conjecture}\label{conjcartes}
Under $\P^{a}$, the random measure  $\nu_{a}$ converges in law as $a \to 0$ with $\bar{\mu}$ given by \eqref{Asympbarmu} in the space of Radon measures equipped with the topology of weak convergence towards  the Liouville measure $\mc{L}_\gamma$ under  $\E_{  (g_\tau)_{\tau} , \mu}[  \cdot ]$  with parameter $\gamma=\sqrt{\frac{8}{3}}$.
\end{conjecture}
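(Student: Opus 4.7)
The plan is to establish the conjecture via a two-level strategy: first match the asymptotics of the partition functions on both sides (this fixes $\gamma$), then prove convergence of the random volume form by tightness plus characterization of the limit through its conformal symmetries.

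For the partition-function match, substitute \eqref{Asymptriang} into $Z_a = \sum_N e^{-\bar\mu N}|\mathcal{T}_{N,\mathbf{g}}|$ and apply \eqref{Asympbarmu}. A Laplace-method computation gives
\[
Z_a \sim C_\mathcal{T}\,\Gamma\!\bigl(\tfrac{5}{2}(\mathbf{g}-1)\bigr)(a^2\mu)^{-5(\mathbf{g}-1)/2}(1+o(1)) \quad\text{as }a\to 0.
\]
On the LQG side, Theorem \ref{defLiouvillemeas} and \eqref{explicit} give a Gamma factor $\Gamma(-Q\chi(M)/\gamma) = \Gamma(2Q(\mathbf{g}-1)/\gamma)$. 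Plugging $\mathbf{c}_{\mathbf{M}}=0$ into \eqref{KPZ2} yields $\gamma=\sqrt{8/3}$, and then $Q = \gamma/2 + 2/\gamma = 5/\sqrt{6}$, so that $2Q/\gamma = 5/2$; the Gamma exponents match exactly, which is the source of the prescription $\gamma=\sqrt{8/3}$. The remaining prefactors match the integrand of \eqref{partg2} integrated against $d\tau_{\rm WP}$, up to a non-universal multiplicative constant that is absorbed in $C_\mathcal{T}$.

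For convergence, I would first establish tightness of the joint law $(\tau_T,\nu_a)$ on $\overline{\mathcal{M}_\mathbf{g}}\times{\rm Rad}(M)$. Tightness of $\tau_T$ boils down to showing that $\mathbb{P}^a(\tau_T\in U({\rm SP}_j))$ is controlled uniformly in $a$ by the Weil--Petersson volume of $U({\rm SP}_j)$ weighted by the boundary asymptotics coming from Wolpert's bounds \eqref{Wolpert1}--\eqref{Wolpert2} and the GMC mass estimates of Corollary \ref{Qpunc}; the integrability already established in the proof of Theorem \ref{defLiouvillemeas} (with its strict exponential decay at the boundary when $\mathbf{c}_{\mathbf{M}}<1$) guarantees that $\tau_T$ stays in the $\beta$-thick part with arbitrarily high probability. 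Conditional tightness of $\nu_a$ given $\tau_T=\tau$ follows from moment bounds on $\mathcal{G}^\gamma_{g_\tau}(M)$ as in Proposition \ref{GMCprop}, combined with the standard fact that GMC has moments of all orders in $(-\infty,4/\gamma^2)$.

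Identifying any subsequential limit as the Liouville measure $\mathcal{L}_\gamma$ is the main obstacle. The natural route is to test against functionals and show that
\[
\frac{1}{Z_a}\sum_{N\geq 1}e^{-\bar\mu N}\!\!\!\sum_{T\in\mathcal{T}_{N,\mathbf{g}}}\!\!\! F\bigl(e^{\varphi_T}\mathrm{dv}_{g_{\tau_T}},\tau_T\bigr)\longrightarrow \mathbb{E}_{(g_\tau)_\tau,\mu}[F(\mathcal{L}_\gamma,R)],
\]
with the right-hand side given by \eqref{defLQGmeasure}. For fixed modulus $\tau$ and fixed $N$, this amounts to proving that, once the zero mode $c$ is absorbed by the change of variable $c\mapsto c-\gamma^{-1}\log(\mu \mathcal{G}^\gamma_{g_\tau}(M))$ used in the proof of Proposition \ref{totalmass}, the random conformal factor $\varphi_T$ from \eqref{decompmetric} converges in law to $\gamma X_{g_\tau}$ plus an explicit deterministic shift coming from the conformal anomaly in Proposition \ref{covconf}. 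This is where one needs a discrete avatar of the Polyakov decomposition: for the sphere, analogous results have required heavy new machinery (mating-of-trees, peeling, Cardy embedding), and at genus $\mathbf{g}\geq 2$ no such tools are currently available. The hard step is therefore \emph{not} the tightness or the continuum partition-function computation (both of which follow from the technology developed in this paper), but rather the combinatorial-to-conformal step linking $\varphi_T$ to the GFF; a reasonable intermediate target, of independent interest, would be to prove convergence of the law of $\tau_T$ alone on $\mathcal{M}_\mathbf{g}$ to the marginal of $R$ under $\mathbb{P}_{(g_\tau)_\tau,\mu}$, which already requires fine control of how uniformization interacts with the enumeration of $\mathcal{T}_{N,\mathbf{g}}$.
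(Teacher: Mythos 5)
The statement you are addressing is a \emph{conjecture} in the paper, not a theorem, and the paper itself offers no proof of it; it offers only the Gamma-law matching heuristic as motivation. You have correctly reproduced that heuristic: the Laplace-method computation $Z_a \sim C_{\mathcal{T}}\Gamma(\tfrac{5}{2}(\mathbf{g}-1))(a^2\mu)^{-5(\mathbf{g}-1)/2}$ and the observation that the resulting Gamma law for $\nu_a(M)$ matches that of $\xi_\gamma$ in Theorem \ref{defLiouvillemeas} precisely when $2Q/\gamma = 5/2$, i.e.\ $\gamma = \sqrt{8/3}$, is exactly what the paper says after stating the conjecture. You are also right, and honest, that the ``combinatorial-to-conformal'' step linking $\varphi_T$ to the GFF is the central open obstacle, and that even identifying the law of $\tau_T$ alone would be a meaningful intermediate target; the paper explicitly remarks that the analogous conjectures in all topologies (sphere, disk, torus) are completely open.

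One point where you overreach, however, is the assertion that tightness ``follows from the technology developed in this paper.'' It does not. The integrability established in Theorem \ref{defLiouvillemeas} and the boundary estimates from Wolpert's bounds, Corollary \ref{Qpunc}, and Proposition \ref{lastestimateGg} are all statements about the \emph{continuum} LQG partition function and the continuum GMC; they say nothing about the behaviour, under $\P^a$, of the discrete random modulus $\tau_T$ or the discrete random measure $\nu_a$. Showing that $\tau_T$ does not escape to $\partial\overline{\mathcal{M}_{\mathbf{g}}}$ as $a\to 0$, and that $\nu_a$ does not accumulate mass in thin collars, requires genuinely discrete estimates (e.g.\ on how the enumeration of $\mathcal{T}_{N,\mathbf{g}}$ interacts with uniformization as geodesics pinch) for which no current techniques exist. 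Similarly, you cannot invoke GMC moment bounds for the ``conditional law of $\nu_a$ given $\tau_T=\tau$'' because that conditional law is a discrete object, not a GMC. So tightness and identification are \emph{both} open; the continuum results of this paper constrain what the limit must be if it exists, but they do not by themselves produce tightness of the discrete model.
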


The fact that $\gamma= \sqrt{\frac{8}{3}}$ can be read of the total volume of space; indeed, thanks to \eqref{Asymptriang}, it is easy to show that in the above asymptotic the total volume $\nu_a(M)$ converges to the Gamma law with density $\frac{\mu^{\frac{5}{2} ({\bf g}-1) }}{\Gamma(\frac{5}{2} ({\bf g}-1))}e^{-\mu x} x^{\frac{5}{2} ({\bf g}-1)  -1}\mathbf{1}_{x\geq 0}$. This law matches the law of the total volume $\xi_\gamma$ of $ \mc{L}_\gamma$ in Theorem \ref{defLiouvillemeas} for $\frac{2 Q}{\gamma}= \frac{5}{2}$, i.e. $\gamma= \sqrt{\frac{8}{3}}$.

Finally, let us mention that conjectures similar to \ref{conjcartes} have appeared in other topologies: the sphere \cite{DKRV}, the disk \cite{HRV} and the torus \cite{DRV}. However, in these other topologies, the corresponding conjectures are still completely open. Let us nevertheless mention some partial progress by   Curien in \cite{Curien} where appealing convergence results are proven assuming a reasonable  condition that has unfortunately not been proven yet. 

\bigskip
 {\bf Bosonic string.} Given a triangulation $T$, let us denote by $V_T$ the vertex set of the dual lattice. We consider the partition function of the bosonic string on $T$  by
 $$Z(T):=\int e^{-\tfrac{1}{2}\sum_{v\sim v'}	(x_{v}-x_{v'})^2}\prod_{v \in V}dx_{v} $$
 where $ \sim$ denotes adjacent vertices of the dual lattice. It is expected that
 
\begin{equation}\label{Asympbos}
\sum_{T\in  \mathcal{T}_{N, {\bf g}}  } Z(T)  \underset{N \to \infty}{\sim}  C'_{\mathcal{T}}  e^{ \mu_c' N} N^{2 ({\bf g}-1)-1}
\end{equation}
where $C'_{\mathcal{T}}>0$ and $\mu'_c>0$ are (non universal) constants.    We set 
\begin{equation}\label{Asympbarbos}
\bar{\mu}= \mu'_c+ a^2 \mu,
\end{equation} 
where $\mu>0$ is fixed, and we consider the following random volume form on the surface $M$, defined in terms of its functional expectation

\begin{equation}\label{deffinitemap2}
\E^{a}[  F( \nu_{a} )  ]= \frac{1}{Z_{a}} \sum_{N \geq 1}  e^{-\bar{\mu}  N} \sum_{T \in \mathcal{T}_{N,{\bf g}}} F(e^{\varphi_T} \,d{\rm v}_{g_{\tau_T}} ) Z(T),
\end{equation} 
 for positive bounded functions $F$ where $Z_a$ is a normalization constant ensuring that $\E^{a} [ \cdot]$ is the expectation of a probability measure. We denote by $\P^{a}$ the probability law associated to $\E^{a}$.

\begin{conjecture}\label{conjcartes2}
Assume $\mathbf{g}=2$. Under $\P^{a}$, the random measure  $\nu_{a}$ converges in law as $a \to 0$ with $\bar{\mu}$ given by \eqref{Asympbarbos} in the space of Radon measures equipped with the topology of weak convergence towards  the Liouville measure $\mc{L}_\gamma$ under  $\E_{  (g_\tau)_{\tau} , \mu}[  \cdot ]$  with parameter $\gamma=2$.
\end{conjecture}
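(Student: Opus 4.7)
The plan is to reduce the problem via uniformization to a statement purely on moduli space, then analyse separately the convergence of the matter partition function $Z(T)$ and of the conformal factor $e^{\varphi_T}$. First, I would rewrite
\[
\sum_{T\in\mc{T}_{N,{\bf g}}}Z(T)\, F(e^{\varphi_T}{\rm dv}_{g_{\tau_T}})
=\sum_{T}Z(T)\, F\big(e^{\varphi_T}{\rm dv}_{g_{\tau_T}}\big)
\]
and group triangulations by their associated point $\tau_T\in\mc{M}_{\bf g}$, so that after multiplying by $e^{-\bar\mu N}$ and summing over $N$ one gets, formally, a discrete measure on $\mc{M}_{\bf g}\times{\rm Rad}(M)$. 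The target is Theorem \ref{defLiouvillemeas} with $\gamma=2$, $\mathbf{c}_{\rm M}=1$, whose law is an explicit density against $d\tau$ times a GMC-type random measure. A first non-trivial sanity check is that the Gamma exponent $\tfrac{2Q}{\gamma}({\bf g}-1)-1=2({\bf g}-1)-1$ coming from Theorem \ref{defLiouvillemeas} at $\gamma=Q=2$ matches exactly the subexponential correction $N^{2({\bf g}-1)-1}$ in \eqref{Asympbos}, which fixes the choice $\gamma=2$ and makes the total volume converge to the correct Gamma law after the translation \eqref{Asympbarbos}.

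Next I would treat the matter factor $Z(T)$. On a fixed triangulation $T$ with metric $h_T$, $Z(T)$ is the partition function of a discrete Gaussian free field with $|V_T|$ degrees of freedom; its continuum counterpart in the decomposition $h_T=\psi_T^*(e^{\varphi_T}g_{\tau_T})$ should be $({\det}'\Delta_{g_{\tau_T}}/{\rm Vol}_{g_{\tau_T}}(M))^{-1/2}\,e^{\frac{1}{96\pi}\int(|d\varphi_T|^2+2K_{g_{\tau_T}}\varphi_T)\,{\rm dv}_{g_{\tau_T}}}$ up to a divergent counterterm absorbed by the cosmological constant renormalisation $\bar\mu\mapsto\mu$. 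The plan is to establish this convergence by a combination of a Polyakov-type discrete anomaly formula for $Z(T)$ (comparing the DGFF on $T$ with the continuous one on $(M,g_{\tau_T})$ using $\varphi_T$ as the conformal factor) together with heat-kernel asymptotics for the DGFF determinant; the remaining ghost determinant $Z_{\rm Ghost}(g_{\tau_T})$ and Weil--Petersson factor $\sqrt{\det J_{g_{\tau_T}}}\,d\tau$ should then arise from the Jacobian of the combinatorial-to-moduli change of variables $T\mapsto\tau_T$, analogously to the Polyakov--D'Hoker--Phong computation recalled in Section \ref{fullpartfct}.

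Having identified the limiting density on moduli space, I would then prove joint convergence of $(\tau_T,e^{\varphi_T}{\rm dv}_{g_{\tau_T}})$. Tightness of $\tau_T$ in $\bbar{\mc{M}_{\bf g}}$ follows from Theorem \ref{mainth2}: the boundary strata carry zero mass in the limiting measure, so $\tau_T$ concentrates away from $\pl\bbar{\mc{M}_{\bf g}}$ except on a negligible set, and on the interior one argues as in the (still open) sphere/torus conjectures of \cite{DKRV,DRV}. Conditionally on $\tau_T=\tau$ in a small neighborhood, the field $\varphi_T$ should, after the Girsanov shift by $-\tfrac{Q}{2}\omega$ implicit in Proposition \ref{covconf}, look like a GFF on $(M,g_\tau)$ plus a random constant $c$ drawn from the density $e^{-Qc\chi(M)-\mu e^{\gamma c}\mc{G}_{g_\tau}^\gamma(M)}dc$; the convergence of $e^{\gamma\varphi_T}{\rm dv}_{g_\tau}$ to $\mc{L}_\gamma$ would then follow by a GMC-type martingale/approximation argument at criticality, relying on the derivative-martingale construction of Proposition \ref{GMCprop}(2).

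The main obstacle is unambiguous: this is the critical case $\gamma=2$ coupled to a $\mathbf{c}_{\rm M}=1$ CFT, which is precisely the regime in which \emph{every} analogous conjecture (sphere, disk, torus; see \cite{DKRV,HRV,DRV}) is open. At $\gamma=2$ the GMC is a derivative martingale requiring the extra $(-\log\eps)^{1/2}$ renormalisation, while at $\mathbf{c}_{\rm M}=1$ the matter side is the marginal case where no exponential gap is available at the boundary of moduli space (as Theorem \ref{mainth2} already exhibits, convergence there required the delicate polynomial analysis of this paper). Consequently, matching the discrete and continuous renormalisations -- i.e.\ showing that the shift $\bar\mu=\mu_c'+a^2\mu$ is the correct logarithmically-corrected bare coupling in the critical regime, and not $\bar\mu=\mu_c'+a^2(\log a)^{\alpha}\mu$ for some anomalous $\alpha$ -- is the step I expect to be genuinely hardest, and the one which would require new ideas beyond a straightforward adaptation of the present techniques.
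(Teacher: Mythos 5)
The statement you are asked to prove is stated in the paper as \emph{Conjecture~\ref{conjcartes2}}, not as a theorem: the paper offers no proof of it. The authors explicitly remark just before Conjecture~\ref{conjcartes} that the analogous statements in other topologies (sphere, disk, torus) ``are still completely open,'' and Conjecture~\ref{conjcartes2} is the critical-coupling version of those open problems, so there is nothing in the paper against which to compare your argument.

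Taking your write-up on its own terms: you have correctly recognized that it is not a proof but an outline, and you are right to single out the two simultaneous criticalities ($\gamma=2$ on the GMC side, $\mathbf{c}_{\rm M}=1$ on the matter side) as the genuine obstruction. Your sanity check on the exponent is valid: with $\gamma=2$ one has $Q=2$, so the Gamma density in Theorem~\ref{defLiouvillemeas} has exponent $\tfrac{2Q(\mathbf{g}-1)}{\gamma}-1=2(\mathbf{g}-1)-1$, consistent with the $N^{2(\mathbf{g}-1)-1}$ correction in \eqref{Asympbos}. The strategic steps you list (grouping triangulations by modulus, a discrete Polyakov anomaly for $Z(T)$, a Jacobian argument producing the ghost and Weil--Petersson factors, a conditional GMC convergence for $e^{\varphi_T}$) are all plausible ingredients, but none of them is carried out, and the renormalization ambiguity you raise at the end (whether a logarithmic correction is needed in \eqref{Asympbarbos} at criticality) is a real and unresolved issue. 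In short: the gap you identify is the correct one, it is acknowledged by the paper itself, and no known technique closes it; the honest conclusion is that your proposal is a reasonable heuristic road map but not a proof, which is the state of the art.
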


The reader can find much more material on $2d$-string theory in the review \cite{Kleb}   or the lecture notes \cite{polchinski}.

%%%%%%%%%%%%%%%%%%%

\hspace{10 cm}

 \end{document}